\newtheorem{thm}{Theorem}[section]
\newtheorem{defnt}[thm]{Definition}
\newtheorem{prop}[thm]{Proposition}
\newtheorem{corollary}[thm]{Corollary}
\newtheorem{lemme}[thm]{Lemma}
\newtheorem{remark}[thm]{Remark}
\newtheorem{assumption}[thm]{Assumption}
\title{On oracle-type local recovery guarantees in compressed sensing}
\author{Ben Adcock\footnote{Simon Fraser University, Burnaby, BC, Canada. e-mail: ben\_adcock@sfu.ca},  Claire Boyer\footnote{Sorbonne Universit\'{e} \& \replace{}{Ecole normale sup\'{e}rieure, Paris, PSL University,} France. e-mail: claire.boyer@sorbonne-universite.fr}, and Simone Brugiapaglia\footnote{\replace{}{Concordia University, Montreal, QC, Canada, and} Simon Fraser University, Burnaby, BC, Canada. e-mail: \replace{simone\_brugiapaglia@sfu.ca}{simone.brugiapaglia@concordia.ca}}}
\let\Re\relax\DeclareMathOperator{\Re}{Re}
\let\vect\relax\DeclareMathOperator{\vect}{vec}
\newcommand{\replace}[2]{{#2}}
\newcommand{\replacemath}[2]{{#2}}
\begin{document}
\maketitle

\begin{abstract}
We present improved sampling complexity bounds for stable and robust sparse recovery in compressed sensing. Our unified analysis based on  $\ell^1$ minimization encompasses the case where (i) the measurements are block-structured samples in order to reflect the structured acquisition that is often encountered in applications; (ii) \replace{where}{} the signal has an arbitrary structured sparsity, by results depending on its support $S$.  Within this framework and under a random sign assumption, the number of measurements needed by $\ell^1$ minimization can be shown to be of the same order than the one required by an oracle least-squares estimator. Moreover, these bounds can be minimized by adapting the variable density sampling to a given prior on the signal support and to the coherence of the measurements. We illustrate both numerically and analytically that our results can be successfully applied to recover Haar wavelet coefficients that are sparse in levels from random Fourier measurements in dimension one and two, which can be of particular interest in {imaging problems}. Finally, a preliminary numerical investigation shows the potential of this theory for devising adaptive sampling strategies in sparse polynomial approximation. 
\end{abstract}


\section{Introduction}

\subsection{Motivations}

{Standard Compressed Sensing (CS) concerns the recovery of a sparse vector $x$ from linear measurements.  The theory of CS is well-established, with one of its signature results being the existence of suitable decoders (for instance, based on convex optimization) that achieve recovery from near-optimal numbers of suitably-chosen measurements (e.g.\ Gaussian random measurements\replace{}{)} scaling linearly with the sparsity and logarithmically with the ambient dimension.}

{However, many applications of CS exhibit more structure than sparsity alone.  Hence there is a need to understand its performance for more structured signal models.  With this in mind, the purpose of this paper is \replace{to derive oracle-type lower bounds}{to provide sufficient conditions} on the number of measurements required \replace{in CS}{} in order to ensure recovery of a structured sparse signal $x\in \Cbb^n$ with random signs, via quadratically-constrained Basis Pursuit:
$$
\min_{z \in \Cbb^n} \|z\|_1 \qquad \text{such that} \qquad \| y-Az \|_2 \leq \eta.
$$
Our analysis has two key features.  First, we consider a general type of measurement matrix $A$, based on a block structure.  Although the sampling is random, the block structure is a way to get a theoretical setting closer to realistic sampling than standard CS measurement matrix constructions: often in applications, only certain sampling patterns are allowed and those ``joint" measurements are modeled here by the block structure of $A$. 
}
For instance, in Magnetic Resonance Imaging (MRI), a common practice is to acquire samples along radial lines or straight lines, as illustrated in Figure \ref{fig:illus_MRI_lines}. Moreover, in many other applications, such as ultrasound imaging or interferometry, the acquisition is often constrained to specific sampling patterns \cite{liu2014optimum,quinsac2010compressed}.
\begin{figure}
\begin{center}
\begin{tabular}{cc}
\includegraphics[height=6cm]{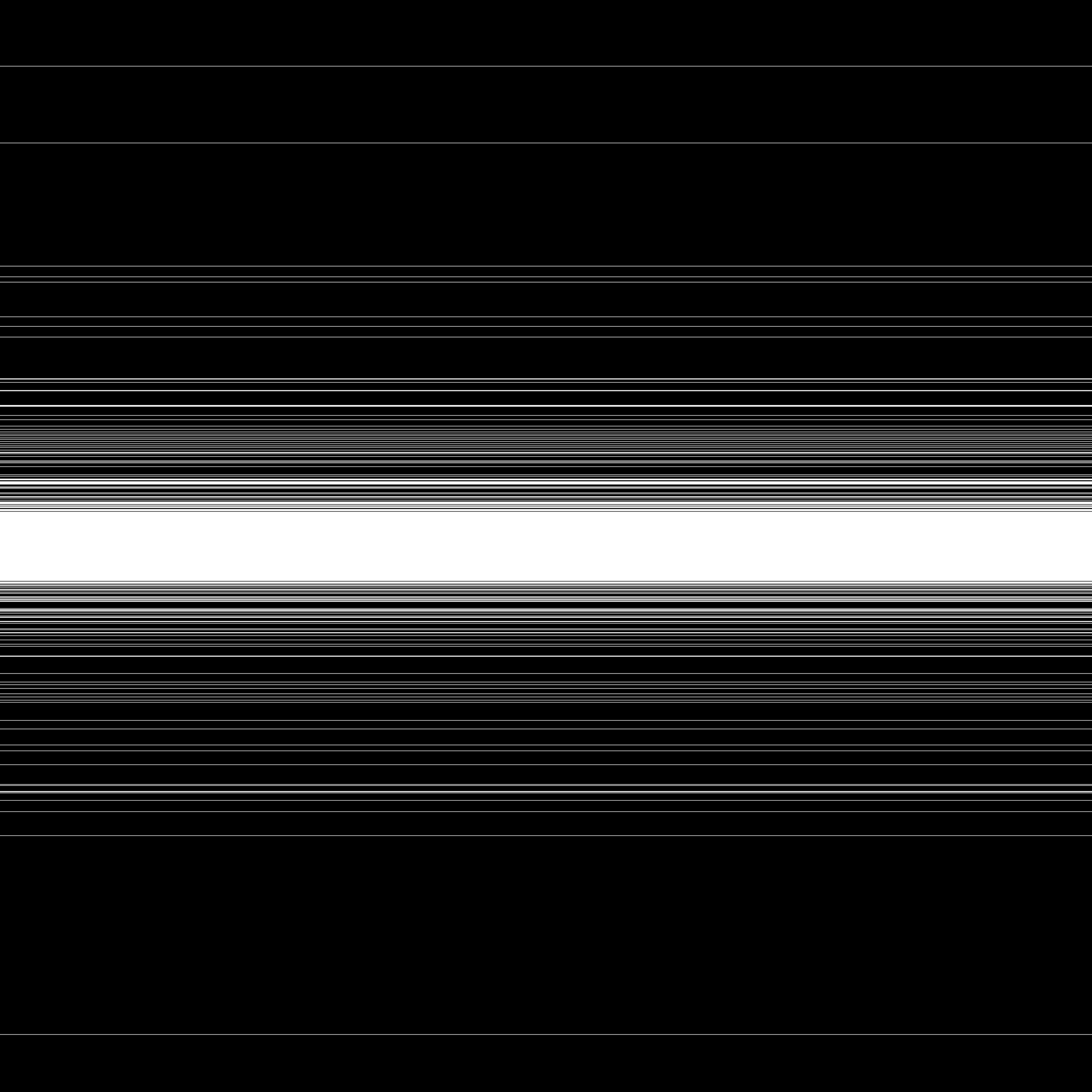} &
\includegraphics[height=6cm]{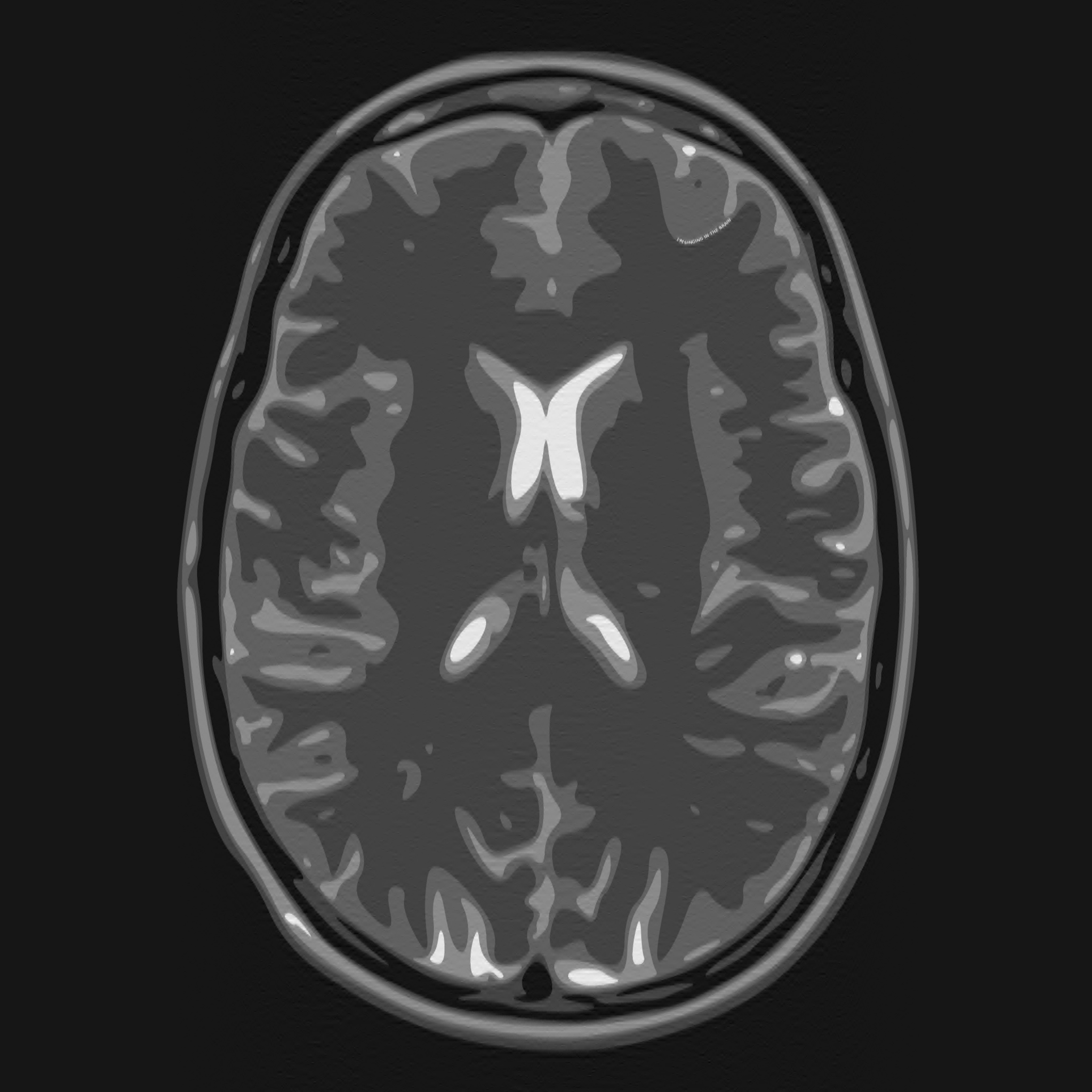} \\
{(a)} & {(b)} 
\end{tabular}
\end{center}
\caption{\label{fig:illus_MRI_lines} Example of structured sampling in 2D MRI. In (a), samples are taken along horizontal lines in the acquisition domain. In (b), the reconstruction of a synthetic image has been generated using \eqref{pb:qBP} and samples located as in (a). }
\end{figure}
\replace{Moreover}{In addition}, the block structure of $A$ considered in this paper can also model the case of multiple-sensor data acquisition, considered in applications such as parallel MRI \cite{wang2000description}. 
It should be noted that the block-structured acquisition encompasses standard CS strategies in which isolated measurements are sampled from a given isometry (see for instance \cite[Chapter 12]{foucart2013mathematical}).

Second, our analysis provides recovery guarantees that are local \replace{}{with respect} to the support of the sparse vector $x$; in particular, they impose no signal model (e.g.\ the sparse model).  As a result, they allow one to consider certain structured sparsity models for $x$.  In
 recent years, structured sparsity has been proved to be a more appropriate prior than standard sparsity when dealing with real-world {problems, such as imaging} \cite{adcock2017breaking}. Structured sparsity is a key feature to consider in order to {devise optimal sampling strategies for image reconstruction} \cite{adcock2014quest}. {Moreover}, it is able to leverage and theoretically legitimize block-structured sensing \cite{boyer2017compressed,chun2017compressed}.

A number of recent works have considered local recovery guarantees in CS \cite{boyer2017compressed,chun2017compressed}.  However, unlike in the classical setting, it was unknown whether the corresponding measurement conditions were optimal.  Note that in the case of Gaussian measurements, in \cite{amelunxen2014living}, a theoretical phase transition has been identified around the statistical dimension, denoted here by $\delta$, of the descent cone associated to the $\ell^1$ norm: if the number of measurements $m$ is such that $m\geq \delta +\sqrt{n}$ then basis pursuit succeeds in recovering an $s$-sparse vector $x$, if $m\leq \delta - \sqrt{n}$ then basis pursuit fails to recover $x$, both cases with high probability. However, as soon as the sampling is more structured, meaning that the sampling is based either on isolated measurements from a structured isometry (e.g.\ such as the Fourier transform), or on blocks of structured measurements, there is no optimality guarantee on the required number of measurements to ensure recovery.  {To address this, in this paper our measurement conditions for the (quadratically-constrained) Basis Pursuit decoder are compared with those of the oracle least-squares estimator.  The latter relies on \textit{a priori} knowledge of the support of $x$ (hence the term `oracle'), something which is of course not available to the former.}

\subsection{Comparison with existing results}


In the seminal paper \cite{candes2007sparsity}, under a random sign assumption on the signal to reconstruct, the authors proposed to  draw uniformly at random \replace{the}{} rows from an isometry $A_0=(\replacemath{a}{d}_k^*)_{1\leq k \leq n}$, leading to stable reconstruction with probability at least $1-\varepsilon$ with the following required number of measurements:
\begin{equation}
\label{eq:coherence}
m \gtrsim s \cdot n \, \max_{k} \|\replacemath{a}{d}_k \|_\infty^2 \cdot \ln \left( n / \varepsilon \right).
\end{equation}
This result can be of interest when considering totally incoherent transforms such as the Fourier matrix for which $n \, \max_{k} \|\replacemath{a}{d}_k \|_\infty^2 = O(1)$ .
However, this is not relevant anymore in the case of coherent transforms, such as the Fourier-Haar transform used to model MRI acquisition, where $n \, \max_{k} \|\replacemath{a}{d}_k \|_\infty^2 = O(n)$.

In this paper, our results include the previous ones, but are also extended to: 
(i) the case of variable density sampling; 
(ii) stability robustness results when measurements are corrupted with bounded noise;
(iii) structured measurements using blocks of measurements;
(iv) optimization of the sampling density with respect to prior information on the signal support, such as structured sparsity.

In \cite{boyer2017compressed,chun2017compressed}, acquisition of i.i.d.\ blocks of measurements was introduced to model structured acquisition closer to applications constraints. In this paper, we allow non-identically distributed blocks of measurements, requiring only independence between the blocks. Note that in \cite{chun2017compressed}, this extended setting was also considered to handle parallel acquisition. However, the main results in \cite{boyer2017compressed,chun2017compressed} were involving the maximum between two quantities in the required number of measurements. The latter prevents theoretically and numerically from any minimization of the obtained bound with respect to the way of drawing measurements. In this paper, with an additional assumption on the signal sign randomness, we derive a bound on the number of measurements depending only on one quantity making its minimization easier and analytically explicit. New results on optimal sampling strategies are presented showing that they should not only depend on the coherence of the sensing matrix (as in standard CS, see for instance \cite{chauffert2013variable} and \cite[Chapter 12]{foucart2013mathematical}) but also on the prior on the signal structure.

\subsection{Contributions}

In this paper, we extend the setting of block-structured sensing introduced in \cite{bigot2014analysis} and also considered in \cite{boyer2017compressed,chun2017compressed}  to the case where the blocks are not identically distributed and in which some of them can be deterministically chosen.
In this setting, we derive stable and robust recovery guarantees for CS, while considering a random sign assumption on the signal of interest. The recovery guarantees are \emph{local} (or equivalently, \emph{nonuniform} or \emph{signal-based}), in the sense that they ensure the recovery with high probability for a fixed signal.

Let $S$ be the set of $s$ largest absolute entries of a given signal and let $F$ be the probability model used to draw random (possibly block-structured) measurements from a finite-dimensional isometry $A_0$. Our recovery guarantees are based on two notions of local coherence, denoted as $\Lambda(S,F)$ and $\Theta(S,F)$ such that $\Lambda(S,F) \leq \Theta(S,F)$ and on a global coherence measure $\Gamma(F)$ (these three quantities are formally introduced in Definition~\ref{def:quantities}).
Using this notation, an \emph{oracle-type inequality} is a \replace{lower bound}{condition} on the number of measurements of the form
\begin{equation}
\label{eq:oracle}
m \gtrsim \Lambda(S,F) \cdot \ln(|S|/\varepsilon),
\end{equation}
which guarantees robust recovery from noisy measurements via the oracle least-squares estimator. First, we prove stable and robust recovery for CS with probability at least $1-\varepsilon$ under the  condition
\begin{equation}
\label{eq:condTheta}
m \gtrsim \Theta(S,F) \cdot \ln^2(n/\varepsilon).
\end{equation}
Moreover, we show that the oracle-type inequality
\begin{equation}
\label{eq:condLambda}
m \gtrsim \Lambda(S,F) \cdot \ln(n/\varepsilon),
\end{equation}
is sufficient to guarantee stable and robust recovery under the extra assumption $\Lambda(S,F) \gtrsim \Gamma(F) \cdot \ln(n/\varepsilon)$, which is verified in cases of practical interest. 
In the case of standard sparsity and uniform random sampling as in \cite{candes2007sparsity}, conditions \eqref{eq:condTheta} and \eqref{eq:condLambda} are implied by \eqref{eq:coherence} and, more in general, they {refine the measurement conditions given} in \cite{krahmer2014stable,chauffert2014variable,puy2011variable} for the case of variable density sampling and standard sparsity.

A main consequence of Conditions \eqref{eq:condTheta} and \eqref{eq:condLambda} is to give an optimal sampling strategy in order to minimize the required number of measurements while taking into account prior information on the support $S$, such as structured sparsity. We derive a closed form expression \replace{of}{for} the drawing probability, i.e.\ how to choose the measurements, in Section~\ref{sec:optimal_drawing}. This is a substantial contribution since in previous CS approaches, only variable density sampling based on the sensing transform coherence was performed. Here, the optimal strategy is shown to be not only dependent on the sampling coherence but also on the signal structure. 

For illustrative purposes, let us briefly describe the implications of our contribution to the case of random isolated Fourier measurements for the recovery a one-dimensional signal that is assumed to be sparse in levels with respect to the Haar transform (this case study is discussed in detail in Section~\ref{sec:applications_mri_iso}). Let us assume the signal to have sparsities in levels $(s_j)_{0\leq j \leq J}$, i.e., $s_j = |\Omega_j \cap S|$, where $(\Omega_j)_{0 \leq j \leq J}$ are Haar wavelet subbands. Moreover, let us divide the space of Fourier frequencies $\{-n/2+1, \ldots, n/2\}$ into subbands $(W_j)_{0 \leq j \leq J}$ and denote as $j(k)$ the frequency band associated with the $k$-th frequency. Then, minimizing the quantity $\Lambda(S,F)$ in \eqref{eq:condLambda} leads to drawing the $k$-th frequency with probability
\begin{equation}
\label{eq:proba_sparsity_level}
\pi_k = \frac{2^{-j(k)} \sum_{j=1}^J 2^{-|j-j(k)|}s_j}{\sum_{\ell=1}^n 2^{-j(\ell)}\sum_{j'=1}^J 2^{-|j(\ell)-j'|} s_{j'}}.
\end{equation}
The resulting \replace{lower bound}{sufficient condition} on the number of measurements is
\begin{equation}
\label{eq:m_condition_FH}
m \gtrsim \left( \sum_{j=1}^J s_j + \sum_{j'=1\atop j'\neq j}^J 2^{-|j-j'|} s_{j'} \right) \cdot \ln(s) \ln(n/\varepsilon).
\end{equation}
This improves the previous conditions from \cite{adcock2017breaking,boyer2017compressed} by decreasing the interference between the sparsities in levels by a square-root factor. Namely, we have $2^{-|j-j'|}$ instead of $2^{-|j-j'|/2}$ in \eqref{eq:m_condition_FH}. Analogous considerations hold for the case of the two-dimensional Fourier-Haar transform with Fourier measurements structured along vertical or horizontal lines, discussed in Section~\ref{sec:application_MRI}.

Finally, the explicit dependence of the optimal sampling measure on the signal support allows for adaptive sampling strategies, which can be particularly relevant when the signal structure or the sampling coherence are not known \emph{a priori}. Preliminary numerical experiments in Section~\ref{sec:adaptive} illustrate how our analysis can be applied to derive adaptive sampling strategies for sparse polynomial approximation.

\subsection{Organization of the paper}

In Section \ref{sec:setting}, we introduce the setting, the considered recovery algorithm, and the sampling strategy adopted to be compatible with physically-constrained acquisition. The definitions of crucial quantities involved in the theoretical analysis are also given therein.
In Section \ref{sec:main_results}, the main results are presented and compared with the oracle case. 
A main consequence of this work is discussed in Section \ref{sec:optimal_drawing}, in which an optimal sampling strategy is proposed.
Section \ref{sec:appli} gathers illustrations of the results, in particular in the case of Fourier-wavelets transforms encountered in MRI applications and in function interpolation, where an adaptive sampling strategy is proposed.
The proofs of the main results are organized in Appendices \ref{sec:proof_main}-\ref{proof:appli}.


\section{Setting }
\label{sec:setting}

In this section, we describe the formal setting of the paper. After introducing some standard notation in Section~\ref{sec:notation}, we discuss the recovery strategies considered in the case of noiseless and noisy measurements in Section~\ref{sec:recovery}. In Section~\ref{subsec:sampling}, we describe the sampling strategies analyzed in this paper; in particular, one can consider block-structured sampling (Section~\ref{subsec:sampling}~\ref{block_finite_setting}) and isolated measurements (Section~\ref{subsec:sampling}~\ref{iso_finite_setting}) from a finite-dimensional isometry. Finally, in Section~\ref{sec:assumptions} the main technical ingredients of the proposed theoretical analysis are introduced. We also discuss the random sign assumption on the signal to recover and define three key quantities (denoted by $\Theta$, $\Lambda$, and $\Gamma$) that will play a major role in the analysis carried out in Section~\ref{sec:main_results}. 

\subsection{Notation}
\label{sec:notation}

In this paper, $n$ denotes the dimension of the signal to reconstruct. 
The notation $S \subseteq  \{1, \hdots, n\}$ refers to the support of the signal to reconstruct and define $S^c := \{1, \hdots, n\} \setminus S$.
The vectors \replace{$\left( e_i \right)_{1\leq i \leq p}$}{$\left( e_i \right)_{1\leq i \leq d}$} denote the vectors of the canonical basis of $\Rbb^d$, where $d$ will be equal to $n$ or $\sqrt{n}$, depending on the context. 
For every $x \in \Cbb^d$, we define $x_S$ to be the restriction of $x$ to the components in $S$.  Notice that $x_S$ may be a $|S|$-dimensional or a $d$-dimensional vector, depending on the context; in the second case, the entries of $x_S$ in $S^c$ are set to be zero. Moreover, we set $P_S$ to be {the matrix defined by the linear projection $P_S x = x_S$ for every $x \in \Cbb^d$}. Again, $P_S$ can be a $d \times d$ or a $|S|\times d$ matrix, depending on whether $x_S$ is considered as a $|S|$-dimensional or as a $d$-dimensional vector. Observe that when $x$ is supported on $S$, then also $x = P_S^* x_S$ holds. We will use the shorthand notation $M_S$ to denote the matrix $M P_S^{*}$.
Similarly, if $M_k$ denotes a matrix indexed by $k$, then $M_{k,S}=M_k P_S^{*}$.
For any matrix $M$, for any $1\leq p,q \leq \infty$, the operator norm $\| M \|_{p\rightarrow q}$ is defined as
$$ \| M \|_{p\rightarrow q} = \sup_{\|v\|_p \leq 1} \| Mv \|_q,
$$
with $\|\cdot\|_p$ and $\|\cdot\|_q$ denoting the standard $\ell_p$ and $\ell_q$ norms. Note that for a matrix $M\in \Rbb^{n\times n}$,
$$ \| M \|_{\infty \rightarrow \infty} = \max_{1\leq i \leq n} \| e_i^* M \|_1.
$$
The function $\sgn : \Rbb^n \rightarrow \Rbb^n$ is defined by
\[ \left( \sgn ( x) \right)_i = \left\lbrace
\begin{array}{cc}
1 & \text{if}  \quad x_i >0 \\
-1 & \text{if} \quad  x_i < 0 \\
0 & \text{if} \quad x_i=0,
\end{array}
\right.
\]
and \replace{$\Id_n$}{$\Id$ (or $\Id_n$)} will denote the \replace{}{(}$n$-dimensional\replace{}{)} identity matrix. 
We denote by $\Rc (M)$ the range of the matrix $M$ and by $M^\dagger$ the left pseudo-inverse of  $M$, meaning that if $M$ has full column rank, $M^\dagger = (M^*M)^{-1} M^*$.

\subsection{Recovery techniques}
\label{sec:recovery}

Let  $x\in \Cbb^n$ be supported on $S \subset \{1, \hdots , n \}$. In the case of noiseless measurements, the collected data $y$ can be written as follows
\begin{align}
\label{eq:noiseless_meas}
y = Ax,
\end{align}
where $A$ is the sampling matrix.
In order to recover $x$, we consider $\ell^1$-minimization with equality constraint, also known as the Basis Pursuit (BP) optimization program:
\begin{align}
\tag{BP}
\label{pb:BP}
\min_{z \in \Cbb^n} \|z\|_1 & \quad \text{such that} \quad y=Az.
\end{align}

In the case where observations are corrupted with noise, we will assume the noise to be bounded. In particular, we will assume that there exists $ \eta >0$, supposed to be known, such that
\begin{align}
\label{eq:noisy_meas}
y = Ax + \epsilon, \qquad \| \epsilon \|_2 \leq \eta.
\end{align}
In order to estimate $x$, we then consider the $\ell^1$-minimization problem with inequality constraint, also called quadratically-constrained Basis Pursuit (qBP):
\begin{align}
\tag{qBP}
\label{pb:qBP}
\min_{z \in \Cbb^n} \|z\|_1 & \quad \text{such that} \quad \|y-Az\|_2 \leq \eta.
\end{align}

\subsection{Sampling strategy}
\label{subsec:sampling}

\paragraph{General setting} 
Given some distributions $(F_\ell)_{1 \leq \ell \leq m}$ respectively on sets of $p_\ell\times n$ matrices, with $p_\ell\geq 1$ for $\ell=1,\hdots , m$, the sampling strategy consists in drawing $m$ independent matrices $B_{1}, \hdots , B_m$ where $B_\ell \sim F_\ell$ for $\ell=1,\hdots,m$ and forming the sensing matrix as follows:
\begin{align}
\label{eq:sensing_matrix}
A= \frac{1}{\sqrt{m}}
  \begin{pmatrix}
B_{1} \\ \vdots \\ B_{m}  
   \end{pmatrix}, \qquad \text{with} \qquad B_\ell \sim F_\ell, \quad \text{for} \, \ell=1,\hdots, m.
\end{align}
 We assume the sampling to be isotropic, in the sense that 
$$
\Ebb (A^* A ) = \Ebb \left( \frac{1}{m} \sum_{\ell=1}^m B_\ell^*B_\ell \right) = \Id.
$$

This abstract setting can be specialized to the case where we are given an orthogonal matrix $A_{0} \in \Cbb^{n \times n}$ with rows $(\replacemath{a_i^*}{d_i^*})_{1\leq i\leq n} $ representing the set of possible linear measurements imposed by a specific sensor device. In particular, this framework encompasses the two following cases.

\begin{enumerate}
\item \textbf{Block-structured sampling from a finite-dimensional isometry. \label{block_finite_setting}}

Let $\left( \Ic_k \right)_{1\leq k \leq M}$ denote a partition of the set $\{1, \hdots , n \}$, i.e.\ a family of disjoint subsets 
$$
\Ic_k \subset \{1, \hdots , n \}
\quad \text{s.t.}\quad 
\bigsqcup_{k=1}^M \Ic_k = \{1,\hdots , n\}.
$$
The rows $(d_i^*)_{1\leq i \leq n} \in \Cbb^n$ of $A_0$ are then partitioned accordingly into a block dictionary $\left( D_k \right)_{1 \leq k \leq M}$, such that
$$ 
D_k = \left( d_i^* \right)_{i\in \Ic_k} \in \Cbb^{|\Ic_k| \times n}.
$$
Define the random blocks $B_1,\hdots , B_m$ to be i.i.d.\ copies of a random block $B$ such that
$$
\Pbb\left( B = D_k /\sqrt{\pi_k} \right) = \pi_k, \qquad \text{for} \quad k=1,\hdots , M,
$$
\replace{}{where $(\pi_k)_{1\leq k \leq M}$ is a discrete probability distribution on $\{1,\ldots,M\}$.} \replace{}{Note that in this case, all the distributions $(F_\ell)$'s are the same one, characterizing the law of the random block $B$ described right above.} The sensing matrix $A$ is then constructed by randomly drawing blocks as follows:
\begin{equation}
\label{eq:bos_sensing_matrix}
A =  \frac{1}{\sqrt{m}} \left( B_\ell \right)_{1\leq \ell \leq m}.
\end{equation}
Moreover, thanks to the renormalization, the random sensing matrix $ A $ satisfies
\begin{align}
\label{isotropyCondition}
\Ebb (A^* A) = \frac{1}{m} \sum_{\ell=1}^m \Ebb(B_\ell^* B_\ell) = \Ebb(B^* B) = \sum_{k=1}^M D_k^*D_k = A_0^* A_0 =\Id,
\end{align}
since $A_0$ is orthogonal and $\left( D_k\right)_{1\leq k \leq M}$ is a partition of the rows of $A_0$.

\item \textbf{Isolated measurements from a finite-dimensional isometry\label{iso_finite_setting}} (standard CS).

This is a particular case of the setting described in (i), which is standard in CS: each block corresponds to a row \replace{in}{of the} matrix $A_0 = (d_1 | d_2 | \hdots | d_n)^*$. Therefore, the sensing matrix is constructed by stacking random vectors drawn from the set of row vectors \replace{$\{ a_1^*, \hdots a_n^*\}$}{$\{ d_1^*, \hdots d_n^*\}$} and can be written as follows:
\begin{equation}
\label{eq:bos_iso_sensing_matrix}
\replacemath{{A = \frac{1}{\sqrt{m}} \left( \frac{1}{\sqrt{\pi_{J_\ell}}} a^*_{J_\ell} \right)_{1\leq \ell \leq m}, }}{A = \frac{1}{\sqrt{m}} \left(  a^*_{\ell} \right)_{1\leq \ell \leq m}, }
\end{equation}
\replace{
where $\left( J_\ell \right)_{1\leq \ell \leq m}$ are i.i.d.\ copies of a random variable $J$ such that}{where the random vectors $(a_{\ell})_{1\leq \ell \leq m}$ are 
i.i.d. copies of a random vector
$a$ such that}
\[
\replacemath{\Pbb(J = j) = \pi_{j},}{\Pbb (a = d_j/\sqrt{\pi_j} ) = \pi_{j},}
\]
for all $1 \leq j \leq n$.
\replace{}{Here again all the $(F_{\ell})$'s consists in the same distribution, designating the law of the random vector $a$.}
The isotropy condition, i.e. $\Ebb(A^*A) = \Ebb \left(\replacemath{\frac{a_J a_J^*}{\pi_J}}{\frac{a_\ell a_\ell^*}{\pi_\ell}} \right) = \Id$, is also satisfied.
\end{enumerate}

\begin{remark}\normalfont
The setting can be modified in order to encompass partial deterministic sampling. Consider some distributions \replace{$(F_\ell)_{m_0+1 \leq k \leq m}$}{$(F_\ell)_{m_0+1 \leq \ell \leq m}$} respectively on sets of $p_\ell\times n$ matrices, with $p_\ell\geq 1$ for $\ell=m_0+1,\hdots , m$, the sampling strategy consists in drawing $m - m_0 $ independent matrices $B_{m_0+1}, \hdots , B_m$ where $B_\ell \sim F_\ell$ for $\ell=m_0+1,\hdots,m$ and forming the sensing matrix as follows:
\begin{align}
\label{eq:sensing_matrix_ext}
A= \frac{1}{\sqrt{m- m_0}}
  \begin{pmatrix}
B_{1} \\ \vdots \\ B_{m}  
   \end{pmatrix}, \qquad \text{with} \qquad B_\ell \sim F_\ell, \quad \text{for} \, \ell=m_0+1,\hdots, m,
\end{align}
while $B_1,\hdots , B_{m_0}$ can be \emph{deterministically} chosen.
 We still need the sampling to be isotropic, in the sense that 
\begin{align}
\Ebb (A^* A ) &= \sum_{\ell=1}^{m_0} B_\ell^* B_\ell + \Ebb \left( \frac{1}{m-m_0} \sum_{\ell=m_0+1}^m B_\ell^*B_\ell \right) = \Id.
\end{align}
This possible extension is motivated as follows. In applications where the sensing matrix is randomly extracted from a  Fourier/wavelets transform, multi-level sampling strategies have been proved to be highly effective (see \cite{adcock2017breaking}). In particular, one may want to partition the Fourier space into levels and then \emph{saturate} (i.e., fully sample) some of them \cite{LiAdcockRIP}. Usually, the saturated levels are those corresponding to the lowest frequencies. 
However, saturating some levels using a fully random procedure as in (i) leads to a suboptimal sampling rate, due to the coupon collector effect. Allowing partial deterministic sampling of $m_0$ blocks is a simple way to circumvent this problem.
To avoid heavy notation, we will state the main results and the proofs with $m_0=0$.
\end{remark}

\subsection{Assumptions}
\label{sec:assumptions}

We assume that the signal we aim at reconstructing satisfies a random sign property, defined as follows.
\begin{assumption}
\label{ass:random_sign}
For any vector  $x\in \Rbb^n$ or $x\in \Cbb^n$ supported on $S$, we will say that $x$ satisfies the random sign assumption if $\sgn(x_S)$ is respectively a Rademacher or Steinhaus sequence.
\end{assumption}
The following quantities {are} crucial in the recovery guarantees. 
\begin{defnt}
\label{def:quantities}
Consider a block sampling strategy as previously described in \eqref{eq:sensing_matrix} where $(B_k)_{1 \leq k \leq m}$ are random blocks such that $B_k \sim F_k$. We denote the collection of probability distributions by $F = (F_k)_{\replacemath{k}{1 \leq k \leq m}}$.
Let $S\subset \{1, \hdots, n \}$.
Define the quantities $\Theta(S,F)$, $\Lambda(S,F)$, and $\Gamma(F)$ to be positive real numbers such that
\begin{align}
\label{ineq:Theta}
 \Theta(S,F) &\geq  \| B_\ell^*B_{\ell,S} \|_{\infty \rightarrow \infty}  \quad &\mbox{ a.s. } \quad B_\ell \sim F_\ell, \qquad \ell = (m_0+)1,\hdots , m,
 \\
 \Lambda(S , F) &\geq   \left\| B_{\ell,S}^* B_{\ell,S} \right\|_{2\rightarrow 2} \quad &\mbox{ a.s. } \quad B_\ell \sim F_\ell, \qquad \ell = (m_0+)1,\hdots , m, \\
\Gamma(F) & \geq  \| B_\ell \|_{1\rightarrow 2}^2 = \max_{1\leq i \leq n}\left\| B_\ell e_i \right\|_{2}^2 \quad &\mbox{ a.s. } \quad B_\ell \sim F_\ell, \qquad \ell = (m_0+)1,\hdots , m.
\end{align}
\end{defnt}

{Typically, but not always, $ \Theta(S,F)$, $\Lambda(S,F)$ and $\Gamma(F)$ will be taken \replace{to the}{as} the least-upper bounds.}
Note that since
$  \left\| B_{S}^* B_{S} \right\|_{2\rightarrow 2} \leq  \left\| B_S^* B_{S} \right\|_{\infty \rightarrow \infty} \leq \left\| B^* B_{S} \right\|_{\infty\rightarrow \infty}$
 (due to, e.g.\ \cite[Lemma A.8 and Remark A.10]{foucart2013mathematical}), {if $ \Theta(S,F)$ and $\Lambda(S,F)$ are taken as least-upper bounds, then}
\begin{align}
\label{eq:lambda_theta}
 \Lambda(S,F)\leq  \Theta(S,F).
\end{align}
For the sake of readability, sometimes we will simply use $\Theta$, $\Lambda$, and $\Gamma$ to refer to $\Theta(S,F)$, $\Lambda(S,F)$, and $\Gamma(S,F)$, respectively.

In the case of \replace{}{the} block-structured finite setting, one 
considers a block dictionary $\left( D_k\right)_{1\leq k \leq M}$ as in Section \ref{subsec:sampling}\ref{block_finite_setting}. Given the quantities in Definition \ref{def:quantities}, one can derive the following upper bounds: let $S\subset \{1, \hdots, n \}$ and $\pi$ be a probability distribution on $\{1,\hdots ,M\}$,
\begin{align}
\label{ineq:Theta_bos}
 \Theta(S,\pi) & \geq \max_{1\leq k \leq M}  \frac{1}{\pi_k} \| D_{k}^*D_{k,S}  \|_{\infty \rightarrow \infty} = \max_{1\leq k \leq M} \max_{1\leq i \leq n} \frac{ \| e_i^* D_k^* D_{k,S} \|_1}{\pi_k},
 \\
 \label{Lambda_bos}
 \Lambda(S , \pi) &\geq \max_{1\leq k \leq M}  \frac{1}{\pi_k}  \left\| D_{k,S}^* D_{k,S} \right\|_{2\rightarrow 2}, \\
\label{Gamma_bos}
\Gamma(\pi) & 
\geq  \max_{1\leq k \leq M} \frac{1}{\pi_k}\|D_k\|_{1 \to 2}^2 = \max_{1\leq k \leq M} \max_{1 \leq i \leq n} \frac{1}{\pi_k}  \left\| D_k e_i \right\|_{2}^2.
\end{align}

In the case of isolated measurements drawn from an isometry, one 
considers the rows $\left( \replacemath{a_k}{d_k^*}\right)_{1\leq k \leq n}$ of an orthogonal matrix $A_0$ as in Section \ref{subsec:sampling}\ref{iso_finite_setting}. Given the quantities in Definition \ref{def:quantities}, one can derive the following upper bounds: let $S\subset \{1, \hdots, n \}$ and $\pi$ be a probability distribution on $\{1,\hdots ,n\}$,
\begin{align}
\label{ineq:Theta_iso_bos}
 \Theta(S,\pi) & \geq \max_{1\leq k \leq n}  \frac{1}{\pi_k} \| \replacemath{a}{d}_{k} \|_\infty \| \replacemath{a}{d}_{k,S}  \|_{1},
 \\
 \label{Lambda_iso_bos}
 \Lambda(S , \pi) &\geq \max_{1\leq k \leq n}  \frac{1}{\pi_k}  \left\| \replacemath{a}{d}_{k,S}\right\|_{2}^2, \\
\label{Gamma_iso_bos}
\Gamma(\pi) & \geq  \max_{1\leq k \leq n} \frac{1}{\pi_k}  \left\| \replacemath{a}{d}_k \right\|_{\infty}^2.
\end{align}
In \eqref{Gamma_iso_bos}, one may recognize the standard definition of the global coherence in CS, see for instance \cite{candes2011probabilistic,foucart2013mathematical}.



\section{Main results}
\label{sec:main_results}

In this section, we derive recovery guarantees for  \eqref{pb:BP} and \eqref{pb:qBP} under a random sign assumption on the signal . They \replace{lead to lower bounds}{reveal sufficient conditions} on the required number of measurements, provided that one can evaluate $\Theta$, $\Lambda$, and $\Gamma$ defined in Section~\ref{sec:assumptions}.

\paragraph{Overview of the main results} 
Throughout the section, our benchmark will be an oracle-type inequality discussed in Section~\ref{sec:oracle}, namely 
\begin{equation}
\label{eq:oracle_simple}
m \gtrsim \Lambda(S,F) \cdot \ln(|S|/\varepsilon),
\end{equation}
which is proved to be sufficient for the robust recovery of an $S$-sparse via oracle-least squares with probability at least $1-\varepsilon$ in Proposition~\ref{prop:oracle}. We will refer to \eqref{eq:oracle_simple} as an \emph{oracle-type} inequality. Notice that  the oracle-least squares estimator requires an \emph{a priori} knowledge of $S$ to recover the signal, whereas the \eqref{pb:BP} and \eqref{pb:qBP} programs do not. 

In Sections~\ref{sec:noiseless} and \ref{sec:robustness} we make a first step towards oracle-type inequalities for CS. In particular, we show that 
\begin{equation}
\label{eq:Theta_bound_simple}
m \gtrsim \Theta(S,F) \cdot \ln^2(n/\varepsilon),
\end{equation}
is sufficient for the exact recovery from noiseless measurements (Theorem~\ref{thm:noiseless}) or robust recovery from noisy measurements (Theorem~\ref{thm:noisy}) of a signal supported on $S$ with probability at least $1-\varepsilon$. Note that, besides the additional logarithmic factor, \eqref{eq:Theta_bound_simple} is not necessarily an oracle-type inequality, in view of \eqref{eq:lambda_theta}. 

\replace{We provide oracle-type inequalities for CS in Section~\ref{sec:oracle-type}}{We progressively improve bounds on $m$ for CS recovery via \eqref{pb:BP} and \eqref{pb:qBP}}  in Section~\ref{sec:oracle-type}, presenting three results \replace{in this direction}{towards oracle-type inequalities}. Theorem~\ref{thm:oracle_theta} \replace{provides}{requires} \replace{an inequality for CS}{a bound} of the form 
$$
m \gtrsim \max_{j \in S^c}\Lambda(S \cup \{j\},F) \cdot \ln^2(n/\varepsilon),
$$ 
which is of oracle type up to enlarging the support by one element.
Theorems~\ref{thm:killthetheta} and \ref{thm:killthetalog} achieve oracle-type \replace{inequalities}{requirements on $m$} at the price of an extra assumption involving $\Lambda(S,F)$ and $\Gamma(F)$. They  only differ by a logarithmic factor. In particular, in Theorem~\ref{thm:killthetheta} robust recovery from noisy measurements (or exact recovery from noiseless measurement) is guaranteed with probability $1-\varepsilon$ if
$$
m \gtrsim \Lambda(S,F) \cdot \ln^2(n/\varepsilon),
$$
and provided that $\Lambda(S,F) \gtrsim \Gamma(F)$. Theorem~\ref{thm:killthetalog} achieves the same recovery guarantees if
$$
m \gtrsim \Lambda(S,F) \cdot \ln(n/\varepsilon), 
$$
and under the extra assumption $\Lambda(S,F) \gtrsim \Gamma(F) \ln(n/\varepsilon)$. These extra assumptions do not turn out to be restrictive in practice (see Section~\ref{sec:appli}). \replace{}{Finally, we note that the assumptions of Theorem~\ref{thm:oracle_theta}, \ref{thm:killthetheta}, and \ref{thm:killthetalog} are sufficient to guarantee stable and robust recovery from noisy measurements for both \eqref{pb:qBP} and the oracle least-squares estimator (see Remark~\ref{rem:stability}).}

\subsection{Preliminary: an oracle inequality}
\label{sec:oracle}

In this section, we derive a lower bound on the number of measurements sufficient to obtain robust recovery using an oracle least-squares estimator.

\begin{prop} 
\label{prop:oracle}
Let \replace{$x\in \Cbb^n$}{$x\in \Rbb^n$ or $\Cbb^n$} be a vector  supported on a set $S$ of size $s$ and suppose we are given noisy measurements $y = Ax + \epsilon$, with $\|\epsilon\|_2 \leq \eta$. Then, there exist universal constants $c_0,C_0 >0$ such that, for every $0 < \varepsilon < 1$ and provided
\begin{align}
\label{eq:bound_meas_oracle}
m \geq c_0 \cdot \Lambda(S,F) \cdot \ln \left( \frac{2s}{\varepsilon} \right),
\end{align} 
the following holds with probability at least $1-\varepsilon$:  the matrix $A_S$ has full column rank and the oracle-least squares estimator $x^\star \in\Cbb^n$ of the system $ y = A z$, defined by
\begin{equation}
\label{eq:oracle_ls_estimator}
x^\star_S = (A_S)^\dagger y, \quad x_{S^c}^\star = 0.
\end{equation} 
satisfies the error estimate
\begin{align}
\label{eq:oracle_robustness}
\| x - x^\star \|_2 \leq C_0 \eta.
\end{align} 
Possible values for the constants are $C_0 = \sqrt{2}$ and $c_0=32/3$.
\end{prop}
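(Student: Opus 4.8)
The plan is to control the least-squares error by a standard argument: write $y = Ax + \epsilon = A_S x_S + \epsilon$ since $x$ is supported on $S$, and observe that whenever $A_S$ has full column rank, $x^\star_S = (A_S)^\dagger y = x_S + (A_S)^\dagger \epsilon$, hence $\|x-x^\star\|_2 = \|(A_S)^\dagger \epsilon\|_2 \le \|(A_S)^\dagger\|_{2\to 2}\,\eta = \eta / \sigma_{\min}(A_S)$. Thus the whole statement reduces to showing that, with probability at least $1-\varepsilon$ under condition \eqref{eq:bound_meas_oracle}, the matrix $A_S$ has full column rank with $\sigma_{\min}(A_S)^2 = \lambda_{\min}(A_S^* A_S) \ge 1/2$, which then gives $\|(A_S)^\dagger\|_{2\to 2} \le \sqrt 2$ and the constant $C_0 = \sqrt 2$.

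The heart of the proof is therefore a matrix concentration bound for $A_S^* A_S$ around its mean. By the isotropy assumption, $\Ebb(A_S^*A_S) = P_S \,\Ebb(A^*A)\, P_S^* = P_S P_S^* = \Id_{|S|}$ (here $A_S$ is viewed with $|S|$ columns). Writing $A_S^* A_S = \frac1m \sum_{\ell=1}^m B_{\ell,S}^* B_{\ell,S}$ as a sum of independent self-adjoint PSD matrices $X_\ell := \frac1m B_{\ell,S}^* B_{\ell,S}$, I would apply a matrix Chernoff inequality (e.g.\ \cite[Theorem 8.14 / Corollary 8.15]{foucart2013mathematical} or Tropp's matrix Chernoff). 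The relevant parameters are: the sum of means has smallest eigenvalue $\mu_{\min} = 1$, and the a.s.\ bound $\|X_\ell\|_{2\to 2} = \frac1m \|B_{\ell,S}^* B_{\ell,S}\|_{2\to 2} \le \Lambda(S,F)/m =: R$ by the very definition of $\Lambda(S,F)$ in \eqref{ineq:Theta}. The matrix Chernoff bound then yields
\begin{align}
\Pbb\bigl( \lambda_{\min}(A_S^*A_S) \le 1/2 \bigr) \le s \cdot \exp\bigl( - c\, \mu_{\min}/R \bigr) = s \cdot \exp\bigl( - c\, m / \Lambda(S,F) \bigr)
\end{align}
for a numerical constant $c$ (coming from the value of the Chernoff rate function at deviation $1/2$, which is $1 - \ln 2 - 1/2 = 1/2 - \ln 2 \approx 0.193$; tracking this carefully gives $c = 3/32$, consistent with the claimed $c_0 = 32/3$). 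Requiring the right-hand side to be at most $\varepsilon$ is exactly the condition $m \ge (1/c)\,\Lambda(S,F)\,\ln(s/\varepsilon)$, i.e.\ \eqref{eq:bound_meas_oracle} with $c_0 = 1/c = 32/3$ (the factor $2s$ rather than $s$ just absorbs constants/gives a little slack). On this event $A_S$ is in particular injective, hence full column rank, so $(A_S)^\dagger = (A_S^* A_S)^{-1} A_S^*$ is well defined and $\|(A_S)^\dagger\|_{2\to2} = \lambda_{\min}(A_S^*A_S)^{-1/2} \le \sqrt 2$.

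Assembling: on the good event, $\|x - x^\star\|_2 = \|(A_S)^\dagger \epsilon\|_2 \le \sqrt{2}\,\|\epsilon\|_2 \le \sqrt 2\,\eta$, giving \eqref{eq:oracle_robustness} with $C_0 = \sqrt 2$. The main obstacle — really the only nontrivial point — is getting the constants in the matrix Chernoff inequality to come out to exactly $c_0 = 32/3$; this requires using the sharp form of the Chernoff tail (with the exact rate function $(1-\delta)\ln(1-\delta)+\delta$ evaluated at $\delta = 1/2$) rather than a crude version, and then bounding $1/2 - \ln 2 \ge 3/16$ or similar to clean up the constant. One minor care point is the dimensional prefactor in the matrix Chernoff bound, which is the ambient dimension of the matrices, namely $s = |S|$ — this is why the log factor is $\ln(s/\varepsilon)$ and not $\ln(n/\varepsilon)$, which is precisely what makes this an \emph{oracle} bound. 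Everything else (isotropy computation, the pseudo-inverse identity, the Rademacher/Steinhaus sign structure of $x$ playing no role here) is routine.
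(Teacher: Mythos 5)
Your proposal is correct, and its first half is exactly the paper's argument: write $x^\star_S - x_S = (A_S)^\dagger\epsilon$ (the paper keeps an extra $(A_S)^\dagger A_{S^c}x_{S^c}$ term, which vanishes here but is retained to set up the stability discussion of Appendix~\ref{app:stability}), reduce everything to $\sigma_{\min}(A_S)^2\ge 1-\delta$ with $\delta=1/2$, and conclude $C_0=\sqrt2$. Where you diverge is the concentration step: the paper invokes its two-sided matrix Bernstein bound, Lemma~\ref{lem:localIsometry_ext}, giving failure probability $2s\exp\bigl(-\tfrac{m\delta^2/2}{\Lambda(1+2\delta/3)}\bigr)$, so that $c_0=\tfrac{1+2\delta/3}{\delta^2/2}=32/3$ falls out exactly at $\delta=1/2$; you instead use the one-sided matrix Chernoff lower-tail bound for sums of PSD matrices, which is perfectly legitimate here ($X_\ell=\tfrac1m B_{\ell,S}^*B_{\ell,S}\succeq 0$, $\|X_\ell\|_{2\to2}\le\Lambda/m$, $\sum_\ell\Ebb X_\ell=\Id$ by isotropy) and in fact gives a slightly better constant, since the Chernoff rate at relative deviation $1/2$ is $\tfrac12+\tfrac12\ln\tfrac12=\tfrac12-\tfrac{\ln 2}{2}\approx0.153\ge 3/32$, with prefactor $s$ rather than $2s$; hence the stated condition \eqref{eq:bound_meas_oracle} with $c_0=32/3$ is more than sufficient. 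Two small corrections to your accounting: your rate-function value ``$1-\ln 2-1/2=1/2-\ln 2\approx 0.193$'' is garbled ($1/2-\ln 2$ is negative; the correct value is $1/2-(\ln 2)/2$), and Corollary 8.15 of Foucart--Rauhut is the matrix \emph{Bernstein} inequality (the tool the paper actually uses, restated as Theorem~\ref{cor8.15FR}), not a matrix Chernoff bound, so you should cite Tropp's matrix Chernoff for your version. The trade-off between the two routes is minor: the paper's choice reuses Lemma~\ref{lem:localIsometry_ext}, which is needed verbatim in all the subsequent proofs, while your Chernoff route exploits positive semidefiniteness to sharpen the constant; neither changes the order of the bound.
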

\begin{proof}
Recalling the definition \eqref{eq:oracle_ls_estimator} of $x^\star$ and the fact that both $x$ and $x^\star$ are supported on $S$, one has
\begin{align*}
\| x^\star - x \|_2   
&=   \| (A_S)^\dagger y - x_{S} \|_2 
= \| (A_S)^\dagger (Ax+\epsilon) - x_{S} \|_2  
= \| (A_S)^\dagger \epsilon + (A_S)^\dagger A_{S^c} x_{S^c} \|_2 \\
& \leq \| A_S^\dagger \|_{2\to 2} \| \epsilon\|_2 + \| (A_S^* A_S)^{-1} \|_{2\to 2} \| A_S^* A_{S^c} \|_{1\to 2} \| x_{S^c} \|_1 
\\
&\leq \frac{1}{\sigma_{\min}(A_S)}  \| \epsilon \|_2 +  \| (A_S^* A_S)^{-1} \|_{2\to 2} \| A_S^* A_{S^c} \|_{1\to 2} \| x_{S^c} \|_1 .
\end{align*}
If $\|A_S^* A_S - P_S \|_{2\to 2} \leq \delta$, then $\sigma_{\min}(A_S) \geq \sqrt{1-\delta}$ and $\| (A_S^* A_S)^{-1} \|_{2\to 2}\leq \frac{1}{1-\delta}$. Using Lemma \ref{lem:localIsometry_ext}, if 
$$
m \geq \frac{1+2 \delta/3}{\delta^2/2} \cdot \Lambda (S,F) \cdot \ln \left( \frac{2s}{\varepsilon} \right),
$$   
then $\| A_S^* A_S -P_S \|_{2\to 2} \leq \delta$. Considering that $\| x_{S^c} \|_1 =0$ and fixing $\delta=1/2$ leads to the desired result with the specified constants. 

\end{proof}

\begin{remark} \normalfont
In the following, we are going to derive ``oracle-type" estimate for signal recovery via \eqref{pb:BP} and \eqref{pb:qBP}. In this paper, ``oracle-type" estimates will refer to the bound on the number of measurements, i.e.\ bounds of the form \eqref{eq:bound_meas_oracle}. They will not concern the robustness bound obtained in \eqref{eq:oracle_robustness}, \replace{that}{which} we will \replace{comment}{discuss} later.
\end{remark}

\replace{}{Proposition~\ref{prop:oracle} implies robust recovery of sparse vectors when measurements are corrupted with bounded noise. In fact, it is also possible to prove the stability of the oracle least-squares estimator with respect to the standard sparsity model by considering a condition on $m$ slightly stronger than \eqref{eq:bound_meas_oracle}. For the sake of readability, in the following results we will bypass this additional technical difficulty by focusing only robust sparse recovery. For a more extended discussion on stability, we refer to Remark~\ref{rem:stability}.}

\subsection{Noiseless recovery}
\label{sec:noiseless}

Our main result for the success of \eqref{pb:BP} with an abstract block-structured framework presented in Section \ref{subsec:sampling} is the following. 
\begin{thm}
\label{thm:noiseless}
Let $x\in \Rbb^n$ or $\Cbb^n$ be a vector supported on $S$, such that $\sgn(x_S)$ forms a Rademacher or Steinhaus sequence. Let $A$ be the random sensing matrix defined in \eqref{eq:sensing_matrix} associated with parameter $\Theta(S,F)$. Suppose we are given the data $y=Ax$. Then, given $0<\varepsilon < 1$ and provided
$$
m \geq c_1 \cdot  \Theta (S,F) \cdot \ln^2\left( \frac{6n}{\varepsilon} \right),
$$ 
for $c_1$ a numerical constant (for instance $c_1=\replacemath{51}{82}$), the vector $x$ is the unique minimizer of the basis pursuit program \eqref{pb:BP} with probability at least $1-\varepsilon$.
\end{thm}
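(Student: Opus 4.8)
The plan is to use the standard dual-certificate approach via the golfing scheme, adapted to the block-structured isotropic setting. Since $\sgn(x_S)$ is a Rademacher/Steinhaus sequence, exact recovery of $x$ by \eqref{pb:BP} is guaranteed (see, e.g., \cite[Chapter~12]{foucart2013mathematical} or \cite{candes2007sparsity}) once we establish two ingredients: (a) a \emph{local isometry} property, namely that $\|A_S^*A_S - P_S\|_{2\to 2}$ is small (say $\le 1/2$) with high probability — this is exactly the content of Lemma~\ref{lem:localIsometry_ext} invoked in the proof of Proposition~\ref{prop:oracle}, and it needs only $m \gtrsim \Lambda(S,F)\ln(s/\varepsilon)$, which is dominated by our hypothesis since $\Lambda \le \Theta$; and (b) the existence of an \emph{inexact dual certificate} $u = A^*w \in \Rc(A^*)$ with $\|u_S - \sgn(x_S)\|_2$ small and $\|u_{S^c}\|_\infty$ bounded away from $1$, and simultaneously a control of the off-support correlations $\|A_{S^c}^*A_S\|$. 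The random-sign assumption is what lets us get away with an \emph{inexact} certificate with an $\ell^2$-small discrepancy on $S$ rather than an exact one, which is why $\Theta$ appears instead of a larger quantity.

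Concretely, I would construct the certificate by the golfing scheme: partition the $m$ rows (blocks) of $A$ into $L \asymp \ln(n/\varepsilon)$ batches, each of size $m/L$, giving independent matrices $A^{(1)},\dots,A^{(L)}$ with $\Ebb (A^{(i)})^*A^{(i)} = \Id$ on the relevant restriction. Set $v_0 = \sgn(x_S)$ and iterate $v_i = v_{i-1} - P_S (A^{(i)})^* A^{(i)}_S v_{i-1}$, with $w = \sum_i (A^{(i)})^* \text{(something)}$; then $u = A^* w$. The two families of estimates needed at each golfing step are: (i) a contraction estimate $\|v_i\|_2 \le \tfrac12 \|v_{i-1}\|_2$, which follows from the local isometry property applied to each batch (needs $m/L \gtrsim \Lambda(S,F)\ln(s/\varepsilon)$, hence $m \gtrsim \Lambda(S,F)\ln(s/\varepsilon)\ln(n/\varepsilon)$); and (ii) an $\ell^\infty$-bound $\|P_{S^c}(A^{(i)})^*A^{(i)}_S v_{i-1}\|_\infty$ that is small relative to $\|v_{i-1}\|_2$ — this is where $\Theta(S,F)$ enters, via a matrix-Bernstein / Bernstein-for-sums-of-vectors bound on $\|(A^{(i)})^*A^{(i)}_S - P_S\cdot\|$ in the $\infty\to\infty$-type norm, using that a.s.\ $\|B_\ell^* B_{\ell,S}\|_{\infty\to\infty}\le \Theta(S,F)$. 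Summing the geometric series shows $\|u_S - \sgn(x_S)\|_2 \le 2^{-L}\|\sgn(x_S)\|_2 \le \sqrt{s}\,2^{-L}$ and $\|u_{S^c}\|_\infty \le 1/2$ when $L \gtrsim \ln(n/\varepsilon)$; combined with the local isometry this certifies $x$ as the unique minimizer. Tracking the two logarithmic factors (one from the golfing length $L$, one from the failure probability inside each Bernstein bound) produces the $\ln^2(6n/\varepsilon)$ in the statement, and the union bound over the $O(\ln(n/\varepsilon))$ batches plus the local-isometry event is absorbed into the constant $c_1$.

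The main obstacle — and the technical heart of the argument — is step (ii): obtaining the right $\ell^\infty$ concentration for sums of the block matrices $B_\ell^* B_{\ell,S}$ acting on a fixed vector, with the \emph{correct} dependence on $\Theta(S,F)$ rather than on the cruder $\|B^*B_S\|_{\infty\to\infty}$ or on $\Gamma$. One must apply a vector Bernstein inequality coordinatewise (controlling the variance proxy $\sum_\ell \Ebb\|e_i^* B_\ell^* B_{\ell,S} v\|^2$ and the a.s.\ bound $\max_\ell \|B_\ell^* B_{\ell,S}\|_{\infty\to\infty}\|v\|$ both by $\Theta(S,F)$ times appropriate norms of $v$), then union-bound over $i \in S^c$ — this is what forces $\ln(n/\varepsilon)$ rather than $\ln(s/\varepsilon)$ in that factor, and care is needed so that the variance term does not degrade the bound by an extra power of $\Theta$ or introduce a $\Gamma$ dependence. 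The non-identically-distributed blocks cause no real difficulty here since Bernstein-type inequalities only need the a.s.\ and second-moment bounds of Definition~\ref{def:quantities}, which hold uniformly over $\ell$; the isotropy $\Ebb(A^*A)=\Id$ gives the correct mean in every batch. I would defer the detailed constants to the appendix (Appendix~\ref{sec:proof_main}) and here only assemble the pieces.
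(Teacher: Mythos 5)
Your route is genuinely different from the paper's: the paper does not use golfing at all. It takes the \emph{exact} minimal-$\ell^2$ dual certificate $v = A^*(A_S^\dagger)^*\sgn(x_S)$, so $v_S = \sgn(x_S)$ automatically, and then uses the random-sign hypothesis through a Hoeffding-type inequality for $\langle A_S^\dagger A e_\ell, \sgn(x_S)\rangle$, conditioned on the event $\max_{\ell \in S^c}\|A_S^\dagger A e_\ell\|_2 \le \alpha$; the latter is controlled by Lemma~\ref{lem:localIsometry_ext} (needing $m \gtrsim \Lambda \ln(s/\varepsilon)$) together with Lemma~\ref{lem:C.3Ext} (a vector Bernstein bound on $\max_{\ell\in S^c}\|A_S^*Ae_\ell\|_2$ in terms of $\Theta$), and the choice $\delta \asymp 1/\sqrt{\ln(6n/\varepsilon)}$ is what produces the $\ln^2$ factor. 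In this scheme the certificate construction requires no iteration, and the random signs are used exactly once, in the Hoeffding step.

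Your golfing proposal has a genuine gap at the step you yourself flag as the technical heart, namely the per-batch $\ell^\infty$ estimate $\|P_{S^c}(A^{(i)})^*A^{(i)}_S v_{i-1}\|_\infty \lesssim \|v_{i-1}\|_2$. For a fixed $S$-supported vector $v$, the natural Bernstein bounds for $e_i^* B_\ell^* B_{\ell,S} v$ are either $|\langle B_\ell e_i, B_{\ell,S}v\rangle| \le \sqrt{\Gamma\Lambda}\,\|v\|_2$ (which is exactly Lemma~\ref{lem:C.5Ext} and drags in $\Gamma$ --- the very dependence this theorem avoids without the extra hypothesis \eqref{cond:killthethetalog}), or $|e_i^* B_\ell^* B_{\ell,S} v| \le \Theta\|v\|_\infty$, which forces you to also track $\|v_{i-1}\|_\infty$ along the golfing iteration and hence to prove a further per-batch $\|\cdot\|_{\infty\to\infty}$-type contraction governed by a \emph{second} coherence quantity. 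This is precisely why the golfing-based analyses of \cite{boyer2017compressed,chun2017compressed} end up with a maximum of two quantities in the measurement bound, which is the state of affairs this theorem is designed to improve; your sketch asserts, but does not show, that both the variance proxy and the a.s.\ bound can be taken $\le \Theta$ relative to $\|v\|_2$, and I do not see how to do this. Relatedly, your framing of the random-sign assumption is backwards: golfing needs no sign randomness (that is its selling point in the RIPless literature), and in your sketch the randomness of $\sgn(x_S)$ is in fact never used; in the paper it is the random signs, via Hoeffding applied to the exact certificate, that allow the single-quantity bound in $\Theta(S,F)$.
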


The proof is given in Appendix \ref{proof:noiseless}.

\begin{remark} \normalfont
In standard CS, the number of measurements usually depends on the degree of sparsity $s$ (such that $|S|\leq s$) on the one hand and on the sampling coherence on the other hand. Here, the quantity $\Theta(S,F)$ is encapsulating both information. This way, the use of coherent transforms is not prohibited anymore as soon as the support structure is adapted to it. Note that in the case of isolated measurements 	{described in Section~\ref{subsec:sampling}-(ii)}, the quantity $\Theta(S,F)$ can be bounded from above as follows, leading to the standard CS-type estimate:
$$
\Theta(S,F) \leq s\cdot \sup_{a\sim F} \| a \|_\infty^2.
$$
However, we observe that this upper bound is too crude in general, except under the following assumptions:
\begin{itemize}
\item  there is no structure in the signal sparsity, meaning that the only prior on $S$ is that $|S|=s$;
\item the sensing matrix $A$ is totally incoherent, meaning  that all the entries of $A$ have the same magnitude, $\forall (i,j),  |A_{ij}| = \| A \|_{1\to \infty} =1$ (this for instance the case of subsampled Fourier matrix). 
\end{itemize}
\end{remark}

\begin{remark} \normalfont
The bound on the number of measurements in Theorem \ref{thm:noiseless} depends only on one quantity, namely $\Theta(S,F)$, {which, in turn,} depends on the signal support and on the way of drawing blocks. On the contrary, in \cite{boyer2017compressed,chun2017compressed}, the authors derived similar results with a bound on $m$ depending on the maximum between two quantities (one of them corresponding to $\Theta(S,F)$ in this paper). However, the authors \cite{boyer2017compressed,chun2017compressed} have not considered the random sign assumption. The latter was useful to obtain a closed-form expression for the ``coherence" $\Theta(S,F)$. We will see that this is \replace{a}{} key to design optimal drawing strategies in Section \ref{sec:optimal_drawing}.
\end{remark}


\subsection{Robustness}
\label{sec:robustness}

Theoretical guarantees can \replace{be also}{also be} obtained when the observation vector $y$ is corrupted by noise.
\begin{thm}
\label{thm:noisy}
Let \replace{$x\in \Cbb^n$}{$x\in \Rbb^n$ or $\Cbb^n$} be a vector supported on $S$, such that $\sgn(x_S)$ forms a Rademacher or Steinhaus sequence and $|S|=s$. Let $A$ be the random sensing matrix defined in \eqref{eq:sensing_matrix} with parameter $\Theta(S,F)$. Suppose that the data $y$ is given such that $y=Ax+\epsilon$ with $\|\epsilon\|_2\leq \eta$ for some given $\eta>0$.  Then, given $0<\varepsilon < 1$ and provided
$$
m \geq c_1 \cdot  \Theta (S,F) \cdot \ln^2\left( \frac{6n}{\varepsilon} \right),
$$ 
for $c_1$ a numerical constant (for instance $c_1=\replacemath{780}{889}$), a minimizer $x^\sharp$ of \eqref{pb:qBP} satisfies
\begin{align}
\label{eq:noisy_rec}
\|x-x^\sharp\|_2 \leq  (C_1 + C_2\sqrt{s})\eta,
\end{align}
{with probability at least $1-\varepsilon$, where} $C_1,C_2$ are numerical constants.
\end{thm}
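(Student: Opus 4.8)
\emph{Proof plan.} The argument mirrors that of Theorem~\ref{thm:noiseless} (Appendix~\ref{proof:noiseless}): the probabilistic core is identical, and the only genuinely new ingredient is a deterministic error analysis that keeps track of the noise level $\eta$. The plan is to reduce robust recovery via \eqref{pb:qBP} to three deterministic conditions on $A$, to verify each of them with high probability under $m \ge c_1 \Theta(S,F)\ln^2(6n/\varepsilon)$, and to conclude by a union bound.

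\emph{Step 1 (deterministic robustness from an inexact dual certificate).} First I would show: if (a) $\|A_S^*A_S - P_S\|_{2\to 2}\le\tfrac12$, (b) $\max_{j\in S^c}\|A_S^*Ae_j\|_2\le\tfrac14$, and (c) there is $w\in\Cbb^m$ with $u:=A^*w$ satisfying $\|u_S-\sgn(x_S)\|_2\le\tfrac14$, $\|u_{S^c}\|_\infty\le\tfrac14$ and $\|w\|_2\le\kappa\sqrt s$ for an absolute constant $\kappa$, then any minimizer $x^\sharp$ of \eqref{pb:qBP} obeys $\|x-x^\sharp\|_2\le(C_1+C_2\sqrt s)\eta$. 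The proof is the textbook dual-certificate computation: with $v=x^\sharp-x$, feasibility of $x$ gives $\|Av\|_2\le2\eta$; $\ell^1$-optimality of $x^\sharp$ gives $\|v_{S^c}\|_1\le|\langle\sgn(x_S),v_S\rangle|$, and writing $\langle\sgn(x_S),v_S\rangle=\langle\sgn(x_S)-u_S,v_S\rangle+\langle w,Av\rangle-\langle u_{S^c},v_{S^c}\rangle$ (using $u=A^*w$ and disjointness of the supports) yields $\|v_{S^c}\|_1\le\tfrac13\|v_S\|_2+\tfrac83\kappa\sqrt s\,\eta$; combining this with $\tfrac1{\sqrt2}\|v_S\|_2\le\|A_Sv_S\|_2$ (from (a)) and $\|A_Sv_S\|_2^2=\langle Av,Av_S\rangle-\langle Av_{S^c},Av_S\rangle$, where (b) controls the cross term, gives $\|v_S\|_2\lesssim(1+\sqrt s)\eta$, hence $\|v\|_2\le\|v_S\|_2+\|v_{S^c}\|_1\le(C_1+C_2\sqrt s)\eta$. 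The $\sqrt s$ in the error bound is exactly the price of the estimate $\langle w,Av\rangle\le\|w\|_2\|Av\|_2\le2\kappa\sqrt s\,\eta$; this is also why \eqref{pb:qBP} only yields $\sqrt s\,\eta$-robustness here, unlike the oracle least-squares estimator of Proposition~\ref{prop:oracle}.

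\emph{Step 2 (the high-probability events).} Condition (a) holds with probability $\ge1-\varepsilon/3$ whenever $m\gtrsim\Lambda(S,F)\ln(2s/\varepsilon)$ by Lemma~\ref{lem:localIsometry_ext}, hence under the theorem's hypothesis since $\Lambda\le\Theta$. For (b), each vector $A_S^*Ae_j$ with $j\in S^c$ has mean $P_Se_j=0$, so a vector Bernstein inequality together with a union bound over $j\in S^c$ gives $\le\tfrac14$ once $m\gtrsim\Theta(S,F)\ln^2(n/\varepsilon)$, the relevant second-moment and uniform parameters being bounded in terms of $\Theta$, $\Lambda$ and $\Gamma$ (and, on the relevant event, in terms of $\Theta$ alone). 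Condition (c) is obtained exactly as in the proof of Theorem~\ref{thm:noiseless} by the golfing scheme: partition the $m$ independent (possibly non-identically distributed) blocks into $O(\ln(n/\varepsilon))$ batches and iterate the dual vector so that the $\ell^2$-residual $P_Su^{(\ell)}-\sgn(x_S)$ contracts geometrically in $\ell$, while $\|P_{S^c}u^{(\ell)}\|_\infty$ stays $\le\tfrac14$ and $\sum_\ell\|w^{(\ell)}\|_2\le\kappa\sqrt s$, each per-batch estimate coming from matrix and vector Bernstein inequalities with parameters $\Theta$, $\Lambda$, $\Gamma$ as in Definition~\ref{def:quantities}; this step is valid for both Rademacher and Steinhaus signs. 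A union bound over these $O(1)$ events, with $c_1$ enlarged to upgrade the noiseless ``strictly feasible'' margins into the quantitatively strict ``$\le\tfrac14$'' margins above, yields the stated result with, e.g., $c_1=889$.

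\emph{Main obstacle.} The bulk of the work — and the source of the large constant — is the golfing construction of the inexact dual certificate, Step 2(c), in the block-structured, non-identically-distributed regime: one must control three quantities simultaneously over $O(\ln n)$ batches using only the worst-case-over-$\ell$ bounds of Definition~\ref{def:quantities}, and this is the step that genuinely exploits the random-sign assumption (to bound $\|P_Su^{(\ell)}-\sgn(x_S)\|_2$ via concentration of linear and quadratic forms in Rademacher or Steinhaus variables). However, this part is shared verbatim with Theorem~\ref{thm:noiseless}, so the content specific to the noisy statement is essentially the elementary Step 1; in that sense Theorem~\ref{thm:noisy} is nearly a corollary of Theorem~\ref{thm:noiseless}.
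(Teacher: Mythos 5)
Your Step 1 is fine: it is essentially Proposition~\ref{prop:FRstable} (the inexact-dual-certificate robustness bound from \cite[Theorem 4.33]{foucart2013mathematical}), which is exactly the deterministic ingredient the paper invokes. The genuine gap is in Step 2(c). You assert that the inexact certificate is produced by a golfing scheme ``exactly as in the proof of Theorem~\ref{thm:noiseless}'', but the paper's proof of Theorem~\ref{thm:noiseless} (Appendix~\ref{proof:noiseless}) contains no golfing at all; so the crux of your argument is an unproved claim, not a reduction to an established step. More importantly, it is doubtful the golfing route can deliver the theorem as stated: your per-batch estimates are said to involve the parameters $\Theta$, $\Lambda$ \emph{and} $\Gamma$, and $\Gamma(F)$ is not dominated by $\Theta(S,F)$ (the $\ell^1$-row-sum defining $\Theta$ only runs over columns in $S$, whereas $\Gamma$ sees all columns), so the resulting measurement condition would not be $m \gtrsim \Theta(S,F)\ln^2(6n/\varepsilon)$ alone. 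This is precisely the situation in \cite{boyer2017compressed,chun2017compressed}, where golfing yields a maximum of two coherence-type quantities; the stated theorem, whose hypothesis involves only $\Theta$, is exactly what that approach does not give. Your remark that the random signs are exploited inside golfing to contract $\|P_S u^{(\ell)}-\sgn(x_S)\|_2$ also misplaces the role of the assumption: the $\ell^2$-contraction in golfing needs no sign randomness.

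What the paper actually does is simpler and is where the random-sign assumption earns its keep: it takes the \emph{exact} least-squares dual certificate $u=A^*h$ with $h=A_S(A_S^*A_S)^{-1}\sgn(x_S)$, so $\gamma=0$ in Proposition~\ref{prop:FRstable}, and then (i) controls $\|A_S^*A_S-\Id\|_{2\to2}\le\delta$ via Lemma~\ref{lem:localIsometry_ext} (needing only $\Lambda\le\Theta$), (ii) controls $\max_{\ell\in S^c}\|A_S^*Ae_\ell\|_2\le t$ via the vector Bernstein bound of Lemma~\ref{lem:C.3Ext}, whose parameters are expressed through $\Theta$ alone, and (iii) bounds $\|u_{S^c}\|_\infty=\max_{\ell\in S^c}|\langle A_S^\dagger Ae_\ell,\sgn(x_S)\rangle|\le 1/4$ by a Hoeffding-type inequality for Rademacher/Steinhaus sequences, conditioned on the event $\max_{\ell\in S^c}\|A_S^\dagger Ae_\ell\|_2\le\alpha=t/(1-\delta)$; finally $\|h\|_2\le\sqrt{s/(1-\delta)}$ gives $\tau$. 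A union bound over these three events with $\delta\sim 1/\sqrt{\ln(6n/\varepsilon)}$ gives the stated condition $m\ge c_1\Theta(S,F)\ln^2(6n/\varepsilon)$. To repair your proof, replace the golfing construction by this explicit certificate and the conditional Hoeffding argument; otherwise you must either prove the golfing estimates with constants depending on $\Theta$ only (which would in fact remove the need for the random-sign hypothesis, contrary to the paper's design), or add extra hypotheses of the type appearing in Theorems~\ref{thm:killthetheta} and \ref{thm:killthetalog}.
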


The proof of Theorem~\ref{thm:noisy} is given in Appendix \ref{proof:noisy}.

\begin{remark}
\label{rmrk:finer_error_bound}
\normalfont
The proof of Theorem~\ref{thm:noisy} reveals a more precise recovery guarantee, which \replace{takes}{} accounts for the stability with respect to the standard sparsity model. Namely, it is possible to generalize the theorem to the case \replace{$x \in \Cbb^n$}{$x \in \Rbb^n$ or $\Cbb^n$} where $S$ is \replace{the}{a} set of \replace{}{indices corresponding to the} $s$ large\replace{}{st} absolute entries of $x$ \replace{}{(or, more in general, any subset of $s$ indices)}. Under the same conditions as in Theorem~\ref{thm:noisy}, then, with high probability, one has
\begin{align}
\label{eq:noisy_stable_rec}
\|x-x^\sharp\|_2 \leq (C_1 + C_2\sqrt{s})\eta +  C_3 \|x-x_S\|_1.
\end{align}
This matches the bound derived in the standard setting of CS, see for instance \cite[Theorem 12.22]{foucart2013mathematical}, which may seem a bit disappointing. However, to our knowledge, this is the first result of stability and robustness derived in the case of the random sign assumption. 
\end{remark}

\noindent\textbf{How much are we off the oracle estimate of Proposition \ref{prop:oracle}?} Regarding the  number of measurements {$m$}, the \replace{bounds}{conditions} required by Theorems \ref{thm:noiseless} and \ref{thm:noisy} depend on $\Theta (S, F)$, instead of $\Lambda (S, F)$ as in Proposition \ref{prop:oracle}. Therefore, by \eqref{eq:lambda_theta}, we only obtained an upper bound to the oracle estimate.

Regarding the error bound \eqref{eq:noisy_rec}, one can notice that the noise level $\eta$ is amplified by an additional factor $\sqrt{s}$, when compared to \eqref{eq:oracle_robustness}. Nonuniform approaches {are known to} suffer from this extra $\sqrt{s}${-factor} in the robustness bound, see for instance  \cite[Theorem 12.22]{foucart2013mathematical}, although this may be an artifact of the proof strategies employed so far. To the best our knowledge, the only better bound of the form $c \cdot \eta$ for $c$ some constant has been obtained in \cite[Proposition 2.6]{tropp2015convex}. However, such a bound can be obtained so far only for Gaussian measurements since they are based on the control of the minimum conic singular value of the sensing matrix. \replace{To control}{Controlling} the statistical dimension of \replace{}{the} descent cone for structured measurements still remains an open question. That is why in the sequel we will leave aside the question of improving the robustness bound of a $\sqrt{s}$-factor and we will focus on deriving oracle-type bounds in terms of number of measurements.


\subsection{Getting oracle-type bounds for the required number of measurements}
\label{sec:oracle-type}


We start by presenting a first oracle-type estimate where the quantity $\Theta(S,F)$ is replaced with $\Lambda(S \cup \{j\},F)$, for some $j \in S^c$. As discussed in Proposition~\ref{prop:oracle}, if the support $S$ is known, the least amount of measurements needed  for robust recovery should be of the order of $\Lambda(S,F)$. Indeed,  $\Lambda(S,F)$ controls the condition number of the sensing matrix $A$ restricted to the space of vectors supported on $S$. Theorem \ref{thm:oracle_theta} almost invoke this condition number by slightly enlarging $S$ by one off-support component. The proof of this result is given in Appendix~\ref{proof:oracle_theta}. 

\begin{thm}
\label{thm:oracle_theta}
Let \replace{$x\in \Cbb^n$}{$x\in \Rbb^n$ or $\Cbb^n$} be a vector supported on $S$, such that $\sgn(x_S)$ forms a Rademacher or Steinhaus sequence. Let $A$ be the random sensing matrix defined in \eqref{eq:sensing_matrix} with parameter $\Lambda(S,F)$ {and let $y = Ax + \epsilon$, with $\|\epsilon\|_2 \leq \eta$}. {Then, there exist constants $c_1,C_1,C_2  > 0$ such that the following holds. For every $0<\varepsilon < 1$, if} 
$$
m \geq c_1  \max_{j \in S^c} \Lambda(S\cup \{j\},F) \ln \left(\frac{6 (s+1)(n-s)}{\varepsilon}\right)\ln\left(\frac{6n}{\varepsilon}\right),
$$ 
then, with probability at least $1-\varepsilon$, a minimizer $x^\sharp$ of \eqref{pb:qBP} satisfies
\begin{align*}
\|x-x^\sharp\|_2 \leq  (C_1 + C_2\sqrt{s})\eta.
\end{align*}
In particular, in the noiseless case (i.e., $\eta = 0$), $x$ is exactly recovered via \eqref{pb:BP} with probability at least $1-\varepsilon$ and with constant $c_1 = \replacemath{36}{19}$.
\end{thm}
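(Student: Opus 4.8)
The plan is to run the classical inexact–dual–certificate argument for nonuniform $\ell^1$ recovery, but — crucially exploiting the random sign assumption — to feed it an \emph{exact} least‑squares certificate whose off‑support mass is controlled by a Hoeffding concentration rather than by the crude deterministic bound $\|u_{S^c}\|_\infty\lesssim\sqrt{s}\,\delta$. For the first step I would condition the measurement matrix simultaneously on every one‑element enlargement of the support: for each $j\in S^c$ one has $\Ebb\bigl(A_{S\cup\{j\}}^*A_{S\cup\{j\}}\bigr)=P_{S\cup\{j\}}$, so Lemma~\ref{lem:localIsometry_ext} applied to the set $S\cup\{j\}$ (of size $s+1$) gives, for a parameter $\delta\in(0,1)$ to be calibrated later,
$$\bigl\|A_{S\cup\{j\}}^*A_{S\cup\{j\}}-P_{S\cup\{j\}}\bigr\|_{2\to 2}\le\delta$$
with probability at least $1-\varepsilon'$ as soon as $m\gtrsim\delta^{-2}\,\Lambda(S\cup\{j\},F)\,\ln(2(s+1)/\varepsilon')$. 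Taking $\varepsilon'=\varepsilon/(3(n-s))$ and union bounding over the $n-s$ choices of $j$ yields, with probability at least $1-\varepsilon/3$, \emph{simultaneously}: (a) $\|A_S^*A_S-P_S\|_{2\to 2}\le\delta$, hence $A_S$ has full column rank with $\sigma_{\min}(A_S)\ge\sqrt{1-\delta}$ and $\|(A_S^*A_S)^{-1}\|_{2\to 2}\le(1-\delta)^{-1}$; and (b) $\max_{j\in S^c}\|A_S^*Ae_j\|_2\le\delta$, because $A_S^*Ae_j$ is an off‑diagonal column of $A_{S\cup\{j\}}^*A_{S\cup\{j\}}-P_{S\cup\{j\}}$ and the spectral norm dominates the Euclidean norm of any sub‑block.

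\textbf{Least‑squares certificate and Hoeffding on the signs.} Working on the event above, set $v=A_S(A_S^*A_S)^{-1}\sgn(x_S)$ and $u=A^*v$. Then $u_S=\sgn(x_S)$ exactly and $\|v\|_2\le\sqrt{s/(1-\delta)}$. For $j\in S^c$ we have $u_j=\langle w_j,\sgn(x_S)\rangle$ with $w_j=(A_S^*A_S)^{-1}A_S^*Ae_j$ a vector depending only on $A$ and satisfying $\|w_j\|_2\le\delta/(1-\delta)$ by (a)--(b). Conditioning on $A$ and using the Hoeffding‑type inequality for Rademacher/Steinhaus sums (e.g.\ \cite{foucart2013mathematical}) gives $\Pbb(|u_j|\ge\tfrac12\mid A)\le C\exp\!\bigl(-c(1-\delta)^2\delta^{-2}\bigr)$; a union bound over $j\in S^c$ then forces the calibration $\delta^{-2}\asymp\ln(6n/\varepsilon)$ (with $\delta\le\tfrac12$) so that this probability is at most $\varepsilon/3$, after which $\|u_{S^c}\|_\infty\le\tfrac12$. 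It is precisely this choice of $\delta$ that converts the single logarithm $\ln(2(s+1)/\varepsilon')=\ln\bigl(6(s+1)(n-s)/\varepsilon\bigr)$ of the first step into the product $\ln\bigl(6(s+1)(n-s)/\varepsilon\bigr)\,\ln\bigl(6n/\varepsilon\bigr)$ appearing in the statement.

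\textbf{Conclusion.} On the intersection of the two good events we have both a well‑conditioned $A_S$ (with $\delta\le\tfrac12$) and an exact dual certificate $u=A^*v$ with $u_S=\sgn(x_S)$, $\|u_{S^c}\|_\infty\le\tfrac12$ and $\|v\|_2\lesssim\sqrt{s}$; this is exactly the hypothesis of the $\ell^1$‑recovery condition that underlies the proofs of Theorems~\ref{thm:noiseless} and~\ref{thm:noisy}. Invoking it produces a minimizer $x^\sharp$ of \eqref{pb:qBP} with $\|x-x^\sharp\|_2\le(C_1+C_2\sqrt{s})\eta$, and exact recovery of $x$ via \eqref{pb:BP} when $\eta=0$. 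Summing the probability budgets (each $\varepsilon/3$, with room to spare) keeps the total failure probability below $\varepsilon$, and propagating the numerical constants through Lemma~\ref{lem:localIsometry_ext} and the Hoeffding bound gives the value of $c_1$ in the statement.

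\textbf{Main obstacle.} The delicate part is the bookkeeping around $\delta$: it must be small enough — namely $\delta^{-2}\asymp\ln(6n/\varepsilon)$ — for the union bound over the $n-s$ off‑support indices to close (this is what creates the second logarithm), while remaining $\le\tfrac12$ for the recovery condition, and one has to verify that the resulting count $m\asymp\delta^{-2}\max_{j\in S^c}\Lambda(S\cup\{j\},F)\,\ln\bigl(6(s+1)(n-s)/\varepsilon\bigr)$ is not inflated by a spurious third logarithmic factor and that the leading constant is the claimed one. A minor technical point is the elementary verification of (b), i.e.\ that controlling the full $(s+1)\times(s+1)$ Gram perturbation automatically controls its off‑diagonal column $A_S^*Ae_j$.
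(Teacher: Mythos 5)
Your proposal is correct and follows essentially the same route as the paper: the paper proves a slightly more general partition version (Theorem~\ref{thm:partition_trick}) and obtains Theorem~\ref{thm:oracle_theta} by taking the partition of $S^c$ into singletons, which is exactly your union bound over $j\in S^c$ using $\|A_S^*Ae_j\|_2\le\|A_{S\cup\{j\}}^*A_{S\cup\{j\}}-\Id\|_{2\to2}$ via Lemma~\ref{lem:localIsometry_ext}, followed by the Hoeffding step on the signs with the calibration $\delta\asymp 1/\sqrt{\ln(6n/\varepsilon)}$ producing the product of the two logarithms. The only differences are cosmetic (e.g.\ your threshold $1/2$ for $\|u_{S^c}\|_\infty$ versus the paper's $1$ in the noiseless case), which affect only the numerical constants, not the structure or validity of the argument.
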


In the following, we propose two oracle-type inequalities that improve Theorems \ref{thm:noiseless} and \ref{thm:noisy}, in which the bound of the required number of measurements actually depends on $\Lambda(S,F)$ instead of $\Theta(S,F)$, at the price of an extra assumption involving $\Lambda(S,F)$ and $\Gamma(F)$.

\begin{thm} 
\label{thm:killthetheta} 
Let $x\in \Rbb^n$ or $\Cbb^n$ be a vector supported on $S$ with $|S| = s \leq n/2$, such that $\sgn(x_S)$ forms a Rademacher or Steinhaus sequence. Let $A$ be the random sensing matrix defined in \eqref{eq:sensing_matrix}  associated with parameters $\Lambda(S,F)$ and $\Gamma(S,F)$. Suppose we are given the data $y=Ax + \epsilon$ such that $\|\epsilon\|_2 \leq \eta$. Then, there exist constants $c_1,c_2,C_1,C_2 > 0$ such that the following holds. For every $0 < \varepsilon < 1$, if
\begin{align}
\label{cond:killthetheta}
\Lambda(S,F) \geq c_1 \cdot \Gamma(F),
\end{align}
and if
$$
m \geq c_2 \cdot  \Lambda (S,F) \cdot \ln^2\left( \frac{3n}{\varepsilon} \right),
$$ 
then, with probability at least $1-\varepsilon$, 
a minimizer $x^\sharp$ of \eqref{pb:qBP} satisfies
\begin{align*}
\|x-x^\sharp\|_2 \leq  (C_1 + C_2\sqrt{s})\eta.
\end{align*}
\end{thm}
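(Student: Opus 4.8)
The plan is to obtain the oracle-type bound by following the standard dual-certificate (golfing scheme) approach used for Theorem~\ref{thm:noisy}, but replacing the crude use of $\Theta(S,F)$ with a sharper estimate that exploits the extra assumption $\Lambda(S,F)\gtrsim\Gamma(F)$. Recall that in the proof of Theorem~\ref{thm:noisy} (Appendix~\ref{proof:noisy}), the quantity $\Theta(S,F)$ enters through two places: the local isometry estimate $\|A_S^*A_S-P_S\|_{2\to2}\le\delta$, and the off-support control $\|A_S^*A_{S^c}\operatorname{sgn}(x_S)\|_\infty$ needed to build the inexact dual certificate. For the first, Lemma~\ref{lem:localIsometry_ext} already gives the local isometry under $m\gtrsim\Lambda(S,F)\ln(2s/\varepsilon)$, which is of oracle type. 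The real work is the second ingredient: bounding the off-support inner products, which in the $\Theta$-based proof was dominated by $\|B_\ell^*B_{\ell,S}\|_{\infty\to\infty}$.

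First I would set up the matrix Bernstein / vector Bernstein inequality for the sum $Z=A^*_{S^c}A_S\operatorname{sgn}(x_S)=\frac1m\sum_\ell B_{\ell,S^c}^*B_{\ell,S}\operatorname{sgn}(x_S)$. Conditioning on the random signs $\xi=\operatorname{sgn}(x_S)$ being Rademacher/Steinhaus is exactly what buys us the improvement over \cite{boyer2017compressed,chun2017compressed}: for fixed $\xi$, each summand $B_{\ell,S^c}^*B_{\ell,S}\xi$ is a centred random vector whose $\ell^\infty$-norm (the a.s.\ bound) and whose variance proxy can be estimated. The $\ell^\infty$-bound on a single term can be controlled by $\|B_{\ell,S}\xi\|_2\cdot\max_i\|B_\ell e_i\|_2\le \sqrt{\Lambda(S,F)}\cdot\sqrt{\Gamma(F)}$ (using $\|B_{\ell,S}\xi\|_2^2\le\|B_{\ell,S}^*B_{\ell,S}\|_{2\to2}\|\xi\|_2^2$ is too lossy since $\|\xi\|_2^2=s$, so instead one keeps $\|B_{\ell,S}\xi\|_2$ and applies a Hoeffding-type bound in $\xi$ to get $\|B_{\ell,S}\xi\|_2\lesssim\sqrt{\Lambda(S,F)\ln(n/\varepsilon)}$ with high probability, or absorbs it into the variance). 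The variance term, summed over $\ell$, is governed by $\frac1m\sum_\ell \mathbb E\|B_{\ell,S}^*B_{\ell,S}\|\cdot(\cdots)$, which reduces to $\Lambda(S,F)$ up to constants because of isotropy. Matrix Bernstein then yields $\|Z\|_\infty\lesssim \sqrt{\frac{\Lambda(S,F)\ln(n/\varepsilon)}{m}}+\frac{\sqrt{\Lambda(S,F)\Gamma(F)}\ln(n/\varepsilon)}{m}$, and the condition $\Lambda(S,F)\ge c_1\Gamma(F)$ together with $m\gtrsim\Lambda(S,F)\ln^2(n/\varepsilon)$ makes both terms $\le$ a constant, as required for the golfing scheme.

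Next I would run the golfing scheme exactly as in Appendix~\ref{proof:noisy}: partition the $m$ blocks into $L\asymp\ln(n/\varepsilon)$ batches, construct iterates $v_0=\operatorname{sgn}(x_S)$, $v_\ell=(P_S-A^{(\ell)*}_SA^{(\ell)}_S)v_{\ell-1}$, and $w=\sum_\ell A^{(\ell)*}A^{(\ell)}_S v_{\ell-1}$, so that the inexact dual certificate $w$ satisfies $\|w_S-\operatorname{sgn}(x_S)\|_2\le 1/4$ and $\|w_{S^c}\|_\infty\le 1/4$, both with the desired probability. Each contraction step needs the local isometry estimate on a batch of size $m/L$, which is where the $\ln^2$ appears: $m/L\gtrsim\Lambda(S,F)\ln(n/\varepsilon)$ so that $m\gtrsim\Lambda(S,F)\ln^2(n/\varepsilon)$. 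The $\ell^\infty$-control of $w_{S^c}$ on each batch is precisely the Bernstein estimate above, and it is here — and only here — that the hypothesis $\Lambda(S,F)\gtrsim\Gamma(F)$ is used. Once the certificate exists, the robust recovery bound $\|x-x^\sharp\|_2\le(C_1+C_2\sqrt s)\eta$ follows from the standard qBP stability argument (combining the dual certificate with the restricted conditioning $\sigma_{\min}(A_S)\ge\sqrt{1-\delta}$, exactly as in \cite[Section~12.4]{foucart2013mathematical}).

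The main obstacle I anticipate is the $\ell^\infty$ a.s.\ bound on the single summand $B_{\ell,S^c}^*B_{\ell,S}\xi$ and making it interact cleanly with the randomness in $\xi$: one must decide whether to treat $\xi$ as fixed (and then bound $\max_i|e_i^*B_{\ell,S^c}^*B_{\ell,S}\xi|$ deterministically via Cauchy--Schwarz, which forces the product $\sqrt{\Lambda\,\Gamma}$ but may lose a $\sqrt{s}$ somewhere) or to exploit the sign randomness first (Hoeffding in $\xi$ conditionally on the $B_\ell$'s, then Bernstein in the $B_\ell$'s), which is cleaner but requires a careful union bound over the two sources of randomness and over the $n-s$ off-support coordinates. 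Keeping the final condition free of any $\sqrt s$ or $\ln s$ factor beyond the claimed $\Lambda(S,F)\ln^2(n/\varepsilon)$ is the delicate accounting step; the assumption $\Lambda(S,F)\ge c_1\Gamma(F)$ is exactly calibrated to kill the cross term, and verifying that no additional hypothesis is needed is the crux of the argument.
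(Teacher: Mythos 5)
Your plan is not the paper's route, and as sketched it has a genuine gap. The paper does not use a golfing scheme anywhere (in particular, the proof of Theorem~\ref{thm:noisy} that you propose to imitate is not a golfing argument): it works with the explicit least-squares dual certificate $v=A^*A_S(A_S^*A_S)^{-1}\sgn(x_S)$, bounds the failure probability by three terms, controls $\|A_S^*A_S-\Id\|_{2\to2}$ by Lemma~\ref{lem:localIsometry_ext} (matrix Bernstein with $\Lambda$), controls the sign-free quantity $\max_{\ell\in S^c}\|A_S^*Ae_\ell\|_2$ by a new vector Bernstein estimate (Lemma~\ref{lem:C.5Ext}, with a.s.\ bound $2\sqrt{\Lambda\Gamma}/m$, variance proxy $\Gamma/m$ and mean $\sqrt{\Lambda/m}$ — this is exactly where $\Lambda\gtrsim\Gamma$ is spent), and only then applies Hoeffding over the random signs to $\langle A_S^\dagger Ae_\ell,\sgn(x_S)\rangle$ conditionally on those events; robustness then follows from Proposition~\ref{prop:FRstable}. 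Theorem~\ref{thm:killthetheta} is obtained from this proof of Theorem~\ref{thm:killthetalog} by redoing the exponent bookkeeping with $\Lambda\geq c_1\Gamma$ and $m\gtrsim\Lambda\ln^2(3n/\varepsilon)$.

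The gap in your golfing version is precisely the $\sqrt{s}$ accounting you flag but do not resolve. With only the $\ell_2$-decay $\|v_{\ell-1}\|_2\leq 2^{-(\ell-1)}\sqrt{s}$, the off-support accumulation $\|w_{S^c}\|_\infty\leq\sum_\ell\|A^{(\ell)*}_{S^c}A^{(\ell)}_S v_{\ell-1}\|_\infty$ forces per-batch coordinatewise bounds at level $\sim 1/\sqrt{s}$; Bernstein with a.s.\ bound $\sqrt{\Lambda\Gamma}$ and variance $\Gamma m_\ell$ then demands $m_\ell\gtrsim s\,\Gamma\ln(n/\varepsilon)$, i.e.\ $m\gtrsim s\,\Gamma\ln^2(n/\varepsilon)$, which is \emph{not} dominated by $\Lambda\ln^2$ (one always has $\Lambda\leq s\,\Gamma$, and $\Lambda\ll s\,\Gamma$ is exactly the structured regime this theorem targets). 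The standard repair (tracking the $\ell_\infty$-decay of the iterates, as in Cand\`es--Plan) needs an additional per-batch estimate of the type $\|(P_S-A_S^{(\ell)*}A_S^{(\ell)})v\|_\infty\lesssim\|v\|_\infty$, which brings in a $\Theta$-like quantity you never control and which is not bounded by $\Lambda$ under the sole hypothesis $\Lambda\gtrsim\Gamma$; moreover the random-sign Hoeffding step applies only to $v_0=\sgn(x_S)$, since the later iterates are no longer sign vectors, so sign randomness cannot be reused batch by batch. Your alternative first-paragraph route (direct Bernstein on $A_{S^c}^*A_S\sgn(x_S)$) is closer to workable but leaves unaddressed the interleaved factor $(A_S^*A_S)^{-1}$ in the actual certificate: expanding it as $\Id+\text{error}$ reintroduces a $\sqrt{s}$, and the clean way out is the paper's ordering, namely condition on $\max_{\ell\in S^c}\|A_S^\dagger Ae_\ell\|_2\leq\alpha$ first and apply Hoeffding over the signs afterwards.
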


Making an assumption on $\Lambda(S,F)$ stronger than \eqref{cond:killthetheta} allows to \replace{}{``}kill\replace{}{''} an extra log factor in the required number of measurements. This is the purpose of the following theorem.

\begin{thm} 
\label{thm:killthetalog} 
Let $x\in \Rbb^n$ or $\Cbb^n$ be a vector supported on $S$ with $|S| = s \leq n/2$, such that $\sgn(x_S)$ forms a Rademacher or Steinhaus sequence. Let $A$ be the random sensing matrix defined in \eqref{eq:sensing_matrix}  associated with parameters $\Lambda(S,F)$ and $\Theta(S,F)$. Suppose we are given the data $y=Ax + \epsilon$, with $\|\epsilon\|_2 \leq \eta$. Then, there exist constants $c_1,c_2,C_1,C_2 > 0$ such that the following holds. For every $0 < \varepsilon < 1$, if
\begin{align}
\label{cond:killthethetalog}
\Lambda(S,F) \geq c_1 \cdot \Gamma(F) \cdot \ln(3 n/\varepsilon),
\end{align}
and if
$$
m \geq c_2 \cdot  \Lambda (S,F) \cdot \ln\left( \frac{3n}{\varepsilon} \right),
$$ 
then, with probability at least $1-\varepsilon$,
a minimizer $x^\sharp$ of \eqref{pb:qBP} satisfies
\begin{align*}
\|x-x^\sharp\|_2 \leq  (C_1 + C_2\sqrt{s})\eta.
\end{align*}
In particular, in the noiseless case (i.e., $\eta = 0$) the signal $x$ is exactly recovered via \eqref{pb:BP} with probability at least $1-\varepsilon$ and with constants $c_1 = 50$ and $c_2 = 100$.
\end{thm}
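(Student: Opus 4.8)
The plan is to run the inexact dual certificate (``golfing'') argument already used for Theorems~\ref{thm:noiseless}--\ref{thm:killthetheta}, and to isolate the single concentration estimate in which the stronger hypothesis \eqref{cond:killthethetalog} removes one logarithmic factor. As in those proofs, I would first reduce recovery of $x$ via \eqref{pb:qBP} (and exact recovery via \eqref{pb:BP} when $\eta=0$), by the same deterministic recovery lemma, to the conjunction of: (a) a conditioning bound $\|A_S^*A_S - P_S\|_{2\to 2}\le\delta$ for a fixed small $\delta$; and (b) the existence of $w\in\Rc(A)$ with $\|(A^*w)_S-\sgn(x_S)\|_2$ small, $\|(A^*w)_{S^c}\|_\infty$ bounded strictly below $1$, and $\|w\|_2\le C\sqrt s$. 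Given (a) and (b), the lemma returns exact recovery when $\eta=0$ and $\|x-x^\sharp\|_2\le(C_1+C_2\sqrt s)\eta$ otherwise, the $\sqrt s$ coming from $\|w\|_2\lesssim\sqrt s$ (the same $\sqrt s$-loss relative to Proposition~\ref{prop:oracle} that is discussed after Theorem~\ref{thm:noisy}).

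The conditioning (a) I would handle exactly as in Proposition~\ref{prop:oracle}: Lemma~\ref{lem:localIsometry_ext} gives $\|A_S^*A_S-P_S\|_{2\to 2}\le\delta$ with probability at least $1-\varepsilon/3$ once $m\gtrsim\Lambda(S,F)\,\delta^{-2}\ln(2s/\varepsilon)$, which is implied by $m\ge c_2\,\Lambda(S,F)\ln(3n/\varepsilon)$; on that event $\sigma_{\min}(A_S)\ge\sqrt{1-\delta}$ and $\|(A_S^*A_S)^{-1}\|_{2\to 2}\le(1-\delta)^{-1}$. The certificate $w$ in (b) I would build by a golfing iteration: split the $m$ blocks into a constant number $L$ of batches of size $\gtrsim\Lambda(S,F)\ln(3n/\varepsilon)$ and, at step $i$, add a correction built from batch $i$ so that the $\ell^2$-residual $\|\sgn(x_S)-(A^*w^{(i)})_S\|_2$ contracts by a fixed factor while the accumulated $\ell^\infty$-mass on $S^c$ stays geometrically summable below $1$; telescoping then yields $\|w\|_2\lesssim\|\sgn(x_S)\|_2=\sqrt s$. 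The randomness of $\sgn(x_S)$ (Rademacher in the real case, Steinhaus in the complex case) enters only through a coordinatewise Hoeffding/Khintchine-type bound on $S^c$.

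The heart of the argument, and the only step that differs from Theorem~\ref{thm:killthetheta}, is the per-batch control of $\|A_{S^c}^*A_S v\|_\infty$ and $\|(A_S^*A_S-P_S)v\|_2$ for a fixed $v$, where now $A$ denotes the submatrix formed by the blocks of one batch. The $j$-th coordinate ($j\in S^c$) equals $\tfrac{1}{m_{\mathrm{batch}}}\sum_\ell (B_\ell e_j)^*B_{\ell,S}v$, a sum of independent mean-zero terms, so Bernstein's inequality produces a sub-Gaussian term with variance proxy of order $\Lambda(S,F)\|v\|_2^2/m_{\mathrm{batch}}$ (using the isotropy identity $\tfrac1m\sum_\ell\Ebb\|B_\ell e_j\|_2^2=1$) and a sub-exponential term governed by the almost-sure bound $\|B_{\ell,S}^*B_\ell e_j\|_2\le\sqrt{\Lambda(S,F)\,\Gamma(F)}\,\|v\|_2$. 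A union bound over $j\in S^c$ and over the $L$ batches costs one factor $\ln(n/\varepsilon)$. Under the weaker hypothesis $\Lambda(S,F)\gtrsim\Gamma(F)$ of Theorem~\ref{thm:killthetheta}, the sub-exponential term forces the per-batch budget up to $\Lambda(S,F)\ln^2(3n/\varepsilon)$; under the present hypothesis $\Lambda(S,F)\gtrsim\Gamma(F)\ln(3n/\varepsilon)$ that term is dominated by the sub-Gaussian one at the deviation scale dictated by the golfing contraction, so a per-batch size $\gtrsim\Lambda(S,F)\ln(3n/\varepsilon)$, hence $m\gtrsim\Lambda(S,F)\ln(3n/\varepsilon)$ overall, suffices. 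I would close with a union bound over the conditioning event and the $O(L)$ certificate events (total failure probability $\le\varepsilon$) and an appeal to the deterministic recovery lemma; the explicit constants $c_1=50$, $c_2=100$ in the noiseless case are obtained by tracking the geometric series of the golfing iteration together with the constant in Lemma~\ref{lem:localIsometry_ext}.

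\textbf{Main obstacle.} The delicate point is precisely this Bernstein balancing: to pay only one logarithm one must certify that the sub-exponential tail is negligible at the deviation level required by the golfing contraction, and it is exactly the gap $\Lambda(S,F)\gtrsim\Gamma(F)\ln(3n/\varepsilon)$ — rather than merely $\Lambda\gtrsim\Gamma$ — that achieves this. Secondary but necessary bookkeeping is to verify that the certificate's $\ell^\infty$-mass on $S^c$ stays strictly below $1$ after the (geometrically shrinking) batch contributions are summed and that $\|w\|_2$ stays $O(\sqrt s)$, so that the robustness bound matches the oracle estimate of Proposition~\ref{prop:oracle} up to the $\sqrt s$-factor that is unavoidable with current proof techniques.
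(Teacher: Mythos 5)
Your proposal misdescribes the paper's machinery and, more importantly, the golfing construction you substitute for it would not deliver the $\Lambda$-only bound that is the whole point of the theorem. The paper's proofs of Theorems~\ref{thm:noiseless}--\ref{thm:killthetalog} do not use golfing at all: they use the \emph{explicit} least-squares dual certificate $v=A^*A_S(A_S^*A_S)^{-1}\sgn(x_S)$, bound $\|v_{S^c}\|_\infty$ by Hoeffding conditionally on the event $\max_{\ell\in S^c}\|A_S^\dagger Ae_\ell\|_2\le\alpha$, and reduce everything to the three probabilities $P_1,P_2,P_3$ of Appendix~\ref{proof:noiseless}. The only change in Theorem~\ref{thm:killthetalog} is that $P_3=\Pbb(\max_{\ell\in S^c}\|A_S^*Ae_\ell\|_2\ge t)$ is controlled by Lemma~\ref{lem:C.5Ext} (variance proxy $\Gamma$, almost-sure bound $\sqrt{\Lambda\Gamma}$) instead of Lemma~\ref{lem:C.3Ext}, with $t\asymp\sqrt{\Lambda/m}$, i.e.\ a deviation that is a \emph{constant multiple of the mean}; it is exactly there that \eqref{cond:killthethetalog} makes the exponent $\gtrsim\ln(3n/\varepsilon)$ with only $m\gtrsim\Lambda\ln(3n/\varepsilon)$. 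You correctly locate the Bernstein balancing role of \eqref{cond:killthethetalog}, but you attach it to the wrong certificate.

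The gap in your route is quantitative, not cosmetic. In a golfing scheme the sign randomness can only be exploited for the initial residual $\sgn(x_S)$; from the second batch on, the residual $v^{(i-1)}$ is an arbitrary vector of norm up to $2^{-(i-1)}\sqrt s$, and you must control $\|A^{(i)*}_{S^c}A^{(i)}_S v^{(i-1)}\|_\infty$ at an \emph{absolute} scale $c_i$ with $\sum_i c_i<1$ (so that the certificate stays strictly inside the unit $\ell^\infty$ ball off the support). A per-coordinate Bernstein bound with your variance proxy $\Lambda\|v\|_2^2/m_{\mathrm{batch}}$ (or the sharper $\Gamma\|v\|_2^2/m_{\mathrm{batch}}$) then forces $m_{\mathrm{batch}}\gtrsim\Lambda s\ln(n/\varepsilon)$ (respectively $\Gamma s\ln(n/\varepsilon)$) for the early batches, and $s\,\Gamma(F)$ is \emph{not} dominated by $\Lambda(S,F)$ in the structured/coherent regimes this theorem targets (that is precisely why the golfing-based analyses of \cite{boyer2017compressed,chun2017compressed} end up with a maximum of two quantities, which this paper's random-sign argument is designed to avoid). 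Shrinking the first-batch contraction $\delta_1$ to compensate only moves the $s\Gamma$ cost into the first batch, so the argument is circular. As written, your claim that a constant number of batches of size $\gtrsim\Lambda\ln(3n/\varepsilon)$ suffices is therefore unjustified and, in general, false; to prove the theorem you should instead run the exact-certificate argument of Appendix~\ref{proof:noiseless}/\ref{proof:noisy} (injectivity via Lemma~\ref{lem:localIsometry_ext}, stability via Proposition~\ref{prop:FRstable}) and replace the $\Theta$-based Lemma~\ref{lem:C.3Ext} by Lemma~\ref{lem:C.5Ext}, which is where \eqref{cond:killthethetalog} removes the extra logarithm.
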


The proof of Theorem~\ref{thm:killthetalog} is given in Appendix \ref{proof:noiseless_log}. Moreover, the proof of Theorem~\ref{thm:killthetheta} easily follows from that of Theorem~\ref{thm:killthetalog}.  Although the bound on the number of measurements in Theorem \ref{thm:killthetalog} only depends on $\Lambda(S,F)$ that encapsulates support information and sampling coherence related to the support $S$, the global coherence is somewhat restricted by the extra condition \eqref{cond:killthethetalog}.  
\replace{In the same spirit of Remark~\ref{rmrk:finer_error_bound}, let us also point out that the proof reveals the more precise bound \eqref{eq:noisy_stable_rec} for the reconstruction error.}{}
Finally, we note in passing that the numerical values proposed for the constants $c_1$ and $c_2$ in the statement of Theorem~\ref{thm:killthetalog} could be further optimized.


We will see that Assumptions \eqref{cond:killthetheta} or \eqref{cond:killthethetalog} can be satisfied in practice in some applications in Section \ref{sec:appli}. However, in the case of isolated measurements, by recalling \eqref{Lambda_iso_bos} and \eqref{Gamma_iso_bos} condition \eqref{cond:killthethetalog} can be rewritten as follows:
$$
\sup_{a\sim F} \|a_S\|_2^2 \geq c_1 \cdot \sup_{a\sim F} \|a\|_\infty^2 \ln(3 n/\varepsilon),
$$
in which $\sup_{a\sim F} \|a\|_\infty^2$ is the global sampling coherence \cite{candes2011probabilistic}.  
By subsampling an isometry as in the setting \ref{iso_finite_setting} of Section \ref{subsec:sampling}, and considering totally incoherent sampling, as for instance sampling Fourier frequencies uniformly at random, the previous condition becomes
$$
s \geq c_1 \cdot \ln(3 n/\varepsilon),
$$
{which is not too restrictive in practice.}

\begin{remark}
\label{rem:stability}
\replace{}{The conditions involving $m$, $\Gamma$ and $\Lambda$ in Theorems~\ref{thm:oracle_theta}, \ref{thm:killthetheta}, or \ref{thm:killthetalog} are sufficient to guarantee stable and robust recovery estimates of the form \eqref{eq:noisy_stable_rec}. Moreover, it is possible to show that the same assumptions are also sufficient to guarantee stable and robust recovery for the oracle least-squares estimator. For further details, we refer the reader to the discussion in Appendix~\ref{app:stability}.} 
\end{remark}

\section{Deriving an optimal sampling strategy}
\label{sec:optimal_drawing}

The purpose of this section is to employ the results in Section~\ref{sec:main_results} to derive optimal sampling strategies within the framework of block-\replace{type}{structured} sampling and isolated measurements from a finite-dimensional isometry illustrated in Section~\ref{subsec:sampling}.
\replace{}{Indeed, the bounds on $m$ just presented in Section \ref{sec:main_results} have the remarkable feature to be the first ones in CS with generalized structured sampling, depending only on one quantity, namely either $\Theta$ or $\Lambda$. This is the key to derive theoretical optimal sampling strategies: minimizing such bounds on $m$ with respect to the sampling distribution, leads to the first theoretical closed form for the sampling probabilities when using structured blocks of measurements.}
 All the proofs are presented in Appendix \ref{app:proof_proba}. 

\label{subsec:opt_proba}

The case of block-structured sampling strategy was introduced in \cite{boyer2017compressed} and adapted to parallel acquisition in \cite{chun2017compressed}, where variable density sampling with structured acquisition was considered. However, based on the analysis proposed in these works, minimizing the bound on the number of measurements with respect to $\pi$ is not trivial and no closed form for $\pi$ could be analytically given in general. In the following, Propositions \ref{prop_opt_drawing_proba} and \ref{prop_opt_drawing_proba_lambda} are the first theoretical results  in CS on an optimal way to perform structured acquisition while taking into account partially coherent transforms and structured sparsity.
We start by deriving optimal sampling strategies from Theorems \ref{thm:noiseless} and \ref{thm:noisy} in the finite setting described in 	Section \ref{subsec:sampling}~\ref{block_finite_setting}.

\begin{prop}
\label{prop_opt_drawing_proba}
In a finite setting where we sample blocks in a partitioned isometry as in Section~\ref{subsec:sampling}~\ref{block_finite_setting}, Equation \eqref{eq:bos_sensing_matrix}, the drawing probability minimizing $\Theta(S,\pi)$ is the following:
\begin{align}
\label{eq:opt_drawing_proba}
\forall k \in \{1,\hdots , M \}, \quad \pi_k = \pi_k^\Theta =  \frac{\| D_k^* D_{k,S} \|_{\infty \to \infty}}{\sum_{\ell=1}^M \| D_\ell^* D_{\ell,S} \|_{\infty \to \infty}}.
\end{align}
With such a choice for $\pi$, the required number $m$ of blocks of measurements in Theorems \ref{thm:noiseless} and \ref{thm:noisy} can be then rewritten as follows:
\begin{align}
m \geq c \cdot \sum_{\ell=1}^M \| D_\ell^* D_{\ell,S} \|_{\infty \to \infty} \cdot  \ln^2(6n/\varepsilon),
\end{align}
for a suitable universal constant $c >0$. These conditions ensure noiseless recovery via \eqref{pb:BP}, or stable and robust recovery via \eqref{pb:qBP}, with probability at least $1-\varepsilon$.
\end{prop}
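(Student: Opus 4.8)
The plan is to reduce this proposition to an elementary optimization over the probability simplex, once the right bookkeeping has been set up. First I would recall that in the i.i.d.\ block-structured finite setting of Section~\ref{subsec:sampling} the random block $B$ equals $D_k/\sqrt{\pi_k}$ with probability $\pi_k$, so that on that event $B^*B_S = D_k^*D_{k,S}/\pi_k$; hence, by \eqref{ineq:Theta_bos}, the least admissible value in Definition~\ref{def:quantities} is $\Theta(S,\pi) = \max_{1\le k\le M}\theta_k/\pi_k$, where I set $\theta_k := \|D_k^*D_{k,S}\|_{\infty\to\infty}\ge 0$. Minimizing the measurement conditions of Theorems~\ref{thm:noiseless} and~\ref{thm:noisy} over the drawing distribution therefore amounts to computing $\min\{\max_k \theta_k/\pi_k : \pi_k\ge 0,\ \sum_k\pi_k=1\}$.

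Next I would establish the two matching bounds that pin down the minimizer. For the lower bound, for any probability vector $\pi$ a maximum dominates a weighted average, so
$$
\max_{1\le k\le M}\frac{\theta_k}{\pi_k}\ \ge\ \sum_{k=1}^M \pi_k\cdot\frac{\theta_k}{\pi_k}\ =\ \sum_{k=1}^M\theta_k .
$$
For the upper bound, the ``equalizing'' choice $\pi_k^\Theta = \theta_k/\sum_\ell\theta_\ell$ makes every ratio equal to $\sum_\ell\theta_\ell$; moreover it is a genuine probability distribution, since using $\sum_\ell D_\ell^*D_\ell = A_0^*A_0 = \Id$ together with subadditivity of the operator norm $\|\cdot\|_{\infty\to\infty}$ gives $\sum_\ell\theta_\ell \ge \|\sum_\ell D_\ell^*D_{\ell,S}\|_{\infty\to\infty} = \|P_S^*\|_{\infty\to\infty} = 1 > 0$ whenever $S\ne\emptyset$. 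Consequently $\pi^\Theta$ minimizes $\Theta(S,\pi)$ and attains the value $\Theta(S,\pi^\Theta) = \sum_{\ell=1}^M\|D_\ell^*D_{\ell,S}\|_{\infty\to\infty}$, which is precisely \eqref{eq:opt_drawing_proba}.

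Finally, I would substitute this optimal value of $\Theta(S,F)$ into the measurement conditions of Theorems~\ref{thm:noiseless} and~\ref{thm:noisy}, taking $c$ to be the larger of the two numerical constants appearing there; this yields the displayed bound on $m$ together with the claimed recovery guarantees for \eqref{pb:BP} and \eqref{pb:qBP}. I do not expect any genuine obstacle here: the core of the argument is the one-line water-filling estimate above, and everything else is substitution. The only point deserving a remark is the degenerate case in which some $\theta_k = 0$, so that $\pi^\Theta_k = 0$; such a block carries no information about the recovery of an $S$-sparse signal and can be assigned an arbitrarily small positive probability (at no asymptotic cost) in order to retain the isotropy identity $\sum_k D_k^*D_k = \Id$, so one may assume without loss of generality that $\theta_k > 0$ for every $k$.
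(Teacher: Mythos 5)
Your proposal is correct and follows essentially the same route as the paper: the paper also reduces the proposition to minimizing $\max_k \|D_k^* D_{k,S}\|_{\infty\to\infty}/\pi_k$ over the probability simplex (its Lemma~\ref{lem:max_over_simplex}, proved by a pointwise comparison argument rather than your max-versus-weighted-average bound) and then substitutes the optimal value into Theorems~\ref{thm:noiseless} and~\ref{thm:noisy}. Your extra remarks on the nonvanishing of the normalization constant and on blocks with $\|D_k^* D_{k,S}\|_{\infty\to\infty}=0$ (where isotropy forces $\pi_k>0$) are sensible refinements of points the paper leaves implicit.
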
 

At the price of the extra assumption, as \eqref{cond:killthetheta} or \eqref{cond:killthethetalog} on $\Lambda{(S,\pi)}$ and $\Gamma{(\pi)}$ in Theorems \ref{thm:killthetheta} or \ref{thm:killthetalog}, one could choose a drawing probability minimizing $\Lambda{(S,\pi)}$ instead of $\Theta{(S,\pi)}$. This is the purpose of the following results, corresponding to the case of block-structured and isolated measurements, respectively. For the sake of simplicity, we just illustrate the derivation from Theorem~\ref{thm:killthetalog}.

\begin{prop}
\label{prop_opt_drawing_proba_lambda}
In a finite setting where we sample blocks in a partitioned isometry as in Section \ref{subsec:sampling} \ref{block_finite_setting}, Equation \eqref{eq:bos_sensing_matrix}, 
assume that $|S|\leq n/2$ and that
\begin{equation}
\label{eq:extra_hyp_1}
1\geq c_1 \cdot \max_{1\leq k \leq M} \frac{\|D_k \|_{1\rightarrow 2}^2}{\| D_{k,S}^* D_{k,S} \|_{2\rightarrow 2}} \ln(3 n/\varepsilon),
\end{equation}
for $c_1$ a universal constant.
Then the optimal drawing probability minimizing $\Lambda(S,\pi)$ is 
\begin{align}
\label{eq:opt_drawing_proba_lambda}
\forall k \in \{1,\hdots , M \}, \quad \pi_k =  \pi_k^\Lambda = \frac{\| D_{k,S}^* D_{k,S} \|_{2 \to 2}}{\sum_{\ell=1}^M \| D_{\ell,S}^* D_{\ell,S} \|_{2 \to 2}}.
\end{align}
With such a choice for $\pi$, the required number $m$ of blocks of measurements in Theorem \ref{thm:killthetalog} can be then rewritten as follows:
\begin{align}
m \geq c_{2} \cdot \sum_{\ell=1}^M \| D_{\ell,S}^* D_{\ell,S} \|_{2 \to 2} \cdot  \ln(6n/\varepsilon),
\end{align}
for a suitable universal constant $c_2 >0$. These conditions ensure noiseless recovery via \eqref{pb:BP}, or stable and robust recovery via \eqref{pb:qBP}, with probability at least $1-\varepsilon$.
\end{prop}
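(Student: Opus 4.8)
The plan is to obtain \eqref{eq:opt_drawing_proba_lambda} as the solution of an explicit finite-dimensional minimization problem, exactly parallel to how Proposition~\ref{prop_opt_drawing_proba} would be proved from \eqref{ineq:Theta_bos}. Recall from \eqref{Lambda_bos} that, for a block sampling strategy with probabilities $\pi = (\pi_k)_{1\leq k\leq M}$, the least-upper bound parameter $\Lambda(S,\pi)$ admissible in Definition~\ref{def:quantities} is
\[
\Lambda(S,\pi) = \max_{1\leq k \leq M} \frac{1}{\pi_k} \left\| D_{k,S}^* D_{k,S} \right\|_{2\rightarrow 2}.
\]
So the first step is to minimize the right-hand side over all probability vectors $\pi$ on $\{1,\dots,M\}$. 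Writing $\beta_k := \left\| D_{k,S}^* D_{k,S} \right\|_{2\rightarrow 2}$, we must minimize $\max_k \beta_k/\pi_k$ subject to $\pi_k \geq 0$, $\sum_k \pi_k = 1$. A standard water-filling / balancing argument shows the optimum is attained when all ratios are equal, i.e.\ $\beta_k/\pi_k \equiv \lambda$ for all $k$ with $\beta_k > 0$; summing forces $\lambda = \sum_\ell \beta_\ell$ and hence $\pi_k = \beta_k / \sum_\ell \beta_\ell$, which is precisely \eqref{eq:opt_drawing_proba_lambda}. (One should note that $\sum_\ell \beta_\ell > 0$ because $D_0^* D_0 = \Id$ with $D_0$ the stacking of all the $D_k$, so not all $D_{k,S}$ can vanish when $S \neq \emptyset$; and for indices with $\beta_k = 0$ the choice of $\pi_k$ is immaterial to $\Lambda$, though one may set them to zero.) With this choice, $\Lambda(S,\pi^\Lambda) = \sum_{\ell=1}^M \left\| D_{\ell,S}^* D_{\ell,S} \right\|_{2\rightarrow 2}$, which is the quantity appearing in the claimed bound on $m$.

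The second step is to verify that with $\pi = \pi^\Lambda$ the hypotheses of Theorem~\ref{thm:killthetalog} are met, so that its conclusion applies verbatim. The condition $|S| \leq n/2$ is assumed. For the extra assumption \eqref{cond:killthethetalog}, namely $\Lambda(S,\pi) \geq c_1 \Gamma(\pi) \ln(3n/\varepsilon)$, we use \eqref{Gamma_bos}: with $\pi = \pi^\Lambda$,
\[
\Gamma(\pi^\Lambda) = \max_{1\leq k \leq M} \frac{\|D_k\|_{1\to 2}^2}{\pi_k^\Lambda} = \left( \sum_{\ell=1}^M \beta_\ell \right) \cdot \max_{1\leq k\leq M} \frac{\|D_k\|_{1\to 2}^2}{\beta_k}.
\]
Since $\Lambda(S,\pi^\Lambda) = \sum_\ell \beta_\ell$, the condition \eqref{cond:killthethetalog} becomes exactly
\[
1 \geq c_1 \cdot \max_{1\leq k\leq M} \frac{\|D_k\|_{1\to 2}^2}{\left\| D_{k,S}^* D_{k,S}\right\|_{2\to 2}} \ln(3n/\varepsilon),
\]
which is the hypothesis \eqref{eq:extra_hyp_1} of the proposition. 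Hence Theorem~\ref{thm:killthetalog} applies with these parameters, giving noiseless exact recovery via \eqref{pb:BP} and stable/robust recovery via \eqref{pb:qBP} with probability at least $1-\varepsilon$, once $m \geq c_2 \cdot \Lambda(S,\pi^\Lambda) \ln(3n/\varepsilon) = c_2 \sum_\ell \beta_\ell \ln(3n/\varepsilon)$; absorbing the change of $\ln(3n/\varepsilon)$ to $\ln(6n/\varepsilon)$ into the constant $c_2$ yields the stated bound on $m$.

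The only genuinely non-routine point is the first step — confirming that the balancing solution is indeed the global minimizer of $\pi \mapsto \max_k \beta_k/\pi_k$ over the simplex, including the degenerate indices with $\beta_k = 0$. This is elementary (if some ratio exceeded the common value $\lambda^\star=\sum\beta_\ell$, one could shift mass toward that block to strictly decrease the max, a contradiction), but it is worth stating cleanly so that the optimality claim in the proposition is justified rather than merely asserted; everything else is bookkeeping with the definitions \eqref{Lambda_bos}--\eqref{Gamma_bos} and a direct appeal to Theorem~\ref{thm:killthetalog}. I would present the argument for isolated measurements (the companion result) identically, simply replacing $\left\| D_{k,S}^* D_{k,S}\right\|_{2\to 2}$ by $\left\| d_{k,S}\right\|_2^2$ and $\|D_k\|_{1\to 2}^2$ by $\|d_k\|_\infty^2$ via \eqref{Lambda_iso_bos}--\eqref{Gamma_iso_bos}.
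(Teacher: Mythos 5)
Your proposal is correct and follows essentially the same route as the paper: the paper proves the simplex minimization as Lemma~\ref{lem:max_over_simplex} (via the pigeonhole observation that any competitor $q\neq\pi^\star$ must have $q_{j_0}<\pi^\star_{j_0}$ for some $j_0$, hence $\max_k \beta_k/q_k > \sum_\ell \beta_\ell$, which is a slightly cleaner global argument than your balancing heuristic but identical in substance) and then states that Proposition~\ref{prop_opt_drawing_proba_lambda} follows by combining this lemma with Theorem~\ref{thm:killthetalog}. Your explicit verification that $\Gamma(\pi^\Lambda)=\bigl(\sum_\ell \beta_\ell\bigr)\max_k \|D_k\|_{1\to 2}^2/\beta_k$ turns condition \eqref{cond:killthethetalog} into \eqref{eq:extra_hyp_1} is exactly the bookkeeping the paper leaves implicit, and your handling of the $\beta_k=0$ edge case and the $\ln(3n/\varepsilon)$ versus $\ln(6n/\varepsilon)$ discrepancy is sound.
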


In the following, we specialize the previous result to the case of isolated measurements subsampled from an isometry, described in Section \ref{subsec:sampling} \ref{iso_finite_setting}.

\begin{prop}
\label{prop_opt_proba_iso}
In a finite setting where we sample isolated measurements as in Section~\ref{subsec:sampling}~\ref{iso_finite_setting}, Equation \eqref{eq:bos_iso_sensing_matrix}, the drawing probability minimizing $\Theta(S,\pi)$ is the following:
\begin{align}
\label{eq:opt_drawing_proba_iso}
\forall k \in \{1,\hdots , n \}, \quad \pi_k =  \pi_k^\Theta = \frac{\| \replacemath{a}{d}_k \|_\infty \| \replacemath{a}{d}_{k,S} \|_1}{\sum_{\ell=1}^n \| \replacemath{a}{d}_\ell \|_\infty \| \replacemath{a}{d}_{\ell,S} \|_1}.
\end{align}
With such a choice for $\pi$, the required number $m$ of measurements in Theorems \ref{thm:noiseless} and \ref{thm:noisy} can be then rewritten as follows:
\begin{align}
\label{eq:m_bound_iso}
m \geq c \cdot \sum_{\ell=1}^n \| \replacemath{a}{d}_\ell \|_\infty \| \replacemath{a}{d}_{\ell,S} \|_1 \cdot  \ln^2(6n/\varepsilon),
\end{align}
for a suitable universal constant $c >0$. These conditions ensure noiseless recovery via \eqref{pb:BP}, or stable and robust recovery via \eqref{pb:qBP}, with probability at least $1-\varepsilon$.
\end{prop}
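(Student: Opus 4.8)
The plan is to regard $\Theta(S,\pi)$, taken as the least upper bound compatible with Definition~\ref{def:quantities}, as an explicit function of the sampling distribution $\pi$ and to minimize it over the probability simplex. Specializing the block bound \eqref{ineq:Theta_bos} to singleton blocks $D_k = d_k^*$ (equivalently, using \eqref{ineq:Theta_iso_bos}), this least upper bound is
\[
\Theta(S,\pi) = \max_{1\leq k \leq n} \frac{c_k}{\pi_k}, \qquad c_k := \|d_k\|_\infty\,\|d_{k,S}\|_1 \geq 0,
\]
with the convention $c_k/\pi_k = 0$ when $c_k = \pi_k = 0$ and $c_k/\pi_k = +\infty$ when $c_k>0$ and $\pi_k = 0$. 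Since the rows of the isometry $A_0$ span $\Cbb^n$, for every $i \in S$ some row $d_k$ has nonzero $i$-th entry, so $\sum_{\ell=1}^n c_\ell > 0$. The task thus reduces to solving the minimax problem $\min\{\max_k c_k/\pi_k : \pi_k \geq 0,\ \sum_k \pi_k = 1\}$.

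First I would prove the matching lower bound: for any admissible $\pi$ and any index $j$ one has $\pi_j\cdot \max_k(c_k/\pi_k) \geq c_j$; summing over $j$ and using $\sum_j \pi_j = 1$ gives $\max_k(c_k/\pi_k) \geq \sum_{\ell} c_\ell$, i.e. $\Theta(S,\pi) \geq \sum_{\ell=1}^n \|d_\ell\|_\infty\|d_{\ell,S}\|_1$ for every admissible $\pi$. Next I would verify that the proposed $\pi^\Theta$ of \eqref{eq:opt_drawing_proba_iso}, namely $\pi_k^\Theta = c_k / \sum_\ell c_\ell$, is admissible (nonnegative, sums to one) and attains this bound: for each $k$ with $c_k>0$ one has $c_k/\pi_k^\Theta = \sum_\ell c_\ell$, whereas indices with $c_k = 0$ (hence $\pi_k^\Theta = 0$) contribute $0$ to the maximum. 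Therefore $\Theta(S,\pi^\Theta) = \sum_{\ell=1}^n \|d_\ell\|_\infty\|d_{\ell,S}\|_1$, so $\pi^\Theta$ is a minimizer (unique among distributions supported on $\{k : c_k > 0\}$).

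Finally I would substitute $\Theta(S,\pi^\Theta)$ into the measurement conditions of Theorems~\ref{thm:noiseless} and~\ref{thm:noisy}: the requirement $m \geq c_1\,\Theta(S,\pi^\Theta)\,\ln^2(6n/\varepsilon)$ becomes exactly \eqref{eq:m_bound_iso}, and the associated noiseless recovery via \eqref{pb:BP} and stable, robust recovery via \eqref{pb:qBP} follow verbatim from those theorems. Alternatively, the whole statement is just the singleton-block instance of Proposition~\ref{prop_opt_drawing_proba}, since $\|D_k^* D_{k,S}\|_{\infty\to\infty} = \|d_k\|_\infty\|d_{k,S}\|_1$ when $D_k = d_k^*$.

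There is essentially no hard step here: the core is a one-line water-filling computation. The only points needing a little care are (i) working throughout with $\Theta$ as the \emph{least} upper bound, so that the minimization is the right object, and (ii) the bookkeeping of possibly vanishing coefficients $c_k$ via the $0/0$ convention together with $\sum_k c_k > 0$ --- neither of which is a genuine obstacle.
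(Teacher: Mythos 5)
Your proposal is correct and follows essentially the same route as the paper: the paper reduces the statement to Lemma~\ref{lem:max_over_simplex} (minimizing $\max_k \gamma_k/\pi_k$ over the simplex, with minimizer $\pi_k \propto \gamma_k$ and minimum value $\sum_k \gamma_k$) applied with $\gamma_k = \|d_k\|_\infty\|d_{k,S}\|_1$, and then plugs the resulting value of $\Theta(S,\pi^\Theta)$ into Theorems~\ref{thm:noiseless} and~\ref{thm:noisy}. Your summation argument for the lower bound $\max_k c_k/\pi_k \geq \sum_\ell c_\ell$ is just a minor variant of the paper's proof of that lemma, so the two proofs coincide in substance.
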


Let us put the previous result into context. Variable density strategies were introduced to deal with partially coherent transforms  (such as in the case of MRI) \cite{krahmer2014stable,chauffert2014variable,puy2011variable}. The idea is to sample the more coherent atoms $(\replacemath{a}{d}_k)_{k}$ {with higher probability} by setting for all $k\in \{1,\hdots , n\}$
\begin{equation}
\label{eq:variabledensity}
\pi_k = \frac{\|\replacemath{a}{d}_k \|_\infty^2}{\sum_{\ell=1}^n \| \replacemath{a}{d}_\ell\|_\infty^2},
\end{equation}
meaning that we tend to sample more where the sampling transform is coherent.
With such a choice, one can ensure robust recovery with probability $1-\varepsilon$ with a number of measurements of the order
\begin{equation*}
m \gtrsim s \cdot \sum_{\ell=1}^n \| \replacemath{a}{d}_\ell\|_\infty^2 \cdot \ln \left( n / \varepsilon \right).
\end{equation*}
For instance in the case of MRI, $\sum_{\ell=1}^n \| \replacemath{a}{d}_\ell\|_\infty^2 = O(\ln(n))$, making bounds on $m$ more realistic than those based on the global coherence $n \max_k \|\replacemath{a}{d}_k\|_\infty^2 = O(n \ln(n))$. The sampling strategy defined by \eqref{eq:opt_drawing_proba_iso} refines \eqref{eq:variabledensity} since it takes into account a more accurate notion of local coherence, involving both the coherence of the individual atoms $(\replacemath{a}{d}_k)_k$ and its interaction with prior information about the support of the signal. In this way, the sampling strategy can be optimized to recover signals with structured sparsity, as discussed in Sections~\ref{sec:applications_mri_iso} and \ref{sec:application_MRI} for the case of sparsity in levels. Moreover, the explicit dependence of the sampling probability on the support in \eqref{eq:opt_drawing_proba_iso} enables us to devise iterative adaptive sampling strategies (see Section~\ref{sec:adaptive}).

Still in the case of isolated measurements, one can improve the previous proposition by deriving sampling strategies derived from oracle-type bounds on the number of measurements. 
The following result applies Theorem~\ref{thm:killthetalog} to the setting illustrated in Section~\ref{subsec:sampling}~\ref{iso_finite_setting}.

\begin{prop}
\label{prop_opt_drawing_proba_iso_lambda}
In a finite setting where we sample isolated measurements as in Section \ref{subsec:sampling} \ref{iso_finite_setting}, Equation \eqref{eq:bos_iso_sensing_matrix}, 
assume that {$|S|\leq n/2$ and that}
\begin{align}
\label{cond_to_kill_log_iso}
1\geq c_1 \cdot \max_{1\leq k \leq n} \frac{\|\replacemath{a}{d}_k \|_{\infty}^2}{\| \replacemath{a}{d}_{k,S} \|_{2}^2} \ln(3 n/\varepsilon),
\end{align}
for $c_1$ a universal constant.
Then the optimal drawing probability minimizing $\Lambda(S,\pi)$ is 
\begin{align}
\label{eq:opt_drawing_proba_lambda_iso}
\forall k \in \{1,\hdots , n \}, \quad \pi_k  =  \pi_k^\Lambda  = \frac{\| \replacemath{a}{d}_{k,S} \|_{2}^2}{\sum_{\ell=1}^n \| \replacemath{a}{d}_{\ell,S} \|_{2}^2}.
\end{align}
The required number $m$ of measurements in Theorem \ref{thm:killthetalog} can be then rewritten as follows:
\begin{align}
m \geq c_{2} \cdot \sum_{\ell=1}^n \| \replacemath{a}{d}_{\ell,S} \|_{2}^2 \cdot  \ln(6n/\varepsilon),
\end{align}
{for a suitable universal constant $c_2 >0$.} These conditions ensure noiseless recovery via \eqref{pb:BP}, or stable and robust recovery via \eqref{pb:qBP}, with probability at least $1-\varepsilon$.
\end{prop}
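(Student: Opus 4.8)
The plan is to specialize Proposition~\ref{prop_opt_drawing_proba_lambda} to the case of isolated measurements, where each block $D_k$ reduces to a single row $d_k^*$. First I would record the elementary identities that follow from this reduction: for a single row, $\|D_{k,S}^* D_{k,S}\|_{2\to 2} = \|d_{k,S} d_{k,S}^*\|_{2\to 2} = \|d_{k,S}\|_2^2$, and $\|D_k\|_{1\to 2}^2 = \|d_k\|_\infty^2$ (cf.\ \eqref{Lambda_iso_bos} and \eqref{Gamma_iso_bos}). Substituting these into the hypothesis \eqref{eq:extra_hyp_1} of Proposition~\ref{prop_opt_drawing_proba_lambda} yields exactly condition \eqref{cond_to_kill_log_iso}, and substituting them into the optimal probability \eqref{eq:opt_drawing_proba_lambda} yields \eqref{eq:opt_drawing_proba_lambda_iso}; the rewritten bound on $m$ follows in the same way from the displayed bound $m \geq c_2 \sum_\ell \|D_{\ell,S}^* D_{\ell,S}\|_{2\to 2} \ln(6n/\varepsilon)$.

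For completeness I would also give the self-contained optimization argument underlying the choice of $\pi$. The quantity to be minimized is, in view of \eqref{Lambda_iso_bos},
$$
\Lambda(S,\pi) = \max_{1\leq k \leq n} \frac{\|d_{k,S}\|_2^2}{\pi_k},
$$
over all probability vectors $\pi$ on $\{1,\hdots,n\}$. Writing $w_k := \|d_{k,S}\|_2^2 \geq 0$, we are minimizing $\max_k w_k/\pi_k$ subject to $\pi_k \geq 0$, $\sum_k \pi_k = 1$. At any minimizer the ratios $w_k/\pi_k$ must all be equal for the indices with $w_k > 0$ (if some ratio were strictly smaller, mass could be shifted away from that index to decrease the maximum), so $\pi_k \propto w_k$, giving \eqref{eq:opt_drawing_proba_lambda_iso}; the resulting minimal value of $\Lambda(S,\pi)$ is $\sum_{\ell=1}^n \|d_{\ell,S}\|_2^2$. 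One checks that with this $\pi$, condition \eqref{cond_to_kill_log_iso} is precisely condition \eqref{cond:killthethetalog} applied to the chosen model (using $\Gamma(\pi) = \max_k \|d_k\|_\infty^2/\pi_k$ and the equality $\Lambda(S,\pi) = \|d_{k,S}\|_2^2/\pi_k$ for every $k$), so Theorem~\ref{thm:killthetalog} applies and delivers the stated recovery guarantees.

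There is essentially no obstacle here; the only mild subtlety is the degenerate case where $\|d_{k,S}\|_2 = 0$ for some $k$, i.e.\ an atom whose restriction to $S$ vanishes. Such an atom contributes nothing to $\Lambda(S,\pi)$ and can be assigned probability zero (or one reverts to the convention that $0/0 = 0$); this does not affect the argument, though one should note that removing such atoms may in principle conflict with the isotropy requirement, so strictly speaking one keeps an arbitrarily small positive mass on them, which is absorbed into the constants. Apart from this bookkeeping, the proof is a direct translation of Proposition~\ref{prop_opt_drawing_proba_lambda} to the rank-one block setting.
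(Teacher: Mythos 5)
Your proof is correct and follows essentially the same route as the paper, which establishes this proposition by combining Lemma~\ref{lem:max_over_simplex} (minimization of $\max_k \gamma_k/\pi_k$ over the probability simplex, applied with $\gamma_k = \| \replacemath{a}{d}_{k,S}\|_2^2$) with Theorem~\ref{thm:killthetalog}; your specialization of Proposition~\ref{prop_opt_drawing_proba_lambda} to rank-one blocks, together with your equalize-the-ratios argument, amounts to the same computation. Your side remark about atoms with $\| \replacemath{a}{d}_{k,S}\|_2 = 0$ is moot, since hypothesis \eqref{cond_to_kill_log_iso} already forces $\| \replacemath{a}{d}_{k,S}\|_2^2 > 0$ for every $k$ (each row of an isometry being nonzero), so no degenerate case can occur.
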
 

Finally, one could resort to Theorem~\ref{thm:oracle_theta} in order to remove condition \eqref{cond_to_kill_log_iso}, at the price of overestimating the support size by one element and of an extra log factor.

\begin{prop}
\label{prop:drawing_proba_lambda_ext}
Let $x\in \Cbb^n$ be a vector supported on $S$, such that $|S|=s$ and $\sgn(x_S)$ forms a Rademacher or Steinhaus sequence. 
Let $A$ be the random sensing matrix defined in \eqref{eq:sensing_matrix} with parameter $\Lambda(S,F)$. If {$0 < \varepsilon < 1$ and}
$$
m \geq c_1 \cdot \max_{j \in S^c} \Lambda (S \cup \{j\},F)  \cdot \ln^{2}\left( \frac{6n}{\varepsilon} \right),
$$ 
with $c_1>0$ a numerical constant, then, with probability at least $1-\varepsilon$, noiseless recovery via \eqref{pb:BP} or stable and robust recovery via \eqref{pb:qBP} is ensured. 
The probability $\pi$ minimizing the previous bound on $m$ reads:
\begin{align}
\label{eq:opt_drawing_proba_lambda_iso_tilde}
\forall k \in \{1,\hdots , n \}, \quad \pi_k = \pi_k^{\tilde{\Lambda}} = \frac{\displaystyle\max_{j \in S^c}\| \replacemath{a}{d}_{k,S \cup \{j\}} \|_{2}^2}{\displaystyle\sum_{\ell=1}^n \max_{j \in S^c}\| \replacemath{a}{d}_{\ell,S \cup \{j\}} \|_{2}^2}.
\end{align}
With such a choice, the required number $m$ of measurements can be rewritten as follows:
$$
m \geq c_1 \cdot \sum_{\ell=1}^n \max_{j \in S^c}\| \replacemath{a}{d}_{\ell,S \cup \{j\}} \|_2^2 \cdot  \ln^{2}\left( \frac{6n}{\varepsilon} \right).
$$ 
\end{prop}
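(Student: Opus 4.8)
The plan is to obtain Proposition~\ref{prop:drawing_proba_lambda_ext} as a direct specialization of Theorem~\ref{thm:oracle_theta} to the isolated-measurement setting of Section~\ref{subsec:sampling}~\ref{iso_finite_setting}, followed by an explicit minimization of the sampling probability. The first step is a purely logarithmic bookkeeping reduction. By the AM--GM inequality, $(s+1)(n-s) \leq \big(\tfrac{n+1}{2}\big)^2 \leq n^2$, so that $6(s+1)(n-s)/\varepsilon \leq 6n^2/\varepsilon \leq (6n/\varepsilon)^2$ for $0<\varepsilon<1$, whence $\ln\!\big(6(s+1)(n-s)/\varepsilon\big) \leq 2\ln(6n/\varepsilon)$. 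Therefore the hypothesis $m \geq c_1 \max_{j\in S^c}\Lambda(S\cup\{j\},F)\,\ln^2(6n/\varepsilon)$ implies (after replacing $c_1$ by $c_1/2$) the hypothesis of Theorem~\ref{thm:oracle_theta}, and that theorem already delivers the stated conclusions: exact recovery via \eqref{pb:BP} in the noiseless case and the stable-and-robust estimate via \eqref{pb:qBP} in general.

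The second step is to make $\max_{j\in S^c}\Lambda(S\cup\{j\},F)$ explicit in the isolated setting. Taking $\Lambda$ as the least-upper bound and invoking \eqref{Lambda_iso_bos} with support $S\cup\{j\}$, one has $\Lambda(S\cup\{j\},\pi) = \max_{1\leq k\leq n} \|d_{k,S\cup\{j\}}\|_2^2/\pi_k$. Maximizing over $j\in S^c$ and exchanging the two maxima yields
$$
\max_{j\in S^c}\Lambda(S\cup\{j\},\pi) = \max_{1\leq k\leq n} \frac{w_k}{\pi_k}, \qquad w_k := \max_{j\in S^c}\big\|d_{k,S\cup\{j\}}\big\|_2^2 .
$$
Since $A_0$ is an isometry it has no zero row, so $w_k>0$ for every $k$, which is what we need for the minimization below to produce a genuine probability distribution.

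The third step is the optimization itself: minimizing $\max_k w_k/\pi_k$ over all probability vectors $\pi$ on $\{1,\dots,n\}$. This is the standard equalization argument: writing $C := \max_k w_k/\pi_k$, we have $w_j \leq C\pi_j$ for every $j$, and summing over $j$ gives $\sum_j w_j \leq C$; equality holds precisely for $\pi_k = w_k/\sum_\ell w_\ell$, which is exactly \eqref{eq:opt_drawing_proba_lambda_iso_tilde}. Substituting this $\pi$ back into the measurement condition turns it into $m \geq c_1 \sum_\ell w_\ell\, \ln^2(6n/\varepsilon) = c_1 \sum_\ell \max_{j\in S^c}\|d_{\ell,S\cup\{j\}}\|_2^2\, \ln^2(6n/\varepsilon)$, as claimed. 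I do not anticipate a genuine obstacle here; the template mirrors that of Propositions~\ref{prop_opt_drawing_proba}--\ref{prop_opt_drawing_proba_iso_lambda}, and the only points needing care are the logarithmic reduction to Theorem~\ref{thm:oracle_theta} and the positivity of the weights $w_k$ guaranteeing that the optimal $\pi$ is well defined.
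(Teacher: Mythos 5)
Your proposal is correct and follows essentially the paper's own route: the paper also obtains this proposition by specializing Theorem~\ref{thm:oracle_theta} (absorbing $\ln(6(s+1)(n-s)/\varepsilon)\le 2\ln(6n/\varepsilon)$ into the constant) and then minimizing $\max_{j\in S^c}\Lambda(S\cup\{j\},\pi)=\max_k w_k/\pi_k$ over the simplex via Lemma~\ref{lem:max_over_simplex}. Your inline equalization argument is just a re-derivation of that lemma, and your positivity check on the weights $w_k$ is a harmless extra detail.
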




\section{Applications and numerical experiments}
\label{sec:appli}

In this section, we discuss the application of the optimal sampling strategies derived in Section~\ref{sec:optimal_drawing} to three concrete case studies. Firstly, we consider the setting of isolated measurements from the one-dimensional Fourier-Haar transform and derive an explicit sampling probability that promotes the recovery of signals that are sparse in levels  (Section~\ref{sec:applications_mri_iso}). Secondly, the case of two-dimensional Fourier-Haar transform with block-structured measurements is discussed in Section~\ref{sec:application_MRI}, of considerable interest in MRI. In particular, we analyze the setting where the Fourier space is sampled using random {vertical (or horizontal)} lines and where the signal exhibits a particular type of sparsity in levels (see Definition~\ref{def:sparsity_in_levels_2D}). Thirdly, we illustrate how the proposed derivation of optimal sampling measures can be applied to devise adaptive sampling strategies for one-dimensional function approximation from pointwise data (Section~\ref{sec:adaptive}).

\subsection{Performing isolated measurements with Fourier-Haar transform}
\label{sec:applications_mri_iso}

Let us consider the finite-dimensional setting \ref{iso_finite_setting} described in Section \ref{subsec:sampling} {and let $n = 2^{J+1}$ for some $J \in \Nbb$}. 
Set $A_0$ to be the Fourier-Haar transform, i.e.\ $A_0 = {\mathcal{F}} H^* = \replacemath{(a_1 | a_2 | \hdots | a_n)^*}{(d_1 | d_2 | \hdots | d_n)^*}$ where $\mathcal{F} \in \Cbb^{n\times n}$ is the Fourier matrix {relative to frequencies $\{-n/2 +1, \ldots, n/2\}$} and $H^* \in \Rbb^{n\times n}$ is the inverse Haar transform. The matrix $A_0$ could be naturally chosen to model the acquisition in MRI, where the measurements are performed in the Fourier domain, and where MR images are considered sparse in the wavelet domain. In this framework, we aim at reconstructing the wavelet coefficients $x$ of the MR image.

First, we introduce the notion of sparsity in levels for the coefficient vector $x$. We define the levels $(\Omega_j)_{0\leq  j \leq J}$ as the partition of $\{1, \hdots , n \}$ corresponding to the subbands of the Haar transform, defined as $\Omega_0 := \{1,2\}$ and, for $j = 1,\ldots,J$, as
$$
\Omega_j := \{2^{j}+1,\ldots,2^{j+1}\}.
$$
The subband $\Omega_0$ corresponds to the scaling function and to the Haar mother wavelet. The subsequent subbands $\Omega_j$ with $j\geq 1$ correspond to wavelet functions dyadically refined at level $j$ (see \cite{adcock2016note} for further details).
The vector $x$ is assumed to be sparse in levels, i.e.\ denoting by $S$ the support of $x$, we suppose
\begin{align*}
|S \cap \Omega_j| = s_j,
\end{align*}
meaning that restricted to the $j$-th level $\Omega_j$, $x$ is $s_j$-sparse. 

In order to devise effective sampling strategies, we  partition the set of Fourier frequencies into frequency subbands $(W_j)_{0\leq j \leq J}$, defined as $W_0 := \{0,1\}$ and, for $j = 1,\ldots, J$, as
$$
W_j := \{-2^j+1,\ldots,-2^{j-1}\} \cup \{2^{j-1}+1\ldots,2^j\}.
$$
Notice that for every $j = 0,\ldots,J$ we have
$$
|\Omega_j| = |W_j|= 2^{\max(1,j)}.
$$
For the sake of simplicity, we will relabel the Fourier  frequencies $\{-n/2 +1 ,\ldots, n/2\}$ as $\{1,\ldots,n\}$. With this convention, we  denote as $j : \{1, \hdots , n \} \to \{0 , \hdots , J\}$ the function that maps a Fourier frequency to the corresponding frequency band, i.e.\ $j(k) = \ell$ if $k \in W_\ell$.

In this setting, local coherence upper bounds are explicitly computable. We recall the following estimate, corresponding to \cite[Lemma 1]{adcock2016note}:
\begin{equation}
\label{eq:localcoherenceFH}
\mu_{j,\ell} := \max_{k \in W_j} \max_{i\in \Omega_\ell} |({\mathcal{F}}H^*)_{ki}|^2 \leq C \cdot 2^{-j} 2^{-|j-\ell|},
\end{equation}
where $C>0$ is a universal constant.

\paragraph{Application of the theory} Suppose that random isolated measurements are performed in the Fourier domain according to some probability distribution $\pi$ (we note in passing that, although useful for illustrative purposes, sampling single points in the Fourier domain is not feasible in practice). Then, thanks to the local coherence upper bound \eqref{eq:localcoherenceFH}, one can estimate the quantity $\Theta(S,\pi)$ {as} (see \cite[Corollary 4.4]{boyer2017compressed}) 
\begin{align}
\Theta(S,\pi) = C \max_{1\leq k \leq n } \frac{1}{\pi_k} 2^{-j(k)} \sum_{p=0}^J 2^{-|j(k)-p|/2} s_p.
\end{align}
As a consequence, the probability \replace{}{distribution} minimizing $\Theta(S,\pi)$ in this setting can be written as follows (see Lemma~\ref{lem:max_over_simplex}):
\begin{align}
\forall k \in \{1,\hdots , n \}, \quad \pi^{\Theta}_k = \frac{2^{-j(k)} \sum_{p=0}^J 2^{-|j(k)-p|/2} s_p}{\sum_{\ell=1}^n 2^{-j(\ell)} \sum_{p=0}^J 2^{-|j(\ell)-p|/2} s_p}.
\end{align}
Note that the drawing probability \replace{}{distribution} is a function of the frequency level.
Assuming that $\sgn(x_S)$ is a Steinhaus or Rademacher sequence and employing Theorems \ref{thm:noiseless} and \ref{thm:noisy}, the required number of measurements needed to ensure exact or stable and robust recovery with probability $1-\varepsilon$ is 
\begin{align}
\label{eq:bound_m_iso_mri}
m \gtrsim \sum_{j=0}^J \left( s_j + \sum_{p=0 \atop p \neq j}^J 2^{-|j-p|/2} s_p \right) \cdot \ln^2(6n/\varepsilon),
\end{align} 
which matches results in \cite[Corollary 4.4]{boyer2017compressed} and \cite{adcock2017breaking}.

Replacing $\Theta(\pi,S)$ with $\Lambda(\pi,S)$ leads to an improvement of condition \eqref{eq:bound_m_iso_mri}, at the price of an extra assumption on the sparsity in levels of the signal.

\begin{corollary}
\label{corol:mri_iso_lambda}
Let $A_0 = FH^*$ be the one-dimensional Fourier-Haar transform and consider the splitting of the Haar and of the Fourier spaces into subbands $(\Omega_j)_{0 \leq j \leq J}$ and $(W_j)_{0 \leq j \leq J}$ as described above. Suppose that $x \in \Cbb^{n}$ is an $S$-sparse vector with random signs and sparsities in levels $(s_j)_{0\leq j \leq J}$.
Fix $\varepsilon \in (0,1)$. 
Then, condition \eqref{cond:killthethetalog} holds if
\begin{align}
\label{cond:killthetalog_mri_iso}
\min_{0\leq j \leq J} s_{j}  + \sum_{j'\neq j}  {s}_{j'}  2^{-|j-j'|} \gtrsim \ln({3}n/\varepsilon).
\end{align}
In this case, $\Lambda(S,\pi)$ can be chosen as follows 
\begin{align}
\Lambda({S},\pi) = \max_{1\leq k \leq n } \frac{1}{\pi_k} 2^{-j(k)} \sum_{p=0}^J 2^{-|j(k)-p|} {s}_p.
\end{align}
Then, the probability minimizing $\Lambda({S},\pi)$ is
\begin{align}
\forall k \in \{1,\hdots , n \}, \quad \pi^{\Lambda}_k = \frac{2^{-j(k)} \sum_{p=0}^J 2^{-|j(k)-p|} s_p}{\sum_{\ell=1}^n 2^{-j(\ell)} \sum_{p=0}^J 2^{-|j(\ell)-p|} s_p},
\end{align}
and the recovery guarantees of Theorem \ref{thm:killthetalog} hold with probability at least $1-\varepsilon$ if
\begin{align}
\label{eq:bound_m_iso_mri2}
m \gtrsim  \sum_{j=0}^J \left({s}_j + \sum_{p=0 \atop p \neq j}^J 2^{-|j-p|} {s}_p \right) \cdot \ln({3}n/\varepsilon).
\end{align}

\end{corollary}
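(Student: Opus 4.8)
The plan is to specialize Proposition~\ref{prop_opt_drawing_proba_iso_lambda} (hence Theorem~\ref{thm:killthetalog}) to $A_0 = FH^*$, turning every abstract quantity into an explicit expression via the local coherence bound \eqref{eq:localcoherenceFH}. Write $d_k^*$ for the $k$-th row of $A_0$, so $d_{ki} = (FH^*)_{ki}$. Since the Haar subbands $(\Omega_p)_{0\le p\le J}$ partition $\{1,\dots,n\}$ and $|S\cap\Omega_p| = s_p$, I would split
$$
\|d_{k,S}\|_2^2 = \sum_{p=0}^J\sum_{i\in S\cap\Omega_p}|(FH^*)_{ki}|^2 \le \sum_{p=0}^J s_p\,\mu_{j(k),p} \le C\,2^{-j(k)}\sum_{p=0}^J 2^{-|j(k)-p|}s_p ,
$$
so by \eqref{Lambda_iso_bos} one may take $\Lambda(S,\pi) = C\max_{1\le k\le n}\tfrac1{\pi_k}\,2^{-j(k)}\sum_{p}2^{-|j(k)-p|}s_p$. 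Setting $a_k := 2^{-j(k)}\sum_p 2^{-|j(k)-p|}s_p$ (positive for every $k$ whenever $x\ne 0$) and applying Lemma~\ref{lem:max_over_simplex} yields the minimizer $\pi^\Lambda_k = a_k/\sum_\ell a_\ell$ and the optimal value $\Lambda(S,\pi^\Lambda) = C\sum_\ell a_\ell$, which is the claimed form after plugging in $a_k$.

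Next I would verify that \eqref{cond:killthetalog_mri_iso} implies the hypothesis \eqref{cond:killthethetalog}. Maximizing \eqref{eq:localcoherenceFH} over $i$ (the dominant term being $\ell = j(k)$) gives $\|d_k\|_\infty^2 = \max_i |(FH^*)_{ki}|^2 \le C\,2^{-j(k)}$, so by \eqref{Gamma_iso_bos} one may take $\Gamma(\pi^\Lambda) = C\max_k 2^{-j(k)}/\pi^\Lambda_k = C\big(\sum_\ell a_\ell\big)\max_k 2^{-j(k)}/a_k$. Since $\Lambda$ and $\Gamma$ are thus chosen as the \emph{same} coherence-derived upper bounds, the common factor $C\sum_\ell a_\ell$ cancels in their ratio:
$$
\frac{\Lambda(S,\pi^\Lambda)}{\Gamma(\pi^\Lambda)} = \min_{1\le k\le n}\frac{a_k}{2^{-j(k)}} = \min_{0\le j\le J}\Big(s_j + \sum_{p\ne j}2^{-|j-p|}s_p\Big),
$$
where the last equality uses that $j(k)$ ranges over all of $\{0,\dots,J\}$ as $k$ ranges over $\{1,\dots,n\}$. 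Hence $\Lambda(S,\pi^\Lambda) \ge c_1\,\Gamma(\pi^\Lambda)\ln(3n/\varepsilon)$ is exactly condition \eqref{cond:killthetalog_mri_iso}.

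Finally I would simplify $\sum_\ell a_\ell$ by grouping over frequency levels: using $|W_j| = 2^{\max(1,j)}$, so that $|W_j|2^{-j}\in\{1,2\}$, one gets $\sum_\ell a_\ell = 2\sum_p 2^{-p}s_p + \sum_{j=1}^J\sum_p 2^{-|j-p|}s_p$, which lies between $\sum_{j=0}^J\sum_p 2^{-|j-p|}s_p = \sum_{j=0}^J\big(s_j + \sum_{p\ne j}2^{-|j-p|}s_p\big)$ and twice that quantity. Substituting $\Lambda(S,\pi^\Lambda)\asymp \sum_{j=0}^J\big(s_j + \sum_{p\ne j}2^{-|j-p|}s_p\big)$ into the measurement condition $m\ge c_2\Lambda(S,\pi^\Lambda)\ln(3n/\varepsilon)$ of Theorem~\ref{thm:killthetalog} produces \eqref{eq:bound_m_iso_mri2}, and the recovery guarantee follows at once from that theorem. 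All steps are elementary; the only place to be careful is the cancellation in the second step — one must commit to identical coherence upper bounds for $\Lambda$ and $\Gamma$ so the ratio is an equality with no stray constants, rather than a one-sided estimate that would weaken \eqref{cond:killthetalog_mri_iso}.
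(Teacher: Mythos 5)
Your proposal is correct and follows essentially the same route as the paper's own proof: bound $\|d_{k,S}\|_2^2$ and $\|d_k\|_\infty^2$ via the local coherence estimate \eqref{eq:localcoherenceFH}, apply Lemma~\ref{lem:max_over_simplex} to obtain $\pi^\Lambda$ and $\Lambda(S,\pi^\Lambda)$, observe that the common constant cancels in the ratio $\Lambda/\Gamma$ so that \eqref{cond:killthethetalog} reduces to \eqref{cond:killthetalog_mri_iso}, and regroup the sum over frequency subbands using $|W_j|=2^{\max(1,j)}$ before invoking Theorem~\ref{thm:killthetalog}. No gaps; your closing caution about using identical coherence-derived bounds for $\Lambda$ and $\Gamma$ is exactly the point the paper's proof relies on.
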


The proof of this result is given in Appendix \ref{proof:mri_setup_iso}. Condition \eqref{cond:killthetalog_mri_iso} can be easily verified in practice. 
Note that by \eqref{eq:bound_m_iso_mri2}, we improved \eqref{eq:bound_m_iso_mri} (and, consequently, the state-of-the-art results in \cite{boyer2017compressed} and \cite{adcock2017breaking} when the signal is sparse with random signs) by (i) decreasing the interferences between different levels and (ii) removing a log factor.

\paragraph{Numerical experiments} In this part, we illustrate the improvement in recovery when the sampling strategy \replace{is taking}{takes} into account local coherences of the measurement vectors \emph{and} the structured sparsity of the signal to reconstruct. With this aim, we generate 100 random {signals with} wavelet coefficients in dimension $n=2048$ with $s/n \approx 6\%$ and having a  sparsity in levels structure as in Figure \ref{fig:boxplot_15meas}(a) with 4 subbands in total. 
We compare three different sampling schemes where isolated samples are randomly drawn according to:
\begin{itemize}
\item the probability \replace{}{distribution} minimizing the global coherence of each measurement vector: $\forall k \in \{ 1 , \hdots , n \}$, 
$$ \pi_k^\infty = \frac{\| \replacemath{a}{d}_{k} \|_\infty^2}{\sum_{p=1}^n \| \replacemath{a}{d}_p \|_\infty^2};
$$ 
\item the probability \replace{}{distribution} minimizing $\Theta(S,\pi)$: $\forall k \in \{ 1 , \hdots , n \}$, 
$$ \pi_k^\Theta = \frac{\| \replacemath{a}{d}_{k} \|_\infty \sum_{j=0}^J s_j \| \replacemath{a}{d}_{k,\Omega_j} \|_\infty}{\sum_{p=1}^n \| \replacemath{a}{d}_{p} \|_\infty \sum_{j=0}^J s_j \| \replacemath{a}{d}_{p,\Omega_j} \|_\infty};
$$ 
\item the probability \replace{}{distribution} minimizing $\Lambda(S,\pi)$: $\forall k \in \{ 1 , \hdots , n \}$, 
$$ \pi_k^\Lambda = \frac{\sum_{j=0}^J s_j \| \replacemath{a}{d}_{k,\Omega_j} \|_\infty^2}{\sum_{p=1}^n  \sum_{j=0}^J s_j \| \replacemath{a}{d}_{p,\Omega_j} \|_\infty^2}.
$$ 
\end{itemize}  
\replace{}{These three sampling distributions can be computed in the case of the Fourier-Haar transform.  This is illustrated in Figure~\ref{fig:proba_profile_iso_mri} for the 1D structured-sparsity pattern described in Figure \ref{fig:boxplot_15meas}(a). One can clearly see in Figure \ref{fig:proba_profile_iso_mri} that a certain prior on the sparsity structure in this case inclines the sensing in the low frequency domain.}
\replace{Numerical}{Reconstruction} results are shown in Figure \ref{fig:boxplot_15meas}(b). \replace{}{For the same amount of measurements,} the sampling strategy minimizing the bound on $\Theta (S,\pi)$ proposed in this paper is outperforming  the standard CS strategy taking only into account the global coherence. In this setting, sampling according to $\Lambda (S,\pi)$ improves the reconstruction quality compared to the $\Theta (S,\pi)$-based strategy.

\begin{figure}
\begin{tabular}{ccc}
\includegraphics[width=0.33\linewidth]{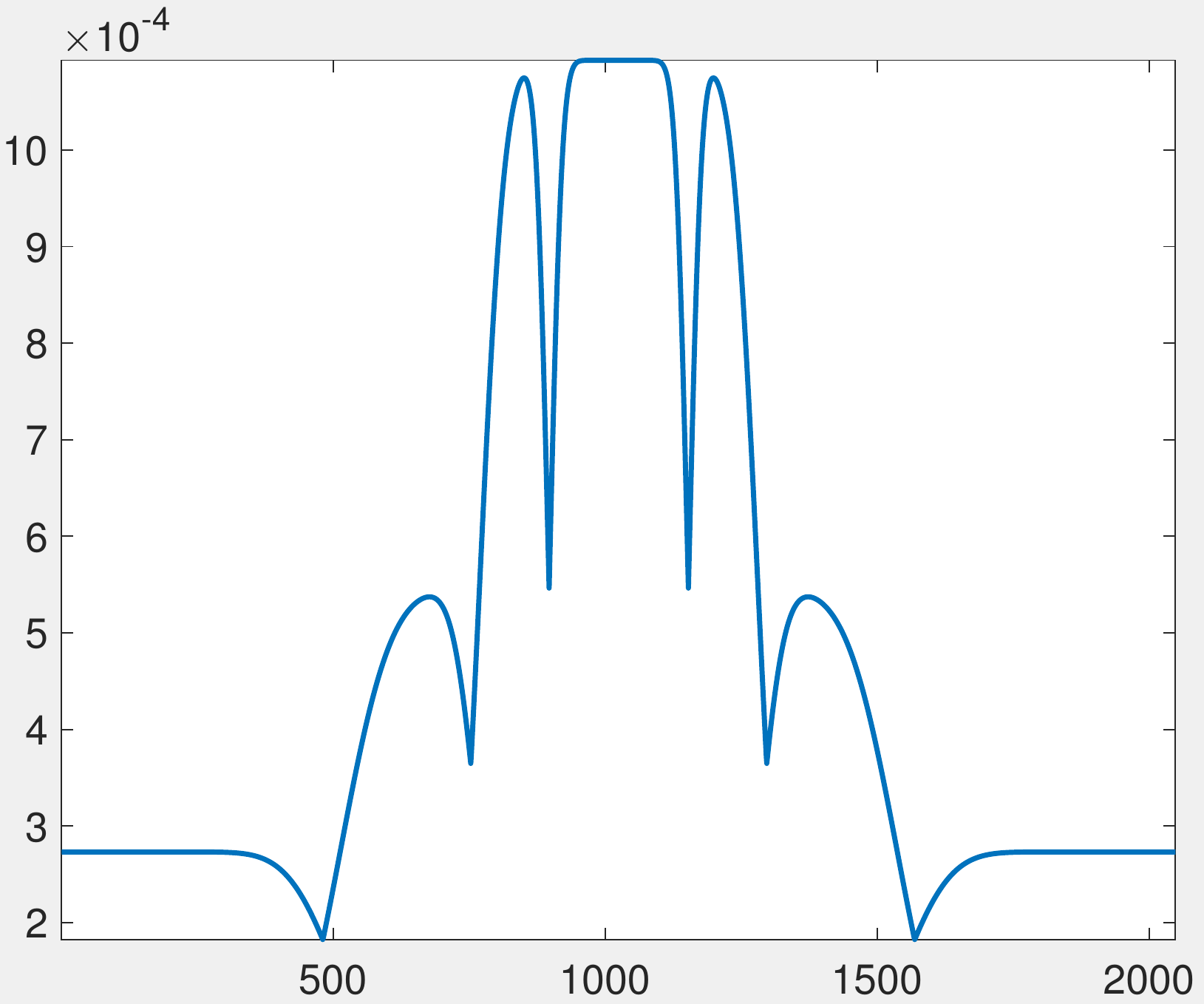} & 
\includegraphics[width=0.33\linewidth]{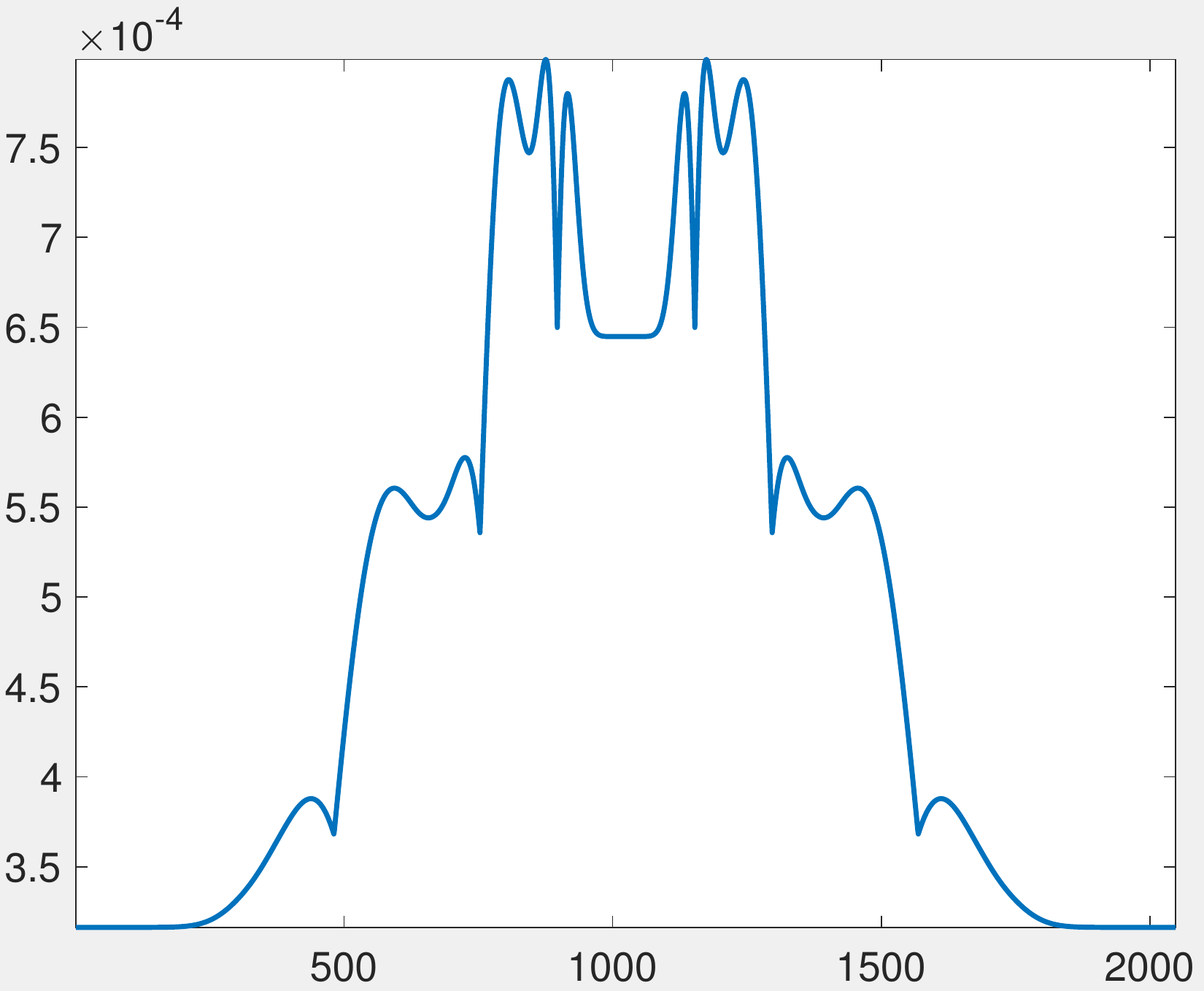}  &
\includegraphics[width=0.33\linewidth]{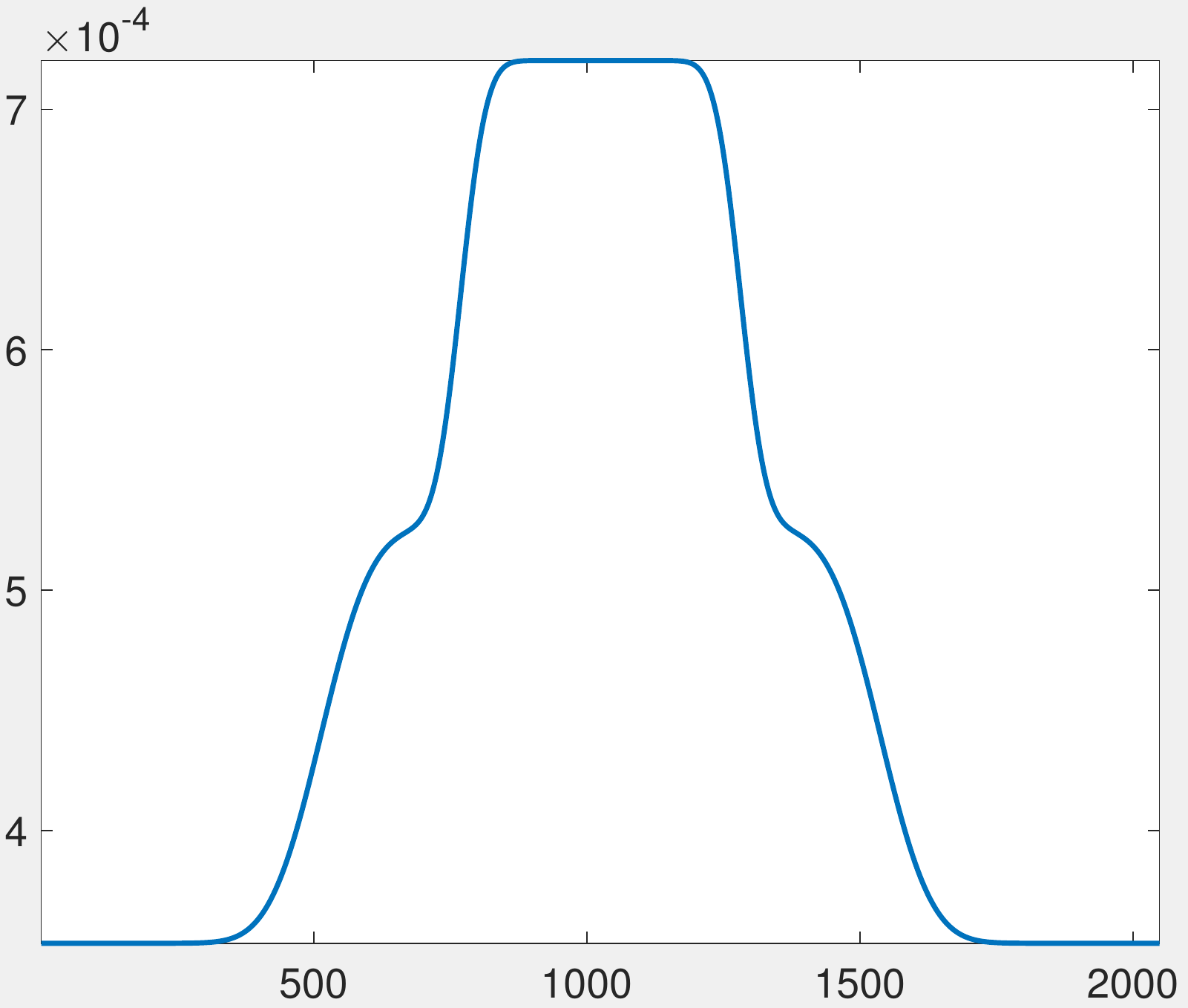}  \\
{\centering (a) $\pi^\infty$} & {\centering (b) $\pi^\Theta$} & {\centering (c) $\pi^\Lambda$} \\
\includegraphics[width=0.33\linewidth]{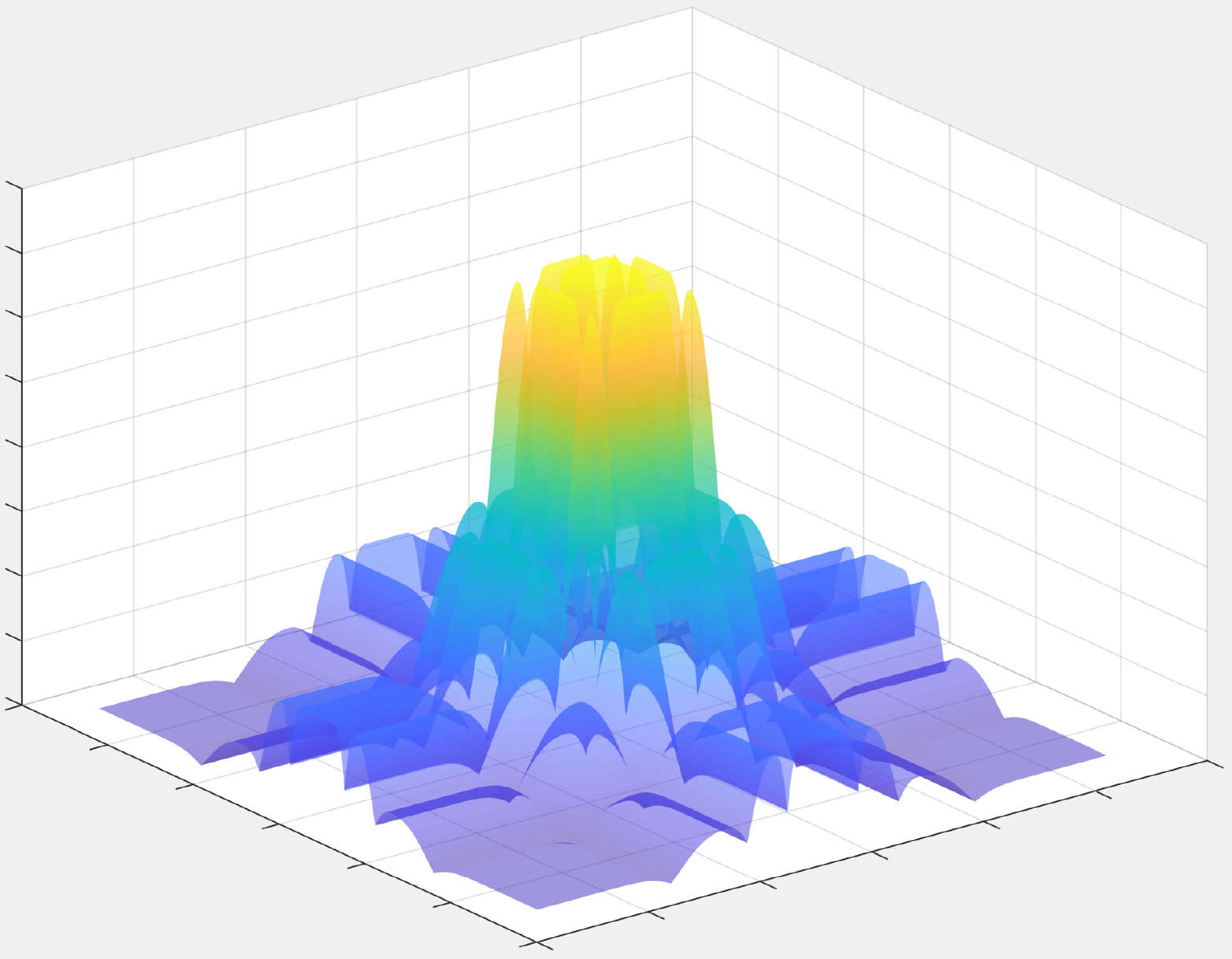} & 
\includegraphics[width=0.33\linewidth]{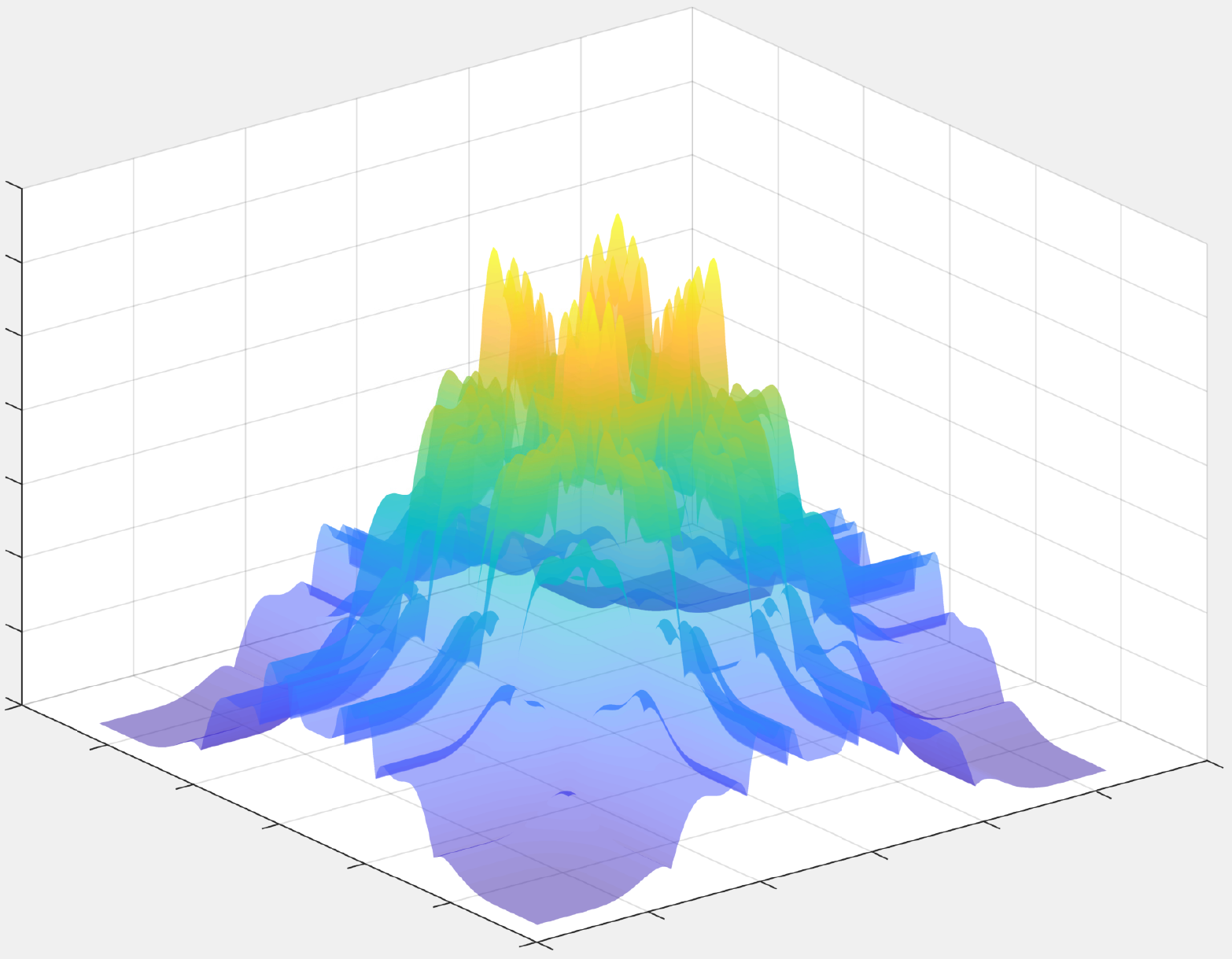}  &
\includegraphics[width=0.33\linewidth]{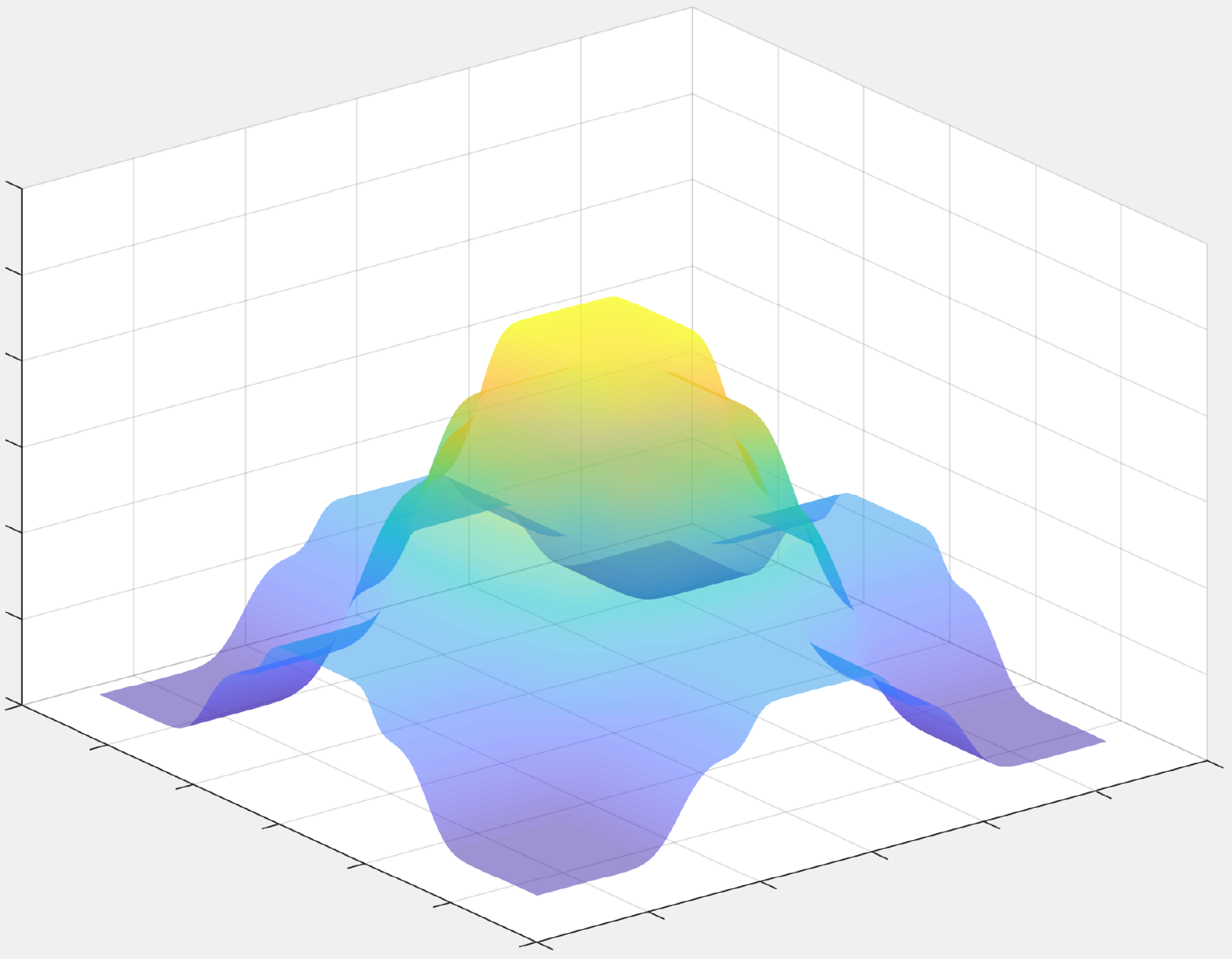}  \\
{\centering (d) $\pi^\infty$} & {\centering (e) $\pi^\Theta$} & {\centering (f) $\pi^\Lambda$}
\end{tabular}
\caption{\label{fig:proba_profile_iso_mri} Comparison between sampling probabilit\replace{ies}{y distributions} chosen according to different strategies\replace{ for 1D signals (a,b,c) or 2D signals (d,e,f), that are sparse in levels}{: in (a,b,c) for 1D signals with structured sparsity described in Figure \ref{fig:boxplot_15meas}(a), in (d,e,f) for 2D signals with the corresponding tensorized structured sparsity}. 
In (a) and (d), the sampling probability \replace{}{distribution} $\pi^\infty$ is optimized to minimize the global coherence, i.e.\ $\pi_k \propto \|a_k\|_\infty^2$; 
in (b) and (e), the sampling probability \replace{}{distribution} $\pi^\Theta$ is optimized to minimize an upper bound to $\Theta(S,\pi)$, i.e.\ $\pi_k \propto \|a_k\|_\infty \sum_{j=0}^J s_j  \|a_{k,\Omega_j} \|_\infty$; in (c) and (f), the sampling probability \replace{}{distribution} $\pi^\Lambda$ is optimized to minimize an upper bound to $\Lambda(S,\pi)$, i.e.\ $\pi_k \propto \sum_{j=0}^J \|a_k\|_\infty \|a_{k,\Omega_j} \|_\infty^2$; the probability \replace{densities}{distributions} in (d), (e), and (f) are constructed accordingly in the 2D case.}
\end{figure}

\begin{figure}
\begin{tabular}{cc}
\includegraphics[height=6cm]{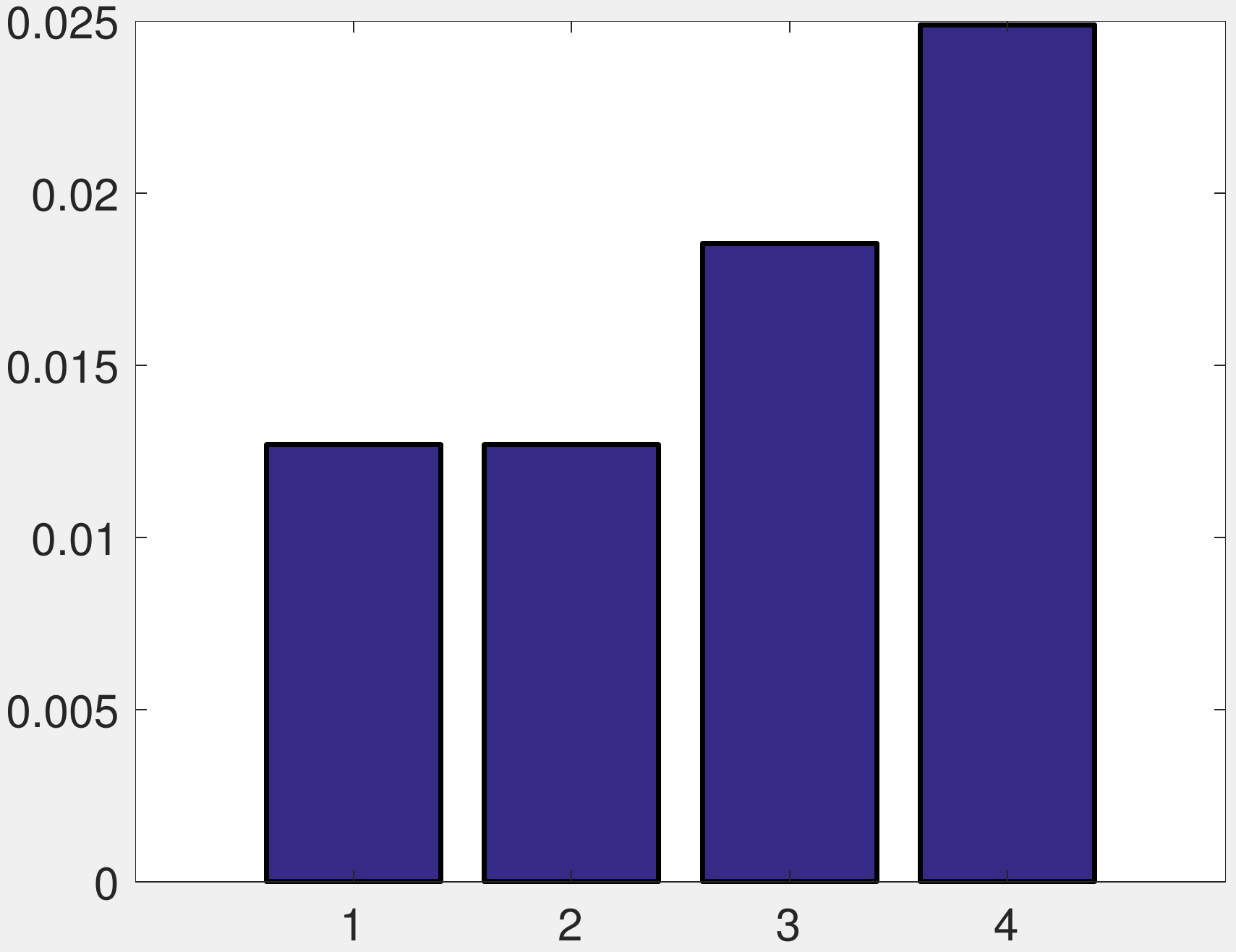} &
\includegraphics[height=6cm]{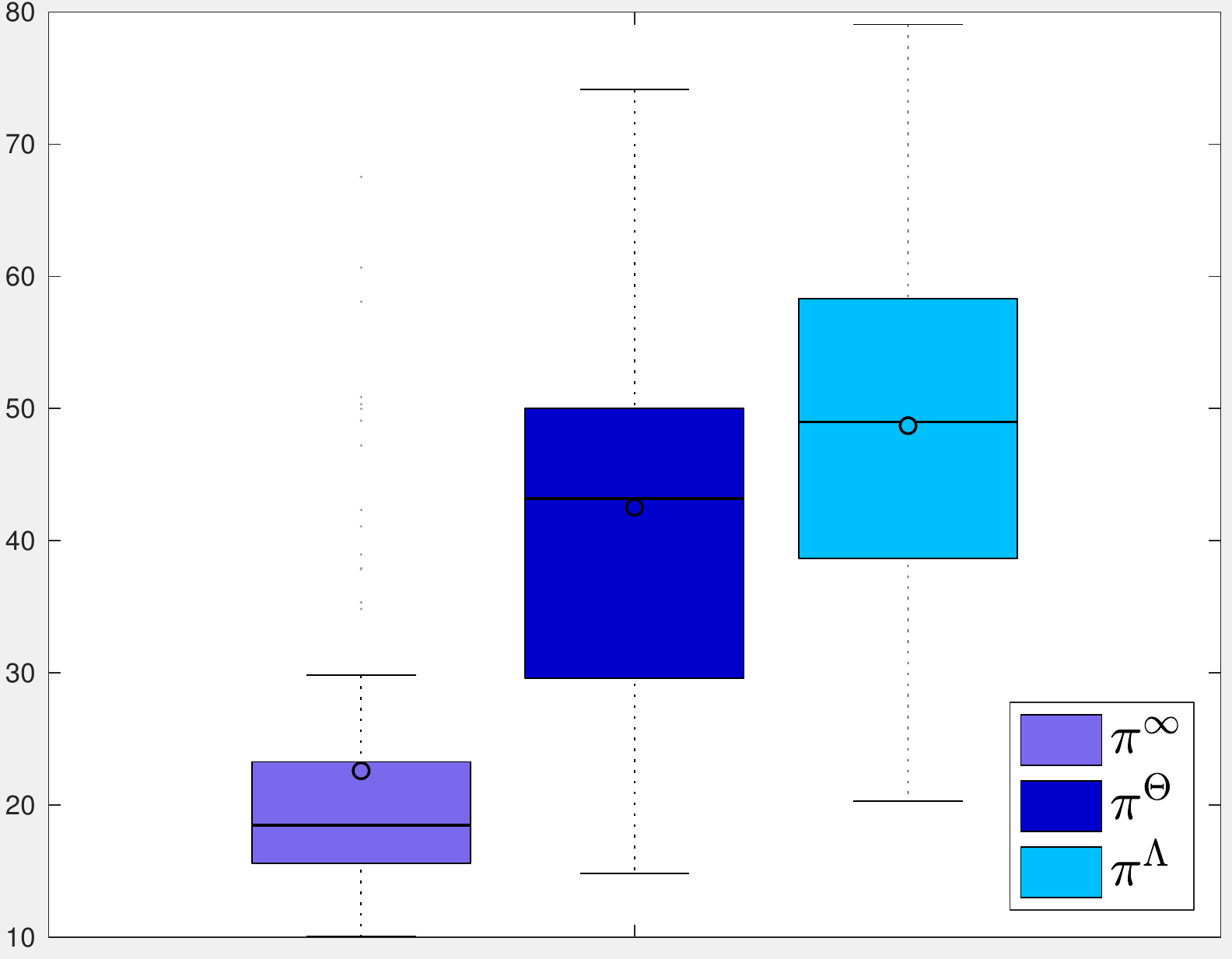} \\
{\text{(a)}} & {\text{(b)}}
\end{tabular}
\caption{\label{fig:boxplot_15meas}
Boxplot of reconstruction PSNR in (b) of 100 random signals of length $n=2048$ having a structured sparsity in the wavelet decomposition as in (a) ($s/n=6\%$) from 25\% measurements drawn according to $\pi^\infty$, $\pi^\Theta$ and $\pi^\Lambda$ as in Figure \ref{fig:proba_profile_iso_mri} (a,b,c). In (a), we represent the {sparsity in levels structure of the randomly generated signals. The plot represents the} percentage of nonzero coefficients {for each} subband. }
\end{figure}


\subsection{Sensing vertical (or horizontal) lines in MRI}
\label{sec:application_MRI}

We extend the analysis of Section~\ref{sec:appli} from the one-dimensional to the two-dimensional framework, leading a more realistic model for the MRI problem.

Let $n = 2^{2(J+1)}$ for some $J \in \Nbb$ and denote the one-dimensional Fourier-Haar transform introduced in Section~\ref{sec:applications_mri_iso} as $\phi = {\mathcal{F}}H^* \in \Cbb^{\sqrt{n} \times \sqrt{n}}$. Then, we define the two-dimensional Fourier-Haar transform as 
$$
A_0 = \phi \otimes \phi \in\Cbb^{n \times n},
$$ 
where $\otimes$ denotes the Kronecker product. The splitting of the wavelet multi-index space $\{1,\ldots,\sqrt{n}\}^2$ and of the Fourier frequency space $\{-\sqrt{n}/2 + 1, \ldots, \sqrt{n}/2\}^2$ into subbands is carried out by tensorizing the one-dimensional splitting presented in Section~\ref{sec:applications_mri_iso}. Namely, we consider subbands defined, for $j,j' = 0,\ldots,J$, as
$$
\Omega_{j,j'}:= \Omega_{j} \times \Omega_{j'}
\quad \text{and} \quad
W_{j,j'}:= W_j \times W_{j'}. 
$$
Moreover, we will represent the image wavelet coefficients as a vector $x \in \Cbb^n$ or, equivalently, as a matrix $X = \Cbb^{\sqrt{n} \times \sqrt{n}}$, such that $x = \vect(X)$, where $\vect: \Cbb^{\sqrt{n} \times \sqrt{n}} \to \Cbb^n$ is the vectorization operator that stacks the columns of a matrix atop, i.e., $\vect(x_1|\cdots|x_{\sqrt{n}}) = (x_1^*,\ldots,x_{\sqrt{n}}^*)^*$. Within this multi-level framework, we consider the following generalization of sparsity in levels.
\begin{definition}
\label{def:sparsity_in_levels_2D}
Let $x \in \Cbb^n$ and $X \in \Cbb^{\sqrt{n} \times \sqrt{n}}$ be such that $x = \vect(X)$. Then, given $S = \text{supp}(X)$, we define the following sparsities in levels for the vector $x$:
\begin{align}
\label{eq:scol}
s^{r}_{j} &:= \max_{0\leq \ell \leq J} \max_{ k \in {\Omega}_{\ell}  } \left| S \cap \Omega_{\ell,j} \cap R_k\right| , 
\end{align}
where $R_k {:=\{k\}\times \{1,\ldots,\sqrt{n}\}}$ represents the set corresponding to the $k$-th horizontal line or row of the wavelet multi-index space $\{1,\ldots,\sqrt{n}\}^2$  (see Figure \ref{fig:illusEns}). 
\end{definition}

 \begin{figure}[t]
  \begin{center}
\begin{tikzpicture}
\draw (0,0) rectangle (1,1) node[midway,color=Cerulean] {\large$\Omega_{0,0}$};
\draw (1,0) rectangle (2,1) node[midway,color=Cerulean] {\large$\Omega_{0,1}$};
\draw (2,0) rectangle (4,1) node[midway,color=Cerulean] {\large ...};
\draw (4,0) rectangle (8,1) node[midway,color=Cerulean] {\large$\Omega_{0,J}$};

\draw (0,0) rectangle (1,-1) node[midway,color=Cerulean] {\large $\Omega_{1,0}$};
\draw (0,-1) rectangle (1,-3) node[midway,color=Cerulean] {\large $\vdots$};
\draw (0,-3) rectangle (1,-7) node[midway,color=Cerulean] {\large $\Omega_{J,0}$};

\draw (1,0) rectangle (2,-1) ;
\draw (2,0) rectangle (4,-1) node[midway] {};
\draw (4,0) rectangle (8,-1) ;

\draw (1,-1) rectangle (2,-3) ;
\draw (1,-3) rectangle (2,-7) ;

\draw (2,-1) rectangle (4,-3) node[midway,color=Cerulean] {\large $\ddots$};
\draw (2,-3) rectangle (4,-7) ;

\draw (4,-1) rectangle (8,-3) ;

\draw (4,-3) rectangle (8,-7) node[midway,color=Cerulean] {\large $\Omega_{J,J}$};

%

\draw (8.2,-3.7) node[right, color=Emerald]{$R_k$} ;
\fill [Emerald,opacity=0.6] (0,-3.5) rectangle (8,-3.8);
\draw[->,>=latex, color=Violet] (8.2,-4.5) to[bend left] (6.5,-3.8);
\draw (8.2,-4.5) node[right, color=Violet]{$R_k\cap \Omega_{J,J}$} ;
\fill [Violet,opacity=0.5] (4,-3.5) rectangle (8,-3.8);

\draw (0,2) -- (8,2);
\draw (0,2) node {$\bullet$} ;
\draw (1,2) node {$\bullet$} ;
\draw (2,2) node {$\bullet$} ;
\draw (4,2) node {$\bullet$} ;
\draw (8,2) node {$\bullet$} ;

\draw (0,2.2) node[above] {$1$} ;
\draw (8,2.2) node[above] {$\sqrt{n}$} ;
\draw (-1,-3.65) node {$\blacksquare$} ;
\draw (-1.2,-3.65) node[left] {$k$} ;

\draw[decoration={brace,raise=5pt},decorate]
  (0,2.7) -- node[above=6pt] {$\Omega_0$} (1,2.7);
\draw[decoration={brace,raise=5pt},decorate]
  (1,2.7) -- node[above=6pt] {$\Omega_1$} (2,2.7);
 \draw[decoration={brace,raise=5pt},decorate]
  (4,2.7) -- node[above=6pt] {$\Omega_J$} (8,2.7);

\draw (-1,1) -- (-1,-7);
\draw (-1,1) node {$\bullet$} ;
\draw (-1,0) node {$\bullet$} ;
\draw (-1,-1) node {$\bullet$} ;
\draw (-1,-3) node {$\bullet$} ;
\draw (-1,-7) node {$\bullet$} ;

\draw (-1.2,1) node[left] {$1$} ;
\draw (-1.2,-7) node[left] {$\sqrt{n}$} ;

\draw[decoration={brace,mirror,raise=5pt},decorate]
  (-1.7,1) -- node[left=6pt] {$\Omega_0$} (-1.7,0);
\draw[decoration={brace,mirror,raise=5pt},decorate]
  (-1.7,0) -- node[left=6pt] {$\Omega_1$} (-1.7,-1);
\draw[decoration={brace,mirror,raise=5pt},decorate]
  (-1.7,-3) -- node[left=6pt] {$\Omega_J$} (-1.7,-7);

\end{tikzpicture}

 \caption{\label{fig:illusEns}2D view of the signal $x\in \Cbb^n$ to reconstruct. The vector $x$ can be reshaped into a $\sqrt{n}\times \sqrt{n}$ matrix. The \replace{indexes}{indices} can be partitioned according to the sets $\Omega_{j,j'} = \Omega_j \times \Omega_{j'} $ for $0\leq j,j'\leq J$. ${R}_k$ represents the coefficient \replace{indexes}{indices} corresponding to the $k$-th horizontal row. }
  \end{center}
 \end{figure}
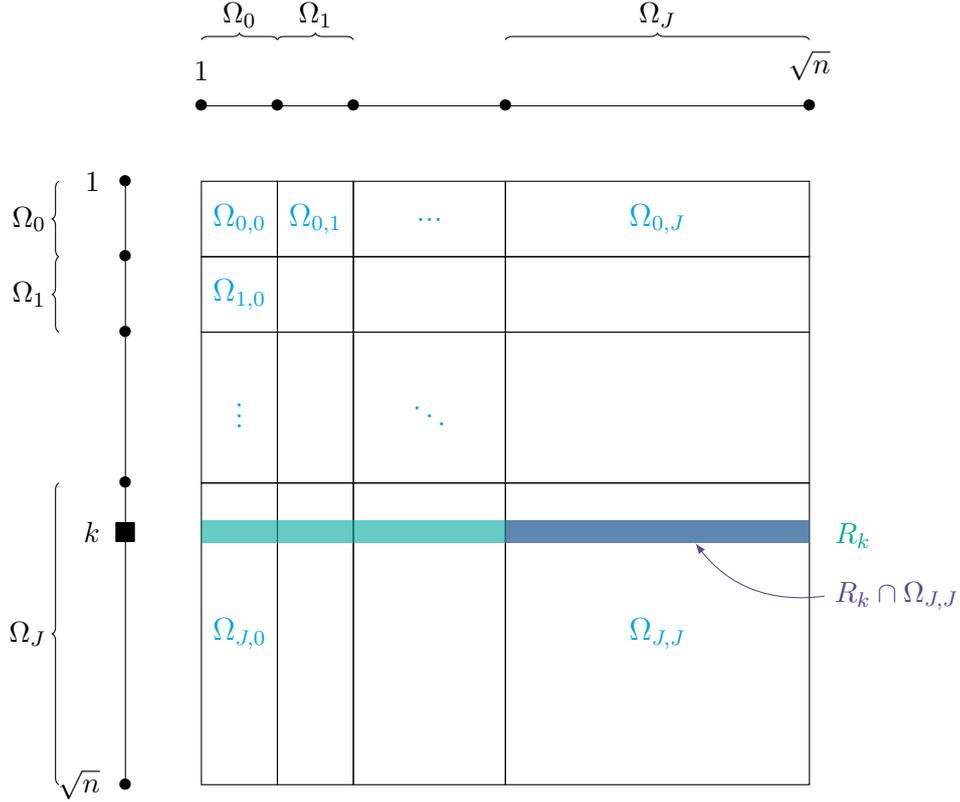

The sparsities in levels $(s_j^r)_{0\leq j\leq J}$ are intrinsically anisotropic. According to this definition, the wavelet coefficients restricted to the vertical subband $\{1,\ldots,\sqrt{n}\}\times\Omega_j$ and to a generic row can be at most $s_j^r$-sparse.


 \begin{figure}[t]
  \begin{center}
\begin{tikzpicture}
\draw (0,0) rectangle (1,-1);
\draw (1,0) rectangle (2,-1);
\draw (2,0) rectangle (4,-1);
\draw (4,0) rectangle (8,-1);

\draw (0,-1) rectangle (1,-2);
\draw (1,-1) rectangle (2,-2);
\draw (2,-1) rectangle (4,-2);
\draw (4,-1) rectangle (8,-2);

\draw (0,-2) rectangle (1,-4);
\draw (1,-2) rectangle (2,-4);
\draw (2,-2) rectangle (4,-4);
\draw (4,-2) rectangle (8,-4);

\draw (0,-4) rectangle (1,-8);
\draw (1,-4) rectangle (2,-8);
\draw (2,-4) rectangle (4,-8);
\draw (4,-4) rectangle (8,-8);

\draw (0,1) -- (8,1);
\draw (0,1) node {$\bullet$} ;
\draw (1,1) node {$\bullet$} ;
\draw (2,1) node {$\bullet$} ;
\draw (4,1) node {$\bullet$} ;
\draw (8,1) node {$\bullet$} ;

\draw (0,1.2) node[above] {$1$} ;
\draw (8,1.2) node[above] {$\sqrt{n}$} ;

\draw[decoration={brace,raise=5pt},decorate]
  (0,1.7) -- node[above=6pt] {$\Omega_0$} (1,1.7);
\draw[decoration={brace,raise=5pt},decorate]
  (1,1.7) -- node[above=6pt] {$\Omega_1$} (2,1.7);
 \draw[decoration={brace,raise=5pt},decorate]
  (4,1.7) -- node[above=6pt] {$\Omega_J$} (8,1.7);

\draw (-1,0) -- (-1,-8);
\draw (-1,0) node {$\bullet$} ;
\draw (-1,-1) node {$\bullet$} ;
\draw (-1,-2) node {$\bullet$} ;
\draw (-1,-4) node {$\bullet$} ;
\draw (-1,-8) node {$\bullet$} ;

\draw (-1.2,0) node[left] {$1$} ;
\draw (-1.2,-8) node[left] {$\sqrt{n}$} ;

\draw[decoration={brace,mirror,raise=5pt},decorate]
  (-1.7,0) -- node[left=6pt] {$\Omega_0$} (-1.7,-1);
\draw[decoration={brace,mirror,raise=5pt},decorate]
  (-1.7,-1) -- node[left=6pt] {$\Omega_1$} (-1.7,-2);
\draw[decoration={brace,mirror,raise=5pt},decorate]
  (-1.7,-4) -- node[left=6pt] {$\Omega_J$} (-1.7,-8);
  
\draw (0.25,-0.25) node[color=Gray] {\large$\blacksquare$} ;
\draw (0.75,-0.25) node[color=Gray] {\large$\blacksquare$} ;
\draw (0.25,-0.75) node[color=Gray] {\large$\blacksquare$} ;
\draw (0.75,-0.75) node[color=Gray] {\large$\blacksquare$} ;

\draw (0.25,-0.25) node[color=RubineRed,opacity=0.2] {\large$\blacksquare$} ;
\draw (0.75,-0.25) node[color=RubineRed,opacity=0.2] {\large$\blacksquare$} ;
\draw (0.25,-0.75) node[color=RubineRed,opacity=0.2] {\large$\blacksquare$} ;
\draw (0.75,-0.75) node[color=RubineRed,opacity=0.2] {\large$\blacksquare$} ;

\draw (1.25,-0.25) node[color=Gray] {\large$\blacksquare$} ;
\draw (1.75,-0.25) node[color=Gray] {\large$\blacksquare$} ;
\draw (1.25,-0.75) node[color=Gray] {\large$\blacksquare$} ;

\draw (1.25,-0.25) node[color=RubineRed,opacity=0.2] {\large$\blacksquare$} ;
\draw (1.75,-0.25) node[color=RubineRed,opacity=0.2] {\large$\blacksquare$} ;

\draw (2.25,-0.25) node[color=Gray] {\large$\blacksquare$} ;
\draw (2.75,-0.25) node[color=Gray] {\large$\blacksquare$} ;
\draw (3.75,-0.25) node[color=Gray] {\large$\blacksquare$} ;
\draw (2.25,-0.75) node[color=Gray] {\large$\blacksquare$} ;

\draw (2.25,-0.25) node[color=RubineRed,opacity=0.2] {\large$\blacksquare$} ;
\draw (2.75,-0.25) node[color=RubineRed,opacity=0.2] {\large$\blacksquare$} ;
\draw (3.75,-0.25) node[color=RubineRed,opacity=0.2] {\large$\blacksquare$} ;

\draw (0.25,-1.25) node[color=Gray] {\large$\blacksquare$} ;
\draw (0.75,-1.25) node[color=Gray] {\large$\blacksquare$} ;
\draw (0.75,-1.75) node[color=Gray] {\large$\blacksquare$} ;

\draw (0.25,-1.25) node[color=RubineRed,opacity=0.2] {\large$\blacksquare$} ;
\draw (0.75,-1.25) node[color=RubineRed,opacity=0.2] {\large$\blacksquare$} ;

\draw (1.25,-1.25) node[color=Gray] {\large$\blacksquare$} ;
\draw (1.75,-1.75) node[color=Gray] {\large$\blacksquare$} ;

\draw (2.25,-1.25) node[color=Gray] {\large$\blacksquare$} ;
\draw (3.25,-1.25) node[color=Gray] {\large$\blacksquare$} ;
\draw (2.75,-1.75) node[color=Gray] {\large$\blacksquare$} ;
\draw (3.75,-1.75) node[color=Gray] {\large$\blacksquare$} ;
\draw (3.25,-1.75) node[color=Gray] {\large$\blacksquare$} ;
\draw (3.75,-1.75) node[color=RubineRed,opacity=0.2] {\large$\blacksquare$} ;
\draw (3.25,-1.75) node[color=RubineRed,opacity=0.2] {\large$\blacksquare$} ;
\draw (2.75,-1.75) node[color=RubineRed,opacity=0.2] {\large$\blacksquare$} ;

\draw (0.25,-2.25) node[color=Gray] {\large$\blacksquare$} ;
\draw (0.75,-2.75) node[color=Gray] {\large$\blacksquare$} ;
\draw (0.25,-3.25) node[color=Gray] {\large$\blacksquare$} ;

\draw (0.25,-4.25) node[color=Gray] {\large$\blacksquare$} ;
\draw (0.75,-6.75) node[color=Gray] {\large$\blacksquare$} ;
\draw (0.25,-7.75) node[color=Gray] {\large$\blacksquare$} ;

\draw (1.75,-2.25) node[color=Gray] {\large$\blacksquare$} ;
\draw (1.25,-2.75) node[color=Gray] {\large$\blacksquare$} ;
\draw (1.75,-3.75) node[color=Gray] {\large$\blacksquare$} ;

\draw (1.25,-4.75) node[color=Gray] {\large$\blacksquare$} ;
\draw (1.75,-5.75) node[color=Gray] {\large$\blacksquare$} ;
\draw (1.25,-6.25) node[color=Gray] {\large$\blacksquare$} ;

\draw (2.25,-2.25) node[color=Gray] {\large$\blacksquare$} ;
\draw (3.75,-3.25) node[color=Gray] {\large$\blacksquare$} ;
\draw (2.75,-3.25) node[color=Gray] {\large$\blacksquare$} ;

\draw (2.75,-4.75) node[color=Gray] {\large$\blacksquare$} ;
\draw (3.75,-5.75) node[color=Gray] {\large$\blacksquare$} ;
\draw (3.25,-7.25) node[color=Gray] {\large$\blacksquare$} ;


\draw (6,-8.7) node[color=RubineRed] {$\dots$};
\draw (0.5,-8.9) node[rotate=45,color=RubineRed]{$s_0^r=2$};
\draw (1.5,-8.9) node[rotate=45,color=RubineRed]{$s_1^r=2$};
\draw (3,-8.9) node[rotate=45,color=RubineRed]{$s_2^r=3$};

\end{tikzpicture}
 \caption{\label{fig:struct_sparsity_ex}Example of a signal $x\in \Cbb^n$ with support denoted by squares in grey for the first vertical subbands. The quantities $(s_j^r)_j$  are the maximal degrees of sparsity restricted to rows and respectively to vertical subbands $\cup_{j'} \Omega_{j',j}$. In the Figure, they are reached for support locations highlighted in red.
 }
  \end{center}
 \end{figure}
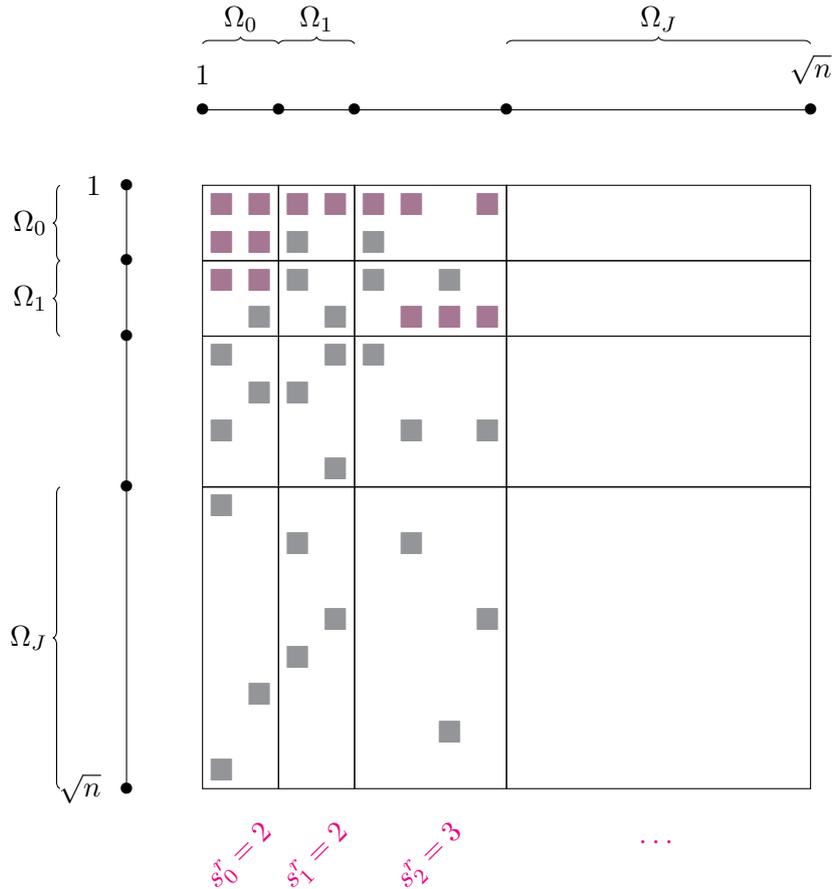

{Equipped with this notion of structured sparsity, we consider measurements corresponding to vertical lines in the frequency space. This sensing strategy models MRI acquisition in a more realistic way than isolated Fourier measurements and corresponds to the setting \ref{block_finite_setting} in Section~\ref{subsec:sampling}. Specifically, we partition the set of $n$ rows of the matrix $A_0$ into $\sqrt{n}$ blocks of size $\sqrt{n} \times n$ defined for $k = 1,\ldots,\sqrt{n}$ as
$$
D_{k} := \phi_{k,:} \otimes \phi = \begin{pmatrix} \phi_{k,1} \phi | \cdots | \phi_{k,\sqrt{n}} \phi\end{pmatrix} \in \mathbb{C}^{\sqrt{n} \times n},
$$
where $\phi_{k,:}$ denotes the $k$-th row of $\phi$. Drawing the block $D_k$ corresponds to sampling along the vertical line $\{1,\ldots,\sqrt{n}\}\times\{k\}$ in the frequency space. 

In this setting, the analysis based on the quantity $\Theta(S,\pi)$ leads to the following result.}

\begin{corollary}
\label{corol:linesHaarMRI}
Let $A_0\in \Cbb^{n \times n}$ be the two-dimensional Fourier-Haar transform and consider the splitting of the Haar and of the Fourier spaces into subbands as defined above.
Fix $\varepsilon \in (0,1)$. 
Suppose that $x \in \Cbb^{n}$ is an $S$-sparse vector with random signs and associated structured sparsities {in levels} $(s_j^{r})_{0\leq j \leq J}$.
Then, {the same} recovery guarantees of Theorem \ref{thm:noisy} hold with probability at least $1-\varepsilon$ by
drawing vertical lines \replace{with probability}{according to the probability distribution} $\left( \pi_k \right)_{1\leq k \leq \sqrt{n}}$ defined as 
$$ {\pi}_k = \frac{2^{-j(k)} \sum_{j=0}^J 2^{-|j(k)-j|/2} s^{r}_{j}}{ 
\sum_{\ell=1}^{\sqrt{n}}  
2^{-j(\ell)} \sum_{j=0}^J 
2^{-|j(\ell) -j|/2} s^{r}_{j}  },
$$
with a number $m$ of drawn vertical lines of the order
\begin{equation}
\label{eq:horLinesMRI}
m \gtrsim  \sum_{j=0}^J \left(  s^{r}_{j}  + \sum_{j'=0 \atop j'\neq j}^J 
2^{-|j -j'|/2} s^{r}_{j'}  \right) \cdot\ln^2 \left( \frac{6 n }{\varepsilon}\right).
\end{equation}
{In particular, the probability \replace{}{distribution} $\left( \pi_k \right)_{1\leq k \leq \sqrt{n}}$ is constant on each frequency subband $W_j$.}
\end{corollary}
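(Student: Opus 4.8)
The plan is to invoke Theorem~\ref{thm:noisy} (and Theorem~\ref{thm:noiseless} in the noiseless case), for which it suffices to exhibit an admissible value of $\Theta(S,\pi)$ and then choose $\pi$ so as to make it as small as possible. So the real content is an explicit estimate of $\|D_k^*D_{k,S}\|_{\infty\to\infty}$ for the vertical-line blocks $D_k=\phi_{k,:}\otimes\phi$, after which \eqref{ineq:Theta_bos} and Lemma~\ref{lem:max_over_simplex} do the rest.

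\emph{Step 1 (Kronecker collapse).} Since $\phi$ is an isometry, $D_k^*D_k=(\phi_{k,:}^*\phi_{k,:})\otimes(\phi^*\phi)=(\phi_{k,:}^*\phi_{k,:})\otimes\Id_{\sqrt n}$. Writing the column/row indices of $D_k^*D_k$ as wavelet multi-indices and keeping track of which coordinate is coupled to $\phi_{k,\cdot}$ and which one carries the identity factor (fixed by the $\vect$ convention and the orientation of the sampled lines), this reads $(D_k^*D_k)_{(i_1,i_2),(j_1,j_2)}=\overline{\phi_{k,i_1}}\phi_{k,j_1}\,\delta_{i_2,j_2}$. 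Summing absolute values over the $S$-columns and maximizing over the target index, the Kronecker delta decouples the two coordinates, so
$$
\|D_k^*D_{k,S}\|_{\infty\to\infty}=\|\phi_{k,:}\|_\infty\cdot\max_{r}\ \sum_{c\,:\,(r,c)\in S}|\phi_{k,c}|,
$$
where the inner sum runs along a single row $R_r$ of the wavelet coefficient array — precisely the object the sparsities in levels $s_j^r$ were designed to control.

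\emph{Step 2 (local coherence).} I would then bound both factors with the Fourier--Haar estimate \eqref{eq:localcoherenceFH}: $\|\phi_{k,:}\|_\infty\le\sqrt C\,2^{-j(k)/2}$, and, splitting the row sum over the Haar subbands $\Omega_{j'}$ and using $|S\cap R_r\cap\Omega_{\ell(r),j'}|\le s_{j'}^r$ together with $|\phi_{k,c}|\le\sqrt C\,2^{-j(k)/2}2^{-|j(k)-j'|/2}$ for $c\in\Omega_{j'}$,
$$
\max_r\sum_{c:(r,c)\in S}|\phi_{k,c}|\ \le\ \sqrt C\,2^{-j(k)/2}\sum_{j'=0}^J 2^{-|j(k)-j'|/2}s_{j'}^r .
$$
Hence $\|D_k^*D_{k,S}\|_{\infty\to\infty}\le C\,g_k$ with $g_k:=2^{-j(k)}\sum_{j'=0}^J 2^{-|j(k)-j'|/2}s_{j'}^r$, so $\Theta(S,\pi)\le C\max_{1\le k\le\sqrt n} g_k/\pi_k$ is admissible.

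\emph{Step 3 (optimize and count).} By Lemma~\ref{lem:max_over_simplex}, $\max_k g_k/\pi_k$ over probability vectors $\pi$ on $\{1,\dots,\sqrt n\}$ is minimized at $\pi_k=g_k/\sum_\ell g_\ell$, which is exactly the stated distribution and is constant on each band $W_j$ because $g_k$ depends on $k$ only through $j(k)$; the minimal value is $\sum_k g_k$. Grouping the $\sqrt n$ frequencies by band and using $|W_j|=2^{\max(1,j)}$,
$$
\sum_k g_k=\sum_{j=0}^J 2^{\max(1,j)-j}\sum_{j'=0}^J 2^{-|j-j'|/2}s_{j'}^r\ \le\ 2\sum_{j=0}^J\Big(s_j^r+\sum_{j'\ne j}2^{-|j-j'|/2}s_{j'}^r\Big).
$$
Plugging $\Theta(S,\pi)\le 2C\sum_k g_k$ into the measurement condition of Theorem~\ref{thm:noisy} yields \eqref{eq:horLinesMRI}, the implicit constant absorbing $c_1$, $C$, and the factor $2$. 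The main obstacle is Step~1: getting the Kronecker index bookkeeping right, i.e.\ verifying that the $\infty\to\infty$ norm of $D_k^*D_{k,S}$ collapses to the product of the line's global coherence and a \emph{row-wise} partial $\ell^1$ mass of the coefficients, which is what makes the row-type sparsities $s_j^r$ — rather than a transposed quantity — the correct ones here.
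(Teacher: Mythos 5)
Your proposal is correct and follows essentially the same route as the paper: the same Kronecker collapse of $\|D_k^* D_{k,S}\|_{\infty\to\infty}$ into the product of $\|\phi_{k,:}\|_\infty$ and a row-wise partial $\ell^1$ sum controlled by the $s_j^r$, the same use of the local coherence bound \eqref{eq:localcoherenceFH}, and the same optimization of $\Theta(S,\pi)$ via Lemma~\ref{lem:max_over_simplex} before invoking Theorem~\ref{thm:noisy}. The index bookkeeping you flag as the main obstacle is handled in the paper exactly as you describe, so no gap remains.
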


The proof of this results is given in Appendix~\ref{proof:corollinesHaarMRI_lambda}.
Corollary \ref{corol:linesHaarMRI} meets the results of \cite[Corollary 4.10]{boyer2017compressed}: the bounds on the number of blocks of measurements are of the same order. Nevertheless, note that Corollary \ref{corol:linesHaarMRI} requires an additional assumption on the sign randomness. Moreover, recall that Theorem~\ref{thm:noisy} also ensures stable and robust recovery.

We can improve this result by analyzing the quantity $\Lambda(S,\pi)$.

\begin{corollary}
\label{corol:linesHaarMRI_lambda}
Let $A_0 \in \Cbb^{n \times n}$ be the two-dimensional Fourier-Haar transform and consider the splitting of the Haar and of the Fourier spaces into subbands as defined above. Fix $\varepsilon \in (0,1)$. 
Suppose that $x \in \Cbb^{n}$ is an $S$-sparse vector with random signs and associated structured sparsities $(s_j^{r})_{0\leq j \leq J}$.
 If
$$
\min_{0\leq j \leq J}    \left(  s_{j}^{r} +  \sum_{j' = 0 \atop j'\neq j}^J   2^{-|j-j'|} s_{j'}^{r} \right) \cdot \log_2(\sqrt{n})  \gtrsim \ln(3 n/\varepsilon),
$$
then, the same recovery guarantees of Theorem \ref{thm:killthetalog} hold with probability at least $1-\varepsilon$ by
drawing vertical lines \replace{with probability}{according to the probability distribution} $\left( \pi_k \right)_{1\leq k \leq \sqrt{n}}$ defined as 
$$ {\pi}_k 
= 
{\frac{2^{-j(k)} \sum_{j = 0}^{J} 2^{-|j-j(k)|} s_{j}^r}{\sum_{\ell = 1}^{\sqrt{n}} 2^{-j(\ell)} \sum_{j = 0}^{J} 2^{-|j(\ell)-j|} s_{j}^r}},
$$
with a number $m$ of drawn vertical lines of the order
\begin{equation}
\label{eq:horLinesMRI_lambda}
m \gtrsim  \left( \sum_{j=0}^J  s_{j}^{r}  +\sum_{j'=0 \atop j'\neq j}^J   2^{-|j-j'|} s_{j'}^{r} \right)\cdot \log_2(\sqrt{n})  \cdot \ln \left( \frac{6 n }{\varepsilon}\right). 
\end{equation}
In particular, the probability \replace{}{distribution} $\left( \pi_k \right)_{1\leq k \leq \sqrt{n}}$ is constant on each frequency subband $W_j$.
\end{corollary}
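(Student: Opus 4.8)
The plan is to apply Proposition~\ref{prop_opt_drawing_proba_lambda} — the $\Lambda$-optimal block result, itself a consequence of Theorem~\ref{thm:killthetalog} — to the block dictionary of Section~\ref{subsec:sampling}~\ref{block_finite_setting} formed by the Fourier lines $D_k=\phi_{k,:}\otimes\phi$, $k=1,\ldots,\sqrt n$, where $\phi_{k,:}$ denotes the $k$-th row of $\phi$. Everything then reduces to (i) producing valid upper bounds for the block quantities $\Lambda(S,\pi)$ and $\Gamma(\pi)$ of \eqref{Lambda_bos}--\eqref{Gamma_bos} for this dictionary; (ii) choosing, via Lemma~\ref{lem:max_over_simplex}, the probability distribution $\pi$ minimizing the resulting bound on $m$; and (iii) verifying that the side condition \eqref{cond:killthethetalog} follows from the displayed sparsity hypothesis.

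For the algebraic reduction I would first use that $\phi$ is unitary and $(A\otimes B)^*(C\otimes D)=(A^*C)\otimes(B^*D)$ to obtain $D_k^*D_k=(\phi_{k,:}^*\phi_{k,:})\otimes I_{\sqrt n}$. Indexing the Haar coefficients by their row/column pair in $\{1,\ldots,\sqrt n\}^2$, this shows that $D_{k,S}^*D_{k,S}$ is block diagonal with respect to the row index, each diagonal block being of rank one; in particular $\|D_k\|_{1\to2}^2=\max_m|\phi_{k,m}|^2$. To control $\|D_{k,S}^*D_{k,S}\|_{2\to2}$ I would split $S$ by wavelet level, $S^{(\ell)}=\{(r,m)\in S : m\in\Omega_\ell\}$, bound $\|D_{k,S}^*D_{k,S}\|_{2\to2}\le\big(\sum_{\ell=0}^J\|D_{k,S^{(\ell)}}\|_{2\to2}\big)^2$ by subadditivity of the operator norm over column-blocks (this is where the factor $J+1=\log_2(\sqrt n)$ enters, via $(\sum_\ell a_\ell)^2\le(J+1)\sum_\ell a_\ell^2$), and estimate each level-block using the rank-one structure together with Definition~\ref{def:sparsity_in_levels_2D}: $\|D_{k,S^{(\ell)}}^*D_{k,S^{(\ell)}}\|_{2\to2}\le\mu_{j(k),\ell}\,s_\ell^r$, where $j(k)$ is the frequency level of the $k$-th line. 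The local-coherence estimate \eqref{eq:localcoherenceFH}, $\mu_{j(k),\ell}\le C\,2^{-j(k)}2^{-|j(k)-\ell|}$, then yields $\|D_k\|_{1\to2}^2\le C\,2^{-j(k)}$ and $\|D_{k,S}^*D_{k,S}\|_{2\to2}\le C\,\log_2(\sqrt n)\,2^{-j(k)}\sum_{\ell=0}^J 2^{-|j(k)-\ell|}s_\ell^r$; note that both bounds depend on $k$ only through $j(k)$, hence are constant on each frequency band $W_j$.

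Next I would optimize and conclude. The factor $\log_2(\sqrt n)$ is common to all lines and drops out of the minimization, so by Lemma~\ref{lem:max_over_simplex} the distribution minimizing $\max_k\frac{1}{\pi_k}\|D_{k,S}^*D_{k,S}\|_{2\to2}$ is exactly the $\pi_k$ in the statement, with minimal value $\Lambda(S,\pi)=C\,\log_2(\sqrt n)\sum_{k=1}^{\sqrt n}2^{-j(k)}\sum_{\ell}2^{-|j(k)-\ell|}s_\ell^r$. Grouping the line sum by frequency band and using $|W_j|=2^{\max(1,j)}$ collapses it, up to a constant, to $\log_2(\sqrt n)\big(\sum_j s_j^r+\sum_{j'\neq j}2^{-|j-j'|}s_{j'}^r\big)$; substituting into $m\gtrsim\Lambda(S,\pi)\ln(3n/\varepsilon)$ gives \eqref{eq:horLinesMRI_lambda}. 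With the same $\pi$ one has $\Gamma(\pi)=\max_k\frac{1}{\pi_k}\|D_k\|_{1\to2}^2$, and the ratio $\Gamma(\pi)/\Lambda(S,\pi)$ equals, up to a constant, $\big(\log_2(\sqrt n)\min_j(s_j^r+\sum_{j'\neq j}2^{-|j-j'|}s_{j'}^r)\big)^{-1}$, so \eqref{cond:killthethetalog} is precisely the displayed sparsity assumption. Theorem~\ref{thm:killthetalog} then delivers the claimed guarantees — exact recovery via \eqref{pb:BP} when $\eta=0$, and, by Remark~\ref{rem:stability}, the stable-and-robust estimate via \eqref{pb:qBP} in general.

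The main obstacle is the second step. One must check that the counting quantity produced by the rank-one blocks, $\max_r|\{m\in\Omega_\ell:(r,m)\in S\}|$, coincides exactly with $s_\ell^r$ of Definition~\ref{def:sparsity_in_levels_2D} — which hinges on a Fourier line fixing one of the two multi-index coordinates — and that the $\|\cdot\|_{2\to2}$ control of the Gram blocks produces the decay $2^{-|j(k)-\ell|}$ rather than the weaker $2^{-|j(k)-\ell|/2}$ that appears in the $\Theta$-based Corollary~\ref{corol:linesHaarMRI}. Once this estimate is in place, the optimization and the verification of \eqref{cond:killthethetalog} are a direct computation along the lines of Lemma~\ref{lem:max_over_simplex}.
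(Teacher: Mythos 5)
Your proposal is correct and, in substance, it is the paper's own argument: the same upper bound $\|D_{k,S}^*D_{k,S}\|_{2\to 2}\lesssim \log_2(\sqrt{n})\,2^{-j(k)}\sum_{\ell}2^{-|j(k)-\ell|}s^r_\ell$ derived from the local coherence estimate \eqref{eq:localcoherenceFH} and the row-wise sparsities of Definition~\ref{def:sparsity_in_levels_2D}, the same optimization of $\pi$ via Lemma~\ref{lem:max_over_simplex} (with the band sizes $|W_j|=2^{\max(1,j)}$ collapsing the sum), the same estimate $\|D_k\|_{1\to 2}^2\lesssim 2^{-j(k)}$ for $\Gamma$, and the same verification that the displayed sparsity hypothesis is exactly \eqref{cond:killthethetalog} before invoking Theorem~\ref{thm:killthetalog}. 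The only genuine difference is how the key $\Lambda$-estimate is obtained: the paper bounds the quadratic form $\|D_{k,S}v\|_2^2=\sum_q\bigl|\sum_i\phi_{k,i}(V_S)_{q,i}\bigr|^2$ with two applications of Cauchy--Schwarz (one within each row, bringing in $s^r_j$ through $\|V_{S\cap\Omega_{\ell,j}\cap R_q}\|_1\le\sqrt{s_j^r}\,\|V_{S\cap\Omega_{\ell,j}\cap R_q}\|_2$, and one over the $J+1$ levels, producing the $\log_2(\sqrt{n})$ factor), whereas you use the identity $D_k^*D_k=(\phi_{k,:}^*\phi_{k,:})\otimes \Id$, i.e.\ block-diagonality over the row index with rank-one blocks, split $S$ by column level, and then insert the $(J+1)$ factor by a triangle inequality plus Cauchy--Schwarz on the level sum; your identification of the per-row, per-level counts with $s^r_\ell$ is precisely the point the paper exploits. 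The two routes give identical $\pi^\Lambda$, $\Lambda(S,\pi^\Lambda)$, $\Gamma(\pi^\Lambda)$ and measurement bound, so the corollary follows as stated. As a side remark, your structural observation is in fact slightly stronger than you use it: each diagonal block is a principal submatrix of the rank-one matrix $\phi_{k,:}^*\phi_{k,:}$, so its norm is $\sum_{i\in T_q}|\phi_{k,i}|^2\le\sum_\ell \mu_{j(k),\ell}s^r_\ell$ directly, and the $\log_2(\sqrt{n})$ loss could be avoided altogether; since the statement already carries that factor, inserting it as you do is harmless.
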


The proof of Corollary~\ref{corol:linesHaarMRI_lambda} is given in Appendix~\ref{proof:corollinesHaarMRI_lambda}.

Note that the bound \eqref{eq:horLinesMRI_lambda} improves Corollary~\ref{corol:linesHaarMRI} by  attenuating the interference between different subbands sparsities.

\begin{remark}
Results analogous to those presented in this Section hold when sampling horizontal lines in the Fourier space, up to replacing the structured sparsities $s_j^r$ in Definition~\ref{def:sparsity_in_levels_2D} with 
$$
s_j^c := \max_{0 \leq \ell \leq J} \max_{k \in \Omega_\ell} |S \cap \Omega_{j,\ell} \cap C_k|, 
$$
where $C_k = \{1,\ldots,\sqrt{n}\}\times \{k\}$ is the $k$-th vertical line of the wavelet multi-index space. In this case, sampling along the $k$-th horizontal line corresponds to draw a block $D_k = \phi \otimes \phi_{k,:}$. The proofs in Appendix~\ref{proof:corollinesHaarMRI_lambda} can be easily adapted to this setting by interchanging the role of rows and columns in the Haar and in the Fourier multi-index spaces, respectively (or, equivalently, by switching the order of tensorization).
\end{remark}
\begin{remark} 
There is a mismatch between the notation adopted here and  in \cite{boyer2017compressed}, where the role of rows and columns in the Haar and Fourier spaces is switched. As a result, horizontal Fourier sampling and structured sparsities $s_j^c$ in \cite{boyer2017compressed} correspond to vertical Fourier sampling and structured sparsities $s_j^r$, respectively, in our framework. Although choosing one of these two conventions might be considered just a matter of taste, we have changed notation in order to adhere to the way the two-dimensional Fourier transform \replace{in}{is} performed by the \textsc{Matlab}$^{\tiny\textregistered}$ command \texttt{fft2}. In our setting, fully-sampled measurements $y = A_0 x = (\mathcal{F}H^* \otimes \mathcal{F}H^*)x$ correspond to $Y = \mathcal{F}H^* X (\mathcal{F}H^*)^T$, where $x = \vect(X)$ and $y = \vect(Y)$. In particular, the block measurement $D_k x = ((\mathcal{F}H^*)_{k,:}\otimes \mathcal{F}H^*)x$ corresponds to the $k$-th column (or vertical line) of the matrix $Y$. 
\end{remark}


\subsection{Adaptive sampling for function approximation}
\label{sec:adaptive}

In this section, we examine the problem of approximating a function from random pointwise samples using standard sparsity with respect to Legendre polynomials \cite{adcock2017compressed,chkifa2017polynomial,rauhut2012sparse}. In particular, we will see how the theoretical results shown in this paper can be employed to construct effective adaptive sampling strategies. For the sake of simplicity, we will focus on the one-dimensional case, although the strategies presented here can be generalized to multiple dimensions.

\paragraph{Adaptive sampling strategies}
Let $f : [-1,1] \to \mathbb{C}$ and let $\{L_j\}_{j = 1}^{n}$ be the family of Legendre orthogonal polynomials normalized such that $\int_{-1}^1L_j(x)L_k(x) dx = 2 \delta_{jk}$. We aim at approximating $f$ as a sparse expansion of Legendre polynomials from a fixed budget of $m$ adaptively chosen pointwise samples.

In order to put ourselves in the framework of subsampled isometries, we consider the family of Gauss-Legendre quadrature points $\{g_j\}_{j = 1}^n$ on $[-1,1]$ and their respective quadrature weights $\{w_j\}_{j = 1}^n$ (we recall that $\{g_j\}_{j = 1}^n$ are the roots of the polynomial $L_{n+1}$ and the weights satisfy $\sum_{j = 1}^n w_j = 2$). The resulting quadrature formula is exact on polynomials of degree less than or equal to $2n-1$. In particular, the matrix $A_0 \in \mathbb{R}^{n \times n}$, defined as
\begin{equation}
(A_0)_{ij} = (\replacemath{a}{d}_1|\cdots|\replacemath{a}{d}_n)^*, \quad \replacemath{a}{d}_i = \sqrt{\frac{w_i}{2}}( L_j(g_i))_{j = 1}^n, 
\end{equation} 
is orthogonal.  Based on the theory presented in this paper, we consider two adaptive sampling strategies based on successive approximations of the support via $\ell^1$ minimization. We refer to these strategies as (Adapt~I) and (Adapt~II). They are outlined in Algorithm~\ref{adapt_sampl_I} and described below.

Let us fix a target sparsity level $s \leq n$ and two numbers $K,m_1 \in \mathbb{N}$ such that $s \leq m_1 \leq N/K$. Both procedures draw $m_1$ samples $K$ times, resulting in a total of $m = K m_1$ samples. At each iteration, we compute an approximation to the function $f$ based on partial support information and we update the sampling measure accordingly. In particular, in the $k$-th iteration, (Adapt~I) updates the sampling measure based on the support corresponding to the $s$ entries of the $(k-1)$-th approximation with largest magnitude. On the other hand, (Adapt~II) updates the measure by taking advantage of the $(k-1) s$ entries of the $(k-1)$-th approximation having largest magnitude. This difference corresponds to line \ref{support_update} of  Algorithm~\ref{adapt_sampl_I}.

\begin{algorithm}[h]
\caption{\label{adapt_sampl_I}Adaptive sampling strategies (Adapt~I) and (Adapt~II)}
\textbf{Inputs:} $s,n,K,m_1 \in \mathbb{N}$, $\eta >0$\\
\textbf{Output:} $\hat{x} \in \mathbb{R}^n$

\begin{algorithmic}[1]
\STATE Draw $m_1$ samples $\{J_\ell^{(1)}\}_{\ell =1}^{m_1}$ i.i.d.\ uniformly from $\{1,\ldots,N\}$ (i.e., $\pi_j^{(1)} = 1/N$)
\STATE Define $A^{(1)} = \sqrt{\frac{n}{m_1}} (\replacemath{a}{d}_{J_\replacemath{i}{\ell}^{(1)}})_{\replacemath{i}{\ell}=1}^{m_1}\in \mathbb{R}^{m_1 \times n}$ and $y^{(1)} = \sqrt{\frac{n}{m_1}} (\sqrt{\frac{w_\replacemath{i}{\ell}}{2}} f(g_{J_\replacemath{i}{\ell}^{(1)}}))_{\replacemath{i}{\ell}=1}^{m_1}\in \mathbb{R}^{m_1}$
\STATE Find $\hat{x}^{(1)} := \displaystyle\arg \min_{z \in \mathbb{R}^n}\|z\|_1$ s.t. $\|A^{(1)}z -y^{(1)}\|_2 \leq \eta$
\FOR{$k = 2,\ldots, K$} 
\STATE \label{support_update}Let $\displaystyle S^{(k-1)} = 
\begin{cases}
\arg\min_{S : |S| = s}\|\hat{x}^{(k-1)}_S - \hat{x}^{(k-1)}\|_2, 
& \textnormal{in the case of (Adapt~I)}\\ 
\arg\min_{S : |S| = (k-1)s}\|\hat{x}^{(k-1)}_S - \hat{x}^{(k-1)}\|_2, 
& \textnormal{in the case of (Adapt~II)}
\end{cases}$
\STATE Define the measure $\pi^{(k)}_j \propto \|\replacemath{a}{d}_{j,S^{(k-1)}}\|_2^2$ for every $j =1,\ldots,N$
\STATE Draw $\replace{m_1}{m_k}$ indices $\{J_\ell^{(k)}\}_{\ell=1}^{m_1}$  i.i.d.\ according to $\pi^{(k)}$ from $\{1,\ldots,N\}$
\STATE Define $A^{(k)} = ( \replacemath{a}{d}_{J_\replacemath{i}{\ell}^{(k)}}  c_\replacemath{i}{\ell}^{(k)})_{\replacemath{i}{\ell}=1}^{\replacemath{m_1}{m_k}}$ and $y^{(k)} = (\sqrt{\frac{w_\replacemath{i}{\ell}}{2}} f(g_{J_\replacemath{i}{\ell}^{(k)}}) c_\replacemath{i}{\ell}^{(k)})_{\replacemath{i}{\ell}=1}^{\replacemath{m_1}{m_k}}$, with $c_\replacemath{i}{\ell}^{(k)}=1/\sqrt{m_1 \pi^{(k)}_{J_\replacemath{i}{\ell}^{(k)}}}$
\STATE Define $\tilde{A}^{(k)} = \frac{1}{\sqrt{k}}\begin{bmatrix}A^{(1)}\\ \vdots\\A^{(k)}\end{bmatrix}$ and $\tilde{y}^{(k)} = \frac{1}{\sqrt{k}}\begin{bmatrix}y^{(1)}\\ \vdots\\y^{(k)}\end{bmatrix}$
\STATE Find $\hat{x}^{(k)} := \displaystyle\arg \min_{z \in \mathbb{R}^n}\|z\|_1$ s.t. $\|\tilde{A}^{(k)}z -\tilde{y}^{(k)}\|_2 \leq \eta$
\ENDFOR
\STATE Let $\hat{x} = \hat{x}^{(K)}$
\end{algorithmic}
\end{algorithm}

\paragraph{Adaptive \emph{vs.}\ nonadaptive sampling.} In order to evaluate the performance of adaptive sampling, we compare the following five sampling strategies:
\begin{description}
\item [(Adapt~I)] Adaptive sampling with fixed support size;
\item [(Adapt~II)] Adaptive sampling with increasing support size;
\item [(Unif I)] Random sampling from the continuous uniform measure over $(-1,1)$;
\item [(Unif II)] Uniform random sampling from $A_0$;
\item [(Cheby)] Random sampling from the continuous Chebyshev measure  $\frac{1}{\pi\sqrt{1-t^2}}$ over $(-1,1)$;

\end{description} 
We fix $n  = 150$ and $s  = 5$. For the strategies (Adapt~I) and (Adapt~II), we chose $m_1 = 15$, $K  = 5$, leading to a total number of $m = K m_1 = 75$ adaptive measurements. We also fix the number of measurements as $m = 75$ for the nonadaptive strategies (Unif I), (Unif II), and (Cheby). Moreover, we fix $\eta = 0$ (using the \texttt{spg\_bp} command of the \textsc{SPGL1} \textsc{Matlab}$^{\tiny\textregistered}$ package to solve basis pursuit \cite{spgl1:2007,BergFriedlander:2008}). We run the following random experiment 100 times. For each run, we randomly generate a 5-sparse Legendre polynomial by selecting 5 indices uniformly at random form $\{1,\ldots,n\}$ and by generating the respective coefficients independently at random according to the distribution $\mathcal{N}(0,1)$. Then, we apply the five strategies to recover the resulting function from pointwise samples.  

The results of this comparison are shown in Figure~\ref{fig:f_approx_comparison}, where we plot the $\ell^2$ error of the approximate solution.
\begin{figure}[t]
\centering
\includegraphics[width = 8cm]{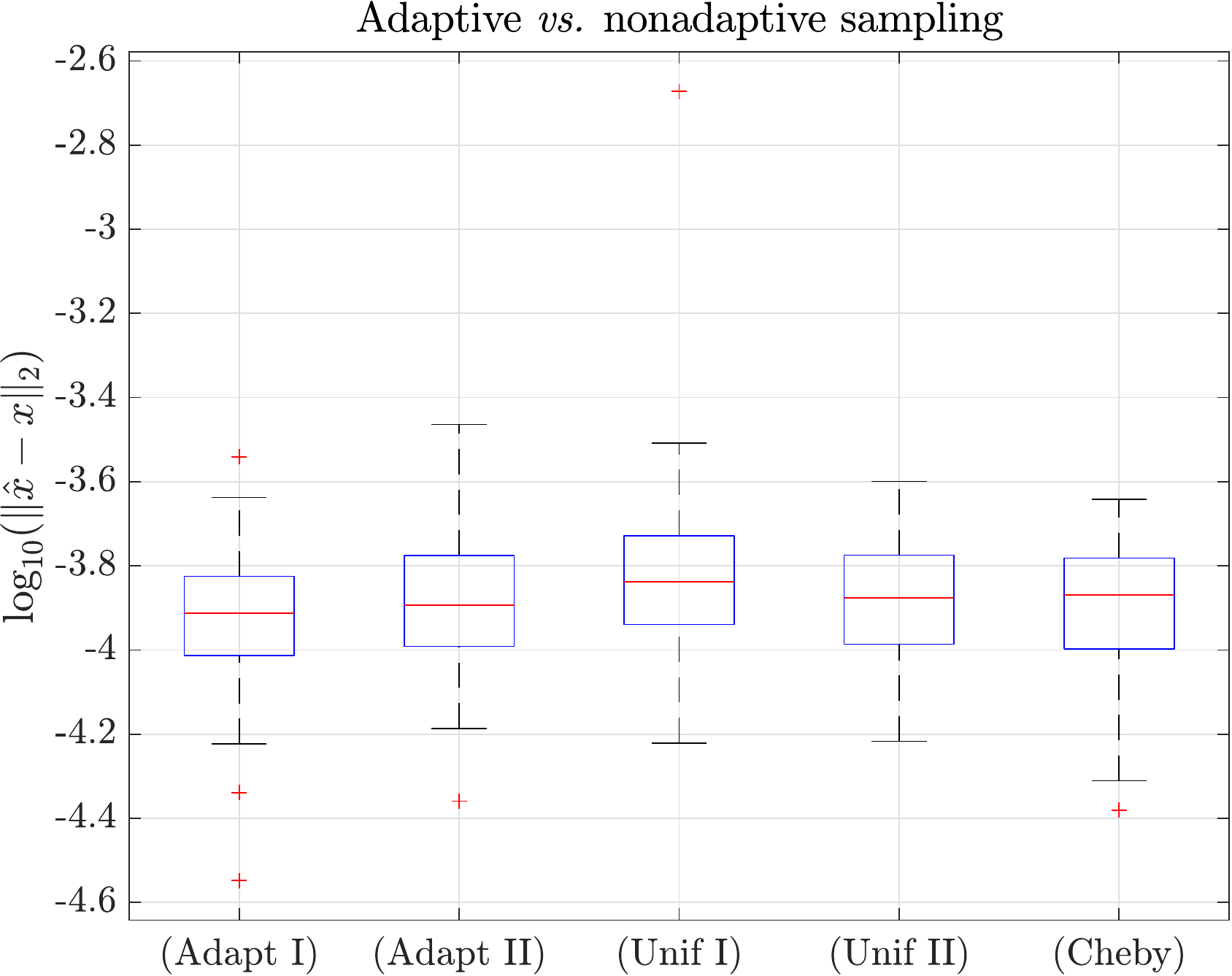}
\caption{\label{fig:f_approx_comparison}Approximation of $f$ using adaptive strategies (Adapt~I), (Adapt~II) and nonadaptive strategies (Unif I), (Unif II), and (Cheby).}
\end{figure}
Adaptive sampling slightly outperforms nonadaptive sampling, but the gain is not substantial.

We also visualize the evolution of the sampling measure $\pi^{(k)}$ as a function of the iteration $k$ in Figure~\ref{fig:evol_measure} for (Adapt~I) and (Adapt~II). 
\begin{figure}
\centering
\includegraphics[width = 7.5cm]{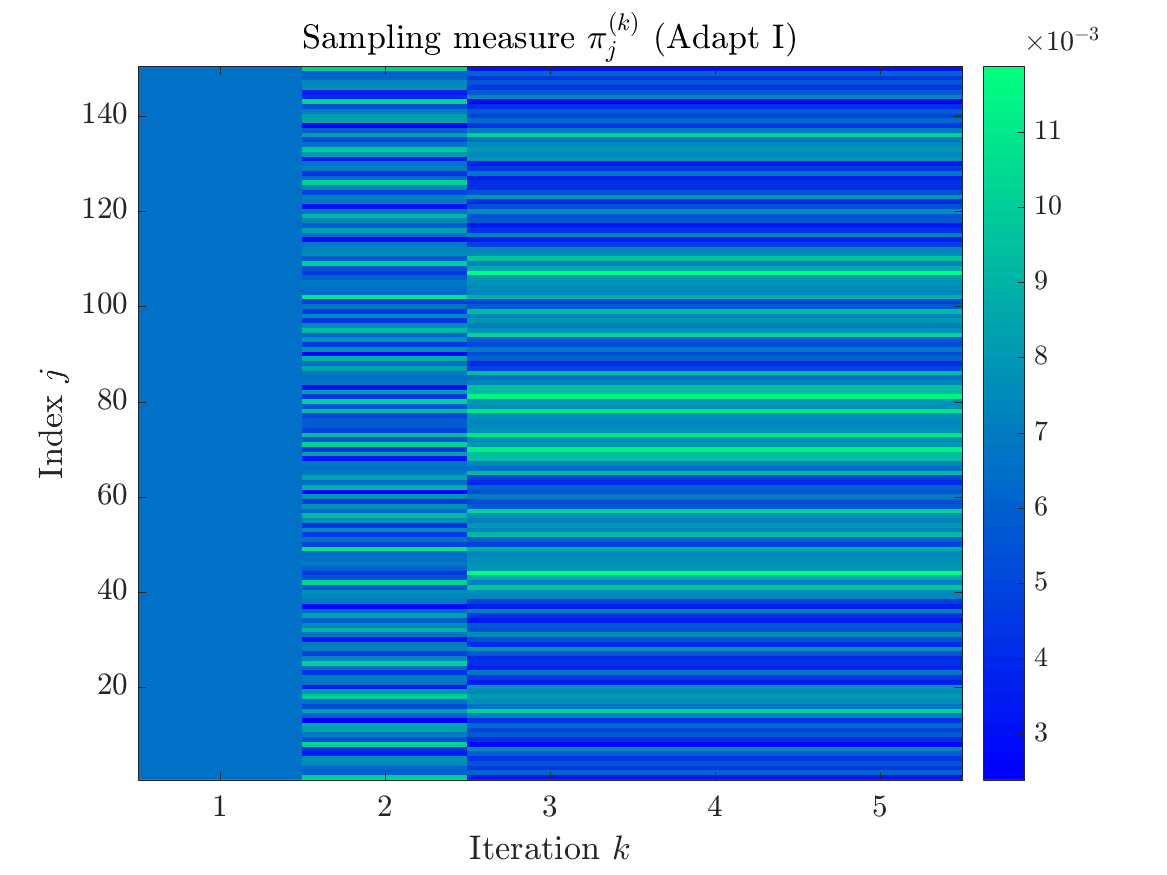}
\includegraphics[width = 7.5cm]{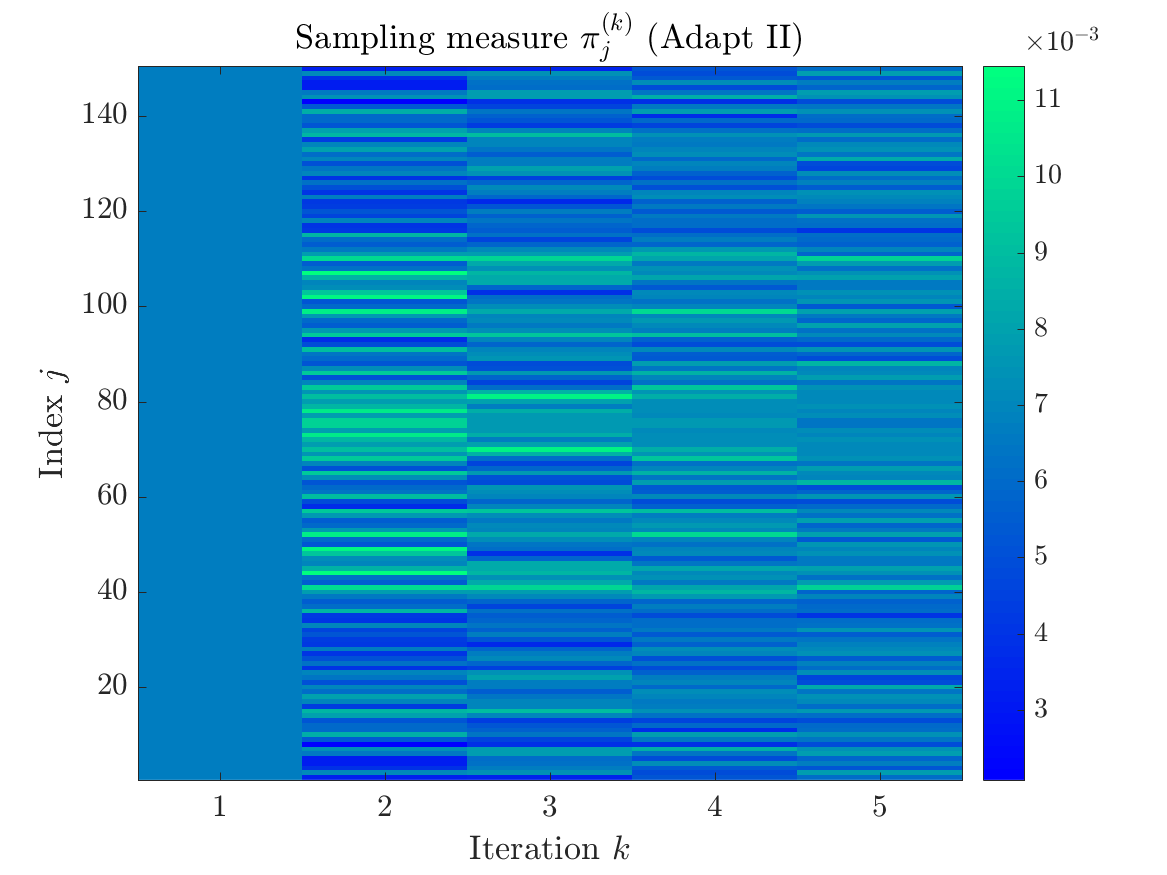}
\caption{\label{fig:evol_measure}Evolution of the sampling measure $\pi^{(k)}$ as a function of the iteration $k$ for adaptive strategies (Adapt~I) and (Adapt~II).}
\end{figure}
We notice that the strategy (Adapt~I) does not refine the measure from the third iteration. On the contrary, the measure is updated at each iteration in the case of (Adapt~II), which takes advantage of more support information.

\paragraph{Convergence of the adapted measure.}

An interesting feature of the proposed adaptive sampling schemes is its capacity to converge to the optimal sampling measure. In the next experiment, we consider a random $5$-sparse combination of the first $n = 100$ Legendre polynomials. We compare the strategies (Adapt~I) and (Adapt~II) with $m_1 = 10$,  $K = 5$, and $s = 5$. We start sampling from a uniform grid on $(-1,1)$ of $n^2 = 10000$ points ($-1$ and $1$ are excluded). Moreover, we fix $\eta = 0$.

In Figure~\ref{fig:convergence} we compare the measure $\pi^{(k)}$  for $k = 1,\ldots, K$ with the Chebyshev measure $\pi_{\text{Cheb}}$ on the uniform grid (i.e., such that $\pi_{\text{Cheb}}(t_j) \propto 1/\sqrt{1-t_j^2}$ for every point $t_j$ of the uniform grid). We can see that for both (Adapt~I) and (Adapt~II) the adapted measure tends to approximate the Chebyshev measure. The convergence is more evident for the approach (Adapt~II), where the approximate support gets larger and larger at each iteration. 
\begin{figure}
\centering
\begin{tabular}{c|c}
(Adapt~I) & (Adapt~II) \\
\hline
\includegraphics[width = 6cm]{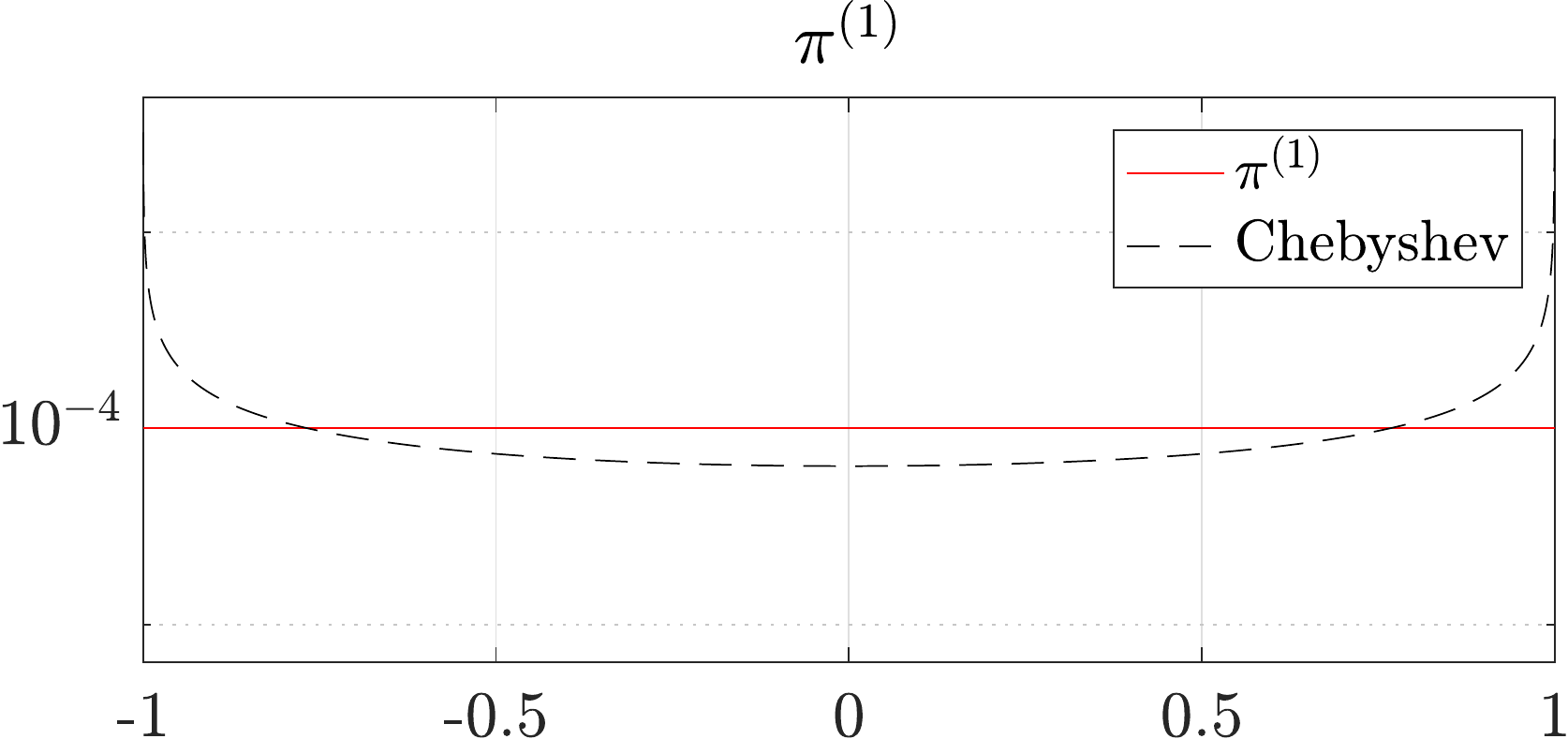} &
\includegraphics[width = 6cm]{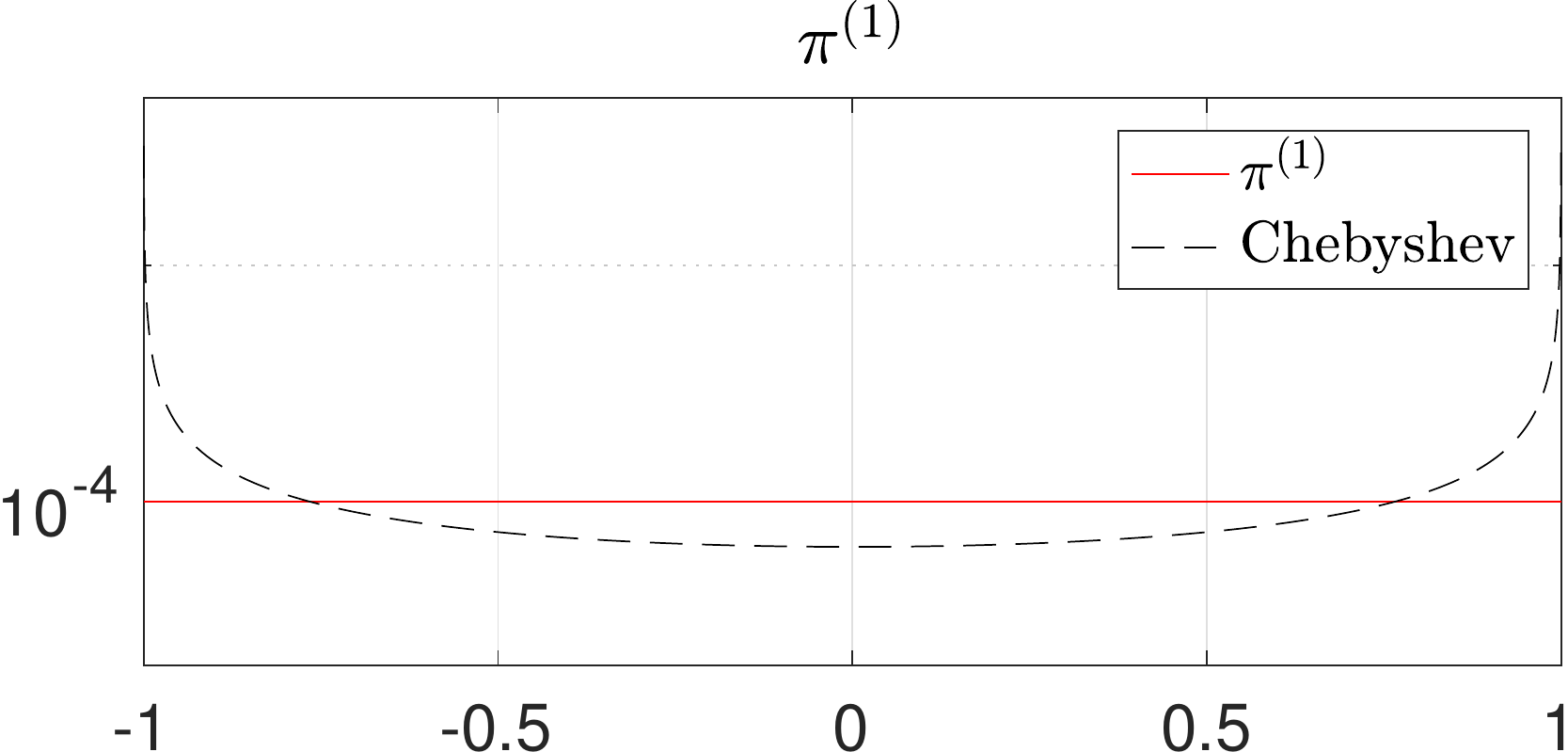} \\
\includegraphics[width = 6cm]{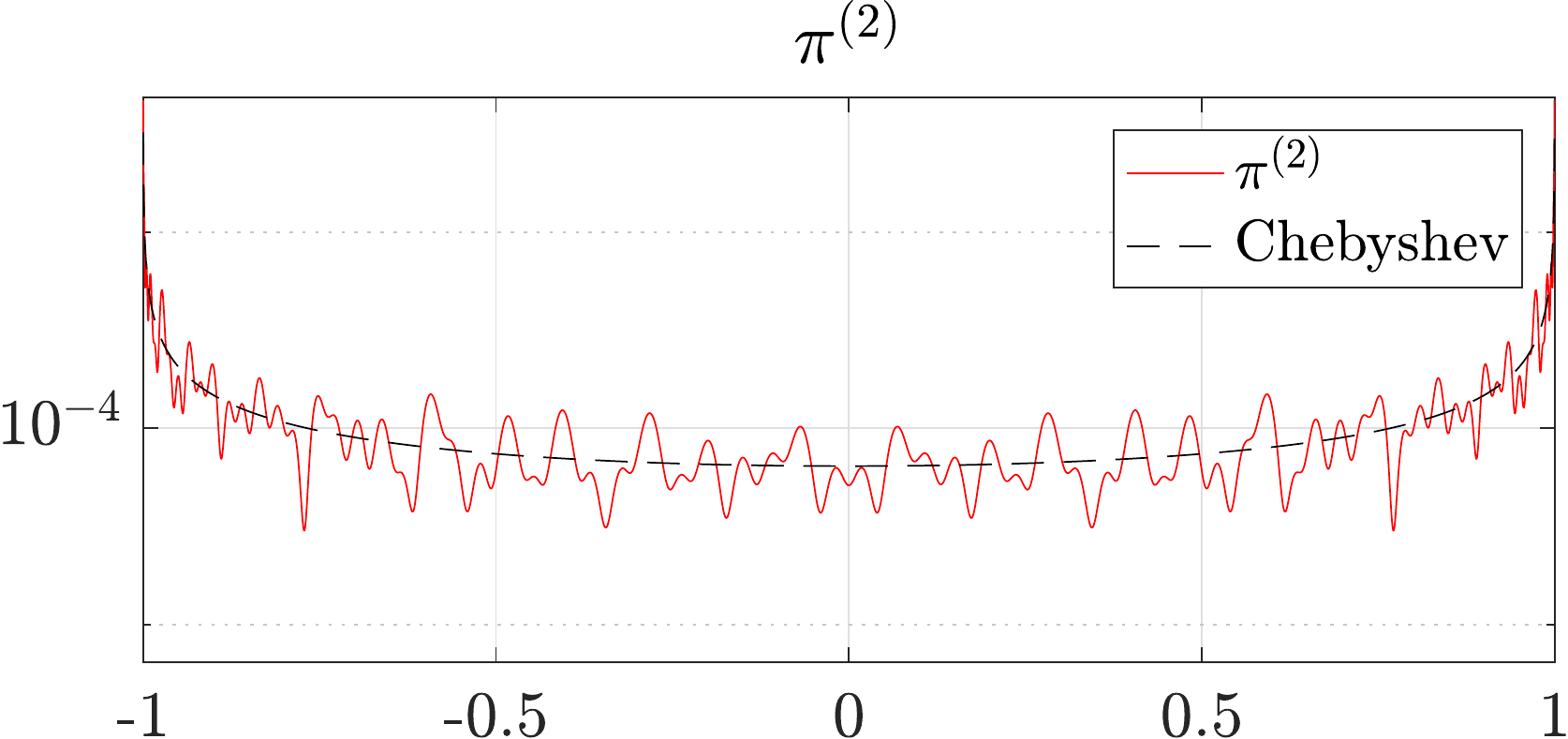} &
\includegraphics[width = 6cm]{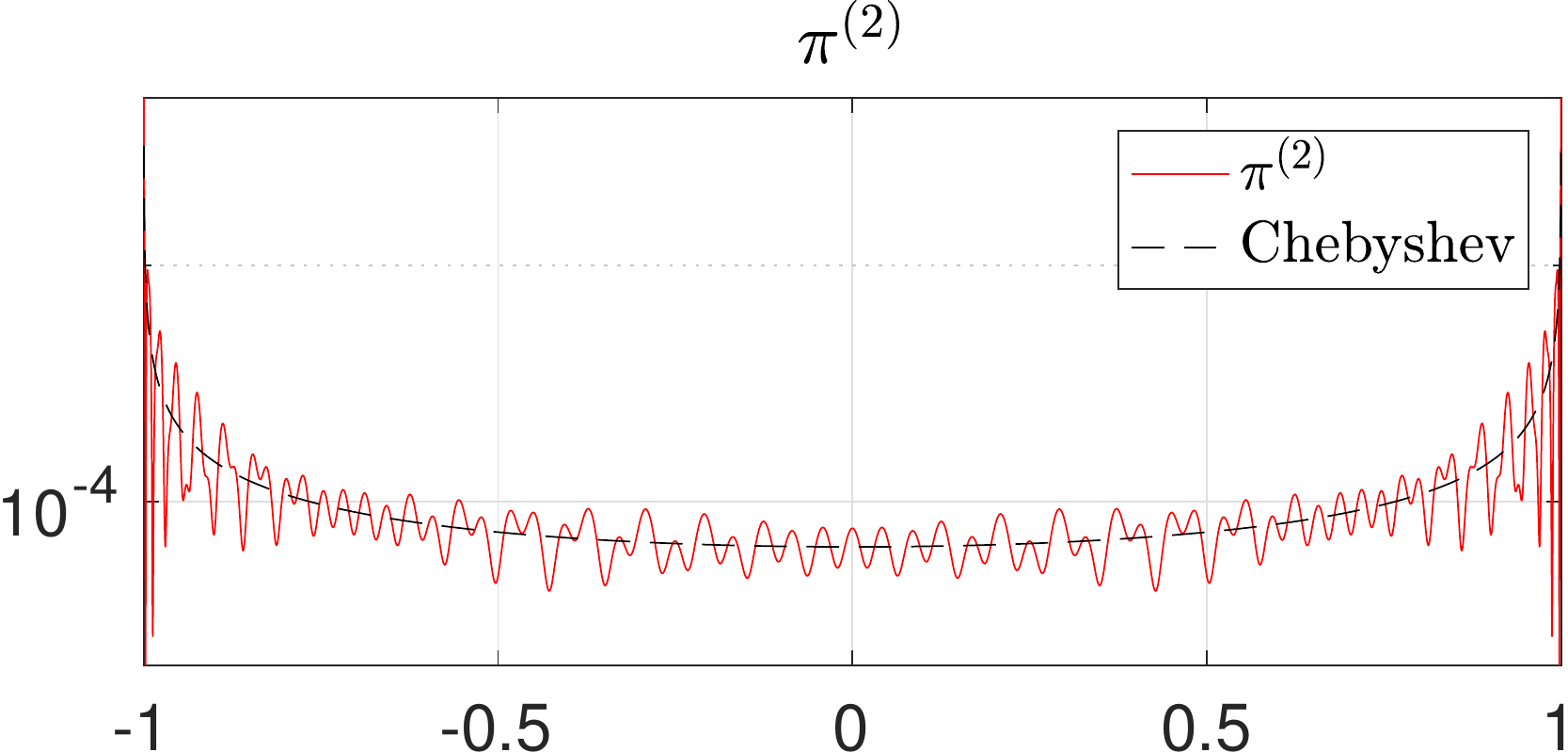} \\
\includegraphics[width = 6cm]{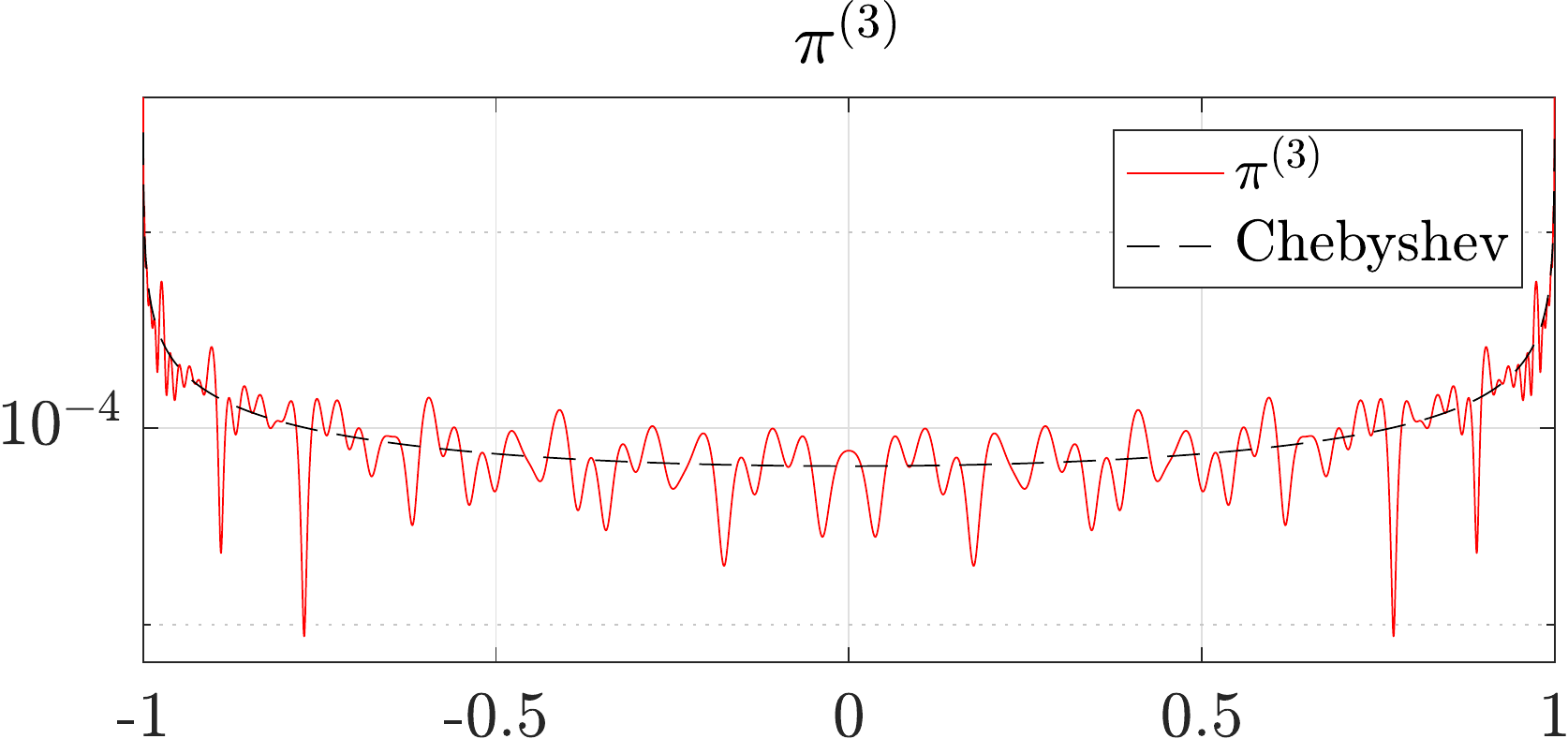} &
\includegraphics[width = 6cm]{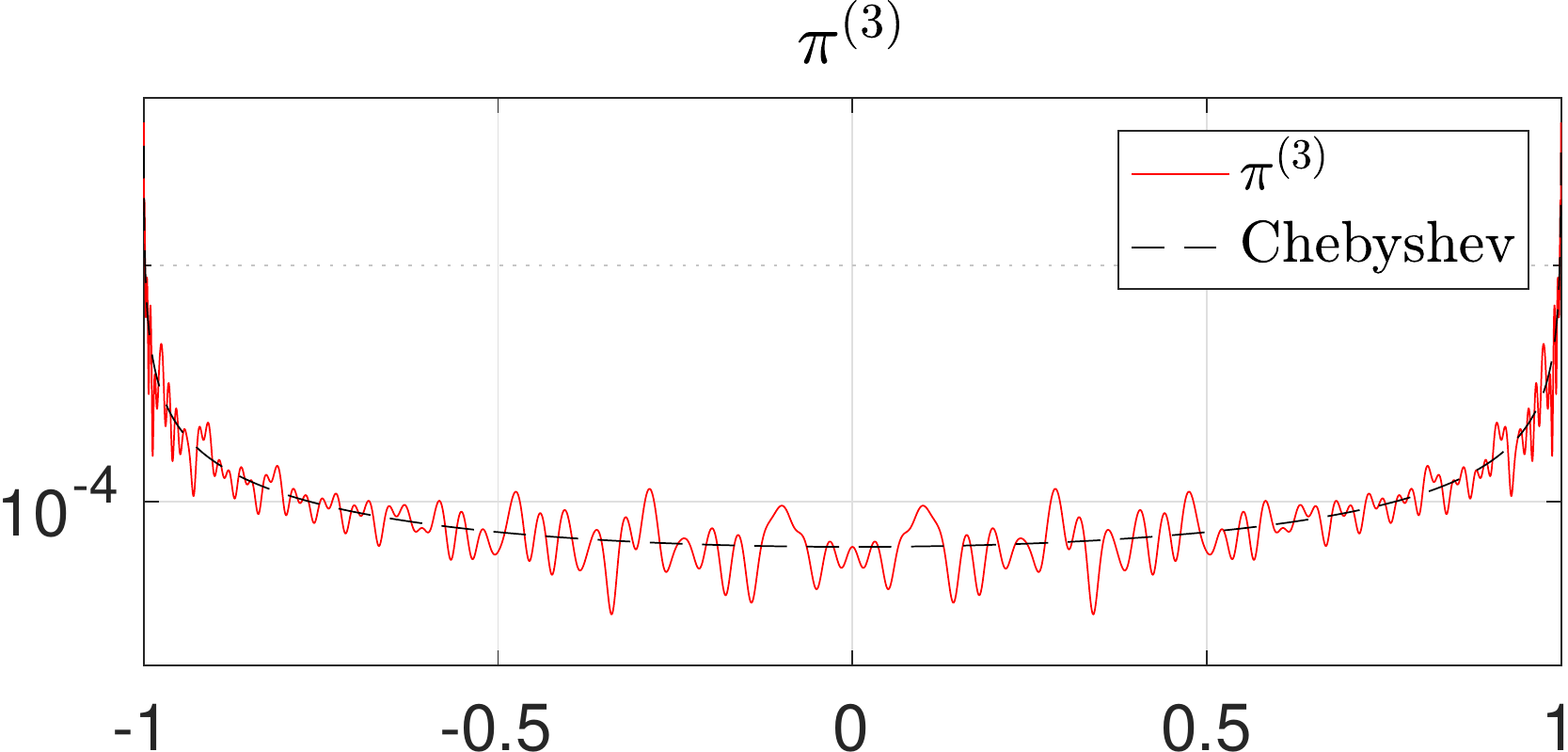} \\
\includegraphics[width = 6cm]{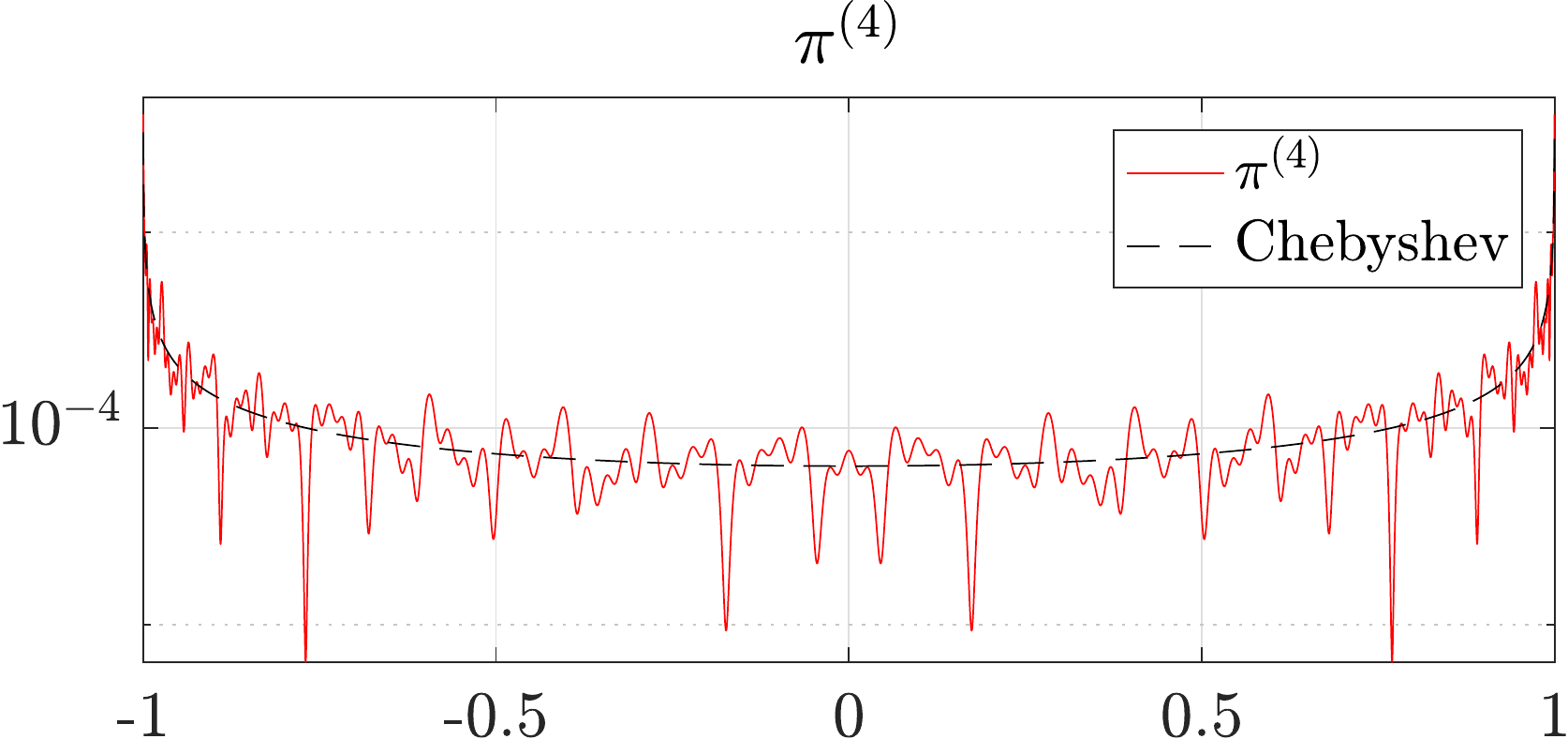} &
\includegraphics[width = 6cm]{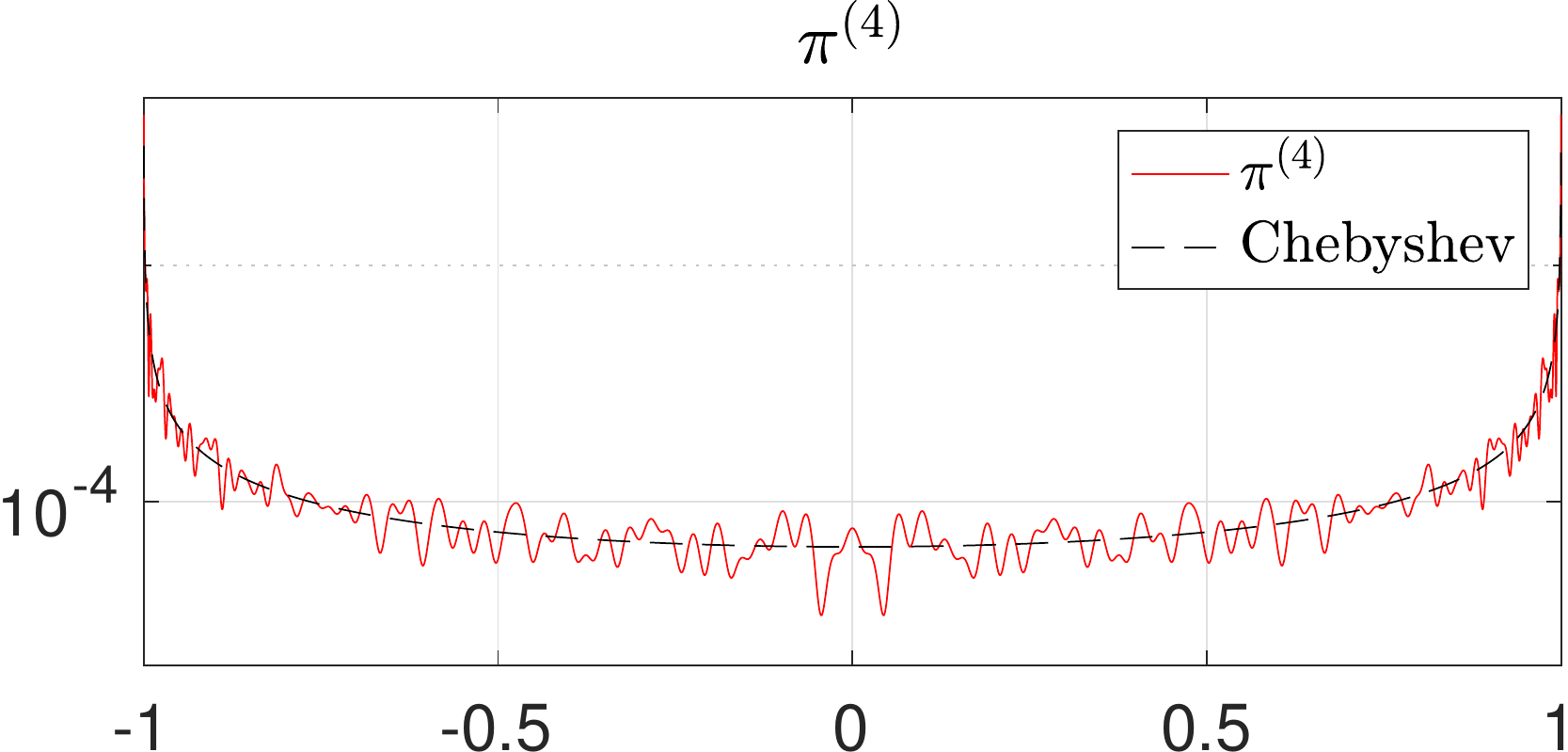} \\
\includegraphics[width = 6cm]{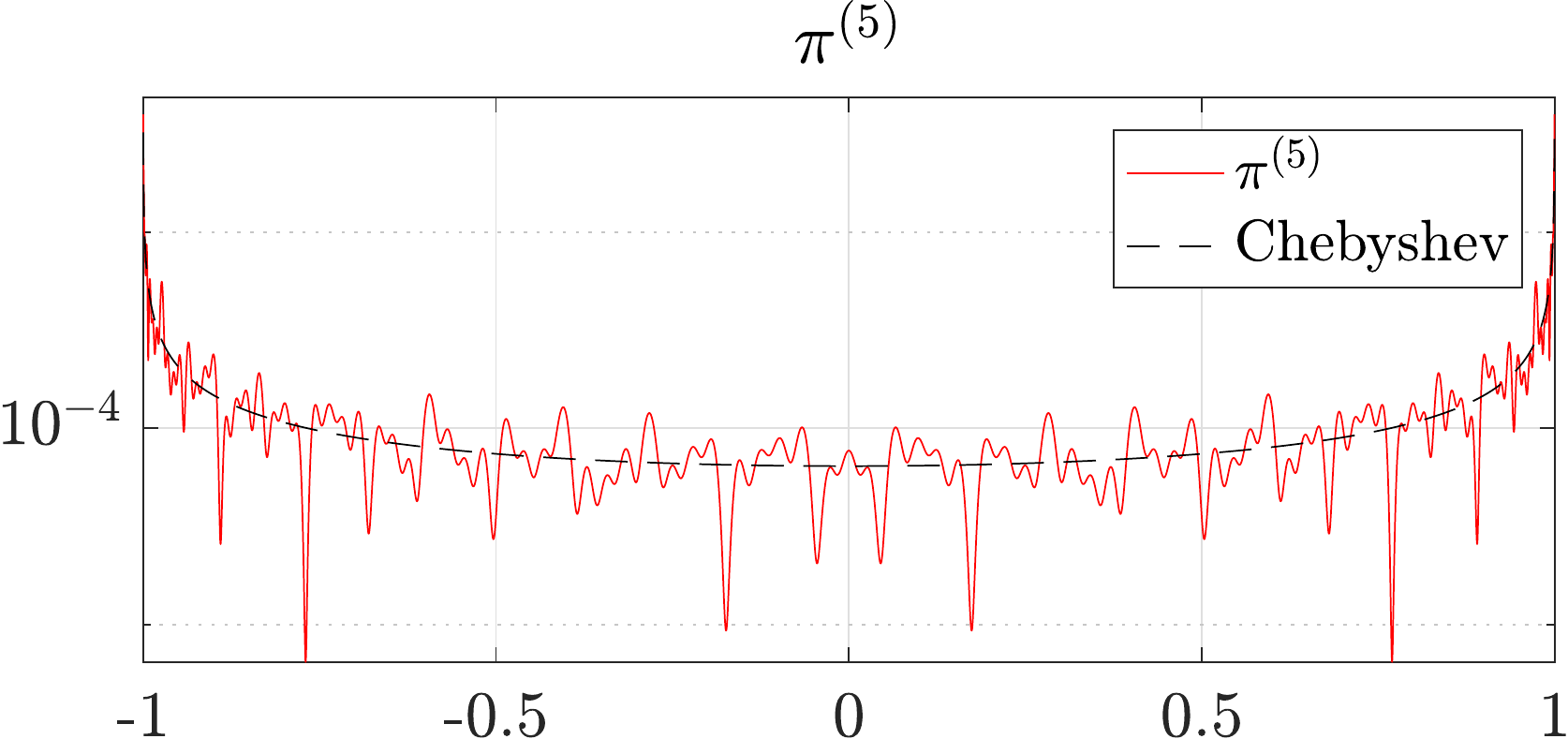} &
\includegraphics[width = 6cm]{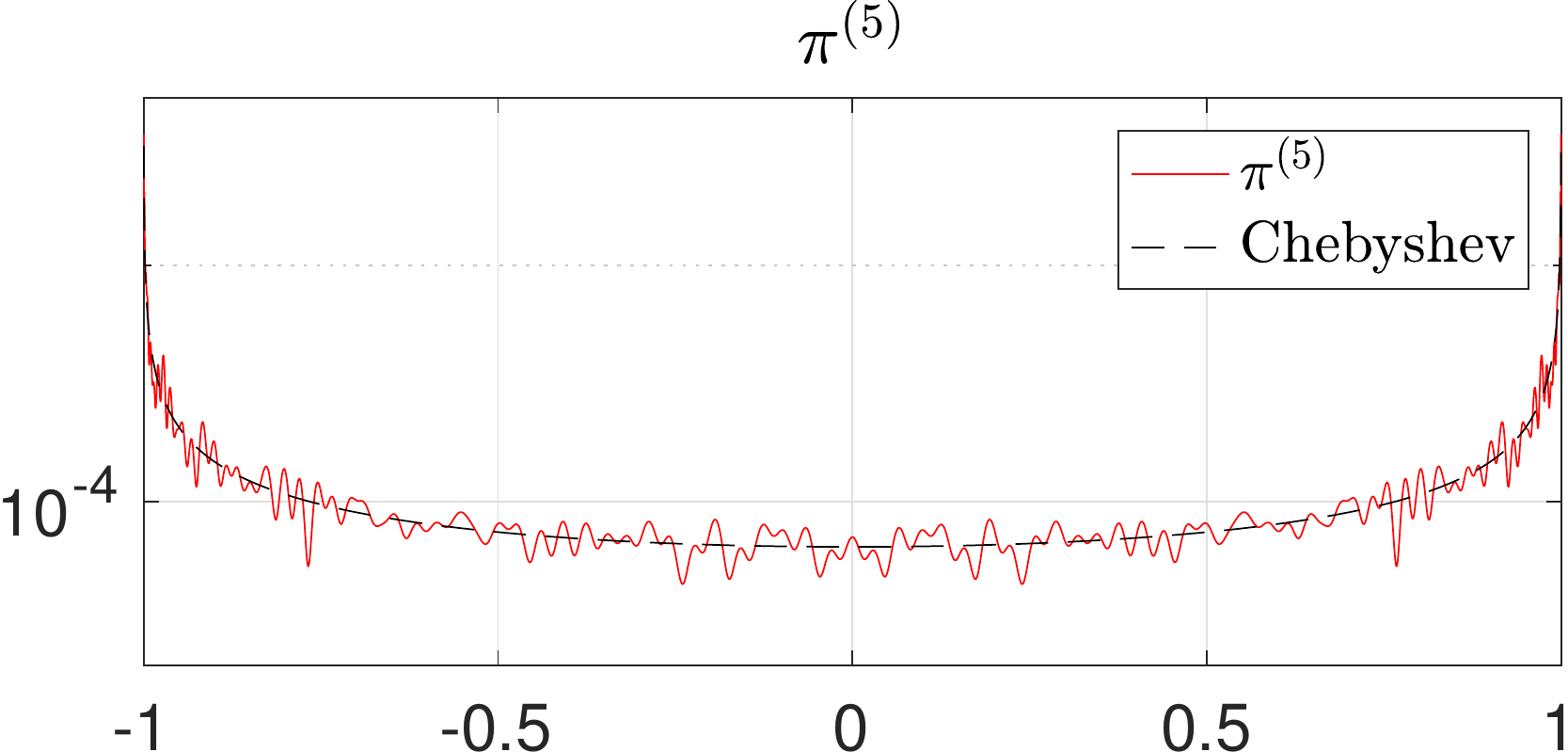} \\
\end{tabular}
\caption{\label{fig:convergence}Convergence of the adapted measure $\pi^{(k)}$ to the Chebyshev measure for (Adapt~I), on the left column, and for (Adapt~II), on the right column.}
\end{figure}

A more quantitative convergence analysis is provided in Table~\ref{tab:convergence}, where we show the absolute error $\|\pi^{(k)}-\pi_{\text{Cheb}}\|_2$ as a function of the iteration $k = 1,\ldots, K$ for both approaches. For (Adapt~I), the error essentially stabilizes from the fourth iteration, whereas in the case of (Adapt~II) it is monotonically decreasing. Comparing these data with Figure~\ref{fig:convergence}, we can see how (Adapt~II) is able to correct the  behavior of $\pi^{(k)}$ near the extrema $\pm1$ to a substantial extent at each iteration. On the contrary, in the case of (Adapt~I) the measure $\pi^{(k)}$ exhibits severe oscillations near $\pm 1$.
\begin{table}
\centering
\begin{tabular}{c|cc}
& (Adapt~I) & (Adapt~II)\\
\hline
$\|\pi^{(1)}- \pi_{\text{Cheb}}\|_2$  & 1.0102e-02 & 1.0102e-02\\
$\|\pi^{(2)}- \pi_{\text{Cheb}}\|_2$  & 4.4442e-03 & 6.3474e-03\\
$\|\pi^{(3)}- \pi_{\text{Cheb}}\|_2$  & 4.2014e-03 & 3.3659e-03\\
$\|\pi^{(4)}- \pi_{\text{Cheb}}\|_2$  & 3.8730e-03 & 2.7679e-03\\
$\|\pi^{(5)}- \pi_{\text{Cheb}}\|_2$  & 3.8730e-03 & 2.3916e-03\\
\end{tabular}
\caption{\label{tab:convergence}Distance between the adapted measure $\pi^{(k)}$ and the Chebyshev measure $\pi_{\text{Cheb}}$ with respect to the $\ell^2$ norm for (Adapt~I) and (Adapt~II).}
\end{table}

The convergence property of the proposed adaptive sampling strategies is particularly promising for high-dimensional approximation and, specifically, for the case of nontensorial domains, where the optimal sampling measure is not known \emph{a priori}. However, this is beyond the scope of this paper and is left for future investigation.


\section{Conclusions}

We have derived novel oracle-type inequalities for the number of measurements needed to guarantee stable and robust recovery for compressed sensing in the noisy setting. Our analysis relies on a random sign assumption for the signal to be recovered and encompasses the frameworks of block-structured and isolated random measurements, in particular subsampled from a finite-dimensional isometry. 

The proposed analysis reveals a direct link between the number of measurements and the support of the signal to be recovered. This allows one to derive optimal sampling strategies in order to minimize the number of measurements and that are tailored to particular sparsity structure in the signal support. 

We have derived optimal sampling strategies in the case of (i) subsampling one-dimensional Fourier-Haar transform via isolated measurements combined with the sparsity in levels structure and (ii)  subsampling the two-dimensional Fourier-Haar transform via block-structured measurements combined with anisotropic (horizontal or vertical) sparsity in levels. Finally, we have shown how to perform adaptive sampling for one-dimensional function approximation from pointwise data.

All these results are based on a random sign assumption; for future work, it would be worth devising alternative proof techniques in order to remove it.
The analysis of the extra assumptions \eqref{cond:killthetheta} and \eqref{cond:killthethetalog} involving $\Lambda$ and $\Gamma$ could be also refined: are they necessary conditions to obtain oracle-type results? As for applications, the polynomial approximation example reveals an adaptive sampling strategy that is worth deepening and extending to the multivariate case as well. Besides, one could adapt the oracle-type  study to the case of weighted $\ell^1$ minimization which is of particular interest for polynomial approximation.
\replace{Finally, one could mention that the oracle-type inequality considered in this paper ensures robust but not necessary stable recovery for the oracle least-squares estimator. Extending the analysis to stable oracle recovery is still an open issue.}{}

\section*{Acknowledgement}

This work was supported by the Natural Sciences and Engineering Research Council of Canada [grant number 611675 to B.A. and S.B.]; and the Pacific Institute for the Mathematical Sciences (PIMS) [``PIMS Distinguished Visitor" program to C.B. and ``PIMS Postdoctoral Training Centre in Stochastics'' program to S.B].

\noindent The authors would like to thank Pierre Weiss for raising the optimal sampling concern \replace{}{and the anonymous referees for their insightful and constructive comments}.


\appendix


\section{Proof of the main results}
\label{sec:proof_main}

\subsection{Proof of Theorem~\ref{thm:noiseless}}
\label{proof:noiseless}

In order to prove that $x$ is the unique solution of \eqref{pb:BP}, one can use Fermat's rule for \eqref{pb:BP}. Under injectivity of $A_S$, 
{this corresponds to the quest of} a vector $v \in \Rc (A^*)$, called dual certificate, such that
\[
\left\{
\begin{array}{ll}
v_S &= \sgn (x_S ) \\
\| v_{S^c} \|_\infty &<1.
\end{array}
\right.
\]
A natural candidate for such a vector $v$ is the dual certificate with minimal $\ell^2$-norm, i.e.
$$
v = A^* (A_S^\dagger)^* \sgn ( x_{S}) = A^* A_S (A_S^* A_S)^{-1}  \sgn (x_{S}),
$$
which trivially lies in the range of $A^*$ and satisfies $v_S=\sgn (x_S )$. The second condition $\| v_{S^c} \|_\infty <1$ remains to be satisfied, as required by assumption (ii) of the following proposition.
\begin{prop}[{\cite[Corollary 4.28]{foucart2013mathematical}}]
\label{cor:FR}
For $x\in \Cbb^n$ with support $S$, if
\begin{enumerate}
\item $A_S$ is injective,
\item $\left|\left\langle A_S^\dagger A e_\ell, \sgn (x_S) \right\rangle \right| < 1$ for all $\ell\in S^c$,
\end{enumerate} 
then the vector $x$ is the unique solution of \eqref{pb:BP} with $y=Ax$.
\end{prop}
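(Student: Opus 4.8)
The plan is to verify that the explicitly constructed minimal-norm vector $v = A^*(A_S^\dagger)^*\sgn(x_S)$ is a valid \emph{dual certificate} under hypotheses (i)--(ii), and then to run the standard convexity argument showing that every feasible competitor strictly increases the $\ell^1$ objective unless it equals $x$.

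First I would record the three properties of $v$ that drive the argument. Since $v$ lies in $\Rc(A^*)$, writing $v = A^*w$ with $w = (A_S^\dagger)^*\sgn(x_S)$ shows that $\langle v, h\rangle = \langle w, Ah\rangle = 0$ for every $h \in \ker A$. On the support, hypothesis (i) makes $A_S^*A_S$ invertible, and a direct computation using $P_S A^* = A_S^*$ together with $A_S^*(A_S^\dagger)^* = \Id$ gives $v_S = \sgn(x_S)$. Off the support, $v_\ell = \langle A_S^\dagger A e_\ell, \sgn(x_S)\rangle$ for $\ell \in S^c$, so hypothesis (ii) together with the finiteness of $S^c$ yields $\|v_{S^c}\|_\infty = \rho$ for some $\rho < 1$.

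Next I would take any feasible $z$ with $Az = Ax$, set $h = z - x \in \ker A$, and bound the objective from below. Using $x_{S^c} = 0$ and the subgradient inequality for the $\ell^1$ norm (valid in both the real/Rademacher and the complex/Steinhaus cases),
\begin{align*}
\|z\|_1 = \|x_S + h_S\|_1 + \|h_{S^c}\|_1 \geq \|x\|_1 + \Re\langle \sgn(x_S), h_S\rangle + \|h_{S^c}\|_1.
\end{align*}
Substituting $\sgn(x_S) = v_S$, invoking orthogonality in the form $\langle v_S, h_S\rangle = -\langle v_{S^c}, h_{S^c}\rangle$, and bounding $|\langle v_{S^c}, h_{S^c}\rangle| \leq \rho\,\|h_{S^c}\|_1$ by Hölder, I arrive at
\begin{align*}
\|z\|_1 \geq \|x\|_1 + (1-\rho)\|h_{S^c}\|_1 \geq \|x\|_1.
\end{align*}

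Finally I would upgrade this to a strict inequality via a case split. If $h_{S^c} \neq 0$ then $(1-\rho)\|h_{S^c}\|_1 > 0$, so $\|z\|_1 > \|x\|_1$; if $h_{S^c} = 0$ then $h$ is supported on $S$ and $A_S h_S = Ah = 0$ forces $h_S = 0$ by the injectivity in (i), whence $z = x$. Thus $x$ is the unique minimizer. The only genuinely delicate points are the bookkeeping ones that secure strictness: obtaining the \emph{strict} bound $\rho < 1$ (where the finiteness of $S^c$ and the strict inequality in (ii) are essential) and isolating the degenerate direction $h_{S^c} = 0$ through injectivity. Everything else is routine.
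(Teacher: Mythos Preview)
Your proof is correct and follows exactly the dual-certificate route the paper itself sketches: the paper does not give a self-contained proof of this proposition but cites it from \cite[Corollary~4.28]{foucart2013mathematical}, after explicitly writing down the same minimal-$\ell^2$-norm certificate $v = A^*(A_S^\dagger)^*\sgn(x_S)$ and noting that (i) gives $v_S=\sgn(x_S)$ while (ii) secures $\|v_{S^c}\|_\infty<1$. Your convexity/case-split argument is precisely the standard one underlying that citation, so there is nothing to add.
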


By Lemma \ref{lem:localIsometry_ext}, one gets the injectivity of $A_S$ with high probability.
The rest of the proof is then dedicated to ensure (ii) of Proposition \ref{cor:FR}. By a union bound, one can control the probability that \eqref{pb:BP} fails to recover $x$ as follows:
\begin{align*}
\Pbb &( \text{failure of BP}  ) \\
&\replacemath{=}{\leq}\Pbb \left( \max_{\ell \in S^c} \left|\left\langle A_S^\dagger A e_\ell, \sgn (x_S) \right\rangle \right| \geq 1\right) \\
&\leq \Pbb\left(  \max_{\ell \in S^c} \left|\left\langle A_S^\dagger A e_\ell, \sgn (x_S) \right\rangle \right| \geq 1 \left| \max_{\ell \in S^c} \|A_S^\dagger Ae_\ell \|_2 \leq \alpha \right. \right) + \Pbb \left(  \max_{\ell \in S^c} \|A_S^\dagger Ae_\ell \|_2 \geq \alpha \right) 
\\
&\leq \sum_{\ell\in S^c} \Pbb\left(  \left|\left\langle A_S^\dagger A e_\ell, \sgn (x_S) \right\rangle \right| \geq \|A_S^\dagger Ae_\ell \|_2\alpha^{-1}   \left| \max_{\ell \in S^c} \|A_S^\dagger Ae_\ell \|_2 \leq \alpha \right. \right) + \Pbb \left(  \max_{\ell \in S^c} \|A_S^\dagger Ae_\ell \|_2 \geq \alpha \right) 
\\
& \leq 2 n \exp \left( -\frac{1}{2\alpha^2} \right) + \Pbb \left(  \max_{\ell \in S^c} \|A_S^\dagger Ae_\ell \|_2 \geq \alpha \right), 
\end{align*}
where the last inequality is obtained using a Hoeffding-type bound (see  \cite[Corollary 7.21 {and Corollary 8.10}]{foucart2013mathematical} {for  Rademacher and Steinhaus sequences, respectively)}.
To control the second term, one can remark that for all $\ell\in S^c$
$$
\|A_S^\dagger Ae_\ell \|_2 = \| (A_S^* A_S)^{-1} A_S^* A e_\ell \|_2 \leq \| (A_S^* A_S)^{-1} \|_{2\to 2} \|A_S^* A e_\ell \|_2.
$$
Using Lemma~\ref{lem:localIsometry_ext}, $\| (A_S^* A_S)^{-1} \|_{2\to 2} $ is bounded by $1/(1-\delta)$ for some $\delta>0$ with high probability.
Using Lemma~\ref{lem:C.3Ext}, $\max_{\ell\in S^c} \|A_S^* A e_\ell \|_2 \leq t$ with high probability. Then, we set $$\alpha:=t/(1-\delta).$$
The probability that \eqref{pb:BP} fails is then bounded by
\begin{align*}
\Pbb &\left( \max_{\ell \in S^c} \left|\left\langle A_S^\dagger A e_\ell, \sgn (x_S) \right\rangle \right| \geq 1\right) \\
&\leq \underbrace{2n \exp \left( -\frac{1}{2\alpha^2} \right) }_{P_1}
+\underbrace{\Pbb \left( \| A_S^*A_S - {\Id} \|_{2\to 2}  \geq \delta \right)}_{P_2} + \underbrace{\Pbb\left(  \max_{\ell\in S^c} \|A_S^* A e_\ell \|_2 \geq t  \right) }_{P_3}.
\end{align*}

Note that $P_2 \leq \varepsilon/3$  by Lemma \ref{lem:localIsometry_ext} if
\begin{align}
\label{eq:m1}
m \geq \frac{1+2\delta/3}{\delta^2/2} \cdot \Lambda(S,F) \cdot \ln \left( \frac{6s}{\varepsilon} \right).
\end{align}

As for the last term $P_3$, 
using Lemma \ref{lem:C.3Ext}, choosing $t = \sqrt{\Theta/m} + t'$ for some $t'>0$ to be fixed later, one has
 $P_3\leq \varepsilon/3$ if 
\begin{align}
\label{eq:m2}
m \geq  \frac{2}{(t')^2} \cdot  \Theta (S,F) \cdot \left( 3+2 t'/3 \right)  \ln \left( \frac{3n}{\varepsilon}\right),
\end{align}
{where we have used} that $m \geq 4 \Theta(S,F)$.
Finally \replace{vy}{by} setting $t'=\delta$, one has
\begin{align*}
P_1 \leq \varepsilon/3 
& \Longleftrightarrow 2n \exp \left( -\frac{1}{2\alpha^2} \right) \leq \varepsilon /3 
 \Longleftrightarrow 2n \exp \left( -\frac{(1-\delta)^2}{2t^2} \right) \leq \varepsilon /3 \\
& \Longleftrightarrow 2n \exp \left( -\frac{(1-\delta)^2}{2\left( \sqrt{\Theta/m} +\delta\right)^2} \right) \leq \varepsilon /3
  \Longleftrightarrow\frac{(1-\delta)^2}{2\left( \sqrt{\Theta/m} +\delta\right)^2} \geq \ln\left(\frac{6n}{\varepsilon} \right).
\end{align*}
Assuming that $m\geq c\cdot \Theta(S, F) \ln (6n/\varepsilon)$ and choosing $\delta = \sqrt{\frac{1}{\replacemath{c'}{c} \ln (6n/\varepsilon)}}$ with \replace{$\min (c ,c') \geq 8$}{$c \geq 13$} leads to the previous inequality (notice that the constant \replace{8}{13} could be optimized in principle). Consequently, if 
$$
\delta = \replacemath{\frac{1}{2\sqrt{2 \ln \left( \frac{6n}{\varepsilon} \right)} }}{\frac{1}{\sqrt{13 \ln \left( \frac{6n}{\varepsilon} \right)} }}
\qquad \text{and}\qquad
m \geq \replacemath{8}{13} \cdot \Theta {(S,F)} \cdot \ln \left( \frac{6n}{\varepsilon} \right),
$$
then $P_1 \leq \varepsilon/3$.

Plugging this value of $\delta$ into \eqref{eq:m1} and \eqref{eq:m2},  one gets the following conditions on the number of measurements:
\[
\left\{
\begin{array}{ll}
m & \geq \replacemath{19}{30}   \cdot \Lambda  (S,F)   \cdot \ln \left( \frac{6n}{\varepsilon} \right) \ln \left( \frac{6s}{\varepsilon} \right),   \\
m &\geq  \replacemath{51}{82} \cdot \Theta  (S,F)  \cdot \ln^2 \left( \frac{6n}{\varepsilon} \right).
\end{array}
\right.
\]
Finally, the probability of recovery failure from \eqref{pb:BP} is bounded by $\varepsilon$, if the total number of blocks of measurements satisfy
\begin{align*}
m & \geq  \replacemath{51}{82} \cdot \Theta(S,F)  \cdot \ln^2 \left( \frac{6n}{\varepsilon} \right).
\end{align*}
This concludes the proof of Theorem \ref{thm:noiseless}.

\subsection{Proof of Theorem \ref{thm:noisy}}
\label{proof:noisy}

In order to ensure stable and robust recovery via \eqref{pb:qBP}, we take advantage of a result analogous to  Proposition~\ref{cor:FR}, which hinges again on the concept of dual certificate. (Notice that in \cite[Theorem 4.33]{foucart2013mathematical} $S$ is assumed to be the set of $s$ largest absolute entries of $x$, but an inspection of the proof reveals that $S$ can be an arbitrary subset of $\{1,\ldots,n\}$).

\begin{prop}[{\cite[Theorem 4.33]{foucart2013mathematical}}]
\label{prop:FRstable}
Let $x \in \Cbb^n$, $S \subseteq\{1,\ldots,n\}$, and  $y = Ax + \epsilon$ with $\|\epsilon\|_2 \leq \eta$. For $\delta,t,\gamma,\theta,\tau \geq 0$, with $\delta < 1$, assume that
\begin{equation}
\label{eq:FRstable_cond1}
\|A^*_S A_S - \Id\|_{2\to2} \leq \delta,
\quad 
\max_{\replacemath{\ell \in S}{\ell \in S^c}}\|A^*_S A e_\ell\|_2 \leq t,
\end{equation}
and that there exists a vector $u = A^*h \in \Cbb^n$ with $h$ such that
\begin{equation*}
\|u_S - \sgn(x_S)\|_2 \leq \gamma, 
\quad \|u_S\|_\infty \leq \theta, 
\quad \|h\|_2 \leq \tau \sqrt{s}. 
\end{equation*}
If $\rho := \theta + t \gamma /(1-\delta) < 1$, then a minimizer $x^\sharp$ of \eqref{pb:qBP} satistfies
$$
\|x - x^\sharp\|_2 \leq C_1 \|x-x_S\|_1 + (C_2 + C_3\sqrt{s})\eta
$$
for some constants $C_1, C_2, C_3 > 0$ depending only on $\delta, t, \gamma, \theta, \tau$. 
\end{prop}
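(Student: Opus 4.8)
The result is quoted from \cite{foucart2013mathematical}; for completeness I sketch the argument, which follows the classical inexact–dual–certificate route. The plan is to set $v := x^\sharp - x$ and close a bootstrap inequality coupling $\|v_S\|_2$ with $\|v_{S^c}\|_1$, the coupling being solvable precisely because $\rho < 1$.

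First I would extract the two elementary facts. From feasibility of $x^\sharp$ in \eqref{pb:qBP} and $\|\epsilon\|_2 \le \eta$, writing $Av = (Ax^\sharp - y) + \epsilon$ gives $\|Av\|_2 \le 2\eta$. From $\ell^1$-optimality, $\|x^\sharp\|_1 \le \|x\|_1$; splitting the norm over $S$ and $S^c$ and using $\|x_S+v_S\|_1 \ge \|x_S\|_1 + \Re\langle\sgn(x_S),v_S\rangle$ yields
\begin{equation}
\label{eq:sketch_a}
\|v_{S^c}\|_1 \le 2\|x_{S^c}\|_1 - \Re\langle\sgn(x_S),v_S\rangle .
\end{equation}
This uses only $x_{S^c}=x-x_S$, so $S$ need not be the set of $s$ largest entries of $x$ — precisely the observation flagged just before the statement.

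Next I would bound $\|v_S\|_2$. Expanding $\|A_Sv_S\|_2^2 = \langle A_S^*Av,v_S\rangle - \langle A_S^*A_{S^c}v_{S^c},v_S\rangle$ and using \eqref{eq:FRstable_cond1} (so that $\|A_S^*A_{S^c}\|_{1\to2}\le t$) together with $\|Av\|_2\le 2\eta$, while $\|A_S^*A_S-\Id\|_{2\to2}\le\delta$ forces $\sigma_{\min}(A_S)\ge\sqrt{1-\delta}$, I would obtain
\begin{equation}
\label{eq:sketch_b}
\|v_S\|_2 \le \frac{2\eta}{\sqrt{1-\delta}} + \frac{t}{1-\delta}\,\|v_{S^c}\|_1 .
\end{equation}
Then I would insert the certificate: writing $\sgn(x_S) = u_S - (u_S-\sgn(x_S))$ and $u = A^*h$,
\[
-\Re\langle\sgn(x_S),v_S\rangle = -\Re\langle h,Av\rangle + \Re\langle u_{S^c},v_{S^c}\rangle + \Re\langle u_S-\sgn(x_S),v_S\rangle ,
\]
and bound the three terms by $\|h\|_2\|Av\|_2\le 2\tau\sqrt{s}\,\eta$, by $\theta\|v_{S^c}\|_1$ (with $\theta$ controlling $\|u_{S^c}\|_\infty$), and by $\gamma\|v_S\|_2$ respectively. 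Substituting this into \eqref{eq:sketch_a} and then \eqref{eq:sketch_b} for $\|v_S\|_2$, the coefficient of $\|v_{S^c}\|_1$ collapses to $1-\theta-t\gamma/(1-\delta)=1-\rho$.

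The heart of the proof is this final bookkeeping: since $\rho<1$ by hypothesis, that coefficient is positive, so $\|v_{S^c}\|_1$ is bounded by a constant multiple of $\|x_{S^c}\|_1 + \sqrt{s}\,\eta$; feeding this back into \eqref{eq:sketch_b} bounds $\|v_S\|_2$, and $\|x-x^\sharp\|_2 = \|v\|_2 \le \|v_S\|_2 + \|v_{S^c}\|_2 \le \|v_S\|_2 + \|v_{S^c}\|_1$ delivers the stated estimate with $C_1,C_2,C_3$ explicit in $\delta,t,\gamma,\theta,\tau$. The points needing care are exactly this self-referential closure (the only place $\rho<1$ is used), keeping real parts throughout so the complex case $x\in\Cbb^n$ is covered, and using the off-support bound $\|u_{S^c}\|_\infty\le\theta$ rather than an on-support quantity.
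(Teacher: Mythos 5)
Your sketch is correct and follows essentially the same route as the proof of the cited result \cite[Theorem 4.33]{foucart2013mathematical}, which this paper quotes rather than reproves: feasibility and $\ell^1$-optimality give the two basic inequalities, the inexact dual certificate $u = A^*h$ converts $-\Re\langle \sgn(x_S),v_S\rangle$ into terms controlled by $\gamma,\theta,\tau$, and the condition $\rho<1$ closes the coupled bound on $\|v_{S^c}\|_1$ and $\|v_S\|_2$, with the set $S$ indeed allowed to be arbitrary as the paper remarks. You are also right that the operative certificate hypothesis is the off-support bound $\|u_{S^c}\|_\infty\le\theta$ (the ``$\|u_S\|_\infty\le\theta$'' in the statement is a typo), which is how the proposition is actually invoked in the proof of Theorem~\ref{thm:noisy}.
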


Using arguments analogous to the proof of Theorem \ref{thm:noiseless}, one  
can show that \eqref{eq:FRstable_cond1} holds with high probability.
Therefore, using Proposition~\ref{prop:FRstable}, we choose the dual certificate to be $u = A^* h$ with $h=  A_S (A_S^* A_S)^{-1}\sgn (x_S)$. 

Since $u_S = \sgn(x_S)$, for any $\gamma \geq 0$, the condition 
$ \|u_S - \sgn(x_S) \|_2\leq \gamma$
is trivially satisfied  {(in particular, we can set $\gamma = 0$)}. 

As for ensuring that $\|u_{S^c}\|_\infty \leq \theta$ with high probability, 
we have to slightly modify a part of the proof in the noiseless setting. First, let us observe that
\begin{align*}
\|u_{S^c}\|_\infty &= \max_{\ell \in S^c} \left|\left\langle A^*  A_S (A_S^* A_S)^{-1} \sgn (x_S) , e_\ell \right\rangle \right| 
= \max_{\ell \in S^c} \left|\left\langle  \sgn (x_S) , A_{S}^\dagger A e_\ell \right\rangle \right|. 
\end{align*}
For $0<\theta=1/4<1$, using again a union bound and a Hoeffding-type inequality (see \cite[Corollary 7.21 and Corollary 8.10]{foucart2013mathematical} for  Rademacher and Steinhaus sequences, respectively), one has

\begin{align*}
\Pbb &\left( \max_{\ell \in S^c} \left|\left\langle A_S^\dagger A e_\ell, \sgn (x_S) \right\rangle \right| \geq  \theta\right) \\
&\leq \Pbb\left(  \max_{\ell \in S^c} \left|\left\langle A_S^\dagger A e_\ell, \sgn (x_S) \right\rangle \right| \geq \theta \left| \max_{\ell \in S^c} \|A_S^\dagger Ae_\ell \|_2 \leq \alpha \right. \right) + \Pbb \left(  \max_{\ell \in S^c} \|A_S^\dagger Ae_\ell \|_2 \geq \alpha \right) 
\\
&\leq \sum_{\ell\in S^c} \Pbb\left(  \left|\left\langle A_S^\dagger A e_\ell, \sgn (x_S) \right\rangle \right| \geq \theta \|A_S^\dagger Ae_\ell \|_2 \alpha^{-1}   \left| \max_{\ell \in S^c} \|A_S^\dagger Ae_\ell \|_2 \leq \alpha \right. \right) + \Pbb \left(  \max_{\ell \in S^c} \|A_S^\dagger Ae_\ell \|_2 \geq \alpha \right) 
\\
& \leq  2n \exp \left( -\frac{\theta^2}{2\alpha^2} \right) + \Pbb \left(  \max_{\ell \in S^c} \|A_S^\dagger Ae_\ell \|_2 \geq \alpha \right) \\
& \leq 
\underbrace{2n \exp \left( -\frac{\theta^2}{2\alpha^2} \right)}_{P_1} 
+\underbrace{\Pbb \left( \| A_S^*A_S - {\Id} \|_{2\to 2}  \geq \delta \right)}_{P_2} + \underbrace{\Pbb\left(  \max_{\ell\in S^c} \|A_S^* A e_\ell \|_2 \geq t  \right) }_{P_3}.
\end{align*}
Now, analogously to the proof of Theorem~\ref{thm:noiseless}, we fix
$$
\alpha = \frac{t}{1-\delta}, \quad t = \sqrt{\frac{\Theta}{m}}+t', \quad t' = \delta.
$$ 
With these choices of parameters, conditions \eqref{eq:m1} and \eqref{eq:m2} suffice to guarantee that $P_2 \leq \varepsilon/3$ and $P_3\leq \varepsilon/3$, respectively. {Moreover}, since $\theta=1/4$,
if
$$
\delta = \replacemath{\frac{1}{2\sqrt{32 \ln \left( \frac{6n}{\varepsilon} \right)}}}{\frac{1}{\sqrt{146 \ln \left( \frac{6n}{\varepsilon} \right)}}}
\quad \text{and} \quad m \geq {\replacemath{128}{146}} \cdot \Theta(S,F) \cdot\ln\left(\frac{6n}{\varepsilon}\right), 
$$
then ${P_1} \leq \varepsilon/3$.
Plugging this value of $\delta$ in \eqref{eq:m1} and \eqref{eq:m2},  one gets the following conditions on the number of measurements:
\[
\left\{
\begin{array}{ll}
m & \geq \replacemath{268}{305} \cdot  \Lambda  (S,F) \cdot \ln \left( \frac{6n}{\varepsilon} \right) \ln \left( \frac{6s}{\varepsilon} \right)   \\
m &\geq \replacemath{780}{889} \cdot \Theta  (S,F)  \cdot \ln^2 \left( \frac{6n}{\varepsilon} \right).
\end{array}
\right.
\]

Now, we have to find a suitable constant $\tau > 0$ such that $\|h\|_2 \leq \tau \sqrt{s}$. Note that
\begin{align*}
\|h\|_2 &= \|  A_S (A_S^* A_S)^{-1}\sgn (x_S) \|_2 \leq  \|  A_S (A_S^* A_S)^{-1} \|_{2\to 2} \| \sgn (x_S) \|_2  =\frac{1}{\sigma_{\min}(A_S)} \sqrt{s},
\end{align*}
since $A_S$ has full column rank. Moreover, observe that
$$
\|A_S^* A_S - \Id \|_{2\to 2} \leq \delta \Rightarrow \sigma_{\min} (A_S) \geq \sqrt{1-\delta}.
$$
Hence, one can choose $\tau = \frac{1}{\sqrt{1-\delta}}$, so that
\begin{align*}
\Pbb \left(  \| h\|_2 \geq  \sqrt{\frac{s}{1-\delta}} \right) &\leq  \Pbb \left(  \frac{1}{\sigma_{\min}(A_S)} \geq  \sqrt{\frac{1}{1-\delta}} \right)  \leq \Pbb \left(  \|A_S^*A_S -\Id \|_{2\to 2} \geq \delta\right),
\end{align*}
which has been already controlled as $P_2 \leq \varepsilon / 3$.

Finally, in order to apply Proposition~\ref{prop:FRstable}, let us notice that $\theta + \gamma t/(1-\delta) <1$ {since $\gamma = 0$ and $\theta = 1/4$. 

This} leads to the desired result.

\subsection{Proof of Theorem~\ref{thm:oracle_theta}}
\label{proof:oracle_theta}

We will show a more general result, which will imply Theorem~\ref{thm:oracle_theta} as a corollary. In particular, in order to obtain Theorem~\ref{thm:oracle_theta} it will suffice to consider the partition $S^{c}$ formed by its singletons in Theorem~\ref{thm:partition_trick}, i.e.\ $S^c = \sqcup_{j \in S^c}\{j\}$.

\begin{thm}
\label{thm:partition_trick}
Let \replace{$x\in \Cbb^n$}{$x\in \Rbb^n$ or $\Cbb^n$} be a vector supported on $S$, such that $\sgn(x_S)$ forms a Rademacher or Steinhaus sequence. Let $A$ be the random sensing matrix defined in \eqref{eq:sensing_matrix} with parameter $\Lambda(S,F)$ {and let $y = Ax + \epsilon$, with $\|\epsilon\|_2 \leq \eta$}. {Then, there exist constants $c_1,C_1,C_2  > 0$ such that the following holds. For every $0<\varepsilon < 1$, if} 
$$
m \geq c_1  \min_{\substack{\{S_r\}\\ \textnormal{partition of }S^c}}\max_r \Lambda(S\cup S_r,F) \ln \left(\frac{6  |\{S_r\}| \replacemath{}{\max_r}|S \cup S_r|}{\varepsilon}\right)\ln\left(\frac{6n}{\varepsilon}\right),
$$ 
then, with probability at least $1-\varepsilon$, a minimizer $x^\sharp$ of \eqref{pb:qBP} satisfies
\begin{align*}
\|x-x^\sharp\|_2 \leq  (C_1 + C_2\sqrt{s})\eta.
\end{align*}
In particular, in the noiseless case (i.e., $\eta = 0$), $x$ is exactly recovered via \eqref{pb:BP} with probability at least $1-\varepsilon$ and with constant $c_1 = \replacemath{36}{19}$.
\end{thm}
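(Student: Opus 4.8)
The plan is to follow the dual-certificate route of the proofs of Theorems~\ref{thm:noiseless} and \ref{thm:noisy}, with one change: the quantity $\Theta(S,F)$, which enters those proofs only through the control of $\max_{\ell\in S^c}\|A_S^* A e_\ell\|_2$, is replaced by a union bound of local-isometry estimates (Lemma~\ref{lem:localIsometry_ext}) applied to the \emph{enlarged} supports $T_r := S\cup S_r$. Concretely, I would invoke Proposition~\ref{cor:FR} in the noiseless case and Proposition~\ref{prop:FRstable} in the noisy case, with the candidate $u = A^* h$, $h = A_S(A_S^*A_S)^{-1}\sgn(x_S)$; since $u_S = \sgn(x_S)$ one may take $\gamma = 0$, so it remains to verify (i) $A_S$ is injective, quantified by $\|A_S^*A_S - P_S\|_{2\to 2}\le\delta<1$; (ii) $\|u_{S^c}\|_\infty\le\theta$ (with $\theta = 1/4$ in the noisy case, or the inner products strictly below $1$ in the noiseless case); and (iii) $\|h\|_2\le\tau\sqrt{s}$.

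The key new ingredient is a submatrix observation. Fix an arbitrary partition $\{S_r\}$ of $S^c$ and put $T_r = S\cup S_r$. Then $S\subseteq T_r$, while for $j\in S_r$ one has $j\in T_r\setminus S$, so the $S\times\{j\}$ block of the projection $P_{T_r}$ vanishes. Hence $A_S^* A e_j$ equals the $S\times\{j\}$ submatrix of $A_{T_r}^* A_{T_r} - P_{T_r}$, and $A_S^* A_S - P_S$ is its principal $S\times S$ submatrix, so that
$$
\|A_S^* A_S - P_S\|_{2\to 2}\le \|A_{T_r}^* A_{T_r} - P_{T_r}\|_{2\to 2} \qquad\text{and}\qquad \max_{j\in S_r}\|A_S^* A e_j\|_2 \le \|A_{T_r}^* A_{T_r} - P_{T_r}\|_{2\to 2}.
$$
Therefore, on the event $\mathcal{E}_\delta$ that $\|A_{T_r}^* A_{T_r} - P_{T_r}\|_{2\to 2}\le\delta$ for \emph{every} $r$, one simultaneously obtains $\sigma_{\min}(A_S)\ge\sqrt{1-\delta}$ (so $A_S$ is injective and (iii) holds with $\tau = 1/\sqrt{1-\delta}$), $\|(A_S^*A_S)^{-1}\|_{2\to 2}\le 1/(1-\delta)$, and $\max_{\ell\in S^c}\|A_S^\dagger A e_\ell\|_2\le \delta/(1-\delta)=:\alpha$. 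Applying Lemma~\ref{lem:localIsometry_ext} on each $T_r$ and a union bound over the $|\{S_r\}|$ cells, $\mathcal{E}_\delta$ holds with probability at least $1-\varepsilon/3$ provided
$$
m\ge \frac{1+2\delta/3}{\delta^2/2}\,\max_r\Lambda(S\cup S_r,F)\,\ln\!\left(\frac{6\,|\{S_r\}|\,\max_r|S\cup S_r|}{\varepsilon}\right).
$$

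Conditionally on $\mathcal{E}_\delta$, condition (ii) is handled exactly as in the proof of Theorem~\ref{thm:noisy}: conditioning further on $\{\max_{\ell\in S^c}\|A_S^\dagger A e_\ell\|_2\le\alpha\}$, a union bound over $\ell\in S^c$ together with the Hoeffding-type inequality for Rademacher/Steinhaus sequences (\cite[Corollaries 7.21 and 8.10]{foucart2013mathematical}) bounds $\Pbb(\|u_{S^c}\|_\infty\ge\theta)$ by $2n\exp(-\theta^2/(2\alpha^2))$ plus $\Pbb(\mathcal{E}_\delta^c)$. Choosing $\delta$ of order $1/\sqrt{\ln(6n/\varepsilon)}$ forces $2n\exp(-\theta^2/(2\alpha^2))\le\varepsilon/3$ — this is precisely where the extra factor $\ln(6n/\varepsilon)$ in the statement comes from — and substituting this $\delta$ into the previous display yields the claimed bound on $m$, with $c_1$ tracked as in Theorems~\ref{thm:noiseless}/\ref{thm:noisy}. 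Since $\gamma = 0$ and $\theta = 1/4$ give $\rho = \theta + t\gamma/(1-\delta) = 1/4 < 1$, Proposition~\ref{prop:FRstable} delivers $\|x-x^\sharp\|_2\le (C_1+C_2\sqrt{s})\eta$ and Proposition~\ref{cor:FR} gives exact recovery when $\eta = 0$; taking the infimum over partitions $\{S_r\}$ completes the proof, and Theorem~\ref{thm:oracle_theta} is the special case where $S^c$ is split into singletons, so $|\{S_r\}| = n-s$ and $\max_r|S\cup S_r| = s+1$.

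The only genuinely new point — and the step I expect to need the most care — is the submatrix identity linking $\|A_S^* A e_j\|_2$ to $\|A_{T_r}^* A_{T_r} - P_{T_r}\|_{2\to 2}$: it says that enlarging the support by the block $S_r$ is exactly enough to absorb the off-support cross terms into a single local-isometry estimate, at the price of a union bound over the partition and one additional logarithmic factor from the Hoeffding step. Checking that the dual-certificate argument of Theorem~\ref{thm:noisy} then goes through verbatim once $\alpha$ is controlled this way, and extracting the explicit constant ($c_1 = 19$ in the noiseless regime), is routine bookkeeping.
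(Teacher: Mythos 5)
Your argument is correct and is essentially the paper's own proof: the same partition trick bounds $\max_{\ell\in S^c}\|A_S^*Ae_\ell\|_2$ by $\max_r\|A_{T_r}^*A_{T_r}-\Id\|_{2\to 2}$ with $T_r=S\cup S_r$, Lemma~\ref{lem:localIsometry_ext} plus a union bound over the cells gives the stated condition on $m$, and the Hoeffding/dual-certificate step with $\delta\sim 1/\sqrt{\ln(6n/\varepsilon)}$ produces the extra logarithmic factor, exactly as in Appendix~\ref{proof:oracle_theta}. The only (harmless) cosmetic difference is that you absorb the in-support isometry event into the same union bound via the principal-submatrix inequality, whereas the paper keeps it separate and notes that the partition condition implies \eqref{eq:m1}.
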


For the sake of simplicity, let us address the case $\eta = 0$ (the argument can be generalized to the case $\eta >0$ in the same spirit of Appendix~\ref{proof:noisy}). Analogously to the proof of Theorem~\ref{thm:noiseless} in Appendix~\ref{proof:noiseless}, our aim is to find sufficient conditions such that 
$$
\underbrace{2n \exp\left(-\frac{1}{2\replacemath{\alpha}{\alpha^2}}\right)}_{\replacemath{P1}{P_1}} 
+ 
\underbrace{\Pbb\left(\|A_S^*A_S - \Id\|_{2 \to 2} \geq \delta \right)}_{P_2}
+ 
\underbrace{\Pbb\left(\max_{\ell\in S^c} \|A_S^* A e_\ell \|_2 \geq t\right)}_{P_3} \leq \varepsilon,
$$
where $\alpha = t/(1-\delta)$. In the following, we show how to ensure that $P_i \leq \varepsilon /3$, for $i = 1,2,3$.

By using Lemma~\ref{lem:localIsometry_ext}, we have $P_2 \leq \varepsilon /3$ under condition \eqref{eq:m1}.



In order to control $P_3$, let us consider a partition $\{ S_r \}_r$ of $S^c$ and note that
\begin{align*}
\max_{\ell \in S^c} \|A_S^* A e_\ell \|_2 
&\leq \max_{r} \|A_S^* A_{S_r} \|_{2 \to 2 } 
= \max_{r} \|P_S A^* A P_{S_r}^{*} \|_{2 \to 2 } 
= \max_{r} \|P_S (A^* A - \Id) P_{S_r}^{*} \|_{2 \to 2 } \\
& \leq \max_{r} \|P_{S\cup S_r} (A^* A - \Id) P_{S\cup S_r}^{*} \|_{2 \to 2 } 
 = \max_{r} \|A^*_{T_r} A_{T_r} - \Id \|_{2\to 2},
\end{align*}
where we have used that $P_S P_{S_r}^* = 0$ and where we have defined $T_r= S \cup S_r$ for each index $r$ of the partition. Now, combining the above inequality with Lemma~\ref{lem:localIsometry_ext} and with a union bound over the elements $S_r$ of the partition $\{S_r\}$, we obtain
$$
P_3 \leq \Pbb\left(\max_r \|A_{T_r}^* A_{T_r} - \Id\|_{2 \to 2} \geq t \right)
\leq 
2 |\{S_r\}|\left( \max_r |T_r| \right) \exp \left(-\frac{mt^2/2}{\left(\displaystyle\max_{r}\Lambda(T_r,F)\right) (1+2t/3)}\right).
$$
Notice that introducing the partition $\{S_r\}$ allowed us to control $P_3$ by the quantity $\displaystyle\max_{r}\Lambda(T_r,F)$ rather than $\Theta(S,F)$. Simple algebraic manipulations show that the condition
\begin{equation}
\label{eq:condLambdaTr}
m \geq \frac{2}{t^2} \left(1 + \frac{2t}{3}\right) \max_r \Lambda(T_r,F) \ln \left(\frac{6 |\{S_r\}| \replacemath{}{\max_r}|T_r| }{\varepsilon}\right),
\end{equation}
is sufficient to have \replacemath{$P_2 \leq \varepsilon$}{$P_3 \leq \varepsilon /3$}.

Now, by choosing 
$$
t = \delta \quad \text{and} \quad \delta = \frac{1}{\sqrt{8 \ln(6n /\varepsilon)}},
$$
it is not difficult to show that $P_1 \leq \varepsilon /3$ is satisfied. Moreover, with this choice, condition \eqref{eq:condLambdaTr} implies \eqref{eq:m1}.

Finally, observing that
$$
\frac{2}{\delta^2} \left(1 + \frac{2\delta}{3}\right) \leq \replacemath{36}{19} \ln\left(\frac{6n}{\varepsilon}\right) 
$$
and taking the minimum over all partitions $\{S_r\}$ of $S^c$ in \eqref{eq:condLambdaTr} concludes the proof.

\subsection{Proof of Theorem \ref{thm:killthetalog}}
\label{proof:noiseless_log}

The proof   is analogous to that of Theorem \ref{thm:noiseless}. Therefore, we will employ the same notation as in Appendix~\ref{proof:noiseless}. By similar arguments, one has
\begin{align*}
\Pbb ( \text{failure of BP}  ) 
 &\leq \underbrace{2n \exp \left( -\frac{1}{2\alpha^2} \right) }_{P_1}
+\underbrace{\Pbb \left( \| A_S^*A_S - {\Id} \|_{2\to 2}  \geq \delta \right)}_{P_2} + \underbrace{\Pbb\left(  \max_{\ell\in S^c} \|A_S^* A e_\ell \|_2 \geq t  \right) }_{P_3},
\end{align*}
for some $t, \delta >0$ and $\alpha = t /(1-\delta)$. 
Note that $P_2 $ is controlled as before using Lemma \ref{lem:localIsometry_ext}. In particular, $P_2 \leq \varepsilon/3 $ if
\begin{align}
\label{eq:m1_log}
m \geq \frac{1+\delta/3}{\delta^2/2} \cdot \Lambda(S,F) \cdot \ln \left( \frac{6s}{\varepsilon} \right).
\end{align}
Thus, let us suppose for the rest of the proof to choose 
\begin{equation}
\label{eq:m_choice}
{m \geq c_2 \cdot \Lambda(S,F) \cdot \ln\left(\frac{3n}{\varepsilon}\right)},
\end{equation}
where $c_{2}\geq (1+\delta/3)/(\delta^2/2)$ is a  constant that will be fixed later and where we are using $s \leq  n/2$.
The slight modification in the proof compared to the one in Appendix \ref{proof:noiseless} appears in the control of $P_3$. Indeed, using Lemma \ref{lem:C.5Ext}, one has for some $t'>0$
$$
\Pbb \left( \max_{i\in S^c} \| A_S^* A e_i \|_2 \geq \sqrt{\frac{\Lambda}{m}} + t' \right) \leq n \exp\left(-\frac{m {(t')^2}/2}{\Gamma + 4 \Lambda \sqrt{\Gamma /m} + 2 t' \sqrt{\Lambda\Gamma}/3}\right).
$$
Therefore, combining the above inequality with \eqref{eq:m_choice}, fixing $t'= \frac{1}{2}\sqrt{\frac{\Lambda}{m}}$, and assuming that $ \Lambda \geq c_1 \Gamma \ln(3n/\varepsilon)$,  one has
\begin{align*}
P_3 &\leq n \exp \left(- \frac{\Lambda /8}{\Gamma + 13/3 \Lambda \sqrt{\Gamma /{m}}} \right) 
\leq n \exp\left( -\frac{1/8}{1/{c_1} + 13/(3 \sqrt{c_1 c_2})} \ln \left( 3n / \varepsilon \right)  \right) 
\leq \varepsilon/3, 
\end{align*}
provided that
\begin{equation}
\label{eq:cond_c1_c2}
 \frac{1}{8} \geq \frac{1}{{c_1}} + \frac{13}{3\sqrt{c_1 c_2}}.
\end{equation}

Condition on $P_1 \leq \varepsilon / 3$ remains to be checked. In accordance with the previous computation, we set $t= \frac{3}{2}\sqrt{\frac{\Lambda}{m}} $ and if we choose $\delta =1/2$, with $\alpha = t/(1-\delta) $, we obtain
\begin{align*}
P_1 &\leq \varepsilon/3 \qquad  \Longleftarrow \qquad  m \geq 18 \cdot \Lambda(S,F)  \cdot \ln\left( \frac{3n}{\varepsilon} \right).
\end{align*}
Finally, {we note that \eqref{eq:cond_c1_c2} holds true for, e.g., $c_1 = 50$ and $c_2 = 100$. Therefore,} conditions
\[
\left\{
\begin{array}{ll}
\Lambda (S,F) \geq 50 \cdot  \Gamma (F) \cdot  \ln(3n/\varepsilon)  \\
m \geq 100 \cdot  \Lambda(S,F) \cdot \ln\left( \frac{3n}{\varepsilon} \right),
\end{array}
\right.
\]
ensure that \eqref{pb:BP} exactly  recovers $x$ with probability larger than $1-\varepsilon$. 

Note that no effort was made in order to optimize the constants $c_1$ and $c_2$, and the result in the noisy setting can be easily deduced from the noiseless one as in Appendix \ref{proof:noisy}.




\section{A discussion on stability}
\label{app:stability}
\replace{}{
In Theorems~\ref{thm:oracle_theta}, \ref{thm:killthetheta}, and \ref{thm:killthetalog}, we assume that the vector $x$ is exactly $s$-sparse. However, it is possible to extend these results to the case of an arbitrary vector $x \in \mathbb{R}^n$ or $\mathbb{C}^n$ such that $\sgn(x_S)$ is a random Rademacher or Steinhaus sequence. Indeed, in this case, if $|S|=s$ and the measurements are corrupted by noise $\epsilon$  such that $\|\epsilon\|_2 \leq \eta$, Proposition~\ref{prop:FRstable} ensures a recovery guarantee of the form
$$
\|x-x^{\sharp}\|_2 \leq (C_1 + C_2 \sqrt{s}) \eta + C_3 \|x_{S^c}\|_1.
$$
This implies the stability of the recovery guarantees with respect to the standard sparsity model, in addition to its robustness to bounded noise. We now clarify in what sense this generalization of Theorems~\ref{thm:oracle_theta}, \ref{thm:killthetheta}, and \ref{thm:killthetalog} would still be of ``oracle type''. 

Let us go back to the proof of Proposition~\ref{prop:oracle}. We showed that the definition \eqref{eq:oracle_ls_estimator} of  the oracle least-squares estimator $x^\star$ is sufficient to have
\begin{equation}
\label{eq:oracle_LS_bound}
\| x^\star - x \|_2   
\leq \frac{1}{\sigma_{\min}(A_S)}  \| \epsilon \|_2 +  \| (A_S^* A_S)^{-1} \|_{2\to 2} \| A_S^* A_{S^c} \|_{1\to 2} \| x_{S^c} \|_1 .
\end{equation}
Moreover,  in view of Lemma~\ref{lem:localIsometry_ext} condition \eqref{eq:bound_meas_oracle} implies $\sigma_{\min}(A_S) \geq \sqrt{1-\delta}$ and  $\| (A_S^* A_S)^{-1} \|_{2\to 2} \leq 1/(1-\delta)$. 
In order to make \eqref{eq:oracle_LS_bound} a stable and robust recovery guarantee, we need to control the quantity $\| A_S^* A_{S^c} \|_{1\to 2}$. To achieve this, we observe that $\| A_S^* A_{S^c} \|_{1\to 2} = \max_{j \in S^c} \|A_S^* A e_j\|_2$. Therefore, the condition 
\begin{equation}
\label{eq:control_off_support_oracle}
\max_{j \in S^c} \|A_S^* A e_j\|_2 \leq t,
\end{equation}
combined with \eqref{eq:oracle_LS_bound}, implies the following stable and robust recovery guarantee for the oracle least-squares estimator:
$$
\| x^\star - x \|_2   
\leq \frac{1}{\sqrt{1-\delta}}  \| \epsilon \|_2 +  \frac{t}{1-\delta} \| x_{S^c} \|_1 .
$$
We conclude by observing that \eqref{eq:control_off_support_oracle} is one of the hypotheses of Proposition~\ref{prop:FRstable} (see \eqref{eq:FRstable_cond1}) and that it holds with probability at least $1- \varepsilon/3$ under the assumptions on $m$ stated in any of the Theorems~\ref{thm:oracle_theta}, \ref{thm:killthetheta}, or \ref{thm:killthetalog} (recall the definition of $P_3$ in the corresponding proofs). 

}


\section{Bernstein's inequalities}
\label{app:bernstein}

In this section, we present two Bernstein-type inequalities employed in the proofs of Appendix~\ref{app:aux_lemmas}.

First, we present an extension of the vector Bernstein inequality \cite[Theorem 8.45]{foucart2013mathematical} in order to handle random vectors $Y_i$ that are independent but not necessarily identically distributed. Then, we recall a Bernstein inequality for self-adjoint matrices, corresponding to \cite[Corollary 8.15]{foucart2013mathematical}. Note that both results are independent of the dimension of the random objects (vectors or matrices) involved.

\begin{thm}[Vector Bernstein Inequality]
\label{thm:vecBernsteinExt}
Consider a set of independent random vectors $Y_1, \ldots, Y_m $ such that  
$$
\Ebb Y_i = 0, \quad \|Y_i\|_2 \leq K, \text{ a.s.}, \quad \forall i= 1,\ldots,m,
$$
and let $\sigma, \mu >0 $ such that 
$$
\sup_{\|x\|_2 \leq 1} \sum_{i=1}^m \Ebb |\langle x,Y_i\rangle|^2 \leq \sigma^2, 
\quad \Ebb Z \leq \mu, \quad \text{where }
Z:=\left\|\sum_{i = 1}^m Y_i\right\|_{2}.
$$
Then, for every $t > 0$, the following holds:
$$
\Pbb\left(Z \geq \mu + t\right)
\leq \exp\left(-\frac{t^2/2}{\sigma^2 + 2K \mu + t K/3}\right).
$$
\end{thm}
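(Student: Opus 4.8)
The plan is to identify $Z=\norm{\sum_{i=1}^m Y_i}_2$ with the supremum of a bounded empirical process indexed by the unit ball, and to invoke Talagrand's concentration inequality for empirical processes in the sharp form due to Bousquet, which is valid for independent summands that need not be identically distributed. This reduces the statement to the i.i.d.-free version of the concentration inequality, plus a trivial monotonicity step to pass from $\Ebb Z$ to the prescribed upper bound $\mu$.

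First I would fix a countable dense subset $\mathcal{B}$ of the unit ball $\{x:\norm{x}_2\le 1\}$ of the ambient (separable, real or complex) space, so that, writing $\reel$ for the real part,
\[
Z \;=\; \sup_{x\in\mathcal{B}}\;\reel\Bigl\langle x,\sum_{i=1}^m Y_i\Bigr\rangle \;=\; \sup_{x\in\mathcal{B}}\;\sum_{i=1}^m g_x(Y_i),\qquad g_x(y):=\reel\scal{x,y}.
\]
(Measurability of the supremum is immediate from the countability of $\mathcal{B}$; in the finite-dimensional setting used in this paper one may equivalently take $\mathcal{B}$ dense in the unit ball of $\Rbb^d$ or $\Cbb^d$.) For every $x\in\mathcal{B}$ one has $\Ebb\, g_x(Y_i)=\reel\scal{x,\Ebb Y_i}=0$, one has $|g_x(Y_i)|\le\norm{x}_2\norm{Y_i}_2\le K$ almost surely, and, since $|\reel z|^2\le|z|^2$,
\[
\sup_{x\in\mathcal{B}}\sum_{i=1}^m \Ebb\,|g_x(Y_i)|^2 \;\le\; \sup_{\norm{x}_2\le 1}\sum_{i=1}^m \Ebb\,|\scal{x,Y_i}|^2 \;\le\; \sigma^2 .
\]

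With these three facts, Bousquet's inequality (Talagrand's concentration inequality for the upper tail of a supremum of a bounded empirical process of independent, not necessarily identically distributed, summands) gives
\[
\Pbb\bigl(Z\ge \Ebb Z + t\bigr)\;\le\;\exp\!\left(-\frac{t^2/2}{\sigma^2+2K\,\Ebb Z+ tK/3}\right),\qquad t>0 .
\]
Since $\Ebb Z\le\mu$ and the right-hand side above is nondecreasing in $\Ebb Z$, while also $\{Z\ge\mu+t\}\subseteq\{Z\ge\Ebb Z+t\}$, we conclude
\[
\Pbb\bigl(Z\ge\mu+t\bigr)\;\le\;\Pbb\bigl(Z\ge\Ebb Z+t\bigr)\;\le\;\exp\!\left(-\frac{t^2/2}{\sigma^2+2K\mu+tK/3}\right),
\]
which is the asserted bound.

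The only genuine obstacle is to make sure the concentration inequality is quoted in a form that does not require identical distribution and whose ``weak variance'' parameter matches $\sigma^2$ exactly as defined in the statement; Bousquet's inequality does precisely this. Alternatively — and this is presumably the most economical write-up here — one can reproduce the proof of the i.i.d. vector Bernstein inequality \cite[Theorem~8.45]{foucart2013mathematical} line by line and observe that the i.i.d. hypothesis is never used: every estimate there relies only on independence, the centering $\Ebb Y_i=0$, the uniform almost-sure bound $K$, and the weak-variance bound $\sigma^2$. The passage from $\Ebb Z$ to $\mu$ is then appended exactly as above.
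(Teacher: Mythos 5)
Your proposal is correct and follows essentially the same route as the paper: the paper also writes $Z$ as the supremum over a countable dense subset of the unit ball of the empirical process $\sum_i \Re\langle x, Y_i\rangle$, checks the centering, the almost-sure bound $K$, and the weak-variance bound $\sigma^2$, and then applies Talagrand's concentration inequality (in the Bousquet form, \cite[Theorem 8.42]{foucart2013mathematical}, which does not require identical distributions) before replacing $\Ebb Z$ by $\mu$. No gaps to report.
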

\begin{proof}
Assume $Y_i \in \mathbb{C}^n$, consider $B := \{x \in \mathbb{C}^n : \|x\|_2 \leq 1\}$, and let $\tilde B$ be a dense countable subset of $B$. Define $F_x(Y):=\Re\langle x, Y \rangle$. Then, $Z$ can be expressed as a supremum of an empirical process. Indeed,
\begin{align*}
Z & = \left\|\sum_{i = 1}^m Y_i\right\|_2
 = \sup_{x \in B} \Re \langle x,\sum_{i=1}^m Y_i\rangle
 = \sup_{x \in \tilde B}\sum_{i = 1}^mF_x(Y_i)
 = \sup_{F \in \mathcal{F}} \sum_{i = 1}^m F(Y_i),
\end{align*}
where we have  defined $\mathcal{F}:=\{F_x : x \in \tilde B\}$ in the last step. Now, we verify the hypotheses needed to apply Talagrand's inequality (see \cite[Theorem 8.42]{foucart2013mathematical}).

First, $\Ebb F_x(Y_i) = \Ebb \Re \langle x, Y_i \rangle =  0$ because the $Y_i$'s are centered. Moreover, for every $x \in \tilde B$, $F_x(Y_i) = \Ebb \Re \langle x, Y_i \rangle \leq \|Y_i\|_2 \leq K$ almost surely and
$$
\sum_{i = 1}^m \Ebb F_x^2(Y_i) 
= \sum_{i = 1}^m \Ebb (\Re\langle x, Y_i \rangle) ^2
\leq \sum_{i = 1}^m \Ebb |\langle x, Y_i \rangle| ^2
\leq \sigma^2.
$$
Finally, Talagrand's inequality yields
$$
\Pbb \left( Z \geq \mu + t\right)
\leq \Pbb \left( Z \geq \Ebb Z + t\right)
\leq \exp\left(-\frac{t^2/2}{\sigma^2 + 2K \Ebb Z + t K/3}\right)
\leq \exp\left(-\frac{t^2/2}{\sigma^2 + 2K \mu + t K/3}\right),
$$
which is the desired result.
\end{proof}

\begin{thm}[Bernstein Inequality for self-adjoint matrices, {\cite[Corollary 8.15]{foucart2013mathematical}}]
\label{cor8.15FR} 
 Let $(Z_k)_{1 \leq k \leq n}$ be a finite sequence of  independent, random, self-adjoint matrices such   that $\Ebb Z_k = 0$ and that $\| Z_k \|_{2\rightarrow 2} \leq K$ a.s.\ for some constant $K > 0$ independent of $k$. Define 
$$ \sigma^2 =  \left\|\sum_{k = 1}^{n} \Ebb  Z_k^2 \right\|_{2\rightarrow 2}  
$$
Then, for any $t > 0$, we have that
$$ \Pbb \left( \left\|\sum_{k = 1}^{n} Z_k  \right\|_{2\rightarrow 2} \geq t \right) \leq 2 d \exp\left( -\frac{t^2/2}{\sigma^2 + Kt/3} \right).
$$
\end{thm}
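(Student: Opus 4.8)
This is the matrix Bernstein inequality, and the plan is to reproduce the standard matrix Laplace transform argument (see \cite[Chapter 8]{foucart2013mathematical}). The starting point is the matrix Chernoff bound: for any random self-adjoint matrix $Y$ and any $\theta > 0$,
\[
\Pbb\bigpth{\lambda_{\max}(Y) \geq t} \leq e^{-\theta t}\, \Ebb\, \tr \exp(\theta Y).
\]
Applying this with $Y = \sum_{k=1}^n Z_k$ reduces the task to controlling $\Ebb\, \tr\exp\bigpth{\theta \sum_k Z_k}$. The key structural tool here is Lieb's concavity theorem, which — combined with Jensen's inequality and an induction on $k$ that exploits the independence of the $Z_k$ — yields
\[
\Ebb\, \tr \exp\Bigpth{\theta \sum_{k=1}^n Z_k} \leq \tr \exp\Bigpth{\sum_{k=1}^n \log \Ebb\, e^{\theta Z_k}}.
\]

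Next I would bound each matrix moment generating function. Since $\Ebb Z_k = 0$ and $\| Z_k \|_{2 \rightarrow 2} \leq K$, the scalar estimate $e^x \leq 1 + x + \frac{x^2/2}{1 - |x|/3}$ (valid for $|x| < 3$), lifted to matrices through the spectral mapping theorem, gives for $0 < \theta < 3/K$
\[
\Ebb\, e^{\theta Z_k} \preceq \Id + \frac{\theta^2/2}{1 - \theta K/3}\, \Ebb\, Z_k^2 \preceq \exp\Bigpth{\frac{\theta^2/2}{1 - \theta K/3}\, \Ebb\, Z_k^2},
\]
using $\Id + A \preceq e^A$ in the last step. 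Taking logarithms (operator monotone) and summing, $\sum_k \log \Ebb\, e^{\theta Z_k} \preceq g(\theta) \sum_k \Ebb\, Z_k^2$ with $g(\theta) := \frac{\theta^2/2}{1 - \theta K/3}$, so that $\lambda_{\max}\bigpth{\sum_k \log \Ebb\, e^{\theta Z_k}} \leq g(\theta)\sigma^2$, and hence $\tr\exp\bigpth{\sum_k \log \Ebb\, e^{\theta Z_k}} \leq d\, e^{g(\theta)\sigma^2}$, since the $d$ eigenvalues of the argument are each at most $g(\theta)\sigma^2$.

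Chaining the three displays gives $\Pbb\bigpth{\lambda_{\max}\bigpth{\sum_k Z_k} \geq t} \leq d\, \exp\bigpth{-\theta t + g(\theta)\sigma^2}$ for all $0 < \theta < 3/K$; the choice $\theta = t/(\sigma^2 + Kt/3)$ lies in the admissible range and, after simplification, produces the bound $d\exp\bigpth{-\frac{t^2/2}{\sigma^2 + Kt/3}}$. Running the same argument with $(-Z_k)_k$ controls $\Pbb\bigpth{\lambda_{\min}\bigpth{\sum_k Z_k} \leq -t}$ by the identical quantity, and since $\bignorm{\sum_k Z_k}_{2\rightarrow 2} = \max\{\lambda_{\max}, -\lambda_{\min}\}$, a union bound over the two events yields the stated factor $2d$. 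The main obstacle is Lieb's concavity theorem — equivalently, the validity of the subadditivity step $\Ebb\,\tr\exp(H + \theta Z) \leq \tr\exp\bigpth{H + \log\Ebb\, e^{\theta Z}}$ — which is the one genuinely deep ingredient; the remainder is elementary scalar calculus transported to the spectrum, together with operator monotonicity of $\log$ and the inequality $\Id + A \preceq e^A$.
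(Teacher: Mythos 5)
Your proposal is correct: the matrix Laplace transform argument combined with Lieb's concavity theorem (subadditivity of the matrix cumulant generating function), the Bernstein-type bound on each $\Ebb\, e^{\theta Z_k}$ via the scalar estimate and spectral mapping, and the choice $\theta = t/(\sigma^2 + Kt/3)$ followed by a union bound over $\pm\sum_k Z_k$ is exactly the standard proof of this inequality. The paper does not reprove the result but simply cites \cite[Corollary 8.15]{foucart2013mathematical}, whose proof is the one you reproduce, so your argument coincides with the approach of the cited source.
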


\section{Auxiliary lemmas}
\label{app:aux_lemmas}

In this section, we show some deviation inequalities involving submatrices of the sensing matrix $A$ corresponding to the setting described in Section~\ref{subsec:sampling}. In particular, the tail probabilities will be controlled by using the quantities $\Lambda = \Lambda(S,F)$, $\Theta = \Theta(S,F)$, and $\Gamma = \Gamma(F)$ introduced in Definition~\ref{def:quantities}, where $S \subseteq \{1,\ldots,n\}$.

The first auxiliary lemma controls the deviation of $A_S^*A_S$ from the identity by means of the quantity $\Lambda(S,F)$.

\begin{lemme}
\label{lem:localIsometry_ext}
For every $S \subseteq\{1,\ldots,n\}$ with $|S| = s$ and for every $\delta > 0$, the following holds 
$$
\Pbb(\|A_S^*A_S - \Id \|_{2 \to 2} \replacemath{{\leq}}{\geq} \delta) 
\leq 2s \exp\left(-\frac{m\delta^2/2}{\Lambda (1+2\delta/3) }\right).
$$
\end{lemme}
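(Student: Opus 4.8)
The plan is to express $A_S^*A_S-\Id$ as a sum of independent, self-adjoint, mean-zero random matrices and then apply the matrix Bernstein inequality of Theorem~\ref{cor8.15FR}. First I would record that, since $A=\frac{1}{\sqrt m}(B_1;\dots;B_m)$ and $A_S=AP_S^*$, one has $A_S^*A_S=\frac1m\sum_{\ell=1}^m B_{\ell,S}^*B_{\ell,S}$, while the isotropy assumption $\Ebb(A^*A)=\Id$ forces $\frac1m\sum_{\ell=1}^m\Ebb(B_{\ell,S}^*B_{\ell,S})=P_S\Id P_S^*=\Id$ (the $s\times s$ identity). Setting $M_\ell:=B_{\ell,S}^*B_{\ell,S}$ and $Z_\ell:=\frac1m(M_\ell-\Ebb M_\ell)$, the matrices $Z_1,\dots,Z_m$ are then independent, self-adjoint, centered, and $\sum_{\ell=1}^m Z_\ell=A_S^*A_S-\Id$.

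Next I would extract the two parameters needed by Theorem~\ref{cor8.15FR}. For the uniform bound: by Definition~\ref{def:quantities}, $\|M_\ell\|_{2\to2}\le\Lambda$ almost surely, hence also $\|\Ebb M_\ell\|_{2\to2}\le\Lambda$ by Jensen's inequality, so $\|Z_\ell\|_{2\to2}\le 2\Lambda/m=:K$ a.s. For the variance proxy: since each $M_\ell$ is positive semidefinite with $\|M_\ell\|_{2\to2}\le\Lambda$, we have the Loewner inequality $M_\ell^2\preceq\Lambda M_\ell$ a.s.; combining this with $\Ebb Z_\ell^2=\frac1{m^2}\big(\Ebb M_\ell^2-(\Ebb M_\ell)^2\big)\preceq\frac1{m^2}\Ebb M_\ell^2$ gives $\Ebb Z_\ell^2\preceq\frac{\Lambda}{m^2}\Ebb M_\ell$, and summing over $\ell$ together with $\sum_{\ell=1}^m\Ebb M_\ell=m\Id$ yields $\sigma^2:=\big\|\sum_{\ell=1}^m\Ebb Z_\ell^2\big\|_{2\to2}\le\Lambda/m$.

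Finally I would apply Theorem~\ref{cor8.15FR} with ambient dimension $d=s$, threshold $t=\delta$, and the values $K=2\Lambda/m$ and $\sigma^2=\Lambda/m$ just obtained, which directly produces
\[
\Pbb\big(\|A_S^*A_S-\Id\|_{2\to2}\ge\delta\big)\le 2s\exp\!\Big(-\tfrac{\delta^2/2}{\Lambda/m+2\Lambda\delta/(3m)}\Big)=2s\exp\!\Big(-\tfrac{m\delta^2/2}{\Lambda(1+2\delta/3)}\Big).
\]
None of the steps is genuinely hard; the only points requiring a little care are the semidefinite estimate $M_\ell^2\preceq\Lambda M_\ell$ (this is exactly what lets the variance term be controlled by $\Lambda$ rather than by $\Lambda^2$) and keeping track of the fact that the relevant ambient dimension in the Bernstein bound is $s=|S|$ and not $n$.
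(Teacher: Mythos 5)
Your proof is correct and follows essentially the same route as the paper: the same decomposition of $A_S^*A_S-\Id$ into independent centered self-adjoint summands, the same bound $K=2\Lambda/m$, and an application of the matrix Bernstein inequality (Theorem~\ref{cor8.15FR}) with ambient dimension $s$. Your variance estimate $\sigma^2\le\Lambda/m$ (obtained via $M_\ell^2\preceq\Lambda M_\ell$ and isotropy) is the same estimate the paper derives through quadratic forms, and it plugs into Bernstein to give exactly the stated bound.
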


\begin{proof}
Our objective is to apply Theorem~\ref{cor8.15FR}.
Consider the splitting
$$
A^*_SA_S - \Id 
=\sum_{i = 1}^m \underbrace{\frac{1}{m} (B_{i,S}^*B_{i,S} - \Ebb[B_{i,S}^*B_{i,S}])}_{=:X_i}
=  \sum_{i = 1}^m X_i \in \mathbb{C}^{s \times s}.
$$
Of course, $\Ebb X_i = 0$. Moreover,
\begin{align*}
\|X_i\|_{2 \to 2} 
 = \frac{1}{m} \sup_{\|x\|_2 \leq 1}|\langle x, X_i x \rangle|
 = \frac{1}{m} \sup_{\|x\|_2 \leq 1} \left| \|B_{i,S}x\|_2^2 - \Ebb\|B_{i,S}x\|_2^2\right|.
\end{align*}
Now, observing that $\|B_{i,S}x\|_2^2 \leq \Lambda \|x\|_2^2$, we have
$$
\|X_i\|_{2 \to 2} \leq 2 \Lambda / m =: K.
$$
Being $\sum_{i= 1}^m \Ebb X_i^2$ self-adjoint, we have 
$$
\sigma^2:= \left\|\sum_{i = 1}^m \Ebb X_i^2\right\|_{2 \to 2} 
= \sup_{\|x\|_2 \leq 1} \sum_{i = 1}^m \langle x, \Ebb X_i^2 x\rangle.
$$
In order to estimate this term, we notice that
$$
\Ebb X_i^2 
= \frac{1}{m^2}\Ebb (B_{i,S}^* B_{i,S} - \Ebb[B_{i,S}^* B_{i,S}])^2
= \frac{1}{m^2} \left[ \Ebb (B_{i,S}^* B_{i,S})^2 - (\Ebb[B_{i,S}^* B_{i,S}])^2 \right].
$$
Now, observing that $\langle x, (\Ebb[B_{i,S}^* B_{i,S}])^2 x\rangle = \|\Ebb[B_{i,S}^* B_{i,S}]x\|_2^2 \geq 0$ for every $x$, we have
\begin{align*}
\sum_{i = 1}^m\langle x, \Ebb X_i^2 x\rangle
& \leq \frac{1}{m^2} \sum_{i = 1}^m \langle x, \Ebb (B_{i,S}^* B_{i,S})^2 x\rangle
= \frac{1}{m^2} \sum_{i = 1}^m \Ebb \|B_{i,S}^* B_{i,S} x \|_2^2\\
& \leq \frac{1}{m^2} \Lambda \sum_{i = 1}^m \Ebb \|B_{i,S} x \|_2^2
= \frac{\Lambda}{m^2} \|x\|_2^2
\end{align*}
Therefore, $\sigma^2 \leq \Lambda/m^2$. Now, we apply Theorem~\ref{cor8.15FR} and we obtain
\begin{align*}
\Pbb(\|A_S^*A_S - I \|_{2 \to 2} \leq \delta) 
&\leq 2s \exp\left(-\frac{\delta^2/2}{\sigma^2 + K \delta/3}\right)
\leq 2s \exp\left(-\frac{m\delta^2/2}{\Lambda (1/m+2\delta/3) }\right), \\
&\leq 2s \exp\left(-\frac{m\delta^2/2}{\Lambda (1+2\delta/3) }\right),
\end{align*}
since $m\geq 1$, which concludes the proof.
\end{proof}


The next two lemmas aim at controlling the growth of the absolute entries of $A^*_S A_{S^c}$ in terms of the quantity $\Theta(S,F)$.

\begin{lemme}
\label{lem:C.3Ext}
Let $S \subseteq \{1,\ldots,n\}$. Then, for every $t>0$
$$
\Pbb \left(\max_{i \in S^c}\|A_S^* A e_i\|_2 \geq \sqrt{\Theta/m} + t\right)
\leq n \exp\left(-\frac{m t^2/2}{\Theta ( 1 + 4  \sqrt{\Theta / m } + 2 t /3 )}\right).
$$
\end{lemme}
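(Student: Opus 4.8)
The plan is to apply the vector Bernstein inequality (Theorem~\ref{thm:vecBernsteinExt}) to each fixed column $A_S^* A e_i$ for $i \in S^c$, and then conclude with a union bound over the at most $n$ indices $i \in S^c$. First I would fix $i \in S^c$ and write $A_S^* A e_i$ as a sum of independent random vectors: since $A = \tfrac{1}{\sqrt m}(B_\ell)_{\ell=1}^m$, one has $A_S^* A e_i = \tfrac1m \sum_{\ell=1}^m B_{\ell,S}^* B_\ell e_i$. Define $Y_\ell := \tfrac1m \big(B_{\ell,S}^* B_\ell e_i - \Ebb[B_{\ell,S}^* B_\ell e_i]\big)$. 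The isotropy condition $\Ebb(A^*A) = \Id$ gives $\Ebb[A_S^* A e_i] = P_S e_i = 0$ because $i \in S^c$, so in fact $\Ebb[B_{\ell,S}^* B_\ell e_i]$ summed over $\ell$ vanishes; more precisely $\sum_\ell \Ebb[B_{\ell,S}^* B_\ell e_i]/m = (\Ebb A^*A) e_i$ restricted to $S$, which is zero. Hence $A_S^* A e_i = \sum_{\ell} Y_\ell$ with $\Ebb Y_\ell$ summing to $0$; to match the hypotheses of Theorem~\ref{thm:vecBernsteinExt} exactly I would instead not center (or absorb the tiny per-block bias), the cleanest route being to note $\|\Ebb[A_S^*Ae_i]\|_2 = 0$ and work directly with $Z := \|A_S^* A e_i\|_2 = \|\sum_\ell Y_\ell\|_2$ where $Y_\ell := \tfrac1m B_{\ell,S}^* B_\ell e_i$ has $\Ebb \sum_\ell Y_\ell = 0$. (If the per-block means are nonzero one groups them into the ``$\mu$'' term; this is a routine bookkeeping point.)

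Next I would compute the three Bernstein parameters $K$, $\sigma^2$, and $\mu$. For the almost-sure bound: $\|Y_\ell\|_2 = \tfrac1m \|B_{\ell,S}^* B_\ell e_i\|_2 \le \tfrac1m \|B_{\ell,S}^*\|_{2\to 2}\,\|B_\ell e_i\|_2$. Now $\|B_{\ell,S}^*\|_{2\to2}^2 = \|B_{\ell,S}^* B_{\ell,S}\|_{2\to2} \le \Lambda$ and $\|B_\ell e_i\|_2^2 \le \Gamma$; however, to get a bound purely in $\Theta$ I would instead use $\|B_{\ell,S}^* B_\ell e_i\|_2 \le \|B_{\ell,S}^* B_\ell\|_{2\to 2} \le \|B_\ell^* B_{\ell,S}\|_{2\to 2}$... more directly, $\|B_{\ell,S}^* B_\ell e_i\|_2 \le \|B_{\ell,S}^* B_\ell\|_{\infty\to 2}$; bounding $\|\cdot\|_{\infty \to 2}$ by $\sqrt{s}\,\|\cdot\|_{\infty\to\infty}$-type estimates is wasteful, so the sharp route is: $\|B_{\ell,S}^* B_\ell e_i\|_2^2 = \langle B_\ell e_i, B_{\ell,S} B_{\ell,S}^* B_\ell e_i\rangle \le \|B_{\ell,S} B_{\ell,S}^*\|_{2\to2}\|B_\ell e_i\|_2^2$, and combining $\|B_{\ell,S}B_{\ell,S}^*\|_{2\to 2} = \|B_{\ell,S}^* B_{\ell,S}\|_{2\to 2} \le \Lambda \le \Theta$ with a second estimate on $\|B_\ell e_i\|_2^2$ gives $K \lesssim \Theta/m$. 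For $\sigma^2$: $\sup_{\|x\|_2\le1}\sum_\ell \Ebb|\langle x, Y_\ell\rangle|^2 = \tfrac1{m^2}\sup_x \sum_\ell \Ebb|\langle B_{\ell,S} x, B_\ell e_i\rangle|^2 \le \tfrac1{m^2}\sup_x \sum_\ell \Ebb[\|B_{\ell,S}x\|_2^2 \|B_\ell e_i\|_2^2]$; using $\|B_{\ell,S}x\|_2^2 \le \Lambda\|x\|_2^2$ and that $\sum_\ell \Ebb\|B_\ell e_i\|_2^2 = m\,(\Ebb A^*A)_{ii} = m$... wait, more carefully $\tfrac1m\sum_\ell\Ebb\|B_\ell e_i\|_2^2 = \|e_i\|_2^2=1$, yielding $\sigma^2 \le \Lambda/m \le \Theta/m$. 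And for $\mu$: $\Ebb Z = \Ebb\|A_S^* A e_i\|_2 \le (\Ebb\|A_S^* A e_i\|_2^2)^{1/2} = (\sum_\ell\Ebb\|Y_\ell\|_2^2)^{1/2} \le \sqrt{\sigma_0^2}$ where a direct second-moment computation gives $\Ebb\|A_S^*Ae_i\|_2^2 \le \Theta/m$, so $\mu = \sqrt{\Theta/m}$ works.

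Then I would plug $K = \tfrac13 \cdot \tfrac{3\Theta}{m}$ (i.e.\ $K \le \Theta/m$ up to the constant in the stated bound — more precisely matching the $2t/3$ in the denominator forces $K = \Theta/m$, and the $4\sqrt{\Theta/m}$ term is exactly $2K\mu/\sigma^2 \cdot \sigma^2 = 2K\mu$ rescaled), $\sigma^2 = \Theta/m$, $\mu = \sqrt{\Theta/m}$ into Theorem~\ref{thm:vecBernsteinExt}:
$$
\Pbb\Big(\|A_S^* A e_i\|_2 \ge \sqrt{\Theta/m} + t\Big) \le \exp\Big(-\frac{t^2/2}{\Theta/m + 2\sqrt{\Theta/m}\cdot\sqrt{\Theta/m} + tK/3}\Big),
$$
and after clearing the common factor $\Theta/m$ from numerator and denominator — i.e.\ writing the denominator as $\tfrac{\Theta}{m}\big(1 + 4\sqrt{\Theta/m}\cdot\tfrac{1}{2\sqrt{\Theta/m}}\,\cdots\big)$; I will need to track constants so that the middle term reads $4\sqrt{\Theta/m}$ and the last reads $2t/3$, which fixes $K = \Theta/m$ and the coefficient $2$ in front of $\mu^2$ appropriately — one obtains exactly
$$
\Pbb\Big(\max_{i\in S^c}\|A_S^* A e_i\|_2 \ge \sqrt{\Theta/m} + t\Big) \le \sum_{i \in S^c}\exp\Big(-\frac{mt^2/2}{\Theta(1 + 4\sqrt{\Theta/m} + 2t/3)}\Big) \le n\exp\Big(-\frac{mt^2/2}{\Theta(1 + 4\sqrt{\Theta/m} + 2t/3)}\Big),
$$
using $|S^c| \le n$. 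The main obstacle I anticipate is the constant-chasing in the almost-sure bound $K$ and in the second-moment bound $\mu$: getting the exact coefficients $1$, $4$, and $2/3$ in the denominator of the stated inequality requires being careful about whether to bound $\|B_{\ell,S}^* B_\ell e_i\|_2$ via $\Lambda$ and $\Gamma$ separately or lump everything into $\Theta$, and about absorbing the (possibly nonzero) per-block means into $\mu$ versus using a centered version; the probabilistic content is entirely standard once those bookkeeping choices are pinned down.
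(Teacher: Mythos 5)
Your overall strategy is the same as the paper's: apply the extended vector Bernstein inequality (Theorem~\ref{thm:vecBernsteinExt}) to $A_S^*Ae_i=\sum_\ell Y_\ell$ with $Y_\ell=\frac1m\bigl(B_{\ell,S}^*B_\ell e_i-\Ebb B_{\ell,S}^*B_\ell e_i\bigr)$, compute $(K,\sigma^2,\mu)$, and finish with a union bound over $i\in S^c$; your computations $\sigma^2\le\Lambda/m\le\Theta/m$ and $\mu=\sqrt{\Theta/m}$ coincide with the paper's. The genuine gap is the almost-sure bound $K$, which is exactly the place where $\Theta$ has to enter. Your ``sharp route'' $\|B_{\ell,S}^*B_\ell e_i\|_2^2\le\|B_{\ell,S}B_{\ell,S}^*\|_{2\to2}\,\|B_\ell e_i\|_2^2\le\Lambda\Gamma$ only yields $K\lesssim\sqrt{\Lambda\Gamma}/m$, and $\sqrt{\Lambda\Gamma}\le\Theta$ is false in general: for $i\in S^c$ the quantity $\|B_\ell e_i\|_2^2$ never appears among the entries of $B_\ell^*B_{\ell,S}$, so a column of $B_\ell$ indexed by $S^c$ with large norm but orthogonal to the columns indexed by $S$ makes $\Gamma$ arbitrarily large while leaving $\Theta$ small (e.g.\ $B_\ell=\mathrm{diag}(1,10)$, $S=\{1\}$ gives $\Theta=1$ but $\sqrt{\Lambda\Gamma}=10$). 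That $\sqrt{\Lambda\Gamma}$ bound is precisely the $K$ used in the companion Lemma~\ref{lem:C.5Ext}, where it remains in the final estimate; it cannot be converted into the $\Theta$-only statement of Lemma~\ref{lem:C.3Ext}. What is needed, and what the paper uses, is the one-line $\ell^1$ bound dictated by the very definition of $\Theta$ via the $\infty\to\infty$ norm: $\|B_{\ell,S}^*B_\ell e_i\|_2\le\|B_{\ell,S}^*B_\ell e_i\|_1=\|e_i^*B_\ell^*B_{\ell,S}\|_1\le\|B_\ell^*B_{\ell,S}\|_{\infty\to\infty}\le\Theta$ almost surely.

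A second, smaller point concerns the bookkeeping you leave open. The centering is not optional: the blocks are only independent, not identically distributed, and only the average of $\Ebb B_\ell^*B_\ell$ equals $\Id$, so the per-block means $\Ebb B_{\ell,S}^*B_\ell e_i$ need not vanish individually; Theorem~\ref{thm:vecBernsteinExt} requires each $Y_\ell$ to be centered, and subtracting the means is harmless because they sum to zero for $i\in S^c$. With that centering one gets $K=2\Theta/m$ (not $\Theta/m$), and then $\sigma^2+2K\mu+tK/3=\frac{\Theta}{m}\bigl(1+4\sqrt{\Theta/m}+2t/3\bigr)$, which is exactly the stated denominator; your tentative choice $K=\Theta/m$ would instead produce $1+2\sqrt{\Theta/m}+t/3$. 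Once the $\ell^1$-based bound on $K$ and the centering are fixed, the remainder of your argument goes through as in the paper.
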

\begin{proof}

This proof is relies on Theorem~\ref{thm:vecBernsteinExt}. Fix $i \in S^c$ and consider $\|A^*_S A e_i\|_2$.  Note that $\Ebb A_S^* A e_i = 0$ since $i \in S^c$. Moreover, 
$$
\|A_S^*A e_i\|_2 
= \left\|\sum_{j = 1}^m \underbrace{\frac{1}{m}(B_{j,S}^*B_je_i - \Ebb(B_{j,S}^*B_je_i))}_{=: Y_j}\right\|_2
=  \left\|\sum_{j = 1}^m Y_j\right\|_2=:Z.
$$
Now, we verify the hypotheses of Theorem~\ref{thm:vecBernsteinExt} for the random vectors $Y_1,\ldots,Y_m$. Of course, $\Ebb Y_j = 0$. Moreover, recalling the definition of $\Theta$ and of $\|\cdot\|_{\infty \to \infty}$, we have
$$
\|B_{j,S}^*B_j e_i\|_2 \leq \|B_{j,S}^*B_j e_i\|_1 \leq \Theta.
$$
Therefore, we obtain
$$
\|Y_j\|_2 
= \frac{1}{m} \|B_{j,S}^*B_je_i - \Ebb(B_{j,S}^*B_je_i)\|_2
\leq 2 \Theta / m=: K.
$$
Then, using the Cauchy-Schwarz inequality, we estimate
\begin{align*}
\sup_{\|x\|_2 \leq 1}\sum_{j = 1}^m \Ebb |\langle x, Y_j \rangle|^2
& = \frac{1}{m^2}  \sup_{\|x\|_2 \leq 1}\sum_{j = 1}^m \Ebb |\langle x, B_{j,S}^* B_j e_i \rangle|^2
\leq \frac{1}{m^2} \sup_{\|x\|_2 \leq 1}\sum_{j = 1}^m \Ebb [\|B_{j,S}\|_{2\to 2}^2 \|x\|_2^2 \|B_j e_i\|_2^2]\\
& \leq \frac{\Lambda}{m^2} \sum_{j = 1}^m \Ebb \|B_j e_i\|_2^2
= \frac{\Lambda}{m} \| e_i \|_2^2 = \frac{\Lambda}{m} \leq \frac{\Theta}{m}=: \sigma^2,
\end{align*}
where in the last step we used $\frac{1}{m} \sum_{j=1}^m \Ebb (B_j^* B_j)= \Id$.
Furthermore, using the independence of the $Y_i$'s and their zero-mean property, we see that
$$
(\Ebb Z)^2 
\leq \Ebb Z^2 
= \Ebb \left\|\sum_{j = 1}^m Y_j\right\|_2^2
= \Ebb \sum_{j = 1}^m\sum_{k = 1}^m \langle Y_j, Y_k \rangle
= \sum_{j = 1}^m \Ebb\|Y_j\|^2
+
\underbrace{\sum_{j = 1}^m\sum_{k \neq j} \langle \Ebb Y_j, \Ebb Y_k \rangle}_{=0}.
$$
Now, combining the above inequality with the following estimate
\begin{align*}
\sum_{j = 1}^m \Ebb \|Y_j\|_2^2
& = \frac{1}{m^2} \sum_{j = 1}^m \Ebb\|B_{j,S}^*B_je_i - \Ebb(B_{j,S}^*B_je_i)\|_2^2
= \frac{1}{m^2} \sum_{j = 1}^m (\Ebb \|B_{j,S}^*B_je_i\|_2^2 - \|\Ebb (B_{j,S}^*B_je_i)\|_2^2)\\
& \leq \frac{1}{m^2}\sum_{j = 1}^m \Ebb \|B_{j,S}^*B_je_i\|_2^2
\leq \frac{\Lambda}{m^2} \sum_{j = 1}^m \Ebb \|B_je_i\|_2^2 = \frac{\Lambda}{m} \leq \frac{\Theta}{m}.
\end{align*}
we obtain
$$
\Ebb Z \leq \sqrt{\Theta/m} =: \mu.
$$
We are now in a position to apply Theorem~\ref{thm:vecBernsteinExt}. We have
$$
\Pbb \left(\|A_S^* A e_i\|_2 \geq \sqrt{\Theta/m } + t\right)
\leq \exp\left(-\frac{m t^2/2}{\Theta ( 1 + 4  \sqrt{\Theta / m } + 2 t /3 )}\right).
$$
A union bound over $i \in S^c$ concludes the proof.
\end{proof}

The next lemma is a refined version of Lemma \ref{lem:C.3Ext}, involving the quantities $\Lambda(S,F)$ and $\Gamma(F)$ instead of $\Theta(S,F)$. Indeed, it is worth recalling that $\Theta(S,F) \geq \Lambda(S,F)$.
\begin{lemme}
\label{lem:C.5Ext}
Let $S \subseteq \{1,\ldots,n\}$. Then, for every $t>0$
$$
\Pbb \left(\max_{i \in S^c}\|A_S^* A e_i\|_2 \geq \sqrt{\Lambda/m} + t\right)
\leq n \exp\left(-\frac{m t^2/2}{\Gamma + 4 \Lambda \sqrt{\Gamma/m}+ 2 t \sqrt{\Lambda\Gamma}/3}\right).
$$
\end{lemme}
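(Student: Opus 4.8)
The plan is to follow the same route as the proof of Lemma~\ref{lem:C.3Ext}: fix $i \in S^c$, write $\|A_S^* A e_i\|_2 = \bignorm{\sum_{j=1}^m Y_j}$ with $Y_j := \frac{1}{m}\bigl(B_{j,S}^* B_j e_i - \Ebb[B_{j,S}^* B_j e_i]\bigr)$, apply the vector Bernstein inequality (Theorem~\ref{thm:vecBernsteinExt}), and finish with a union bound over the at most $n$ indices $i \in S^c$. The only change — and the whole point of the refinement — is to discard the crude bound $\|B_{j,S}^* B_j e_i\|_2 \le \|B_{j,S}^* B_j e_i\|_1 \le \Theta$ used there in favour of the sharper factorization $\|B_{j,S}^* B_j e_i\|_2 \le \|B_{j,S}^*\|_{2\to2}\,\|B_j e_i\|_2 \le \sqrt{\Lambda}\,\sqrt{\Gamma}$, which follows from $\|B_{j,S}\|_{2\to2}^2 = \|B_{j,S}^* B_{j,S}\|_{2\to2} \le \Lambda$ and the definition $\|B_j e_i\|_2^2 \le \Gamma$. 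This lets me take $K := 2\sqrt{\Lambda\Gamma}/m$ as the a.s.\ bound on $\|Y_j\|_2$ (using $\|Y_j\|_2 \le \frac{2}{m}\|B_{j,S}^* B_j e_i\|_2$ after centering).

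For the weak-variance parameter I would be careful \emph{not} to factor out $\sqrt{\Lambda\Gamma}$ at once, since that would give $\sigma^2 = \Lambda\Gamma/m$, which is too large. Instead I bound $|\langle x, B_{j,S}^* B_j e_i\rangle| = |\langle B_{j,S} x, B_j e_i\rangle| \le \|B_{j,S} x\|_2\,\|B_j e_i\|_2$, keep $\|B_{j,S}x\|_2^2$ inside the expectation, and use $\|B_j e_i\|_2^2 \le \Gamma$ a.s.\ together with the isotropy identity $\frac{1}{m}\sum_{j=1}^m \Ebb[B_{j,S}^* B_{j,S}] = P_S$ to get $\sum_{j=1}^m \Ebb|\langle x, B_{j,S}^* B_j e_i\rangle|^2 \le \Gamma \sum_{j=1}^m \Ebb\|B_{j,S}x\|_2^2 \le \Gamma m\|x\|_2^2$, hence $\sigma^2 = \Gamma/m$. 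For $\mu$, from $(\Ebb Z)^2 \le \Ebb Z^2 = \sum_{j}\Ebb\|Y_j\|_2^2$ (the cross terms vanish by independence and zero mean), I bound $\|B_{j,S}^* B_j e_i\|_2^2 \le \Lambda\|B_j e_i\|_2^2$ and invoke the isotropy identity once more to obtain $\sum_j \Ebb\|Y_j\|_2^2 \le \Lambda/m$, so $\mu = \sqrt{\Lambda/m}$ — which is exactly the shift appearing in the statement.

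Plugging $\sigma^2 = \Gamma/m$, $K = 2\sqrt{\Lambda\Gamma}/m$, $\mu = \sqrt{\Lambda/m}$ into Theorem~\ref{thm:vecBernsteinExt} gives, for the fixed $i$,
$$
\Pbb\!\left(\|A_S^* A e_i\|_2 \ge \sqrt{\Lambda/m} + t\right)
\le \exp\!\left(-\frac{t^2/2}{\Gamma/m + 4\Lambda\sqrt{\Gamma/m}/m + 2t\sqrt{\Lambda\Gamma}/(3m)}\right),
$$
and multiplying the numerator and denominator of the exponent by $m$ yields the claimed rate $-\dfrac{mt^2/2}{\Gamma + 4\Lambda\sqrt{\Gamma/m} + 2t\sqrt{\Lambda\Gamma}/3}$; the union bound over $i \in S^c$ produces the factor $n$. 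I do not expect any genuine obstacle here — the argument is a near-verbatim variant of Lemma~\ref{lem:C.3Ext}. The one delicate point is purely combinatorial: at each of the three estimates ($K$, $\sigma^2$, $\mu$) one must pick the right pairing of the operator-norm bound $\|B_{j,S}\|_{2\to2}\le\sqrt{\Lambda}$, the coherence bound $\|B_j e_i\|_2\le\sqrt{\Gamma}$, and the isotropy identity, so as to land precisely on the asymmetric denominator above rather than on a weaker one.
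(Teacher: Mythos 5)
Your proposal is correct and follows essentially the same route as the paper's proof: the identical decomposition into centered vectors $Y_j$, the same three parameter choices $K = 2\sqrt{\Lambda\Gamma}/m$, $\sigma^2 = \Gamma/m$ (keeping $\|B_{j,S}x\|_2^2$ inside the expectation and using isotropy), and $\mu = \sqrt{\Lambda/m}$, followed by the vector Bernstein inequality of Theorem~\ref{thm:vecBernsteinExt} and a union bound over $i \in S^c$. No gaps.
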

\begin{proof}
The proof relies on the Bernstein-type inequality in Theorem~\ref{thm:vecBernsteinExt}. Fix $i \in S^c$ and consider $\|A^*_S A e_i\|_2$.  Note that $\Ebb A_S^* A e_i = 0$ since $i \in S^c$. Moreover, 
$$
\|A_S^*A e_i\|_2 
= \left\|\sum_{j = 1}^m \underbrace{\frac{1}{m}(B_{j,S}^*B_je_i - \Ebb(B_{j,S}^*B_je_i))}_{=: Y_j}\right\|_2
=  \left\|\sum_{j = 1}^m Y_j\right\|_2=:Z.
$$
Now, we verify the hypotheses of Theorem~\ref{thm:vecBernsteinExt} for the random vectors $Y_1,\ldots,Y_m$. Of course, $\Ebb Y_j = 0$. Moreover, since
$$
\|B_{j,S}^*B_j e_i\|_2 \leq \sqrt{\Lambda} \|B_j e_i\|_2 \leq \sqrt{\Lambda\Gamma},
$$
we obtain
$$
\|Y_j\|_2 
= \frac{1}{m} \|B_{j,S}^*B_je_i - \Ebb(B_{j,S}^*B_je_i)\|_2
\leq 2 \sqrt{\Lambda\Gamma} / m=: K.
$$
Then, using the Cauchy-Schwarz inequality, we estimate
\begin{align*}
\sup_{\|x\|_2 \leq 1}\sum_{j = 1}^m \Ebb |\langle x, Y_j \rangle|^2
& = \frac{1}{m^2}  \sup_{\|x\|_2 \leq 1}\sum_{j = 1}^m \Ebb |\langle x, B_{j,S}^* B_j e_i \rangle|^2
\leq \frac{1}{m^2} \sup_{\|x\|_2 \leq 1}\sum_{j = 1}^m \Ebb [\|B_{j,S}x\|_2^2 \|B_j e_i\|_2^2]\\
& \leq \frac{\Gamma}{m^2} \sup_{\|x\|_2 \leq 1}\sum_{j = 1}^m \Ebb \|B_{j,S}x\|_2^2 
= \frac{\Gamma}{m} \sup_{\|x\|_2 \leq 1} \|x_S\|_2^2 = \frac{\Gamma}{m} =: \sigma^2,
\end{align*}
where in the last step we used $\frac{1}{m} \sum_{j=1}^m \Ebb (B_j^* B_j)= \Id$.
Furthermore, using the independence of the $Y_i$'s, we see that
$$
(\Ebb Z)^2 
\leq \Ebb Z^2 
= \Ebb \left\|\sum_{j = 1}^m Y_j\right\|_2^2
= \Ebb \sum_{j = 1}^m\sum_{k = 1}^m \langle Y_j, Y_k \rangle
= \sum_{j = 1}^m \Ebb\|Y_j\|^2
+
\underbrace{\sum_{j = 1}^m\sum_{k \neq j} \langle \Ebb Y_j, \Ebb Y_k \rangle}_{=0}.
$$
Now, combining the above inequality with the following estimate
\begin{align*}
\sum_{j = 1}^m \Ebb \|Y_j\|_2^2
& = \frac{1}{m^2} \sum_{j = 1}^m \Ebb\|B_{j,S}^*B_je_i - \Ebb(B_{j,S}^*B_je_i)\|_2^2
= \frac{1}{m^2} \sum_{j = 1}^m (\Ebb \|B_{j,S}^*B_je_i\|_2^2 - \|\Ebb (B_{j,S}^*B_je_i)\|_2^2)\\
& \leq \frac{1}{m^2}\sum_{j = 1}^m \Ebb \|B_{j,S}^*B_je_i\|_2^2
\leq \frac{\Lambda}{m^2} \sum_{j = 1}^m \Ebb \|B_je_i\|_2^2 = \frac{\Lambda}{m},
\end{align*}
we obtain
$$
\Ebb Z \leq \sqrt{\Lambda/m} =: \mu.
$$
We are now in a position to apply Theorem~\ref{thm:vecBernsteinExt}. We have
$$
\Pbb \left(\|A_S^* A e_i\|_2 \geq \sqrt{\Lambda/m } + t\right)
\leq \exp\left(-\frac{m t^2/2}{\Gamma + 4 \Lambda \sqrt{\Gamma/m} + 2 t \sqrt{\Lambda\Gamma}/3}\right).
$$
A union bound over $i \in S^c$ concludes the proof.
\end{proof}


\section{Proofs of Section \ref{sec:optimal_drawing}}
\label{app:proof_proba}

Propositions in Section \ref{subsec:opt_proba} are direct consequences of the minimization of the bounds obtained on the number of measurements in Theorems of Section \ref{sec:main_results}, given the following lemma.

\begin{lemme}
\label{lem:max_over_simplex}
Given $(\gamma_k)_{1\leq k \leq p}$ such that for all $1\leq k \leq p, \gamma_k\geq 0$. Define for $\pi$ a probability distribution on $\{1, \hdots , p \}$, the following function:
$$
K( \pi) := \max_{1\leq k \leq p} \frac{\gamma_k}{\pi_k}. 
$$
Then, it holds that
$$
\min_{\pi_k \geq 0 \atop \sum_k \pi_k =1} K(\pi) = \sum_{j=1}^p \gamma_j,
$$
and the unique minimizer $\pi^\star$ of $K$ over discrete probability distributions is for all $1 \leq k \leq p$
$$
\pi_k^\star = \frac{\gamma_k}{\sum_{j=1}^p \gamma_j}.
$$
\end{lemme}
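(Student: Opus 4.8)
The plan is to prove both the value of the minimum and the uniqueness of the minimizer by a direct argument combining a simple lower bound with an explicit candidate. First I would dispose of a degenerate case: if all $\gamma_k = 0$, then $K(\pi) = 0$ for every $\pi$, the claimed minimum $\sum_j \gamma_j$ is $0$, and the stated formula for $\pi^\star$ is $0/0$ — so I would tacitly assume $\sum_j \gamma_j > 0$ (as is implicit in the intended application, where the $\gamma_k$ arise as genuine coherence-type quantities and at least one is positive). Also, if some but not all $\gamma_k$ vanish, the proposed $\pi^\star$ assigns zero mass to those indices; since $\gamma_k/\pi_k$ is then $0/0$, one should read the convention $\gamma_k/\pi_k = 0$ when $\gamma_k = 0$, and I would state this convention explicitly at the start of the proof.

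The core of the argument is a two-sided inequality. For the lower bound, fix any probability distribution $\pi$ on $\{1,\dots,p\}$. Using $\pi_k K(\pi) \geq \gamma_k$ for every $k$ (which is immediate from the definition of $K(\pi)$ as a maximum), I would sum over $k$ to get
\begin{equation*}
K(\pi) = K(\pi)\sum_{k=1}^p \pi_k = \sum_{k=1}^p \pi_k K(\pi) \geq \sum_{k=1}^p \gamma_k,
\end{equation*}
so $\min_\pi K(\pi) \geq \sum_j \gamma_j$. For the matching upper bound, I would simply plug in the candidate $\pi_k^\star = \gamma_k / \sum_j \gamma_j$, which is a valid probability distribution since the $\gamma_k$ are nonnegative and their sum is positive; then for each $k$ with $\gamma_k > 0$ one has $\gamma_k/\pi_k^\star = \sum_j \gamma_j$, and for $k$ with $\gamma_k = 0$ the ratio is $0$ by convention, hence $K(\pi^\star) = \sum_j \gamma_j$. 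Combining the two bounds gives $\min_\pi K(\pi) = \sum_j \gamma_j$ and shows $\pi^\star$ is a minimizer.

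For uniqueness I would argue by contradiction or, more cleanly, by tracking equality in the summation above. If $\pi$ is any minimizer, then equality holds throughout the displayed chain, so $\pi_k K(\pi) = \gamma_k$ for every $k$ with $\pi_k > 0$; since $K(\pi) = \sum_j \gamma_j$ this forces $\pi_k = \gamma_k/\sum_j \gamma_j$ for every such $k$. It remains to rule out the possibility that $\pi$ puts zero mass on some index $k_0$ with $\gamma_{k_0} > 0$: in that case $\gamma_{k_0}/\pi_{k_0} = +\infty$, contradicting $K(\pi) < \infty$. Hence $\pi = \pi^\star$, proving uniqueness. The main thing to be careful about — really the only subtlety — is the bookkeeping around indices with $\gamma_k = 0$ and the convention $0/0 = 0$; the inequalities themselves are elementary, so no serious obstacle is expected.
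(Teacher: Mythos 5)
Your proof is correct, and it takes a genuinely different (though equally elementary) route from the paper's. The paper argues by direct comparison: it first checks $K(\pi^\star)=\sum_j\gamma_j$, then notes that any probability distribution $q\neq\pi^\star$ must satisfy $q_{j_0}<\pi^\star_{j_0}$ for some index $j_0$ (since both sum to one), whence $K(q)\geq \gamma_{j_0}/q_{j_0} > \gamma_{j_0}/\pi^\star_{j_0}=\sum_j\gamma_j=K(\pi^\star)$; note that $q_{j_0}<\pi^\star_{j_0}$ forces $\pi^\star_{j_0}>0$, hence $\gamma_{j_0}>0$, so the strict inequality is legitimate, and this single stroke delivers both the optimal value and uniqueness. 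You instead prove the lower bound $K(\pi)\geq\sum_k\gamma_k$ by summing $\pi_k K(\pi)\geq\gamma_k$ over $k$, match it by evaluating at $\pi^\star$, and then extract uniqueness by tracking when equality holds (ruling out mass zero on an index with $\gamma_{k}>0$ because $K$ would be infinite). What your version buys is explicitness about the edge cases the paper silently passes over: the convention $\gamma_k/\pi_k=0$ when $\gamma_k=0$, and the degenerate situation $\sum_j\gamma_j=0$ in which $\pi^\star$ is undefined and uniqueness fails; what the paper's version buys is brevity, since the exchange-type observation makes the equality analysis unnecessary. One cosmetic remark: in your lower-bound chain, when $K(\pi)=+\infty$ (some $\pi_k=0$ with $\gamma_k>0$) the product $\pi_kK(\pi)$ is of the form $0\cdot\infty$; it is cleaner to say that in this case the bound $K(\pi)\geq\sum_j\gamma_j$ is trivial and restrict the summation argument to distributions with $K(\pi)<\infty$, which is exactly the case you already isolate in the uniqueness step.
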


\begin{proof}
The proof is inspired by an intermediary result in \cite{chauffert2013variable}. We give the proof here for completeness. Note that for $\pi^\star$, $K(\pi^\star) = \sum_{j=1}^p \gamma_j$. For a probability distribution $q \neq \pi^\star$, there exists an index $j_0$ such that $q_{j_0} < \pi^\star_{j_0}$ since both sum to 1. Then, $K(q) \geq \gamma_{j_0} / q_{j_0} > \gamma_{j_0} / \pi^\star_{j_0} = \sum_{j=1}^p \gamma_j = K(\pi^\star)$.
\end{proof}

Therefore, the proofs of Propositions \ref{prop_opt_drawing_proba}, \ref{prop_opt_proba_iso}, \ref{prop_opt_drawing_proba_lambda}, \ref{prop_opt_drawing_proba_iso_lambda} easily follow Lemma \ref{lem:max_over_simplex} combined respectively with Theorems \ref{thm:noisy}, \ref{thm:killthetalog}.


\section{Proofs of Section~\ref{sec:appli}}
\label{proof:appli}

The appendix contains the proofs of the results stated in Section~\ref{sec:appli}, regarding the application of the proposed analysis to the case of one-dimensional and two-dimensional Fourier-Haar setting.

\subsection{Proof of Corollary \ref{corol:mri_iso_lambda}}
\label{proof:mri_setup_iso}

Let $A_0 = \mathcal{F}H^* = \replacemath{(a_1 | a_2 | \hdots |a_n )^*}{(d_1 | d_2 | \hdots |d_n )^*} \in \Cbb^{n \times n}$ with $n = 2^{J+1}$ be the Fourier-Haar transform and consider wavelet and frequency subbands $(\Omega_j)_{0 \leq j \leq J}$ and $(W_j)_{0 \leq j \leq J}$ as illustrated in Section~\ref{sec:applications_mri_iso}. 

Recalling the local coherence estimate \eqref{eq:localcoherenceFH} from \cite[Lemma 1]{adcock2016note}, one has for $k \in \{1,\hdots , n \}$
\begin{align*}
\| \replacemath{a}{d}_{k,{S}} \|_2^2 = \sum_{j=0}^J \| P_S P_{\Omega_{j}} \replacemath{a}{d}_{k} \|_2^2 \leq \sum_{j=0}^J {s}_j \|  P_{\Omega_{j}} \replacemath{a}{d}_{k} \|_\infty^2 \leq C \cdot 2^{-j(k)} \sum_{j=0}^J {s}_j  2^{-|j(k)-j|},
\end{align*}
where $j(k)$ is the frequency level corresponding to index $k$ and where $C>0$ is a universal constant. Consequently, an estimate for $\Lambda({S},\pi)$ is
$$
\Lambda({S},\pi) := \max_{1\leq k \leq n} \frac{C \cdot 2^{-j(k)} \sum_{j=0}^J  {s}_j  2^{-|j(k)-j|}}{ \pi_k }.
$$
Recalling Lemma~\ref{lem:max_over_simplex}, the probability \replace{}{distribution} $\pi^\Lambda$ minimizing the previous bound is such that for all $k \in \{1 ,\hdots , n \}$
\begin{align*}
\pi_k^\Lambda = \frac{2^{-j(k)} \sum_{j=0}^J  {s}_j  2^{-|j(k)-j|}}{ \sum_{p=1}^n 2^{-j(p)}   \sum_{j=0}^J  {s}_j  2^{-|j(p)-j|} }.
\end{align*}
Since the size of the frequency subbands is $|W_j| = 2^{\max(j,1)}$, one can rewrite, 
for all $k \in \{1 ,\hdots , n \}$
\begin{align*}
\pi_k^\Lambda &= \frac{2^{-j(k)} \sum_{j=0}^J  {s}_j  2^{-|j(k)-j|}}{ \sum_{j'=0}^J  2^{\max(j',1)-j'}  \sum_{j=0}^J  {s}_j   2^{-|j'-j|} } .
\end{align*}
Therefore, since $2^{j} \leq 2^{\max(j,1)} \leq 2^{j+1}$, we have
\begin{align*}
\Lambda({S},\pi^\Lambda) 
= \sum_{j'= 0}^J 2^{\max(j',1)-j'}\sum_{j = 0}^J s_j 2^{-|j'-j|}
= D \cdot
\sum_{j=0}^J  \left( s_j + \sum_{\substack{j'=0\\ {j' \neq j}}}^J  {s}_{j'}  2^{-|j-j'|} \right),
\end{align*}
with $1 \leq D \leq 2$.
With such a choice of probability \replace{}{distribution} $\pi^\Lambda$, an estimate for $\Gamma$ can be derived as follows:
\begin{align*}
\Gamma(\pi^\Lambda) 
&:= \max_{1\leq k \leq n} \frac{{C\, \cdot \,}2^{-j(k)}}{\pi^{\replacemath{}{\Lambda}}_k} 
= \max_{1\leq k \leq n}  \frac{{C \,\cdot \,}2^{-j(k)} \Lambda(S,\pi^\Lambda) }{ 2^{-j(k)} \sum_{j'=0}^J  {s}_{j'}  2^{-|j(k)-j'|}   } 
= \frac{{C \,\cdot \,}\Lambda(S,\pi^\Lambda)}{\displaystyle\min_{1\leq k \leq n} \sum_{j'=0}^J  {s}_{j'}  2^{-|j(k)-j'|} } ,
\end{align*}
where we used that $\|\replacemath{a}{d}_k\|_\infty^2 \leq C \cdot 2^{-j(k)}$ thanks to \eqref{eq:localcoherenceFH}. Therefore Condition \eqref{cond:killthethetalog} is satisfied if
$$
\min_{1\leq k \leq n} s_{j(k)}  + \sum_{j'\neq j(k)}  {s}_{j'}  2^{-|j(k)-j'|} \gtrsim \ln(3n/\varepsilon),
$$ 
which, in turn, is equivalent to
$$
\min_{0\leq j \leq J} s_{j}  + \sum_{j'\neq j}  {s}_{j'}  2^{-|j-j'|} \gtrsim \ln(3n/\varepsilon).
$$
Theorem \ref{thm:killthetalog} can be now applied, ensuring stable and robust recovery with probability $1-\varepsilon$ with a required number of measurements 

\begin{align*}
m& \gtrsim   \sum_{j=0}^J  \left( s_j + \sum_{j'=0\atop j' \neq j}^J  {s}_j'  2^{-|j-j'|} \right) \cdot  \ln({3}n/\varepsilon).
\end{align*}
This concludes the proof of the corollary.

\subsection{Proof of Corollary \ref{corol:linesHaarMRI}}
\label{proof:corollinesHaarMRI}

The proof of Corollary~\ref{corol:linesHaarMRI} relies on the estimate of $\Theta(S,\pi)$ and on the choice of the corresponding optimal measure $\pi^\Theta$ on the space of vertical lines. With a slight abuse of notation, here $S$ can be interpreted as a subset of $\{1,\ldots,\sqrt{n}\}^2$ or as a subset of $\{1,\ldots,n\}$, depending on whether vectorization is used or not. In order to estimate $\Theta(S,\pi)$, we consider the term
$$
\|D_k^* D_{k,S}\|_{\infty \to \infty}
= 
\max_{1 \leq i \leq n} \|e_i^* D_k^* D_{k,S}\|_1.
$$
Recalling that $D_{k} = \phi_{k,:} \otimes \phi$ and that $\phi$ is an isometry, we obtain
\begin{align*}
D_k^*D_{k,S} 
= (\phi_{k,:}\otimes \phi)^*(\phi_{k,:}\otimes \phi)P_S^*
= ((\phi_{k,:})^* \phi_{k,:} \otimes \phi^* \phi)P_S^*
= ((\phi_{k,:})^* \phi_{k,:} \otimes I)P_S^*.
\end{align*}
Let us explicitly write
$$
(\phi_{k,:})^* \phi_{k,:} \otimes I
=
\begin{pmatrix}
\overline{\phi_{k,1}} \phi_{k,1} I & \cdots & \overline{\phi_{k,1}} \phi_{k,\sqrt{n}} I\\
\vdots & \ddots & \vdots \\
\overline{\phi_{k,\sqrt{n}}} \phi_{k,1} I & \cdots & \overline{\phi_{k,\sqrt{n}}} \phi_{k,\sqrt{n}} I
\end{pmatrix}.
$$
Then, we have
\begin{align*}
\max_{1 \leq i \leq n} \|e_i^* D_k^* D_{k,S}\|_1
& = \max_{1 \leq q \leq \sqrt{n}} \max_{1 \leq t \leq \sqrt{n}} \|e_t^* \begin{pmatrix}\overline{\phi_{k,q}} \phi_{k,1} I | \cdots | \overline{\phi_{k,q}} \phi_{k,\sqrt{n}} I\end{pmatrix}P_S^*\|_1\\
& = \max_{1 \leq q \leq \sqrt{n}} |\phi_{k,q}|\max_{1 \leq t \leq \sqrt{n}} \|e_t^* \begin{pmatrix} \phi_{k,1} I | \cdots |  \phi_{k,\sqrt{n}} I\end{pmatrix}P_S^*\|_1.
\end{align*}
The local coherence bound \eqref{eq:localcoherenceFH} yields $\displaystyle\max_{1 \leq q \leq \sqrt{n}} |\phi_{k,q}| \lesssim 2^{-j(k)/2}$. We estimate
\begin{align*}
\max_{1 \leq t \leq \sqrt{n}} \|e_t^* \begin{pmatrix} \phi_{k,1} I | \cdots |  \phi_{k,\sqrt{n}} I\end{pmatrix}P_S^*\|_1
& = \max_{1 \leq t \leq \sqrt{n}}  \left\|\begin{pmatrix} \phi_{k,1} (P_{S \cap C_1})_{t,:} | \cdots |  \phi_{k,\sqrt{n}} (P_{S \cap C_{\sqrt{n}}})_{t,:}\end{pmatrix}\right\|_1\\
& = \max_{1 \leq t \leq \sqrt{n}}  \sum_{q = 1}^{\sqrt{n}}|\phi_{k,q}|\left\|(P_{S \cap C_q})_{t,:} \right\|_1,
\end{align*}
where $C_q = \{1,\ldots,\sqrt{n}\}\times \{q\}$ is the $q$-th vertical line of $\{1,\ldots,n\}^2$. Now, we observe that 
$$
\left\|(P_{S \cap C_q})_{t,:} \right\|_1
= \mathbbm{1}_{S \cap C_q}(t) := 
\begin{cases}
1 & \text{if } t \in S \cap C_q\\
0 & \text{otherwise}.
\end{cases}
$$
Finally, using the above relation, the decomposition of $\{1,\ldots,n\}^2$ into subbands $(\Omega_{\ell,j})_{0\leq \ell,j \leq J}$, and resorting again to \eqref{eq:localcoherenceFH}, we see that
\begin{align*}
\max_{1 \leq t \leq \sqrt{n}}  \sum_{q = 1}^{\sqrt{n}}|\phi_{k,q}|\left\|(P_{S \cap C_q})_{t,:} \right\|_1
& = \max_{1 \leq t \leq \sqrt{n}}  \sum_{q = 1}^{\sqrt{n}}|\phi_{k,q}|\mathbbm{1}_{S \cap C_q}(t)
 = \max_{0 \leq \ell \leq J} \max_{t \in \Omega_\ell}
\sum_{j = 0}^J \sum_{q \in \Omega_j} |\phi_{k,q}|\mathbbm{1}_{S \cap C_q}(t)\\
& \leq \max_{0 \leq \ell \leq J} \max_{t \in \Omega_\ell}
\sum_{j = 0}^J \max_{q \in \Omega_j} |\phi_{k,q}| \sum_{q \in \Omega_j} \mathbbm{1}_{S \cap C_q}(t)\\
& = \max_{0 \leq \ell \leq J} \max_{t \in \Omega_\ell}
\sum_{j = 0}^J \max_{q \in \Omega_j} |\phi_{k,q}| |S \cap \Omega_{\ell,j} \cap R_t|\\
& \lesssim \sum_{j = 0}^J 2^{-j(k)/2} 2^{-|j(k)-j|/2} s_j^r.
\end{align*}
This leads to the desired estimate of $\Theta(S,\pi)$. Using Lemma~\ref{lem:max_over_simplex} we can then derive the corresponding optimal measure $\pi^\Theta$ and conclude the proof.

\subsection{Proof of Corollary \ref{corol:linesHaarMRI_lambda}}
\label{proof:corollinesHaarMRI_lambda}

{We divide the proof of Corollary~\ref{corol:linesHaarMRI_lambda} into two steps. First, we find the optimal sampling measure $\pi^\Lambda$ that minimizes $\Lambda(S,\pi)$ and compute the corresponding $\Lambda(S,\pi^\Lambda)$. In the second step, we estimate $\Gamma(\pi^\Lambda)$ and derive the extra condition on the structured sparsity of the signal.

\paragraph{Step 1: Estimate of $\Lambda$ and derivation of $\pi^\Lambda$}
 
As in the previous section, here $S$ can be interpreted as a subset of $\{1,\ldots,\sqrt{n}\}^2$ or as a subset of $\{1,\ldots,n\}$, depending on whether vectorization is used or not.  Let us fix a frequency $k \in W_j$. Our first goal is to estimate the quantity $\|D_{k,S}^* D_{k,S}\|_{2 \to 2}$. This will lead us to the optimal choice of $\pi = \pi^{\Lambda}$. 

First, we observe that $\|D_{k,S}^* D_{k,S}\|_{2 \to 2} = \|D_{k,S}\|_{2 \to 2}^2$. Now, let $v \in \mathbb{C}^n$ and $V \in \Cbb^{\sqrt{n} \times \sqrt{n}}$ be such that $v = \vect (V)$ (recall that the vectorization operator $\vect(\cdot)$ stacks the columns of a matrix on top of each other). Taking into account the structure of $D_{k,S}$ and $v$, and recalling that $\phi$ is an isometry,  we have
\begin{align*}
\|D_{k,S} v \|_{2}^2 
& = \Bigg\|\sum_{i = 1}^{\sqrt{n}}\phi_{k,i} \phi (V_S)_{:,i} \Bigg\|_2^2
= \Bigg\|\phi \sum_{i = 1}^{\sqrt{n}}\phi_{k,i} (V_S)_{:,i} \Bigg\|_2^2
= \Bigg\|\sum_{i = 1}^{\sqrt{n}}\phi_{k,i} (V_S)_{:,i} \Bigg\|_2^2
= \sum_{q = 1}^{\sqrt{n}} \Bigg| \sum_{i = 1}^{\sqrt{n}}\phi_{k,i} (V_S)_{q,i}\Bigg|^2.
\end{align*}
Taking advantage of the decomposition of the wavelet multi-index space into tensor product subbands $(\Omega_{\ell,j})_{0 \leq \ell, j \leq J}$ and recalling \eqref{eq:localcoherenceFH}, we extend the previous chain of equalities as follows:
\begin{align*}
\|D_{k,S}v\|_2^2 
& = \sum_{\ell = 0}^J \sum_{q \in \Omega_\ell} \Bigg| \sum_{j = 0}^{J} \sum_{i \in \Omega_{j}} \phi_{k,i} (V_S)_{q,i} \Bigg|^2
 \leq \sum_{\ell= 0}^J \sum_{q \in \Omega_\ell} \Bigg( \sum_{j = 0}^{J} \max_{i \in \Omega_{j}}|\phi_{k,i}|\sum_{i \in \Omega_{j}}  |(V_S)_{q,i}| \Bigg)^2\\
& = \sum_{\ell= 0}^J \sum_{q \in \Omega_\ell} \Bigg( \sum_{j = 0}^{J} \sqrt{\mu_{j(k),j}} \|V_{S\cap\Omega_{\ell,j}\cap R_q}\|_1 \Bigg)^2,
\end{align*}
where $j(k)$ is such that $k \in W_{j(k)}$ and $\mu_{\ell,j}$ is the local coherence of the Fourier-Haar transform defined in \eqref{eq:localcoherenceFH}. Using the Cauchy-Schwarz inequality, rearranging the summation, and recalling Definition~\ref{def:sparsity_in_levels_2D}, we see that
\begin{align*}
\|D_{k,S}v\|_2^2 
& \leq
\sum_{\ell= 0}^J \sum_{q \in \Omega_\ell} \Bigg( \sum_{j = 0}^{J} \sqrt{\mu_{j(k),j}} \sqrt{s_{j}^r} \|V_{S\cap\Omega_{\ell,j}\cap R_q}\|_2 \Bigg)^2
 \leq 
J  \sum_{\ell= 0}^J \sum_{q \in \Omega_\ell}  \sum_{j = 0}^{J} \mu_{j(k),j} s_{j}^r \|V_{S\cap\Omega_{\ell,j}\cap R_q}\|_2^2 \\
& = 
J   \Bigg(\sum_{j = 0}^{J} \mu_{j(k),j} s_{j}^r \Bigg)\sum_{j = 0}^{J}\sum_{\ell= 0}^J \sum_{q \in \Omega_\ell}\|V_{S\cap\Omega_{\ell,j}\cap R_q}\|_2^2 
 = 
J \Bigg(\sum_{\ell = 0}^{J} \mu_{j(k),j} s_{j}^r \Bigg) \|v\|_2^2.
\end{align*}
Combining the above relations an using again \eqref{eq:localcoherenceFH} yields
$$
\|D_{k,S}^* D_{k,S}\|_{2 \to 2} 
\leq J \sum_{j = 0}^{J} \mu_{j(k),j} s_{j}^r
\leq C \cdot J 2^{-j(k)} \sum_{j = 0}^{J} 2^{-|j(k)-j|} s_{j}^r,
$$
where $C>0$ is the universal constant in \eqref{eq:localcoherenceFH}. Using Lemma~\ref{lem:max_over_simplex} and recalling that $|W_j| = 2^{\max(1,j)}$, the optimal probability $\pi^\Lambda$ is, for all $k = 1,\ldots,\sqrt{n}$,
$$
\pi^\Lambda_{k}
= \frac{ 2^{-j(k)} \sum_{j = 0}^{J} 2^{-|j(k)-j|} s_{j}^r}{\sum_{t = 0}^{\sqrt{n}} 2^{-j(t)} \sum_{j = 0}^{J} 2^{-|j(t)-j|} s_{j}^r} 
= \frac{ 2^{-j(k)} \sum_{j = 0}^{J} 2^{-|j(k)-j|} s_{j}^r}{\sum_{\ell = 0}^J 2^{\max(1,\ell)-\ell} \sum_{j = 0}^{J} 2^{-|\ell-j|} s_{j}^r}.
$$
This leads to 
\begin{align*}
\max_{0 \leq k \leq \sqrt{n}} \frac{\|D_{k,S}^* D_{k,S}\|_{2 \to 2}}{\pi_k^{\Lambda}}
\leq C J \cdot \sum_{\ell = 0}^J 2^{\max(1,\ell)-\ell} \sum_{j = 0}^{J} 2^{-|j-\ell|} s_{j}^r
=: \Lambda(S,\pi^{\Lambda}).
\end{align*}

\paragraph{Step 2: Estimate of $\Gamma$ and conclusion}

Employing \eqref{eq:localcoherenceFH} and recalling that $\phi$ is an isometry, we have
$$
\|D_k\|_{1 \to 2}^2 
= 
\max_{1 \leq i \leq \sqrt{n}} \max_{1 \leq q \leq \sqrt{n}} \|\phi_{k,i} \phi_{:,q}\|_2^2
= 
\max_{1 \leq i \leq \sqrt{n}} \max_{1 \leq q \leq \sqrt{n}} |\phi_{k,i}|^2 \underbrace{\|\phi_{:,q}\|_2^2}_{=1}
\leq C \cdot 2^{-j(k)}.
$$
Therefore, we estimate
\begin{align*}
\max_{1 \leq k \leq \sqrt{n}} \frac{\|D_k\|_{1 \to 2}}{\pi_k^\Lambda}
& \leq 
\max_{1 \leq k \leq \sqrt{n}}\frac{C \cdot 2^{-j(k)}}{
2^{-j(k)} \sum_{j = 0}^{J} 2^{-|j-j(k)|} s_{j}^r} \Big(\sum_{\ell = 0}^J 2^{\max(1,\ell)-\ell} \sum_{j = 0}^{J} 2^{-|j-\ell|} s_{j}^r\Big)\\
& = \frac{\Lambda(S,\pi^\Lambda)}{
\displaystyle J \cdot \min_{0 \leq \ell \leq J}\sum_{j = 0}^{J} 2^{-|j-\ell|} s_{j}^r}=: \Gamma(\pi^\Lambda).
\end{align*}
As a consequence, the condition \eqref{cond:killthethetalog}  of Theorem~\ref{thm:killthetalog} reads
$$
J \cdot \min_{0 \leq \ell \leq J}\sum_{j = 0}^{J} 2^{-|j-\ell|} s_{j}^r \gtrsim \ln(3n/\varepsilon)
$$
and the minimum sampling complexity is
$$
m \gtrsim \Lambda(S,\pi^{\Lambda})\ln(3n/\varepsilon), 
$$
which, in turn,  is implied by
$$
m\gtrsim \Bigg(\sum_{j = 0}^J s_j^r + \sum_{\substack{\ell = 0\\ \ell \neq j}}^{J} 2^{-|\ell-j|} s_{\ell}^r \Bigg)\log_2(\sqrt{n})\ln(3n / \varepsilon).
$$
This concludes the proof.

\small
\bibliographystyle{alpha}
\bibliography{mybib}


\end{document}